\newtheorem{theorem}{Theorem}%[section]
\newtheorem{lemma}[theorem]{Lemma}
\newtheorem{proposition}[theorem]{Proposition}
\newtheorem{definition}[theorem]{Definition}
\newtheorem{remark}[theorem]{Remark}
\newenvironment{customthm}[1]
  {\innercustomthm}
  {\endinnercustomthm}
\newcommand{\calA}{\mathcal{A}}
\newcommand{\calD}{\mathcal{D}}
\newcommand{\G}{\mathcal{G}}
\newcommand{\N}{\mathbb{N}}
\newcommand{\bbP}{\mathbb{P}}
\newcommand{\calP}{\mathcal{P}}
\newcommand{\R}{\mathbb{R}}
\newcommand{\bbS}{\mathbb{S}}
\newcommand{\calS}{\mathcal{S}}
\newcommand{\Z}{\mathbb{Z}}
\newcommand{\disjind}{\textsf{DISJ+IND}}
\newcommand{\disj}{\textsf{DISJ}}
\newcommand{\poly}{\text{poly}}
\newcommand{\spoly}{\text{subpoly}}
\newcommand{\subpoly}{sub-polynomial}
\DeclareMathOperator{\argmin}{argmin}
\title{Streaming Space Complexity of Nearly All Functions of One Variable on Frequency Vectors}
\author{Vladimir Braverman\thanks{This material is based upon work supported in part by the National Science Foundation under Grant No. 1447639, by the Google Faculty Award and by DARPA grant N660001-1-2-4014. Its contents are solely the responsibility of the authors and do not represent the official view of DARPA or the Department of Defense.}\\ Johns Hopkins University\\ vova@cs.jhu.edu \and Stephen R.\ Chestnut\thanks{This material is based upon work supported in part by the
National Science Foundation under Grant No. 1447639.} \\ ETH Zurich\\ stephenc@ethz.ch
\and David P. Woodruff \\ IBM,
Almaden Research Center\\
dpwoodru@us.ibm.com
\and  Lin F.\ Yang\footnotemark[2] \\ Johns Hopkins University \\ lyang@pha.jhu.edu }
\date{}
\begin{document}
\maketitle

\begin{abstract}
A central problem in the theory of algorithms for data streams is to determine which functions on a stream can be approximated in sublinear, and especially sub-polynomial or poly-logarithmic, space. 
Given a function $g$, we study the space complexity of approximating
$\sum_{i=1}^n g(|f_i|)$, where $f\in\Z^n$ is the frequency vector of a turnstile stream.
This is a generalization of the well-known frequency moments problem, and previous results apply only when $g$ is monotonic or has a special functional form. 
Our contribution is to give a condition such that, except for a narrow class of functions $g$, there is a space-efficient approximation algorithm for the sum if and only if $g$ satisfies the condition.
The functions $g$ that we are able to characterize include all convex, concave, monotonic, polynomial, and trigonometric functions, among many others, and
is the first such characterization for non-monotonic functions. 
{Thus, for nearly all functions of one variable, we answer the open question from the celebrated paper of Alon, Matias and Szegedy (1996).}

\end{abstract}

\section{Introduction}\label{sec:intro}

One of the main open problems in the theory of data stream computation is to characterize functions of a frequency vector that can be computed, or approximated, efficiently.
Here we characterize nearly all functions of the form $\sum_{i=1}^n g(|v_i|)$, where $v=(v_1,\ldots,v_n)\in\Z^n$ is the frequency vector of the stream. This is a generalization of the famous frequency moments problem, where $g(x)=x^k$ for some $k\geq 0$, described by Alon, Matias, and Szegedy~\cite{alon1996space}, who also asked:\\
{\em ``It would be interesting to determine or estimate
the space complexity of the approximation of $\sum_{i=1}^nv_i^k$
for non-integral values of $k$ for $k < 2$, or the
space complexity of estimating other functions of the numbers $v_i$.''}
The first question was answered by Indyk and Woodruff~\cite{indyk2005optimal} and Indyk~\cite{indyk2000stable} who were the first to determine, up to polylogarithmic factors, the space complexity of the frequency moments for all $k>0$.  
We address the second question.

Braverman and Ostrovsky~\cite{braverman2010zero} and Braverman and Chestnut~\cite{braverman2015universal}, answer the second question for sums of the form given above when $g$ is monotone.
We extend their characterizations to nearly all nonmonotone functions of one variable.
Specifically, we characterize the set of functions $g$ for which there exists a sub-polynomial $(1\pm\epsilon)$-approximation algorithm for the sum above.
Our results can be adapted to characterize the set of functions approximible in poly-logarithmic, rather than sub-polynomial, space.
Among the main techniques we use is the layering method developed by Indyk and Woodruff~\cite{indyk2005optimal} for approximating the frequency moments, and one may view our results as an exploration of the power of this technique.

Our results also partially answer an open question of Guha and Indyk at the IIT Kanpur workshop in 2006~\cite{sublinearNumberFive} about characterizing sketchable distances - our results characterize nearly all distances that can be expressed in the form $d(u,v)=\sum_i g(|u_i-v_i|)$. 
Besides the aforementioned work on the frequency moments and monotone functions, other past work on this question was done by Guha, Indyk, and McGregor~\cite{guha2007sketching} as well as Andoni, Krauthgamer, and Razenshteyn~\cite{andoni2014sketching}. 
Both of those papers address the same problem, but in domains different from ours.
In the realm of general streaming algorithms, a result of Li, Nguyen, and Woodruff~\cite{li2014turnstile} shows that for any function $g$ with a $(1\pm\epsilon)$-approximation algorithm implementable in sub-polynomial space, there exists a linear sketch that demonstrates its tractability. However, a drawback of that work is that
the space complexity of maintaining the sketching matrix, together with computing the output, may be
polynomially large. 
Our paper deals with a smaller class of functions than \cite{li2014turnstile}, but we provide a zero-one law and explicit algorithms.

{While several of our lower bounds can be shown via reduction from the standard index and set-disjointness communication problems, to prove lower bounds for the widest class of functions we develop a new class of communication problems and prove lower bounds for them, which may be of independent interest. The problem, which we call {\sf ShortLinearCombination}, is: given a finite set $\{0, a_1, \ldots, a_r, b\}$ of possible frequencies, is there
a frequency in the stream of value $b$? Perhaps surprisingly, the communication 
complexity of {\sf ShortLinearCombination} depends on the
magnitudes of the coefficients in any linear combination expressing $b$ in terms of $a_1, \ldots, a_r$. We develop
optimal communication bounds for this problem. This hints at the subtleties in proving lower bounds for 
a wide class of functions $g$, since by defining $g(b) \geq n \cdot \max_i g(a_i)$, the communication problem
just described is a special case of characterizing the streaming complexity of all functions $g$. 
Our lower bounds provide a significant generalization of the set-disjointness problem, which has
found many applications to data streams.}

\subsection{Applications}

We now describe potential applications of this work.

\subsubsection{Log-likelihood Approximation}

One use for the approximation of non-monotonic functions is the computation of log-likelihoods.
Here, the coordinates of the streamed vector~$v$ are taken to be i.i.d.\ samples from a discrete probability distribution.
The likelihood of the vector $v$, under a distribution with probability mass function $p(x)$, $x\in\Z$, is $L(v) = \prod_{i=1}^n p(v_i)$, and its log-likelihood is $\ell(v) = -\sum_{i=1}^n \log p(v_i)$.
Notice that $\ell(v)$ is of the form $\sum_{i=1}^n g(v_i)$, for $g(x)=-\log p(x)$.
When $p(\cdot;\theta)$ is the distribution of a non-negative random variable or if $p(x;\theta)$ is symmetric about $x=0$, our results can be applied to determine whether there is a streaming algorithm that efficiently approximates $\ell$.
In general, $-\log p(x)$ is not a monotonic function of $x$.
For example, $p(x)=\lambda\frac{x^{\alpha}}{x!}e^{-\alpha} + (1-\lambda)\frac{x^{\beta}}{x!}e^{-\beta}$ is a mixture of two Poissons, which is generally not monotonic. 
For any constants $\lambda, \alpha,\beta>0$, the Poisson mixture log-likelihood $-\log p(x)$ satisfies our three criteria (to be described later), hence the log-likelihood of the stream can be approximated in poly-logarithmic space.
Continuous distributions can be handled similarly by discretization.

Suppose that the distribution $p$ comes from a family of probability distributions $p(\cdot;\theta)$ parameterized by $\theta\in\Theta$.
When an efficient approximation algorithm exists, we describe a linear sketch from which a $(1\pm\epsilon)$-approximation $\hat{\ell}$ to $\ell(v)$ can be extracted with probability at least $2/3$.
The form of the sketch is independent of the function $g$, so if $-\log p(x;\theta)$ satisfies our conditions for approximability for all $\theta\in\Theta$, we derive an approximation to each of the log-likelihoods $\ell(v;\theta)$ that are separately correct with probability $2/3$ each.
If $\Theta$ is a discrete set then this allows us to find an approximate maximum likelihood estimator for $\theta$ with only an $O(\log |\Theta|)$ factor increase in storage.
Recall, the maximum likelihood estimator is $\hat{\theta} = \argmin_{\theta\in\Theta} \ell(\theta,v)$.
Indeed, as usual we can drive the error probability down to any $\delta>0$ with $O(\log 1/\delta)$ repetitions of the algorithm, so an additional factor of $O(\log|\Theta|)$ in the space complexity is enough.
If $\hat{\ell}(\theta;v)$ denote our approximations, then $\hat{t} = \argmin_{\theta\in\Theta}\hat{\ell}(\theta;v)$ is the approximate maximum likelihood estimate, which has the guarantee that $\ell(\hat{t};v)\leq (1+\epsilon)\ell(\hat{\theta};v)$.
When $\Theta$ is not finite one may still be able to apply our results by discretizing $\Theta$ with $\poly(n)$ points and proceeding as above; whether this works depends on the behavior of $p(x;\theta)$ as a function of $\theta$.

\subsubsection{Utility Aggregates}

Consider an online advertising service that charges its customers based on the number of clicks on the ad by web users.
More clicks should result in a higher fee, but an exceptionally high number of clicks from one user make it likely that he is a bot or spammer. 
Customers may demand that the service discount the cost of these spam clicks, which means that the fee is a non-monotonic function of the number of clicks from each user.

The ads application is an example of a non-monotonic utility function.
Many utility functions are naturally non-monotonic, and they can appear in streaming settings.
For example, extremely low or high trading volume in a stock indicates abnormal market conditions.
A low volume of network traffic might be a sign of damaged or malfunctioning equipment whereas a high volume of network traffic is one indicator of a denial-of-service attack.

\subsubsection{Database Query Optimization}

A prominent application for streaming algorithms, indeed one of the initial motivations~\cite{alon1996space}, is to database query optimization.
A query optimizer is a piece of software that uses heuristics and estimates query costs in order to intelligently organize the processing of a database query with the goal of reducing resource loads (time, space, network traffic, etc.).
Scalable query optimizers are needed for scalable database processing~\cite{wu2011query,jahani2011automatic}, and poor query efficiency has been cited as one of the main drawbacks of existing implementations of the MapReduce framework~\cite{lee2012parallel}.
Streaming algorithms and sketches are a natural choice for such heuristics because of their low computational overhead and amenability to distributed processing.
By expanding the character of statistics that can be computed efficiently over data streams, our results could allow database query optimizers more expressive power when creating heuristics and estimating query costs.

\subsubsection{Encoding Higher Order Functions}

An obvious generalization of the problem we consider is to replace the frequency vector $f_{i}$, $i\in[n]$, with a frequency matrix $f_{ij}$, $i\in[n]$ and $j\in[k]$, and ask whether there is a space efficient streaming algorithm that approximates $\sum_{i=1}^n g(f_{i,1},\ldots,f_{i,k})$. 
These could, for example, allow one to express more complicated database queries which first filter records based on one attribute and then sum up the function values applied to another attribute on the remaining records.

Let us demonstrate that when the coordinates of $f$ are bounded and $k$ is not too large, we can replace this sum with a function of a single variable.
Suppose that $0\leq f_{ij}\leq b-1\in\N$, for all $i,j$.
Upon receiving an update to coordinate $(i,j)$ we replace it with $b^j$ copies of $i$.
The new frequencies are polynomially bounded if $b^k=\poly(n)$.

Call the new frequency vector $f'\in\Z^n$.
The sequence of values $f_{i1},\ldots,f_{ik}$ is immediately available as the base-$b$ expansion of the number $f'_i$, so we are able to write $g'(f'_i) = g(f_{i1},\ldots,f_{ik})$, where $g'$ first recovers the base-$b$ expansion and then applies $g$.
Our desire is to approximate $\sum_i g'(f'_i)$.
Given even a well behaved function $g$, it is very unlikely that $g'$ will be monotone, hence the need for an approximation algorithm for non-monotonic functions if this approach is taken.
It is also very likely that, because of the construction, $g'$ has high local variability.
This is a significant problem for algorithms that use only one pass over the stream (as we show by way of a lower bound), but we also present a two pass algorithm that is not sensitive to local variability.

\subsection{Problem Definition}
A stream of length $m$ with domain $[n]$ is a list $D = \langle(i_1, \delta_{1}), ({i_2}, \delta_{2}), \ldots (i_m, \delta_{m})\rangle$, where $i_j \in [n]$ and $\delta_j\in\Z$ for every $j\in [m]$. 
The \emph{frequency vector} of a stream $D$ is the vector $V(D)\in\Z^n$ with $i^{th}$ coordinate $v_i=\sum_{j:i_j=i}\delta_j$.
A processor can read the stream some number $p\geq 1$ times, in the order in which it is given, and is asked to compute a function of the stream.
We allow the processor to use randomness and we only require that its output is correct with probability at least $2/3$.

Our algorithms are designed for the \emph{turnstile} streaming model.
In particular, there exists $M\in\N$ and it is promised that $V(D)\in\{-M,-M+1,\ldots,M\}^n$ and the same holds for any prefix of the list $D$.
Our lower bounds, on the other hand, hold in the more restrictive \emph{insertion-only} model which has $\delta_j=1$, for all $j$.
We use $\calD(n, m)$ to denote the set of all turnstile streams with domain~$[n]$ and length at most~$m$. 

Given a function $g: \mathbb{R}\rightarrow \mathbb{R}$ and vector $V=(v_1,v_2,\ldots,v_n)$, let  \[g(V):=\sum_{i=1}^ng(|v_i|).\]

\begin{definition}
Given $g:\Z_{\geq0}\to\R$, $\epsilon>0$, and $D\in\calD(n,m)$ the problem of $(g,\epsilon)$-SUM on stream $D$ is to output an estimate $\hat{G}$ of $g(V(D))$ such that 
\[P\left((1-\epsilon)g(V(D))\leq \hat{G}\leq(1+\epsilon)g(V(D))\right)\geq 2/3.\]
We often omit $\epsilon$ and refer simply to $g$-SUM when the value of $\epsilon$ is clear from the context.
\end{definition}
The choice of $2/3$ here is arbitrary, since given a $g$-SUM algorithm the success probability can be improved to $1-1/\poly(n)$ by repeating it $O(\log{n})$ times in parallel and taking the median outcome.

Our goal is to classify the space complexity of $g$-SUM. 
We ask: for which functions~$g$ can $(g,\epsilon)$-SUM be solved in space that is sub-polynomial in $n$?
We call such functions {\em tractable}.
An $\Omega(\epsilon^{-1})$ lower bound applies for nearly every function $g$, thus we only require that the algorithm solve $(g,\epsilon)$-SUM whenever $\epsilon$ decreases sub-polynomially.
Our main results separately answer this question in the case in which the processor is allowed one pass over the stream and in the case in which $O(1)$ passes are allowed. 
We are able to classify the space complexity of almost all functions, though a small set of functions remains poorly understood.

\subsection{Organization}

Section~\ref{sec: our results} gives a high-level description of our results and states our main theorems.
We put off some of the more technical definitions until Section~\ref{sec: preliminaries}, which gives the definitions needed to precisely state our main results as well as background on the main techniques for our proofs.
Next, Section~\ref{sec: weak zero-one} includes the intuition and proofs for our main theorems.  
It describes one and two-pass algorithms and lower bounds.
Section~\ref{sec: exotic} and Appendix~\ref{app:nearlyperiodic} discuss the functions that we are unable to characterize in more depth.

\section{Our Results}\label{sec: our results}
To begin with, we separate the class of functions $g:\Z_{\geq0}\to\R$ into two complementary classes: ``normal functions'' and ``nearly periodic functions''.
In fact, we do this differently for $1$-pass $g$-SUM than for multi-pass $g$-SUM. 
We define $\bbS$-normal functions, for studying one pass streaming space complexity, and $\bbP$-normal functions for multiple passes.
The complements of these two sets are the $\bbS$-nearly periodic functions and $\bbP$-nearly periodic functions, respectively.
The distinction between $\bbS$ and $\bbP$ is not important to understand our results at a high level, so we omit them for the rest of this section.
See Section~\ref{sec: exotic} and Appendix~\ref{app:nearlyperiodic} for more about this class of functions.
We postpone the technical definitions until Section~\ref{sec: preliminaries}, and give an informal overview here.

We are able to characterize the normal functions based on three properties which we call \emph{slow-jumping}, \emph{slow-dropping}, and \emph{predictable}.
Slow-jumping depends on the rate of increase.
Roughly, a function is slow-jumping if it doesn't grow much faster than $y=x^2$ at every scale.
Slow-dropping governs the rate of decrease of a function.
A function is slow dropping if decreases no faster than sub-polynomially at every point.
Finally, if a function is predictable then it satisfies a particular tradeoff between its growth rate and local variability. Interestingly, not just the properties themselves are important for our proofs, but so is the interplay between them.
Precise definitions for these properties are given in Section~\ref{sec: preliminaries}.

Our proofs are by finding heavy hitters, for the upper bound, and by reductions from communication complexity for the lower bound.
See Section~\ref{sec: techniques} for more thorough discussion of how the three properties relate to heavy-hitters and the other techniques we use.

\subsection*{Zero-One Laws for Normal Functions}

A nonnegative function $f$ is called a \subpoly{} function if, for all ${\alpha>0}$, $\lim_{x\to\infty}x^\alpha f(x)=\infty$ and $\lim_{x\to\infty}x^{-\alpha}f(x)=0$.
Formally, we study the class of functions $g$ such that $g$-SUM can be solved with \subpoly{} accuracy $\epsilon=\epsilon(n)$ using only an amount of 
space which is sub-polynomial.
We call such functions $p$-pass tractable if the algorithm uses $p$ passes. 
In this paper we prove the following theorems.
\begin{theorem}[1-pass Zero-One Law]
\label{thm:norm1passalg}
A function $g\in\G$ is 1-pass tractable and normal if and only if it is slow-jumping, slow-dropping, and predictable.
\end{theorem}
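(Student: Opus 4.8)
The plan is to split the biconditional into three implications: (i) slow-jumping, slow-dropping, and predictable together force $g$ to be $\bbS$-normal, so the ``and normal'' clause of the conclusion is automatic; (ii) those three properties imply one-pass tractability, which I would prove by exhibiting an algorithm; and (iii) $\bbS$-normality together with one-pass tractability implies all three properties, which I would prove by communication lower bounds. Part (i) should reduce to unwinding the definitions, since $\bbS$-nearly periodic behavior is exactly the combination of rapid local oscillation against the growth rate (excluded by predictability) and polynomially-fast decay (excluded by slow-dropping) that makes a function hard to detect with heavy hitters.

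For the algorithm in part (ii), I would use the Indyk--Woodruff layering method. Write $g(V)=\sum_{j\geq 0}\sum_{i\in S_j}g(|v_i|)$ with $S_j=\{i:|v_i|\in[2^j,2^{j+1})\}$, so there are only $O(\log M)=O(\mathrm{polylog}(n))$ levels. By predictability, $g$ is essentially constant on each $S_j$ up to the error budget, so the level-$j$ contribution is within $(1\pm\epsilon)$ of $g(2^j)\,|S_j|$, and it suffices to estimate $|S_j|$ for every ``important'' level, i.e.\ one carrying at least an $\epsilon/\mathrm{polylog}(n)$ fraction of $g(V)$. I would subsample the coordinates at geometric rates $1,\tfrac12,\tfrac14,\dots$, run a \textsc{CountSketch}-type $L_2$ heavy-hitters structure on each substream, and read off $|S_j|$ from the rate at which level-$j$ coordinates are reported. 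Slow-jumping ($g$ grows no faster than $x^2$ at every scale) is what guarantees that an important level is also $L_2$-heavy at the right sampling rate, so its coordinates are actually recovered; slow-dropping guarantees the $g$-mass is never hidden below the sampling noise floor, so no important level is missed; and predictability bounds the aggregate error from approximating $g$ by a level-constant and from summing over the many unimportant levels. Choosing the number of sampling levels, the \textsc{CountSketch} dimensions, and the repetition counts appropriately keeps the total space sub-polynomial.

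For part (iii), assume $g\in\G$ is $\bbS$-normal and one-pass tractable; I would prove each property by contraposition, producing in each case an insertion-only stream lower bound polynomial in $n$. If $g$ is not slow-jumping, then $g$ grows faster than $x^{2+c}$ on infinitely many scales, and one embeds a hard (multiparty) set-disjointness instance exactly as in the classical $F_k$, $k>2$, lower bound: a single planted heavy coordinate outweighs the rest of the stream and detecting it needs polynomial space. If $g$ is not slow-dropping, then $g$ decreases polynomially fast near some value $b$, so a large block of coordinates of value $b$ contributes negligibly while one coordinate slightly below $b$ dominates; a reduction from \textsf{INDEX} (or \disj{}) again forces polynomial space. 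The delicate case is a violation of predictability, where I would reduce from the new \textsf{ShortLinearCombination} problem: rescaling $g$ so that $g(b)\geq n\cdot\max_i g(a_i)$ on a carefully chosen frequency set $\{0,a_1,\dots,a_r,b\}$ in which $b$ is expressible over the $a_i$ only with large coefficients, a $g$-SUM algorithm would decide whether a coordinate of value $b$ is present, and the lower bound then follows from our communication bound for \textsf{ShortLinearCombination}, whose complexity grows with the coefficient magnitudes. Here $\bbS$-normality is exactly the hypothesis excluding the exotic nearly-periodic functions for which these reductions break down.

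I expect the main obstacle to be the predictability side, in both directions. On the lower-bound side, proving a tight communication bound for \textsf{ShortLinearCombination} and, more importantly, turning a quantitative failure of predictability into a \textsf{ShortLinearCombination} instance with provably large coefficients is the novel and most technical step. On the algorithmic side, the corresponding difficulty is making the layered heavy-hitters error analysis close using slow-jumping, slow-dropping, and predictability \emph{simultaneously}: none of the three alone is enough, and one must track how an error term controlled by one property feeds into a bound that relies on another. The set-disjointness-style lower bounds for failures of slow-jumping and slow-dropping, the argument that the three properties imply normality, and the final parameter bookkeeping in the algorithm I expect to be comparatively routine.
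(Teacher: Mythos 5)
Your high-level decomposition (three properties $\Rightarrow$ normal is trivial; three properties $\Rightarrow$ algorithm; normal $+$ tractable $\Rightarrow$ three properties via communication lower bounds) matches the paper's, and your observation that slow-dropping alone already rules out $\bbS$-near-periodicity is correct. The lower bounds for failures of slow-dropping and slow-jumping via \textsf{INDEX} and a disjointness-style reduction are also essentially what the paper does (Lemmas~\ref{lem:normslowdropping} and \ref{lem:normslowjumping}). But there are two substantive problems.

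First, and most seriously, your algorithmic step rests on the claim that predictability makes $g$ ``essentially constant on each dyadic level $S_j$,'' so that the level-$j$ mass is $(1\pm\epsilon)\,g(2^j)|S_j|$. That is false. The paper's own running example $(2+\sin x)\mathbf{1}(x>0)$ is slow-dropping, slow-jumping, and predictable --- hence $1$-pass tractable by the theorem --- yet it oscillates by a constant multiplicative factor inside every dyadic interval. Predictability says something much weaker: if $y\ll x$ and $g(x+y)$ is far from $g(x)$, then $g(y)$ is not tiny relative to $g(x)$. It does not control the variation of $g$ across a level. The Indyk--Woodruff ``level-constant'' layering therefore collapses to an $O(1)$-approximation for such $g$, not a $(1\pm\epsilon)$-approximation. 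The paper's algorithm avoids this by going through heavy hitters instead of level sets: it applies the Braverman--Ostrovsky Recursive Sketch (Theorem~\ref{thm: recursive sketches}) to reduce $g$-SUM to finding a $(g,\lambda,\epsilon)$-cover, runs CountSketch, and then prunes using the stability radius $r_{\epsilon/2}(\cdot)$. The role of predictability is captured by the Predictability Booster (Lemma~\ref{lem: predictability booster}): if a heavy hitter $x$ has a small stability radius, then $r_{\epsilon/2}(x)+1$ would itself be a $(g,\cdot)$-heavy (hence $F_2$-heavy) frequency, which forces CountSketch to return $x$ to within $r_{\epsilon/2}(x)$. This is where predictability interacts with slow-jumping and slow-dropping --- it tightens the CountSketch error bound, not the level-to-level variation of $g$. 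Your layering plan would need to be replaced wholesale by something of this kind.

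Second, for the lower bound when predictability fails you propose a reduction from \textsf{ShortLinearCombination}. That is not where the paper uses it, and it is not needed: a failure of predictability hands you pairs $(x_k,y_k)$ with $y_k\ll x_k$, $g(x_k+y_k)$ far from $g(x_k)$, and $g(y_k)$ polynomially smaller than $g(x_k)$, and a plain \textsf{INDEX}$(\epsilon(x_k)x_k^\gamma/4)$ reduction (Alice inserts $y_k$ copies, Bob inserts $x_k$ copies) already gives the polynomial bound (Lemma~\ref{lem:normpredictable}). \textsf{ShortLinearCombination} is introduced in the paper only to get lower bounds for some $\bbS$-nearly periodic functions, i.e.\ precisely the functions excluded by the ``normal'' hypothesis of this theorem, where \textsf{INDEX} and \disj{} break down. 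Routing the predictability lower bound through \textsf{ShortLinearCombination} is both unnecessary and would require you to manufacture a multi-frequency instance from a two-frequency obstruction, which you have not explained how to do.
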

\begin{theorem}[2-pass Zero-One Law]
\label{thm:norm2passalg}
A function $g\in\G$ is 2-pass tractable and normal if and only if it is slow-dropping and slow-jumping.
\end{theorem}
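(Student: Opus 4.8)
The plan is to prove the two directions separately, following the blueprint of the one-pass case in Theorem~\ref{thm:norm1passalg} but using the extra pass to eliminate the predictability hypothesis. In the backward (lower-bound) direction we may assume $g$ is normal; in the forward (upper-bound) direction we must establish both tractability and normality from slow-jumping and slow-dropping.

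\textbf{Upper bound.} Assume $g\in\G$ is slow-jumping and slow-dropping. First I would verify that such a $g$ is automatically normal, so that asking for its tractability within this class is meaningful, and then build a two-pass algorithm via the layering method of Indyk and Woodruff~\cite{indyk2005optimal}. Partition the coordinates into sub-polynomially many level sets according to the magnitude of $g(|v_i|)$, so that $g(V)=\sum_i g(|v_i|)$ equals, up to a factor $(1\pm\epsilon)$, a weighted sum of the level-set cardinalities with known weights. To estimate the cardinality of each level that contributes non-negligibly, subsample the coordinates at a geometric sequence of rates and run an $F_2$ heavy-hitter sketch on each subsampled substream; at the appropriate rate a coordinate of a given level becomes an $\epsilon$-heavy hitter of that substream. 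The two hypotheses do the real work: slow-jumping guarantees that $g$ never outpaces $x^2$ at any scale, so a coordinate carrying an $\epsilon$-fraction of the $g$-mass of a substream also carries at least an $\epsilon^{O(1)}$-fraction of its $F_2$ mass and is therefore found by the sketch; slow-dropping guarantees that $g$ does not collapse between scales, which both bounds the number of contributing levels and keeps coordinates with large frequency but small $g$-value from generating $F_2$ noise that drowns the genuine heavy hitters. Use the first pass to run these sketches and produce a short list of candidate heavy coordinates together with approximate level cardinalities; use the second pass to re-read the stream, recover the \emph{exact} frequencies of the candidates, evaluate $g$ on them, and combine everything as in the standard layering analysis to obtain a $(1\pm\epsilon)$-approximation. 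Because each heavy coordinate's frequency is known exactly after the second pass, no error arises from evaluating $g$ at a perturbed argument---precisely the error that predictability is invoked to control in the one-pass algorithm---so predictability can be dropped.

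\textbf{Lower bound.} Suppose $g$ is normal but fails to be slow-jumping or slow-dropping; I would show $g$-SUM needs polynomial space even with $O(1)$ passes and even on insertion-only streams, by reductions from communication complexity that turn a $p$-pass algorithm into an $O(p)$-round protocol, using problems ($\disj$, $\disjind$, and the \textsf{ShortLinearCombination} problem introduced earlier) whose complexity remains large under a constant number of rounds, so the bound survives constantly many passes. If $g$ is not slow-jumping there is a scale at which $g$ grows much faster than $x^2$; concentrating the jump at a single frequency value $b$ makes the presence or absence of one coordinate of value $b$ flip $g(V)$ by more than a $(1+\epsilon)$ factor, and detecting this is at least as hard as an instance of \textsf{ShortLinearCombination} (or $\disj$), giving the polynomial lower bound. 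If $g$ is not slow-dropping there are $b'<b$ at which $g(b)$ is smaller than $g(b')$ by a polynomial factor; planting coordinates whose frequencies are pushed from $b'$ into the collapsed value $b$ by the second player leaves $g(V)$ essentially unchanged on the intersecting coordinates but drops it on the rest, encoding a hard instance of $\disj$ (or $\disjind$) and again forcing polynomial space.

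\textbf{Main obstacle.} I expect the crux to be the upper-bound analysis: making precise the sense in which slow-jumping and slow-dropping \emph{jointly}---not either in isolation---force $g$ to behave like an $F_2$-type statistic at every scale, so that $\epsilon$-heaviness in $g$-mass implies $\epsilon^{O(1)}$-heaviness in $F_2$ across all subsampled substreams at once, while simultaneously controlling the number of contributing levels and the sparsity of the heavy part tightly enough that the exact second-pass recovery and all the sketches fit in sub-polynomial space. The quantitative interplay of the two properties is what has to be gotten right.
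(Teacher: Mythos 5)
Your upper bound is in spirit the paper's: both reduce $g$-SUM to detecting heavy hitters via an $F_2$ sketch, with the key technical step being that slow-jumping and slow-dropping together force any $(g,\lambda)$-heavy hitter to also be $\lambda/h(M)^2$-heavy for $F_2$ (the paper's Lemma~\ref{lem: g heavy implies F2 heavy}), and the second pass is used to recover exact frequencies so predictability becomes unnecessary. Where the paper invokes the Recursive Sketch of Braverman--Ostrovsky (Theorem~\ref{thm: recursive sketches}) as a black box, you re-derive the Indyk--Woodruff layering directly (level sets, geometric subsampling, per-level $F_2$ sketches). That is a valid but longer route; the Recursive Sketch is precisely a packaging of that argument, and using it saves the bookkeeping about how many levels contribute and how subsampling interacts with the heavy-hitter guarantee. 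Your observation that slow-dropping alone already implies $\bbS$-normality (so the normality conclusion is free) matches the paper.

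The lower bound is where there is a real gap. You say to ``assume $g$ is normal'' but never actually deploy the assumption, and the assumption is the whole crux. If $g$ is not slow-dropping you get pairs $x<y$ with $g(y)\le g(x)/y^\alpha$, and the natural $\disj(n,2)$ reduction has one player plant frequency $y$ on some coordinates and the other plant $x$, so the two cases differ by $|g(x+y)-g(x)|$ versus $g(x)+g(y)$. That difference can be \emph{tiny} --- that is exactly the nearly periodic obstruction this paper is built around (e.g.\ $g_{np}$ with $g_{np}(2^k)=2^{-k}$ yet $g_{np}(2^k+1)=1$). The $\bbP$-normal hypothesis is what rescues the reduction: it guarantees one can choose $\alpha$-periods $y$ with $|g(x+y)-g(x)|>y^\beta\min(g(x),g(x+y))$, which makes the gap polynomially large in $y$ and lets one set the $\disj$ universe size to $y^{\gamma/2}$. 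Your phrase ``planting coordinates whose frequencies are pushed from $b'$ into the collapsed value $b$ \ldots leaves $g(V)$ essentially unchanged on the intersecting coordinates'' is backwards for a lower-bound argument: if $g(V)$ is essentially unchanged between the two cases the reduction yields nothing. Without spelling out where normality is used, the argument does not rule out the nearly periodic functions, and the ``if'' direction of the claimed iff is unsupported. A minor additional point: for the not-slow-jumping case the paper uses $\disj(n,t)$ with $t\approx y/x$ (Lemma~\ref{lem:pnormalslowdjumping}), not {\sf ShortLinearCombination}; the latter is reserved in the paper for attacking the nearly periodic functions.
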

The main difference between the two theorems is that predictability is not needed in two passes.
The message is that large local variability can rule out 1-pass approximation algorithms but not 2-pass approximation algorithms.
Theorem~\ref{thm:norm2passalg} extends to any subpolynomial number of passes.

Most functions one encounters are normal, which include convex, concave, monotonic, polynomial, or trigonometric functions, as well as functions of regular
variation and unbounded Lipschitz-continuous functions.

\section{Preliminaries}\label{sec: preliminaries}
In describing the requirements and space efficiency of our algorithms for $g$-SUM, we use the set of sub-polynomial functions.
\begin{definition}

A function $f:\mathbb{R}_{\geq0}\rightarrow\mathbb{R}_{\geq0}$ is {\em \subpoly{}} if for any $\alpha>0$, the following two conditions are satisfied: a) $\lim_{x\rightarrow\infty}x^\alpha f(x) = \infty$ and b) $\lim_{x\rightarrow\infty}x^{-\alpha} f(x) = 0$.
\end{definition}

We use $\spoly(x)$ to represent the set of \subpoly{} functions in the variable $x$. 
Examples of sub-polynomial functions include polylogarithmic functions and some with faster growth, like $2^{\sqrt{\log{n}}}$.
The set of poly-logarithmic functions is a subset of the \subpoly{}~functions.
Formally, we study the class of functions $g$ such that $(g,\epsilon)$-SUM can be solved with any accuracy $\epsilon \in\spoly(n)$, using only \subpoly{}~space.
Our algorithms assume an oracle for computing $g$ and that the storage required for the value $g(x)$ is sub-polynomial in $x$.
Regardless, our sketches are sub-polynomial in size, but without these assumptions it could take more than sub-polynomial space to compute the approximation from the sketch.
We restrict ourselves to the case $M\in\poly(n)$. 
This restriction is reasonable because if it grows, say, exponentially in $n$, then an algorithm can store the entire frequency vector and compute $g(V(D))$ exactly in polylog$(nM)$ space. 

\begin{definition}
A function $g:\Z_{\geq 0}\rightarrow \mathbb{R}$ is {\em $p$-pass tractable} if given any sub-polynomial function $h(x)$ and any $\epsilon\ge 1/h(nM)$ there exists a sub-polynomial function 
$h^*(x)$ and a $p$-pass algorithm $\mathcal{A}$ that solves $(g,\epsilon)$-SUM for all streams in $\calD(n,m)$ and $n,M\geq 1$ using no more than $h^*(nM)$ space in the worst case.
\end{definition}

Replacing the storage requirement, which is $h^*(nM)$, with $h^*(n)\log M$ would also be natural, but since we consider $M\in\poly(n)$ the two definitions are equivalent.
It simplifies our notation a little bit to write $h^*(nM)$.
Unless otherwise noted, for the rest of this paper we require that $g(0) = 0$ and $g(x) > 0$, for all $x>0$. 
The choice $g(0)=0$ is equivalent to requiring that $g(V(D))$ does not depend on the particular choice of $n$ in the model; specifically it avoids the following behavior: if $g(0)\neq 0$ then given a single stream $D$ the value of $g(V(D))$ differs depending on the choice of the dimension, even though the stream remains the same.
Functions with $g(0)\neq 0$ may be of interest for some applications. The laws for these functions are very similar to the case when $g(0)=0$ and we provide them in Appendix~\ref{app:caseg00}. 

As shown in \cite{ChestnutThesis}, functions with $g(x)=0$, for some integer $x>0$, are not $n^\alpha$-pass tractable for any $\alpha<1$, unless $g$ is nonnegative and periodic with period $\min\{x>0\mid g(x)=0\}$, which must exist if $g$ is periodic and $g(0)=0$.
It is also shown in \cite{ChestnutThesis} that if a non-linear function $g$ takes both positive and negative values, then $g$ is not $n^\alpha$-pass tractable, for any $\alpha<1$. 
Without loss of generality, we can assume $g(1) = 1$ because a multiplicative approximation algorithm for the function $g^\prime(x) := g(x)/g(1)$ is also a multiplicative approximation algorithm for $g$.  
Finally, for simplicity of notation we will often extend the domain of $g$ symmetrically to $\Z$, i.e., setting $g(x)=g(-x)$, for all $x\in\Z_{\geq0}$, which allows us the simpler notation $g(|v_i|)=g(v_i)$.
In summary, we study the functions in the class
\[\mathcal{G} \equiv \{g:\Z_{\geq0}\rightarrow \mathbb{R},  g(0) = 0,  g(1) = 1, \forall x>0, g(x) >0\}\]

Our results are based on three characterizations of the variation of the functions. 
We now state the technical definitions mentioned in Section~\ref{sec: our results}.
\begin{definition}
A function $g\in\G$ is \emph{slow-jumping} if for any $\alpha > 0$, there exists $N >0$,  such that for any $x<y$ if $y\ge N$, then $g(y)\le   \lfloor\frac{y}{x}\rfloor^{2+\alpha}x^\alpha g(x)$.
\end{definition}
Slow-jumping is a characterization of the rate of growth of a function. 
Examples of slow-jumping functions are $x^{p}$, for $p\leq 2$, $x^22^{\sqrt{\log x}}$, and $\left(2+\sin x\right)x^2$, while functions like $2^x$ and $x^p$, for any $p>2$, are not slow jumping because they grow too quickly. 

\begin{definition}
A function $g\in\G$ is {\em slow-dropping} if for any $\alpha >0$ there exists $N>0$, such that for any $x<y$, if $y \ge N$, then $g(y)\ge g(x)/y^\alpha$.
\end{definition}
Whether a function is slow-dropping is determined by its rate of decrease.
The functions\footnote{$\mathbf{1}(\cdot)$ denotes the indicator function.} $(\log_2 1+x)^{-1}\mathbf{1}(x>0)$ and $\left(2+\sin x\right)x^2$ are slow-dropping, but any function with polynomial decay, i.e. $x^{-p}$, for $p>0$, is not slow-dropping.

Given a function $g$, $x\in\N$, and $\epsilon>0$ define the set:
$\delta_\epsilon(g, x):= \{y\in\N \mid |g(y)-g(x)|\leq \epsilon g(x)\}.$ 
\begin{definition}
A function $g$ is \emph{predictable} if for every $0<\gamma<1$ and sub-polynomial $\epsilon$, there exists $N$ such that for all $x\geq N$ and $y\in [1,x^{1-\gamma})$ with $x+y\notin\delta_{\epsilon(x)}(g,x)$ we have $g(y)\geq x^{-\gamma}g(x)$.
\end{definition}
Predictability is governed by the local variability of a function and its rate of growth.
For example, the function $g(x)=x^2$ is predictable because $g(x+y)/g(x) = (1+\frac{y}{x})^2\approx 1$ when $y\ll x$, or more precisely $\pi_\epsilon(x)\supseteq \{y\mid |x-y|\leq \epsilon x/3\}$. 
That function has low local variability.
On the other hand, the function $(2+\sin x)\mathbf{1}(x>0)$ is locally highly variable but still predictable.
Notice that even $x+1\notin \pi_\epsilon(x)$, for say $\epsilon=0.01$, but $g(1)/g(x) \geq 1/3$ so the inequality in the definition of predictability is satisfied.
A negative example is the function $(2+\sin x)x^2$, which is not predictable because it varies quickly (by a multiplicative factor of 3) and grows with $x$.

As we will show, the three conditions above can be used to characterize nearly every function in $\G$ in both the single-pass and $O(1)$-pass settings.
We remark here that the characterization has already been completed for monotonic functions.
The tractability condition for nondecreasing $g$ proved by~\cite{braverman2010zero} is equivalent to $g$ being slow-jumping and predictable.
For nonincreasing functions, it is a consequence of~\cite{braverman2015universal} that polynomially decreasing functions are not tractable while sub-polynomially decreasing functions are tractable.
The ``nearly periodic'' functions are formally defined next. 

Let $\bbS$ be the set of non-increasing sub-polynomial functions on domain $\Z_{\geq0}$ and $\bbP$ be the set of strictly increasing polynomial functions on $\Z_{\geq0}$.
We now define the sets of $\bbS$-nearly periodic functions and $\bbP$-nearly periodic functions.
The motivation for these definitions jumps ahead to our lower bounds, but we will explain it here.
The reductions used in some of our lower bounds boil down to finding three integers $x<y$ and $x+y$ such that $g(x)\gg g(y)$ and $g(x)\not\approx g(x+y)$.
The reductions fail when $g(x)\approx g(x+y)$, where the meaning of ``$\approx$'' depends on whether we are bounding $1$-pass or multi-pass algorithms.
For the 1-pass reduction to fail it is necessary that $\frac{1}{g(x)}|g(x)-g(x+y)|$ decreases as $x$ increases, hence the $\bbS$-nearly periodic functions.
The 2-pass reduction fails as long as $\max(\frac{g(x)}{g(x+y)},\frac{g(x+y)}{g(x)})$ is not polynomially large in $y$, hence the $\bbP$-nearly periodic functions.

\begin{definition}

Given a set of functions $\calS$, call
$g(x)$ {\em $\calS$-nearly periodic}, if the following two conditions are satisfied.
\begin{enumerate}
\item There exists $\alpha >0$ such that for any constant $N>0$ there exists $x, y\in \mathbb{N}$, $x<y$ and $y\ge N$ such that  $g(y) \le g(x) / y^\alpha$. Call such a $y$ an {\em $\alpha$-period of $g$};
\item For any $\alpha > 0$ and any error function $h\in \calS$ there exists $N_1>0$ such that for all $\alpha$-periods $y\ge N_1$ and all $x<y$ such that $g(y) y^\alpha \le g(x)$, we have $|g(x+y)-g(x)|\le \min\{g(x), g(x+y)\}h(y).$
\end{enumerate}
A function $g$ is {\em $\calS$-normal} if it is not $\calS$-nearly periodic.  
\end{definition}
The first condition states that the function is not slow-dropping.
For example, if $g$ is bounded then there is an increasing subsequence along which $g$ converges to 0 polynomially fast.
To understand the second condition, take $h$ to be a decreasing function or a small constant.
In loose terms, it states that if $x<y$ and $g(x)\gg g(y)$, then $g(x)\approx g(x+y)$.
The choice of the set of functions $\calS$ determines the relative accuracy implied by ``$\approx$''.
We will apply the definition with $\calS$ set to either $\bbS$ or $\bbP$, that is, with either subpolynomially or polynomially small relative accuracy.

An $\bbS$-nearly periodic function (it is also $\bbP$-nearly periodic) can be constructed as follows.
For each $x\in\N$, let $i_x = \max\{j: 2^j | x\}$ and let $g_{np}(0)=0$ and $g_{np}(x) = 2^{-i_x}$. 
For example, $g_{np}(1)=1$, $g_{np}(2)=1/2$, $g_{np}(3)=1$, $g_{np}(4)=1/4$, etc.
The proof that $g$ is nearly periodic appears in Appendix~\ref{example:nearly periodic}, but for now notice that
it satisfies the first part of the definition because $g_{np}(2^k)=2^{-k}$ and, as an example of the second part, we have $g_{np}(2^k+1) = g_{np}(1)=1$.
One may guess that a streaming algorithm for such an erratic function requires a lot of storage.
In fact, $g_{np}$ can be approximated in polylogarithmic space!
Appendix~\ref{example:nearly periodic} has the proof.

The following containment follows directly from the definition of the nearly periodic functions.
\begin{proposition}\label{prop:pnorm in snorm}
Every $\bbS$-nearly periodic function is also $\bbP$-nearly periodic.
Every $\bbP$-normal function is $\bbS$-normal.
\end{proposition}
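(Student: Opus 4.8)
The two sentences are contrapositives of one another --- by definition $g$ is $\calS$-normal precisely when it fails to be $\calS$-nearly periodic --- so it suffices to prove the first: every $\bbS$-nearly periodic $g$ is $\bbP$-nearly periodic. The plan is to verify the two conditions in the definition of $\bbP$-nearly periodic directly. Condition~1 makes no reference to the ambient set of error functions, so it holds for $\bbP$ exactly because it holds for $\bbS$; in particular the notion of an ``$\alpha$-period'' is the same in both settings, so nothing needs to be done there.

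For Condition~2, fix $\alpha>0$ and a strictly increasing polynomial $h\in\bbP$; I must produce $N_1$ so that for every $\alpha$-period $y\ge N_1$ and every $x<y$ with $g(y)y^\alpha\le g(x)$ one has $|g(x+y)-g(x)|\le\min\{g(x),g(x+y)\}\,h(y)$. The key observation is that the constant function $1$ is non-increasing and sub-polynomial, hence lies in $\bbS$, so Condition~2 for $\bbS$ --- applied with this same $\alpha$ and with the error function $h_0\equiv 1$ --- yields an $N_1^{(0)}$ with $|g(x+y)-g(x)|\le\min\{g(x),g(x+y)\}$ for all admissible $y\ge N_1^{(0)}$ and $x$. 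Since $h$ is a strictly increasing polynomial it has degree at least $1$ with positive leading coefficient, so $h(y)\to\infty$ and there is $N_1^{(1)}$ with $h(y)\ge 1$ for all $y\ge N_1^{(1)}$. Taking $N_1=\max\{N_1^{(0)},N_1^{(1)}\}$ and chaining the two bounds gives $|g(x+y)-g(x)|\le\min\{g(x),g(x+y)\}\le\min\{g(x),g(x+y)\}\,h(y)$, which is exactly Condition~2 for $\bbP$. Hence $g$ is $\bbP$-nearly periodic, and the second statement follows by taking the contrapositive.

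I do not anticipate any real obstacle: the argument is just the remark that every $h\in\bbP$ eventually dominates the constant~$1$ while $\bbS$ already contains that constant, so the (stronger) $\bbS$ form of Condition~2 trivially implies the (weaker) $\bbP$ form. The only points to check --- that $1\in\bbS$ and that a strictly increasing polynomial is eventually at least $1$ --- are immediate from the definitions, so the whole proof is a couple of lines.
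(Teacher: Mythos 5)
Your proof is correct, and since the paper simply asserts that the proposition ``follows directly from the definition'' without supplying any argument, you have filled in exactly the details the authors left implicit. The key observations you make --- that the two sentences are contrapositives, that Condition~1 is independent of the ambient class of error functions, that the constant~$1$ belongs to $\bbS$, and that any strictly increasing polynomial is eventually $\ge 1$ so the $\bbS$-form of Condition~2 implies the $\bbP$-form --- are precisely the intended route, and nothing is missing.
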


\subsection{Heavy Hitters, Communication Complexity, and CountSketch}

Our sub-polynomial space algorithm is based on the Recursive Sketch of Braverman and Ostrovsky~\cite{braverman2013generalizing}, which reduces the problem to that of finding heavy hitters.

\begin{definition}
Given a stream $D\in\calD(n,m)$ and a function $g$, call $j\in[n]$ a \emph{$(g,\lambda)$-heavy hitter} of $V(D)$ if $g(|v_j|)\ge \lambda\sum_{i\neq j} g(|v_i|)$.
We will use the terminology $\lambda$-heavy hitter when $g$ is clear from the context.
\end{definition}

Given a heaviness parameter $\lambda$ and approximation accuracy $\epsilon$, a heavy hitters algorithm returns a
set $\{i_1,i_2,\ldots\}$ containing every $(g,\lambda)$-heavy hitter and also returns a $(1\pm\epsilon)$-approximations to $g(v_{i_j})$ for each $i_j$ in the set. 
Such a collection of pairs is called a $(g,\lambda,\epsilon)$-cover.

\begin{definition}
Given a stream  $D\in\calD(n,m)$, a $(g,\lambda, \epsilon)$-cover of $V(D)$ is a set of pairs $\{(i_1, w_1), (i_2, w_2),$ $\ldots, (i_t, w_t)\}$, where $t\le n$ and the following are true:
\begin{enumerate}
\item $\forall k\in[t]$, $|w_{k} - g(|v_{i_k}|)| \le \epsilon g(|v_{i_k}|)$, and 
\item if $j$ is a $(g,\lambda)$-heavy hitter of $V(D)$, then there exists $k\in[t]$ such that $i_k = j$.
\end{enumerate}
\end{definition}

Formally, a $p$-pass $(g,\lambda,\epsilon,\delta)$-heavy hitter algorithm is a $p$-pass streaming algorithm that outputs a $(g,\lambda, \epsilon)$-cover of $V(D)$ with probability at least $(1-\delta)$.

\begin{theorem}[\cite{braverman2013generalizing}]\label{thm: recursive sketches}
Let $\lambda=\frac{\epsilon^2}{\log^3 n}$ and $\delta = {\frac{1}{\log{n}}}$.
If there exists a $(g,\lambda,\epsilon,\delta)$-heavy hitter algorithm using $s$ bits of space, then
there exists a $(g,\epsilon)$-SUM algorithm using $O(s\log n)$ bits of space.
\end{theorem}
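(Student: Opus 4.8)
The plan is to instantiate the Recursive Sketch of Braverman and Ostrovsky with the assumed heavy-hitter subroutine as a black box. First I would fix $L=\lceil\log_2 n\rceil+O(1)$ and, using $L$ pairwise-independent hash functions $h_1,\dots,h_L:[n]\to\{0,1\}$ stored once (cost $O(\log^2 n)$ bits, absorbed into $O(s\log n)$ since $s=\Omega(\log n)$), define the nested random subsets $U_0=[n]\supseteq U_1\supseteq\cdots\supseteq U_L$ by $U_j=\{i: h_1(i)=\cdots=h_j(i)=1\}$, so each element of $U_j$ is kept in $U_{j+1}$ independently with probability $1/2$. On level $j$ I run the given $(g,\lambda,\epsilon,\delta)$-heavy-hitter algorithm, with $\lambda=\epsilon^2/\log^3 n$ and $\delta=1/\log n$, on the substream consisting of exactly those updates $(i_k,\delta_k)$ with $i_k\in U_j$; its frequency vector is $V(D)$ restricted to $U_j$. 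This produces a $(g,\lambda,\epsilon)$-cover $\{(i,w_i^{(j)})\}$ with reported index set $H_j$. Each level uses $s$ bits, so the total space is $O(s\log n)$, and the subsampling is consistent across the (single) pass because the hash functions are fixed.

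To combine the covers I would write $G(U):=\sum_{i\in U}g(|v_i|)$ and build an estimator bottom-up: choosing $L$ large enough that $U_L=\emptyset$ with probability $1-1/\poly(n)$, set $X_L:=0$ and, for $j=L-1,\dots,0$,
\[
X_j := \sum_{i\in H_j} w_i^{(j)} + 2\Bigl(X_{j+1} - \sum_{i\in H_j\cap U_{j+1}} w_i^{(j)}\Bigr),
\]
and output $X_0$. The intuition is a head/tail split at every level: the ``head'' $G(H_j)$ is estimated to within a $(1\pm\epsilon)$ factor by the first sum (the cover guarantee), while the ``tail'' $G(U_j\setminus H_j)$ contains no $(g,\lambda)$-heavy coordinate, hence $\mathbb{E}\bigl[G(U_{j+1}\setminus H_j)\mid U_j\bigr]=\tfrac12 G(U_j\setminus H_j)$ and it concentrates; the correction $-2\sum_{i\in H_j\cap U_{j+1}}w_i^{(j)}$ just removes the head coordinates that happen to survive into $U_{j+1}$ so they are not counted twice. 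Quantitatively, conditioned on the level-$j$ cover being correct, no surviving tail term exceeds a $\lambda$-fraction of $G(U_j)$, so $\mathrm{Var}\bigl(G(U_{j+1}\setminus H_j)\mid U_j, H_j\bigr)=\tfrac14\sum_{i\in U_j\setminus H_j}g(|v_i|)^2\le \tfrac{\lambda}{4}G(U_j)^2$ using pairwise independence of the level-$(j{+}1)$ hash bits; Chebyshev with deviation $\Theta(\sqrt{\log n})$ standard deviations then gives $2G(U_{j+1}\setminus H_j)=G(U_j\setminus H_j)\pm O(\epsilon/\log n)\,G(U_j)$ except with probability $\Theta(1/\log n)$.

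Correctness follows by downward induction on $j$: conditioning on the $\Theta(\log n)$ ``good events'' (cover correct and tail concentrates at each level, and $U_L=\emptyset$), one shows $X_j=(1\pm e_j)G(U_j)$ with $e_j\le e_{j+1}+O(\epsilon/\log n)$, using that a level-$j$ additive error is amplified by the factor $2^j$ from the recursion but $G(U_j)\approx 2^{-j}G(U_0)$ offsets this; hence $e_0=O(\epsilon)$. This is exactly where the parameters are spent: the $\log^3 n$ in the denominator of $\lambda$ supplies one $\log n$ for the number of levels accumulated, and $\log^{3/2}n$ worth for the $\sqrt{\log n}$-standard-deviation Chebyshev bound, while $\delta=1/\log n$ is what a union bound over the $\Theta(\log n)$ heavy-hitter calls can tolerate. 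Rescaling $\epsilon$ by a fixed constant, and taking the median of $O(1)$ independent copies if the union-bound constant is not already below $1/3$, restores the exact $(1\pm\epsilon)$ and $2/3$ guarantees; nonnegativity of $g$ (true since $g\in\G$) is used throughout so that every partial sum is nonnegative and no cancellation inflates the variance. The main obstacle is precisely this error-compounding analysis across $\Theta(\log n)$ recursion levels: one must handle the entanglement between the sets $H_j$ and the subsampling, verify that the telescoping genuinely cancels the double-counting at every level and not merely in expectation, and check the base case so that deep levels contribute negligibly — making the probability and accuracy bookkeeping close with the stated $\lambda$ and $\delta$ is the delicate part.
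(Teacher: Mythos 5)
The paper itself gives no proof of this statement: Theorem~\ref{thm: recursive sketches} is imported verbatim from \cite{braverman2013generalizing}, so there is no in-paper argument to compare your reconstruction against. That said, your high-level strategy is the right one — recursive geometric subsampling, a heavy-hitter call per level, and a telescoping estimator that adds the detected head and doubles the recursively estimated tail — and your bookkeeping of where the $\log^3 n$ in $\lambda$ and the $1/\log n$ in $\delta$ get spent is plausible for the subsampling variance. The single-pass feasibility, the $O(s\log n)$ space count, and the Chebyshev step for the tail concentration are all reasonable.

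There is, however, a real gap in the error induction. You claim $e_j \le e_{j+1} + O(\epsilon/\log n)$, but the level-$j$ cover contributes error up to $\pm\epsilon\,G(H_j)$, and $G(H_j)$ can be $\Theta(G(U_j))$ — nothing forces it to be an $O(1/\log n)$ fraction of $G(U_j)$. The reason the true values telescope (a fixed item $i$ that is heavy from level $j_0$ through its dropout level $J_i$ contributes $2^{J_i}g(|v_i|)-\sum_{j=j_0}^{J_i-1}2^j g(|v_i|)=2^{j_0}g(|v_i|)$, which is then correctly unbiased) is that the \emph{same} quantity $g(|v_i|)$ appears at every level. But your estimator uses level-dependent cover estimates $w_i^{(j)}$ with no guarantee that $w_i^{(j)}=w_i^{(j')}$, and the cover guarantee only says each is within $(1\pm\epsilon)$. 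An unfavorable, level-dependent pattern of signs in the $\pm\epsilon$ slack breaks the telescoping: the contribution error from item $i$ can reach $\Theta(\epsilon\,2^{J_i}g(|v_i|))$, and since $\mathbb{E}[2^{J_i}\mathbf{1}[i\in U_{j_0}]]=\Theta(\log n)$, this aggregates to $\Theta(\epsilon\log n)\,g(V)$ rather than $O(\epsilon)\,g(V)$. The Braverman--Ostrovsky argument must therefore be structured so that each item's cover estimate is charged once — e.g.\ by the Indyk--Woodruff layering decomposition, where item $i$ contributes only at the single subsample level matched to its own magnitude scale — rather than re-estimated and re-subtracted at every level it survives. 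You flagged the error-compounding as the delicate part, and you were right to; but the inequality $e_j\le e_{j+1}+O(\epsilon/\log n)$ is asserted rather than derived, and without a device to prevent re-charging the same item's $(1\pm\epsilon)$ slack across $\Theta(\log n)$ levels the induction does not close with the stated $\lambda=\epsilon^2/\log^3 n$.
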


A similar approach has been used in other streaming algorithms; it was pioneered by Indyk and Woodruff~\cite{indyk2005optimal} for the problem of approximating the frequency moments.
Their algorithm uses the streaming algorithm CountSketch of Charikar, Chen, and Farach-Colton~\cite{charikar2002finding}, and ours will as well.
Given $\lambda$, $\epsilon$, and $\delta$, a CountSketch data structure using space $O(\frac{1}{\lambda\epsilon^2}\log{\frac{n}{\delta}}\log{M})$ gives an approximation~$\hat{v}_i$ to the frequency of every item~$i\in[n]$ with the following guarantee: with probability at least $(1-\delta)$, $|v_i-\hat{v}_i|\leq \epsilon\sqrt{\lambda F_2}$ for all $i\in[n]$.
By adjusting the parameters slightly we can treat the output of CountSketch$(\lambda,\epsilon,\delta)$ as a set of $k=O(1/\lambda)$ pairs $(i_j,\hat{v}_{i_j})$ such that $\{i_j\}_{j=1}^{k}$ contains all of the $\lambda$-heavy hitters for $F_2$ and $|v_{i_j}-\hat{v}_{i_j}|
\leq \epsilon\left(\sum_{j>k}\bar{v}_j^2\right)^{1/2}
\leq \epsilon\sqrt{\lambda F_2},$
for all $j=1,2,\ldots,k$, where $(\bar{v}_1,\ldots,\bar{v}_n)$ is a reordering of the frequency vector from largest to smallest absolute magnitude.

We establish our lower bounds using communication and information complexity. 
Several of our lower bounds can be shown by reduction from standard INDEX and DISJ problems, as well as a 
combination of them (related combined problems were used in ~\cite{ChestnutThesis,li2013tight}). 
In the INDEX$(n)$ problem, there are two players Alice, given a set $A\subseteq[n]$, and Bob, given $b\in[n]$. 
Alice sends a message to Bob, and with probability at least $2/3$, Bob must determine whether $b\in A$.
Any randomized one-way protocol for INDEX$(n)$ requires Alice to send $\Omega(n)$ bits \cite{knr99}. 
In an instance of $\disj(n,t)$ there are $t$ players each receiving a subset of $[n]$ with the promise that either the sets are pairwise disjoint, or there is a single item that every set contains and 
other than this single item, the sets are pairwise disjoint.
The players must determine which case they are in.
The randomized, unrestricted communication complexity of $\disj(n,t)$ 
is $\Omega(n/t)$~\cite{bar2002information, chakrabarti2003near, gronemeier2009asymptotically, j09}.
In the $\disjind(n,t)$ problem, 
$t+1$ players are given sets $A_1,A_2,\ldots,A_{t+1}\subseteq[n]$, such that $|A_{t+1}|=1$, with the promise that either the sets are disjoint or there is a single element contained in every set and they are otherwise disjoint.
A standard argument shows the one-way communication complexity of $\disjind(n,t)$ is $\Omega(n/t(\log n))$ (see Theorem \ref{thm: communication complexity of disjind}). 

{ 
To obtain stronger lower bounds for the nearly periodic function sums, we use the information complexity
framework.
\begin{definition}[{\sf ShortLinearCombination} problem] \label{def:SLC}
Let $u=(u_1, u_2, \ldots u_r)$ be a vector in $\mathbb{Z}^r$ for some integer $r$.
Let $d>0$ be an integer not in  $\{|u_1|, |u_2|, \ldots, |u_r|\}$. A stream $S$ with frequency vector $v$ is given to the player, $v$ is promised to be from $V_0=\{u_1, u_2, \ldots, u_r, 0\}^n$ or $V_1=\{\text{EMB}(v, i, e)\mid v\in V_0, i\in[n], e\in\{-d, d\}\}$, where EMB$(v,i,e)$ is to embed $e$ onto the $i$-th coordinate. 
The $(u,d)$-DIST problem is to distinguish whether $v\in V_0$ or $v\in V_1$. 
\end{definition}
This problem is in fact a generalization of the DISJ$(n,t)$ problem. 
We show a lower bound for this problem of $\Omega(n/q^2)$ where $q=\sum_{i=1}^r{|q_i|}$ and $q_i$ are the integers satisfying $\min_{q}\{q_1, q_2,\ldots, q_r\mid \sum_{i=1}^rq_i u_i = d\}$. 
For more details, we refer the reader to Appendix~\ref{app:communication}. 
}

\section{Zero-One Laws for Normal Functions}\label{sec: weak zero-one}
First, we develop some intuition about why the three conditions, slow-dropping, slow-jumping, and predictable, characterize tractable normal functions.
Next comes the two pass and one pass algorithms followed by the lower bounds.
Finally, we prove Theorem~\ref{thm:norm1passalg} and~\ref{thm:norm2passalg} in Section~\ref{sec:zoproofs}.

\subsection{Our Techniques}\label{sec: techniques}

The upper and lower bounds are both established using heavy hitters.
Using a $\frac{\epsilon^2}{\log^3 n}$-heavy hitters subroutine that identifies each heavy hitter $H$ and also approximates $g(f_H)$, the algorithm of Braverman and Ostrovsky~\cite{braverman2013generalizing} (Theorem~\ref{thm: recursive sketches}) solves $g$-SUM with $O(\log{n})$ storage overhead.
We will show that if $g$ is tractable then a subpolynomially sized CountSketch suffices to find heavy hitters for $g$.
On the lower bounds side, typical streaming lower bounds, going back to the work of Alon, Matias, and Szegedy, are derived from reductions to communication complexity that only require the streaming algorithm to detect the presence of a heavy hitter.
Thus the problem of characterizing tractable functions can be reduced to characterizing the set of functions admitting subpolynomial heavy hitters algorithms.

Now let us explain how each of the three properties relates to heavy hitters.
For this discussion let $\alpha=\epsilon^2/\log^3n$ be the heaviness parameter needed by Braverman's and Ostrovsky's algorithm and let $H\in[n]$ be the identity of some $\alpha$-heavy hitter.

\paragraph{Slow-jumping and Slow-dropping}
A function that satisfies these two conditions cannot increase much faster than a quadratic function nor can it decreasy rapidly.
This means that if $H$ is a heavy hitter for $g$ then it is also a heavy hitter for $F_2$.
Concretely, suppose that there exists an increasing subpolynomial function $h$ such that $g(y)/g(x)\leq (y/x)^2 h(y)$ and $g(x)\geq g(y)/h(y)$, for all $x\leq y$, so that $g$ is slow-jumping and slow-dropping (Propositions~\ref{prop: slow dropping to sub-polynomial} and~\ref{prop: slow jumping to sub-polynomial} show that such an $h$ always exists if $g$ is slow-jumping and slow-dropping).
A bit of algebra shows that $g(f_H)\geq \alpha\sum_{i} g(f_i)$ implies $f_H^2\geq \frac{\alpha}{h(M)^2} \sum_{j}f_j^2$, so $H$ is a $\alpha':=\frac{\alpha}{h(M)^2}$-heavy hitter for $F_2$.
CountSketch suffices to identify all $\alpha'$-heavy hitters for $F_2$ with $O(\alpha'\log^2{n})$ bits, so this gives an $\alpha$-heavy hitters algorithm for $g$ that still uses subpolynomial space.

If a function is not slow-jumping then a heavy hitter may be too small to detect and we prove a lower bound by reduction from $\disjind$ in a very similar fashion to the original lower bound for approximating the frequency moments by Alon, Matias, and Szegedy.

If a function is not slow-dropping then $f_H$ may be hidden below many small frequencies.
In this case, a reduction from the communication complexity of $\textsf{INDEX}$ usually works to establish a one-pass lower bound (two player $\disj$ can be used for a multipass lower bound).
Here is a preview of the reduction that will also explain the genesis of the nearly periodic functions.
Suppose $g(1)=1$ and there is an increasing sequence $n_k\in\N$ such that $g(n_k)\leq 1/n_k$.
Alice and Bob can solve an instance $(A\subseteq[n_k],b\in[n_k])$ of $\textsf{INDEX}$ as follows.
Alice creates a stream where every item in $A$ has frequency $n_k$ and the items in $[n_k]\setminus A$ do not appear.
To Alice's stream Bob adds one copy of his item and they jointly run a streaming algorithm to approximate $g(f)$.
The result is an approximation to either $|A|g(n_k)+1$ or $(|A|-1)g(n_k) + g(n_k+1)$.
If $g(n_k+1)$ differs significantly from $g(n_k)+1\approx 1$ then Bob can determine whether $b\in A$ by checking the approximation to $g(f)$.
The reduction fails if $g(n_k+1)\approx 1$ and such a fuction may be tractable!
A prime example is $g_{np}$ from Section~\ref{sec: preliminaries}, which we demonstrate in Appendix~\ref{example:nearly periodic}.\\

At this point, one might have in mind our two pass heavy hitters algorithm.
The algorithm is to use a CountSketch on the first pass to identify a subpolynomial size set of items containing all $\alpha$-heavy hitters and then use the second pass just to tabulate the frequency of each of those items exactly.
The details and proof of correctness for that algorithm are in Section~\ref{sec:two pass}.
Local variability of the function is irrelevant for two passes because we can compute the frequencies exactly during the second pass, and that is why a function need not satisfy predictability to be two pass tractable.
A one pass algorithm will have to get by with approximate frequencies.

\paragraph{Predictable}
Predictability is a condition on the local variability of the function. 
It roughly says that if $y \ll x$ then either $g(x+y)\approx g(x)$ or $g(y)$ is on the same scale\footnote{More precisely, $g(y)$ is not much smaller than $g(x)$, if $g$ is slow-dropping then $g(y)$ cannot be much larger than $g(x)$, either.} as $g(x)$.
The first conclusion has an obvious interpretation; it means that substituting an approximation to $x$ yields an approximation to $g(x)$.  
CountSketch returns an approximation to the frequency of every heavy hitter, so, in the first case, we can just as well use the approximation and there is no need to determine the frequency exactly.

The second conclusion allows local variability, but it must be accompanied by some ``global'' property of $g$, namely, $g$ does not change a whole lot over the interval $[y,x]$.
For an illustration of how predictability works, suppose that $H$ is a heavy hitter and there is some $y\ll f_H$ such that $g(y) \geq 2g(f_H)$.
Our algorithm cannot substitute $f_H+y$ for $f_H$ because this could lead to too much error.
But, approximating $f_H$ to better accuracy than $\pm y$ using a CountSketch, one would generally need more than $(f_H/y)^2$ counters, which could be very large.
Notice that $y$, if it occurs as a frequency in the stream and suppose it does, is a heavy hitter, and if $g$ is slow-jumping and slow-dropping then $y$ is also a heavy hitter for $F_2$ by our previous argument.
It follows that the error in the frequency estimate derived from the CountSketch is less than $y$, in particular an estimate of $f_H$ derived from the CountSketch will be more accurate than $\pm y$.
The result is that we get an error estimate for the CountSketch that is improved over a na\"ive analysis. 
The approximate frequency $\hat{f}_H$ satisfies $g(\hat{f}_H)\approx g(f_H)$ and  is accurate enough for Braverman and Ostrovsky's algorithm.
When $g$ is locally variable but not predictable, we get a one pass lower bound by a reduction from $\textsf{INDEX}$.\\

\subsection{Two Pass Algorithm}\label{sec:two pass}
We we will now show that any $\lambda$-heavy hitter for a slow-dropping, slow-jumping, $\bbS$-normal function is also $F_2$ $\lambda'$-heavy for $\lambda'$ modestly smaller than $\lambda$.
This means that we can use the CountSketch to identify heavy hitters in one pass for the Recursive Sketch.
For a heavy hitters algorithm we also need a $(1\pm\epsilon)$-approximation to each item's contribution to $g$-SUM, we can easily accomplish this in the second pass by exactly tabulating the frequency of each heavy hitter identified in the first pass.
This algorithm works as long as we identify every heavy hitter in the first pass, which CountSketch guarantees, and the space required remains small provided that we do not misclassify too many non-heavy hitters (so that we do not tabulate too many values in the second pass).
The procedure is formally defined in Algorithm~\ref{algo: 2-pass}. Note that by Proposition~\ref{prop:pnorm in snorm} $\bbP$-normal functions are also $\bbS$-normal, hence, the algorithm works for slow-dropping and slow-jumping $\bbP$-normal functions.
%\todo{Lin:$\bbP$}

The next two propositions describe equivalent definitions of slow-dropping and slow-jumping that are useful in describing the algorithm.
\begin{proposition}\label{prop: slow dropping to sub-polynomial}
$g\in\G$ is slow-dropping if and only if there exists a sub-polynomial function $h$ such that for all $y\in \N$ and $x<y$ we have $g(x)\leq g(y)h(y)$.
\end{proposition}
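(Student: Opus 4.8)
The plan is to prove both implications directly from the definitions, the central object being the ``running maximal ratio'' $h_0(y):=\max_{1\le x\le y} g(x)/g(y)$ for $y\in\N$, which I would extend to a function $h:\R_{\ge0}\to\R_{\ge0}$ by $h(t):=h_0(\max(1,\lceil t\rceil))$. The reverse direction I expect to be immediate: assuming such a sub-polynomial $h$ exists, fix $\alpha>0$; by sub-polynomiality $\lim_{t\to\infty}t^{-\alpha}h(t)=0$, so there is $N$ with $h(y)\le y^\alpha$ for every $y\ge N$, and then the hypothesis $g(x)\le g(y)h(y)$ gives $g(y)\ge g(x)/y^\alpha$ for all $y\ge N$ and $x<y$, which is exactly the slow-dropping condition.

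For the forward direction I would take $h$ as defined above. By construction $g(x)\le g(y)h_0(y)=g(y)h(y)$ for every $y\in\N$ and every integer $x<y$ (the case $x=0$ being vacuous since $g(0)=0$), so only the sub-polynomiality of $h$ needs checking. Since $h_0(y)\ge g(y)/g(y)=1$ for all $y$, the growth condition $\lim_{t\to\infty}t^\beta h(t)=\infty$ holds trivially for every $\beta>0$. For the decay condition, I would fix $\beta>0$ and invoke slow-dropping with parameter $\beta/2$: there is $N$ so that $g(x)/g(y)\le y^{\beta/2}$ for all $y\ge N$ and $x<y$, and since the ratio equals $1$ when $x=y$ this yields $h_0(y)\le y^{\beta/2}$ for all $y\ge\max(N,1)$. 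Hence for such $t$ we get $h(t)=h_0(\lceil t\rceil)\le \lceil t\rceil^{\beta/2}\le(2t)^{\beta/2}$, so $t^{-\beta}h(t)\le 2^{\beta/2}t^{-\beta/2}\to0$. As $\beta>0$ was arbitrary, $h$ is sub-polynomial.

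I do not expect a serious obstacle here; the only point demanding a little care is the interface between the integer-indexed inequality required in the statement and the real-variable definition of a sub-polynomial function. This works because $\lceil t\rceil/t$ stays bounded (by $2$, say) for $t\ge1$, because $h_0\ge1$ everywhere, and because of the elementary reduction that ``for every $\alpha>0$, eventually $h(y)\le y^\alpha$'' together with ``$h\ge1$'' already implies sub-polynomiality: apply the first statement with $\alpha=\beta/2$ to control $t^{-\beta}h(t)$, and use $h\ge1$ for $t^\beta h(t)$. Everything else is bookkeeping with the quantifiers in the two definitions.
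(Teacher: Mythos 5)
Your proof is correct, and the overall strategy (handle the easy direction by unfolding the definition of sub-polynomial; for the hard direction construct an explicit $h$ and verify sub-polynomiality using slow-dropping at rate $\beta/2$) is the same as the paper's. The construction of $h$ differs mildly: you take the tightest possible choice, the running maximal ratio $h_0(y)=\max_{1\le x\le y}g(x)/g(y)$, whereas the paper builds a ``power envelope'' $h(y)=y^{1/i(y)}$ from the sequence $N_i$ of thresholds in the slow-dropping definition, where $i(y)=\max\{i:N_i\le y\}$. Your choice is the minimal $h$ that satisfies the required inequality and is arguably the more natural object, while the paper's choice has the cosmetic advantage of being manifestly of monomial form, which makes the bound $h(y)\le y^{\beta/2}$ (for $y\ge N_{i^*}$, $i^*\ge 2/\beta$) read off directly rather than requiring the small observation that including $x=y$ does not hurt the maximum. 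Both are correct; the verification of sub-polynomiality is essentially identical ($h\ge 1$ handles the growth condition, slow-dropping at rate $\beta/2$ handles decay). One small point in your favor: you explicitly extend $h$ to a real-variable function via $\lceil t\rceil$, which is technically required since the paper's definition of sub-polynomial is for $\R_{\ge 0}\to\R_{\ge 0}$; the paper elides this.
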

\begin{proof}
The ``if'' direction follows immediately from the definition of the class of sub-polynomial functions.

For the ``only if'', suppose that $g(x)$ is slow-dropping. 
Specifically, for any $\alpha >0$ there exists $N>0$ such that for all $y\geq N$ and $x<y$ we have $y^\alpha g(y)\ge g(x)$. 
Let $N_i$ be the least integer such that $y^{1/i}g(y)\ge g(x)$, for all $y>N_i$ and $x<y$. 
The sequence $N_i$ is nondecreasing.
Consider the increasing function $i(y) = \max \{i: N_i\le y\} $ and set $h(y) = y^{1/{i(y)}}$. 
For all $x<y$ we have $h(y)g(y)\geq g(x)$ by construction.

To see that $h$ is sub-polynomial, let $\beta>0$.
First, $h\geq1$ hence $h(y)y^{\beta}\to\infty$.
Second, let $i^*\geq 2/\beta$ then for all $y\geq N_{i^*}$ we have 
\[h(y)\leq y^{1/i^*}\leq y^{\beta/2}.\]
Thus $h(y)y^{-\beta}\to 0$, which completes the proof.
\end{proof}

\begin{proposition}\label{prop: slow jumping to sub-polynomial}
$g\in\G$ is slow-jumping if and only if there exists a sub-polynomial function $h(x)$ such that for any $x<y$ we have $g(y)\le {\lfloor y/x\rfloor}^{2}h(\lfloor y/x\rfloor x)g(x)$.
\end{proposition}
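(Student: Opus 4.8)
The plan is to follow the template of Proposition~\ref{prop: slow dropping to sub-polynomial}, converting the quantifier ``for every $\alpha>0$ there is $N$'' into a single sub-polynomial function by letting the exponent shrink as a function of $y$. The one new ingredient is the elementary fact that $\lfloor y/x\rfloor x$ and $y$ differ by at most a factor of two: for $1\le x<y$ we have $\lfloor y/x\rfloor x\le y$, while $\lfloor y/x\rfloor x>y-x$, and either $x\le y/2$ (so $\lfloor y/x\rfloor x>y/2$) or $x>y/2$ (so $\lfloor y/x\rfloor=1$ and $\lfloor y/x\rfloor x=x>y/2$); hence $y/2<\lfloor y/x\rfloor x\le y$. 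This comparability is what lets us rewrite the $y$-indexed exponent coming out of slow-jumping as an exponent that depends only on the quantity $z:=\lfloor y/x\rfloor x$ fed into $h$, and conversely to absorb $h$ into a power of $\lfloor y/x\rfloor x$.

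For the ``if'' direction, assume such an $h$ exists; replacing $h$ by its running maximum $\bar h(x)=\sup_{t\le x}h(t)$, which remains sub-polynomial, we may take $h$ nondecreasing. Fix $\alpha>0$. For all sufficiently large $y$ we have $h(y)\le y^{\alpha/2}\le(y/2)^{\alpha}<(\lfloor y/x\rfloor x)^{\alpha}=\lfloor y/x\rfloor^{\alpha}x^{\alpha}$ by the comparability above; combined with the hypothesis and $h(\lfloor y/x\rfloor x)\le h(y)$ this gives $g(y)\le\lfloor y/x\rfloor^{2}h(y)g(x)\le\lfloor y/x\rfloor^{2+\alpha}x^{\alpha}g(x)$, which is exactly slow-jumping.

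For the ``only if'' direction, for each $i\in\N$ let $N_i$ be the least integer such that $g(y)\le\lfloor y/x\rfloor^{2+1/i}x^{1/i}g(x)$ for all $y\ge N_i$ and all $1\le x<y$; this exists by slow-jumping, and $(N_i)$ is nondecreasing because decreasing the exponent only strengthens the inequality. Put $i(y)=\max\{i:N_i\le y\}$ for $y\ge N_1$ and $i(y)=1$ otherwise; this is nondecreasing in $y$ and tends to infinity. Set $h(y)=\max\{C,\,y^{1/i(y)}\}$, where $C$ is large enough to cover the finitely many pairs $(x,y)$ with $y<N_1$. Sub-polynomiality of $h$ is checked just as in Proposition~\ref{prop: slow dropping to sub-polynomial}. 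For $1\le x<y$ with $y\ge N_1$, apply the $1/i(y)$-bound (valid since $y\ge N_{i(y)}$): writing $z=\lfloor y/x\rfloor x$, we get $g(y)\le\lfloor y/x\rfloor^{2}z^{1/i(y)}g(x)$; since $z\le y$ we have $i(z)\le i(y)$, so $z^{1/i(y)}\le z^{1/i(z)}\le h(z)$, hence $g(y)\le\lfloor y/x\rfloor^{2}h(\lfloor y/x\rfloor x)g(x)$. The remaining pairs with $y<N_1$ are handled by $h\ge C$.

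The only non-routine point — and the reason the statement carries the curious argument $h(\lfloor y/x\rfloor x)$ rather than $h(y)$ — is the observation $\lfloor y/x\rfloor x\in(y/2,y]$. It is used twice: in the easy direction to turn $\lfloor y/x\rfloor^{\alpha}x^{\alpha}$ into a lower bound of order $y^{\alpha}$, and in the hard direction to guarantee $i(\lfloor y/x\rfloor x)\le i(y)$ so that the self-referential choice $h(z)=z^{1/i(z)}$ dominates the term $z^{1/i(y)}$ produced by slow-jumping. Everything else is a transcription of the slow-dropping proof.
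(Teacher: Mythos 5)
Your proof is correct and fills in exactly the argument the paper elides with ``follows from the same argument as Proposition~\ref{prop: slow dropping to sub-polynomial}.'' The one genuinely new observation you had to supply --- that $\lfloor y/x\rfloor x\in(y/2,\,y]$, which lets the $y$-indexed exponent $1/i(y)$ be dominated by the $z$-indexed $1/i(z)$ and conversely lets $h(y)$ be absorbed into a power of $\lfloor y/x\rfloor x$ in the easy direction --- is correct and is precisely what is needed to adapt the slow-dropping template to the $h(\lfloor y/x\rfloor x)$ form of the statement. Your ``if'' direction is a touch more roundabout than necessary (passing to the running maximum and using $y^{\alpha/2}\le(y/2)^\alpha$ when one could just invoke $h(z)\le z^\alpha$ for $z$ large together with $z>y/2$), but the logic is sound, and you share with the paper the same harmless technical glossing over the case where the sequence $N_i$ could stabilize.
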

\begin{proof}
Again, the ``if'' direction follows immediately from the definition of the class of sub-polynomial function.
The reverse direction follows from the same argument as Proposition~\ref{prop: slow dropping to sub-polynomial}.
\end{proof}

Given $g$ and a sub-polynomial accuracy $\epsilon$, Propositions~\ref{prop: slow dropping to sub-polynomial} and \ref{prop: slow jumping to sub-polynomial} each imply the existence of a non-decreasing sub-polynomial function.
By taking the point-wise maximum, we can assume that these are the same sub-polynomial function $H:\N\to\R$.
In particular $H$ satisfies:
\begin{itemize}
\item $g(y)\geq g(x)/H(y)$, for all $x<y$, and
\item $g(y)\leq (y/x)^2y^\alpha H(y)g(x)$, for all $x<y$. \end{itemize}
The space used by our algorithm depends on the sub-polynomial functions governing the slow-dropping and slow-jumping of $g$.  
If these are polylogarithmic then the algorithm is limited to  polylogarithmic space.

\begin{lemma}\label{lem: g heavy implies F2 heavy}
Let $g$ be a function that is slow-jumping and slow-dropping.
There exists a sub-polynomial function $h$ such that for any $D\in\calD(n,m)$ with frequencies $v_1,v_2,\ldots,v_n$, if $g(v_i)\geq \lambda \sum_j g(v_j)$ then \[v_i^2\geq \frac{\lambda}{h(|v_i|)}\sum_{|v_j|<|v_i|} v_j^2.\]
\end{lemma}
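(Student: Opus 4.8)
The plan is to establish the bound coordinate by coordinate and then sum. Write $\lambda' := \lambda/h(|v_i|)$ for the sub-polynomial function $h$ we will construct from $H$. The starting point is the hypothesis $g(v_i)\geq \lambda\sum_j g(v_j)\geq \lambda\sum_{|v_j|<|v_i|}g(v_j)$, so it suffices to bound each $g(v_j)$ from below in terms of $v_j^2$ and bound $g(v_i)$ from above in terms of $v_i^2$, in such a way that the ratio absorbs only a sub-polynomial loss. Concretely, I would use the two properties of $H$ quoted just before the lemma: slow-jumping gives $g(v_i)\leq (|v_i|/|v_j|)^2 |v_i|^\alpha H(|v_i|)\,g(v_j)$ for $|v_j|<|v_i|$ (using $\lfloor |v_i|/|v_j|\rfloor \leq |v_i|/|v_j|$ and monotonicity of $H$), which rearranges to
\[
g(v_j)\;\geq\; \frac{v_j^2}{v_i^2}\cdot\frac{g(v_i)}{|v_i|^\alpha H(|v_i|)}.
\]
Summing this over all $j$ with $|v_j|<|v_i|$ yields
\[
\sum_{|v_j|<|v_i|} g(v_j)\;\geq\; \frac{g(v_i)}{v_i^2\,|v_i|^\alpha H(|v_i|)}\sum_{|v_j|<|v_i|} v_j^2 .
\]

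Combining this with $g(v_i)\geq \lambda\sum_{|v_j|<|v_i|}g(v_j)$ and cancelling $g(v_i)>0$ gives
\[
1\;\geq\; \frac{\lambda}{v_i^2\,|v_i|^\alpha H(|v_i|)}\sum_{|v_j|<|v_i|} v_j^2,
\]
which is exactly the claimed inequality with $h(|v_i|) := |v_i|^\alpha H(|v_i|)$. So the last step is just to observe that $h$ is sub-polynomial: $H$ is sub-polynomial by Propositions~\ref{prop: slow dropping to sub-polynomial} and~\ref{prop: slow jumping to sub-polynomial}, and multiplying by $|v_i|^\alpha$ for a fixed (arbitrarily small) $\alpha>0$ keeps it sub-polynomial provided we choose $\alpha$ small; alternatively one folds the $|v_i|^\alpha$ factor into a fresh sub-polynomial function and invokes the fact that sub-polynomial functions are closed under products.

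One subtlety worth handling cleanly: the slow-jumping inequality as stated is $g(y)\leq \lfloor y/x\rfloor^{2+\alpha} x^\alpha g(x)$, and the "only if" reformulation in Proposition~\ref{prop: slow jumping to sub-polynomial} replaces the $\lfloor y/x\rfloor^\alpha x^\alpha$ part by $H(\lfloor y/x\rfloor x) = H(|v_i|)$ when $x\mid y$ but in general by $H(\lfloor y/x\rfloor x)\leq H(y)$ for $\lfloor y/x\rfloor x\leq y$; I would just use the monotone majorant $H(|v_i|)$ throughout and note $\lfloor |v_i|/|v_j|\rfloor^2\leq (|v_i|/|v_j|)^2$. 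The main (and really only) obstacle is bookkeeping the sub-polynomial factors so that the final $h$ is genuinely sub-polynomial and depends only on $|v_i|$ (not on $M$ or $n$ separately); since $M\in\poly(n)$ this is harmless, but one should state it. There is no hard analytic content here — the lemma is a direct algebraic consequence of the two quoted properties of $H$.
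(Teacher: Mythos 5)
Your proof follows the paper's proof essentially step for step: bound $g(v_i)/g(v_j)$ above using the slow-jumping property with a monotone sub-polynomial majorant, substitute into the heavy-hitter hypothesis, and rearrange. That part is fine.

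One step in the middle, however, is wrong as written. You set $h(|v_i|) := |v_i|^\alpha H(|v_i|)$ for a \emph{fixed} $\alpha>0$ and claim this is sub-polynomial ``provided we choose $\alpha$ small''; the fallback suggestion of ``folding $|v_i|^\alpha$ into a fresh sub-polynomial function'' has the same problem. Neither works: for any fixed $\alpha>0$, the function $z\mapsto z^\alpha$ is \emph{not} sub-polynomial (taking $\beta=\alpha/2$ in the definition, $z^{-\beta}\cdot z^\alpha H(z)=z^{\alpha/2}H(z)\to\infty$, violating condition (b)), so $z^\alpha H(z)$ is not sub-polynomial either, and closure under products does not help because one of the factors is already non-sub-polynomial. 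The correct resolution is the one you actually gesture at in your ``subtlety'' paragraph and which the paper uses: invoke Proposition~\ref{prop: slow jumping to sub-polynomial} directly, which has already diagonalized the $\alpha$-dependence into a single sub-polynomial $h$ so that $g(y)\le \lfloor y/x\rfloor^2 h(\lfloor y/x\rfloor x)g(x)\le (y/x)^2 h(y)g(x)$ with \emph{no} residual $y^\alpha$ factor; then the bound is just $v_i^2\ge \frac{\lambda}{h(|v_i|)}\sum_{|v_j|<|v_i|}v_j^2$. (You were likely misled by the bullet list preceding the lemma in the paper, which erroneously retains a $y^\alpha$ factor — that appears to be a typo, and the paper's own proof of the lemma uses the version without it.) With that correction, your argument coincides with the paper's.
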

\begin{proof}
By Proposition~\ref{prop: slow jumping to sub-polynomial}, there exists a nondecreasing sub-polynomial function $h$ such that for all $y\in\N$ and $x<y$ we have $g(y)\leq (y/x)^2h(y) g(x)$.

For any $j$ that satisfies $|v_j|<|v_i|$, we have $g(v_i)\leq g(v_j)(\frac{v_i}{v_j})^2 h(|v_i|)$.
Therefore,
\begin{equation*}
g(v_i) 
\ge \lambda \sum_{j}g(v_j)
\ge \lambda \sum_{|v_j|<|v_i|}\frac{g(v_i)}{h(|v_i|)}\left(\frac{v_j}{v_i}\right)^{2}.
\end{equation*}
By rearranging, we find 
\begin{equation*}
v_i^2\ge \frac{\lambda}{h(|v_i|)} \sum_{|v_j|<|v_i|}v_j^2\ge \frac{\lambda}{h(|v_i|)} \sum_{|v_j|<|v_i|}v_j^2.
\end{equation*}
\end{proof}

\begin{algorithm}
\begin{algorithmic}[1]
\Procedure{2-Pass Heavy Hitters}{$g$, $\lambda$, $\epsilon$, $\delta$}
\State {\bf First Pass:} 
\State \hspace{\algorithmicindent} $S\gets$CountSketch$(\frac{\lambda}{2H(M)}, \frac{1}{3},\delta)$ discarding the frequency estimates
\State {\bf Second Pass:} 
\State \hspace{\algorithmicindent} 
   Compute $v_j$ for all $i\in S$
\State \Return $(j,v_j)$, for all $j\in S$ 
\EndProcedure
\end{algorithmic}
\caption{A 2-pass $(g,\lambda,0,\delta)$-heavy hitters algorithm.}
\label{algo: 2-pass}
\end{algorithm}

Algorithm~\ref{algo: 2-pass} computes CountSketch\footnote{A CountSketch is a matrix with $r$ and $b$ rows.  See~\cite{charikar2002finding} for details about these parameters.} with $r=O(\log{n})$ and $b = O(\frac{H(M)}{\lambda\epsilon^2})$
over the stream and extracts the $2H(M)/\lambda$ items with the highest estimated frequencies.
Denote this list as $(i_j,\hat{v}_{i_j})_{j=1}^{2H(M)/\lambda}$.
The 2-pass algorithm then discards the estimated frequencies $\hat{v}_{i_j}$ and tabulates the true frequency of each item $i_j$ on the second pass.

\begin{lemma}\label{lem: identifying heavy hitters}
If $g$ is a function that is slow-dropping and slow-jumping then the CountSketch used by Algorithm~\ref{algo: 2-pass} finds all of the $(g,\lambda)$-heavy hitters.
\end{lemma}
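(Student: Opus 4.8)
The goal is to show that the CountSketch in Algorithm~\ref{algo: 2-pass} identifies every $(g,\lambda)$-heavy hitter. The plan is to combine Lemma~\ref{lem: g heavy implies F2 heavy}, which converts $g$-heaviness into $F_2$-heaviness, with the standard CountSketch guarantee recalled in Section~\ref{sec: preliminaries}. First I would fix a $(g,\lambda)$-heavy hitter $i$, so $g(v_i)\geq\lambda\sum_j g(v_j)$. By Lemma~\ref{lem: g heavy implies F2 heavy} there is a sub-polynomial $h$ with $v_i^2\geq\frac{\lambda}{h(|v_i|)}\sum_{|v_j|<|v_i|}v_j^2$. Since $|v_i|\leq M$ and $h$ is non-decreasing, $h(|v_i|)\leq H(M)$, so $v_i^2\geq\frac{\lambda}{H(M)}\sum_{|v_j|<|v_i|}v_j^2$; adding the at most $2H(M)/\lambda$ coordinates with magnitude at least $|v_i|$ contributes only a bounded number of terms, which I would absorb to conclude that $i$ is, say, a $\frac{\lambda}{3H(M)}$-heavy hitter for $F_2$ among the tail, matching the tail-error form of the CountSketch guarantee.

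Next I would invoke the CountSketch guarantee with parameters $\lambda'=\frac{\lambda}{2H(M)}$, accuracy $1/3$, and failure probability $\delta$: with probability $\ge 1-\delta$ the returned list of $k=O(1/\lambda')=O(H(M)/\lambda)$ pairs $(i_j,\hat v_{i_j})$ contains every $\lambda'$-heavy hitter for $F_2$, and each estimate satisfies $|v_{i_j}-\hat v_{i_j}|\leq \frac13\bigl(\sum_{j>k}\bar v_j^2\bigr)^{1/2}\leq\frac13\sqrt{\lambda' F_2}$. The main thing to check is a consistency/ranking argument: a true $g$-heavy hitter $i$ has $|v_i|$ so large (compared with $\sqrt{\lambda' F_2}$) that its estimated frequency $\hat v_i$ exceeds the estimated frequency of any item that is not $\lambda'$-heavy for $F_2$, so $i$ survives the truncation to the top $2H(M)/\lambda$ estimated frequencies. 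Concretely, $v_i^2\gtrsim \lambda' F_2$ gives $|\hat v_i|\geq|v_i|-\frac13\sqrt{\lambda' F_2}\geq \frac23\sqrt{\lambda' F_2}$, whereas any non-$\lambda'$-heavy item $\ell$ has $v_\ell^2<\lambda' F_2$ and hence $|\hat v_\ell|\leq|v_\ell|+\frac13\sqrt{\lambda' F_2}<\frac43\sqrt{\lambda' F_2}$ — so a clean separation needs the heaviness threshold tuned by a constant, which is exactly why the algorithm uses $\frac{\lambda}{2H(M)}$ rather than $\frac{\lambda}{3H(M)}$; I would track these constants carefully rather than leave them implicit.

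I expect the main obstacle to be the bookkeeping that turns ``$g$-heavy'' into ``in the top $2H(M)/\lambda$ estimated $F_2$ frequencies'': one must be careful that (i) the number of genuine $F_2$-heavy hitters is at most $1/\lambda'=2H(M)/\lambda$, so truncating to that many items cannot discard a heavy hitter, and (ii) the CountSketch tail error $\sqrt{\lambda' F_2}$ is genuinely smaller than $|v_i|$ for every $g$-heavy $i$, which is where the factor $H(M)$ in the denominator of $\lambda'$ is consumed. Both are routine once Lemma~\ref{lem: g heavy implies F2 heavy} is in hand, so the proof is short: apply Lemma~\ref{lem: g heavy implies F2 heavy}, bound $h(|v_i|)\le H(M)$, invoke the CountSketch guarantee with the stated parameters, and observe that the top $2H(M)/\lambda$ estimated frequencies must include every item $i$ with $v_i^2\geq\frac{\lambda}{2H(M)}F_2$.
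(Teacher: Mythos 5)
The overall route is the same as the paper's: apply Lemma~\ref{lem: g heavy implies F2 heavy} to control the squared tail below $|v_i|$, use slow-dropping to bound the number of items with magnitude at least $|v_i|$ by $H(M)/\lambda$, and then invoke the CountSketch guarantee with heaviness parameter $\lambda/(2H(M))$. The paper's own proof is terse on the last step, so fleshing it out is the right instinct.

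However, there is a genuine error in the ranking argument in your second paragraph. You assert ``$v_i^2 \gtrsim \lambda' F_2$'' for a $(g,\lambda)$-heavy hitter $i$, and then conclude $|\hat v_i| \geq |v_i| - \tfrac13\sqrt{\lambda' F_2} \geq \tfrac23\sqrt{\lambda' F_2}$. This inequality does not follow from the hypotheses, and can be badly false. Lemma~\ref{lem: g heavy implies F2 heavy} only bounds $v_i^2$ against $\sum_{|v_j|<|v_i|}v_j^2$; it says nothing about items with $|v_j|\geq|v_i|$, of which there can be as many as $H(M)/\lambda$, each with magnitude up to $M$. Since $g$ is allowed to grow sublinearly above $|v_i|$ (e.g.\ to plateau), a $g$-heavy hitter $i$ can coexist with much larger frequencies that dominate $F_2$, making $v_i^2 / F_2$ polynomially small while $\lambda'$ is only sub-polynomially small. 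So item $i$ need not be a $\lambda'$-heavy hitter for $F_2$ at all, and the unconditional $\sqrt{\lambda'F_2}$ threshold cannot serve as the separating scale.

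The fix is the very thing you gesture at in your first paragraph but then abandon: compare $|v_i|$ against the \emph{residual} tail $\bigl(\sum_{j>k}\bar v_j^2\bigr)^{1/2}$ with $k=\Theta(H(M)/\lambda)$, not against $\sqrt{\lambda'F_2}$. Because at most $H(M)/\lambda < k$ items have $|v_j|\geq|v_i|$, the residual tail consists only of items with $|v_j|<|v_i|$, and Lemma~\ref{lem: g heavy implies F2 heavy} then gives $\sum_{j>k}\bar v_j^2 \leq \tfrac{H(M)}{\lambda}v_i^2$. Feeding this into the CountSketch error bound with $b=\Theta(H(M)/\lambda)$ buckets yields a uniform additive error that is a small constant fraction of $|v_i|$; from there the ranking argument you sketch (every item whose estimate could exceed $\hat v_i$ has $|v_j|=\Omega(|v_i|)$, and Lemma~\ref{lem: g heavy implies F2 heavy} bounds the number of such items by $O(H(M)/\lambda)$) does close, with constants absorbed into the CountSketch parameters. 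This tail-versus-$F_2$ distinction is exactly where the factor $H(M)$ in the heaviness parameter is spent, and is the step your write-up currently gets wrong.
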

\begin{proof}
From the slow-dropping condition and Lemma~\ref{lem: g heavy implies F2 heavy}, for every $x<y\in \N$ we have
\begin{enumerate}
\item\label{it: decreasing part} $g(y)H(y)\geq g(x)$ and 
\item\label{it: increasing part} $g(v_i)\geq \lambda \sum_j g(v_j)$ implies \[v_i^2\geq \frac{\lambda }{H(M)}\sum_{|v_j|<|v_i|} v_j^2.\]
\end{enumerate}

Let $D\in\calD(n,m)$, suppose $i$ satisfies $g(v_i)\ge \lambda \sum_{j}^ng(v_j)$.
Since
\[g(v_i)\geq \lambda \sum_{|v_j|\geq |v_i|} g(v_j)\geq \frac{\lambda}{H(M)} \sum_{|v_j|\geq |v_i|} g(v_i),\]
there are at most $H(M)\lambda^{-1}$ items with frequencies as large or larger than $v_i$ in magnitude.

Thus, a CountSketch for $\lambda/2H(M)$-heavy hitters, as is used by Algorithm~\ref{algo: 2-pass}, serves to identify the $\lambda$-heavy heavy hitters for $g$ in one pass.
\end{proof}

\begin{theorem}
\label{thm:snorm2passalg}
A function $g\in\G$ is 2-pass tractable and $\bbS$-normal if it is slow-dropping and slow-jumping.
\end{theorem}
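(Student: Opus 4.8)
The plan is to combine the two-pass heavy hitters subroutine (Algorithm~\ref{algo: 2-pass}) with the Recursive Sketch reduction (Theorem~\ref{thm: recursive sketches}) to produce a two-pass $(g,\epsilon)$-SUM algorithm, and then verify that every component runs in sub-polynomial space. First I would fix a sub-polynomial accuracy $\epsilon=\epsilon(n)\geq 1/h(nM)$ and invoke Propositions~\ref{prop: slow dropping to sub-polynomial} and~\ref{prop: slow jumping to sub-polynomial} to obtain the single non-decreasing sub-polynomial function $H$ that simultaneously witnesses slow-dropping ($g(y)\geq g(x)/H(y)$ for $x<y$) and slow-jumping ($g(y)\leq (y/x)^2 y^\alpha H(y) g(x)$ for $x<y$). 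Since $M\in\poly(n)$, $H(M)$ is itself sub-polynomial in $n$.

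Next I would set the heaviness parameter $\lambda=\epsilon^2/\log^3 n$ and $\delta=1/\log n$ as required by Theorem~\ref{thm: recursive sketches}, and run the 2-Pass Heavy Hitters procedure with these parameters. Lemma~\ref{lem: identifying heavy hitters} guarantees that the first-pass CountSketch for $\lambda/2H(M)$-heavy hitters (for $F_2$) recovers every $(g,\lambda)$-heavy hitter of $V(D)$, and moreover that there are at most $H(M)/\lambda$ coordinates whose absolute frequency is at least that of any given $\lambda$-heavy hitter, so the list of candidates extracted from CountSketch has size $O(H(M)/\lambda)$. On the second pass we compute the exact frequency $v_j$ of each of these $O(H(M)/\lambda)$ candidates and output $(j, v_j)$; since we then know $v_j$ exactly, we can output $g(v_j)$ exactly, so this is a $(g,\lambda,0,\delta)$-heavy hitters algorithm, in particular a $(g,\lambda,\epsilon,\delta)$-heavy hitters algorithm. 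Feeding this into Theorem~\ref{thm: recursive sketches} yields a two-pass $(g,\epsilon)$-SUM algorithm with only an $O(\log n)$ multiplicative blow-up in space.

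It then remains to bound the total space by a sub-polynomial function $h^*(nM)$. The CountSketch for $\lambda/2H(M)$-heavy hitters with accuracy $1/3$ and failure probability $\delta$ uses $O\!\left(\tfrac{H(M)}{\lambda\epsilon^2}\log\tfrac{n}{\delta}\log M\right)$ bits; storing the $O(H(M)/\lambda)$ candidate identities and their exact frequencies on the second pass costs $O\!\left(\tfrac{H(M)}{\lambda}\log(nM)\right)$ bits (using the oracle assumption that $g(x)$ takes sub-polynomial storage); and the Recursive Sketch multiplies all of this by $O(\log n)$. Since $H(M)$, $1/\lambda=\log^3 n/\epsilon^2$, and $1/\epsilon$ are all sub-polynomial in $n$ (the latter because $\epsilon\geq 1/h(nM)$ with $h$ sub-polynomial and $M\in\poly(n)$), and products and logarithms of sub-polynomial functions are sub-polynomial, the whole expression is bounded by some sub-polynomial $h^*(nM)$. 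This shows $g$ is 2-pass tractable, and it is $\bbS$-normal by hypothesis, completing the proof.

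The main obstacle I anticipate is purely bookkeeping rather than conceptual: carefully threading the accuracy parameters so that the CountSketch error analysis (which a priori only gives additive error $\epsilon\sqrt{\lambda F_2}$ on frequency estimates) is not actually needed here — the point of the second pass is that we recompute frequencies exactly, so the first pass only needs to \emph{identify} the right $O(H(M)/\lambda)$ coordinates, which is exactly what Lemma~\ref{lem: identifying heavy hitters} provides. The one place requiring genuine care is confirming that the candidate list length $O(H(M)/\lambda)$ is correct and that exact tabulation of that many frequencies (each of magnitude at most $M\in\poly(n)$) stays within sub-polynomial space; this is where the $M\in\poly(n)$ restriction and the sub-polynomiality of $H$ are both essential.
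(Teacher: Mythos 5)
Your algorithmic argument follows the paper's own route exactly: run Algorithm~\ref{algo: 2-pass}, use Lemma~\ref{lem: identifying heavy hitters} to show the first-pass CountSketch captures all $(g,\lambda)$-heavy hitters, tabulate exactly on the second pass to get a $(g,\lambda,0,\delta)$-cover, and feed this into Theorem~\ref{thm: recursive sketches}; the space bookkeeping is also in line with the paper's. That part is fine.

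There is, however, one genuine misreading: you close by saying $g$ ``is $\bbS$-normal by hypothesis, completing the proof,'' but $\bbS$-normality is \emph{not} a hypothesis of this theorem. The statement reads ``$g$ is 2-pass tractable \emph{and} $\bbS$-normal if it is slow-dropping and slow-jumping,'' i.e.\ $\bbS$-normality sits on the conclusion side of the implication and must be derived. The missing (one-line) step is the observation that slow-dropping is exactly the negation of the first clause in the definition of an $\calS$-nearly periodic function (namely, the existence of $\alpha$-periods), so a slow-dropping function cannot be $\bbS$-nearly periodic and is therefore $\bbS$-normal. Without this, the proof establishes only 2-pass tractability, not the full conclusion.
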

\begin{proof}
If a function is slow-dropping then it is $\bbS$-normal.
By Theorem~\ref{thm: recursive sketches}, it is sufficient to show that Algorithm~\ref{algo: 2-pass} is a sub-polynomial memory $(g,\lambda,\epsilon,\delta)$-heavy hitters algorithm for $\lambda=\frac{\epsilon^2}{\log^3 n}$, $\epsilon =\frac{1}{H(Mn)}$, and $\delta\le \frac{1}{\log n}$.

This is a $2$-pass $(g,\lambda,0,\delta)$-heavy hitters algorithm, and it requires  $O(\frac{h(M)}{\epsilon^2}\log^4{n}\log{nM}\log\log{n})$ space.
The algorithm is correct with probability at least $(1-\delta)$ because this is true for the CountSketch.
Therefore, $g$ is a 2-pass tractable function.
\end{proof}

\subsection{One Pass Algorithm}

The only impediment to reducing the two pass algorithm to a single pass is local variability of the function. 
The algorithm must simultaneously identify heavy hitters and estimate their frequencies, but local variability could require tighter estimates than seem to be available from a sub-polynomial space CountSketch data structure.
If an $\bbS$-normal function is predictable then we almost immediately overcome this barrier. 
In particular, the definition means that given any point $x$ and a small error $y=o(x)$ either $|g(x+y)-g(x)|\leq \epsilon g(x)$ or $g(y)$ is reasonably large, and while it is clear how the first case helps us with the approximation algorithm, the second is less clear.
The slow-dropping and slow-jumping conditions play a role, as we now explain.

\begin{proposition}\label{prop: predictable to sub-polynomial}
A function $g$ is predictable if and only if for every sub-polynomial $\epsilon>0$ there exists a sub-polynomial function $h$ such that for all $x\in\N$ and $y\in[1,x/h(x))$ satisfying $x+y\notin\delta_{\epsilon(x)}(g,x)$ it holds that $g(y)\geq g(x)/h(x)$.
\end{proposition}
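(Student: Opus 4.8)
The plan is to prove Proposition~\ref{prop: predictable to sub-polynomial} by the same diagonalization used for Propositions~\ref{prop: slow dropping to sub-polynomial} and~\ref{prop: slow jumping to sub-polynomial}. The definition of predictability supplies, for each rate $\gamma\in(0,1)$ (and each fixed sub-polynomial $\epsilon$), a threshold $N=N(\gamma,\epsilon)$; the content of the proposition is that these countably many thresholds can be folded into a single sub-polynomial function $h$ by letting the ``effective rate'' $\gamma(x)=1/i(x)$ at a point $x$ decrease to $0$ as $x\to\infty$. The translation between the two formulations is exact once one writes $h(x)=x^{\gamma}$: the window $[1,x^{1-\gamma})$ becomes $[1,x/h(x))$ and the bound $x^{-\gamma}g(x)$ becomes $g(x)/h(x)$.

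For the ``if'' direction, fix $0<\gamma<1$ and a sub-polynomial $\epsilon$, and let $h$ be the sub-polynomial function guaranteed by the hypothesis. Since $h$ is sub-polynomial, $h(x)=o(x^{\gamma})$, so there is $N$ with $h(x)\le x^{\gamma}$ for all $x\ge N$; hence $[1,x^{1-\gamma})\subseteq[1,x/h(x))$ and $g(x)/h(x)\ge x^{-\gamma}g(x)$ whenever $x\ge N$. Therefore every $x\ge N$ and $y\in[1,x^{1-\gamma})$ with $x+y\notin\delta_{\epsilon(x)}(g,x)$ satisfy $g(y)\ge g(x)/h(x)\ge x^{-\gamma}g(x)$, which is exactly the defining property of predictability.

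For the ``only if'' direction, fix a sub-polynomial $\epsilon$. For each integer $i\ge 2$, predictability with $\gamma=1/i$ yields a threshold $N_i$; replace it by $N_i^{*}=\max\{i,N_2,\dots,N_i\}$, so that $(N_i^{*})_{i\ge 2}$ is nondecreasing and tends to infinity while the threshold property is preserved (enlarging a threshold only shrinks the set of valid $x$). Let $i(x)=\max\{i\ge 2\mid N_i^{*}\le x\}$, which for $x\ge N_2^{*}$ is a well-defined, nondecreasing, integer-valued function of $x$ that tends to infinity, and put $h(x)=x^{1/i(x)}$ for $x\ge N_2^{*}$, with $h$ set to any value $\ge 1$ on the finitely many $x<N_2^{*}$ (this makes the window $[1,x/h(x))$ for such $x$ inconsequential). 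Then for $x\ge N_2^{*}$ we have $N_{i(x)}^{*}\le x$ by the definition of $i(x)$, so predictability with $\gamma=1/i(x)$ applies at $x$: for every $y\in[1,x^{1-1/i(x)})=[1,x/h(x))$ with $x+y\notin\delta_{\epsilon(x)}(g,x)$ it gives $g(y)\ge x^{-1/i(x)}g(x)=g(x)/h(x)$, as desired. Finally, $h$ is sub-polynomial: $h\ge 1$ gives $x^{\beta}h(x)\to\infty$ for all $\beta>0$, and given $\beta>0$ choose $i$ with $1/i<\beta/2$; then for $x\ge N_i^{*}$ we have $i(x)\ge i$, so $x^{-\beta}h(x)=x^{1/i(x)-\beta}\le x^{-\beta/2}\to 0$. (If a nondecreasing $h$ is wanted, replace it by $\tilde h(x)=\max_{z\le x}h(z)$, which is nondecreasing, still sub-polynomial, and only weakens the conclusion since enlarging $h$ shrinks the window and lowers the target.)

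I do not anticipate a genuine obstacle: this mirrors the proof of Proposition~\ref{prop: slow dropping to sub-polynomial} almost verbatim. The only points requiring care are the bookkeeping that makes the thresholds unbounded (so that $i(x)$ is finite for every $x$ and $i(x)\to\infty$), which is why the $N_i$ are replaced by $N_i^{*}=\max\{i,N_2,\dots,N_i\}$, and matching the two parametrizations of the window exactly, so that $x/h(x)$ equals $x^{1-\gamma}$ rather than merely being of the same order.
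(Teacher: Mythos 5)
Your argument takes the same diagonalization route as the paper, and the core of it is correct. Your treatment of the thresholds is actually a bit more careful than the paper's: you explicitly force $N_i^*=\max\{i,N_2,\dots,N_i\}$ to be nondecreasing and unbounded, which guarantees that $i(x)=\max\{i\ge 2\mid N_i^*\le x\}$ is finite and tends to infinity; the paper takes this for granted.

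There is one small slip. In your handling of the finitely many $x<N_2^{*}$, you assert that setting $h$ to ``any value $\ge 1$'' makes the window $[1,x/h(x))$ inconsequential. That is not so: if, say, $h(x)=1$ the window is $[1,x)$, which is nonempty, and nothing in the hypotheses forces $g(y)\ge g(x)$ for $y<x$ at these small arguments. You need $h$ large enough on those points — either $h(x)>x$ so that the window is genuinely empty, or, as the paper does, $h(z)=\max_{1\le x',y'<N_1}g(x')/g(y')$ for $z<N_1$ so that $g(y)h(x)\ge g(x)$ holds trivially. Since there are only finitely many such $x$, either choice costs nothing in sub-polynomiality, so the fix is immediate; but as written the claim is false.
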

\begin{proof}
The ``if'' direction follows directly from the definition of sub-polynomial functions.

To prove the ``only if'' direction, let $\gamma_i= 1/i$. Since $g$ is predictable, there exists $N_i$ such that either $x+y\in\delta_{\epsilon(x)}(g,x)$ or $g(y)\geq x^{-\gamma}g(x)$ for $x\geq N_i$ and $y\in[1,x^{1-\gamma})$.
Let $i(z) = \max\{i\in\N \mid N_i\leq z\}$, for $z\geq N_1$, and define $h(z) = z^{1/i(z)}$, for $z\geq N_1$, and $h(z) = \max_{1\leq x',y'<N_1}g(x')/g(y')$, for $z<N_1$.
Then $h$ is a sub-polynomial function.
The claim is that $h$ satisfies the conclusion of the lemma.
Let $x\in\N$ and $y\in[1,x/h(x))$. 
If $x<N_1$ then $y\leq x <N_1$ because $h\geq 1$.
Furthermore, 
\[g(y) h(x) = g(y)\max_{1\leq x',y'<N_1}\frac{g(x')}{g(y')}\geq g(y)\frac{g(x)}{g(y)}=g(x).\]
If $x\geq N_1$, let $i = i(x)\geq 1$, so that $x\geq N_i$ and $h(x) = x^{1/i}$.
Since $g$ is predictable, if $y\in[1,x/h(x))$ and $y+x\notin\delta_{\epsilon(x)}(g,x)$ then $g(y) \geq x^{-1/i}g(x)=g(x)/h(x)$.
\end{proof}

Let $r_\epsilon(x) := \max\{y\mid x+y'\in\delta_\epsilon(x), \text{ for all }|y'|\leq y\}.$
\begin{lemma}[Predictability Booster]\label{lem: predictability booster}
If $g$ is slow-dropping, slow-jumping, and predictable, then there is a sub-polynomial function $h$ such that for all $y\in[r_\epsilon(x)+1,x/h(x))$ we have $g(y)\geq g(x)/h(x)$.
\end{lemma}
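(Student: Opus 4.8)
The plan is to combine the three structural properties, each in its sub-polynomial form, to show that in the range $y\in[r_\epsilon(x)+1, x/h(x))$ the inequality $g(y)\ge g(x)/h(x)$ holds. First I would invoke Proposition~\ref{prop: predictable to sub-polynomial}: predictability gives a sub-polynomial function $h_1$ such that whenever $x+y\notin\delta_{\epsilon(x)}(g,x)$ and $y\in[1,x/h_1(x))$, we have $g(y)\ge g(x)/h_1(x)$. The obstacle is that the definition of $r_\epsilon(x)$ is phrased in terms of \emph{all} perturbations $|y'|\le y$ landing in $\delta_\epsilon(x)$, whereas predictability only reasons about the single point $x+y$. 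So the key first step is to relate ``$y>r_\epsilon(x)$'' to ``there exists $y'$ with $|y'|\le y$ such that $x+y'\notin\delta_\epsilon(x)$,'' and then to transfer that failure to the specific point $x+y$ I actually care about.

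The way I would bridge this gap is to use slow-jumping and slow-dropping to show that $g$ does not vary too wildly over the short interval $[x+y', x+y]$, so that if some nearby point $x+y'$ is far from $g(x)$ then $x+y$ is also not too small. Concretely: if $y>r_\epsilon(x)$, pick the witness $y'$ with $|y'|\le y$ and $x+y'\notin\delta_\epsilon(x)$, i.e. $g(x+y')$ differs from $g(x)$ by more than $\epsilon g(x)$. If $g(x+y')>(1+\epsilon)g(x)$, then slow-dropping (Proposition~\ref{prop: slow dropping to sub-polynomial}) applied to the pair $\min(x+y',x+y)<\max(x+y',x+y)$ shows $g(x+y)\ge g(x+y')/H(x+y)\ge g(x)/H(2x)$, which is already a lower bound of the desired type on $g(x+y)$; but I really want a bound on $g(y)$. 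Here is where I use predictability directly on the point $x+y$ itself: either $x+y\in\delta_{\epsilon(x)}(g,x)$, in which case I iterate (more care needed — see below), or $x+y\notin\delta_{\epsilon(x)}(g,x)$ and predictability immediately gives $g(y)\ge g(x)/h_1(x)$, finishing that case. The harder sub-case is when $g(x+y')<(1-\epsilon)g(x)$: then slow-dropping applied with the larger argument gives $g(x)\le g(\max)\,H(\ldots)$, forcing $\max(x+y',x+y)$ considerations, and slow-jumping bounds how large $g(x+y)$ can be relative to $g(x+y')$ over a window of length $\le 2y = o(x)$, which over such a short interval contributes only a sub-polynomial factor (since $\lfloor (x+y)/(x+y')\rfloor = 1$ when $|y-y'|$ is small relative to $x+y'$, the slow-jumping bound degenerates to just the sub-polynomial $h$-factor).

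I expect the main obstacle to be the bookkeeping around the case $x+y\in\delta_{\epsilon(x)}(g,x)$ despite $y>r_\epsilon(x)$: the point $x+y$ may happen to be close to $g(x)$ even though a different nearby point is not. In that event $g(x+y)\approx g(x)$ directly, which is a \emph{positive} conclusion about $g(x+y)$ but not about $g(y)$. To handle this I would argue: the witness $y'$ satisfies $x+y'\notin\delta_\epsilon(x)$ with $|y'|\le y< x/h(x)$, so predictability applied at $x$ with perturbation $y'$ (taking $|y'|$ as the argument, and using the symmetric extension of $g$ together with the predictability statement, which only constrains $y\in[1,x^{1-\gamma})$ — note $|y'|\le y$ is in range) yields $g(|y'|)\ge g(x)/h_1(x)$. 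Then slow-jumping/slow-dropping relate $g(y)$ to $g(|y'|)$ over the interval between $|y'|$ and $y$, both of which are $o(x)$; since the ratio of these two arguments may not be bounded, I would instead note $g(y)\ge g(|y'|)/H(y)\ge g(x)/(h_1(x)H(x))$ by slow-dropping when $|y'|\le y$, and when $|y'|>y$ I'd want the reverse, so I may need to take the witness to be the \emph{smallest} such $y'$ or otherwise control its size; a cleaner route is to observe $r_\epsilon(x)+1\le y$ means $x+(r_\epsilon(x)+1)\notin\delta_\epsilon(x)$ or a point at distance exactly $r_\epsilon(x)+1$ fails, pinning the witness size to $\le r_\epsilon(x)+1\le y$. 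Taking $h := h_1\cdot H$ (the pointwise maximum of the predictability and slow-dropping/slow-jumping sub-polynomial functions, then squared to absorb both applications) and shrinking the upper endpoint of the interval to $x/h(x)$ accordingly gives the claimed sub-polynomial $h$. Throughout, the only genuinely delicate point is ensuring every argument stays within the $[1,x^{1-\gamma})$-type range demanded by predictability, which forces the $x/h(x)$ upper cutoff in the statement.
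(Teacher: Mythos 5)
Your outline correctly identifies the structure the paper uses: pin the witness failure to magnitude exactly $r_\epsilon(x)+1$, derive a lower bound on $g(r_\epsilon(x)+1)$, then propagate it to all $y\in[r_\epsilon(x)+1,x/h(x))$ via slow-dropping. The case where $x+(r_\epsilon(x)+1)\notin\delta_\epsilon(g,x)$ is handled the same way the paper does, by Proposition~\ref{prop: predictable to sub-polynomial} applied at $x$.

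However, your treatment of the other case --- when it is the point $x-(r_\epsilon(x)+1)$ that escapes $\delta_\epsilon(g,x)$ --- has a genuine gap. You claim that ``predictability applied at $x$ with perturbation $y'$, taking $|y'|$ as the argument and using the symmetric extension of $g$, yields $g(|y'|)\ge g(x)/h_1(x)$.'' This does not follow: predictability in this paper is a one-sided condition. Its hypothesis is that $x+y\notin\delta_\epsilon(g,x)$ for a \emph{positive} $y\in[1,x^{1-\gamma})$; the symmetric extension $g(-z)=g(z)$ is a notational device for evaluating $g$ at negative frequencies and has no bearing on whether $\delta_\epsilon(g,x)$ is symmetric about $x$, which it generally is not. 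Knowing $x-|y'|\notin\delta_\epsilon(g,x)$ gives you nothing about $x+|y'|$, so you cannot invoke predictability at $x$ with perturbation $+|y'|$. What the paper does instead is shift the base point: set $x'=x-r_\epsilon(x)-1$, observe (splitting on whether $g(x')\ge 2g(x)$ or not) that $x'+(r_\epsilon(x)+1)=x\notin\delta_{\epsilon/2}(g,x')$, apply predictability \emph{at $x'$} to get $g(r_\epsilon(x)+1)\ge g(x')/h'(x')$, and then use slow-jumping together with $x'\ge x/2$ to convert $g(x')$ back to $g(x)$ up to a sub-polynomial factor. That base-point shift, with the accompanying $\epsilon/2$ re-normalization and the slow-jumping step, is precisely the ingredient missing from your argument; without it the negative-perturbation case is unproved.
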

\begin{proof}
According to Proposition~\ref{prop: predictable to sub-polynomial}, there exists a sub-polynomial function $h'$ such that for all $y\in[1,x/h'(x))$ if $x+y\notin\delta_{\epsilon/2}(x)$ then $g(y)\geq g(x)/h'(x)$.  
Without loss of generality $h'$ governs the slow-jumping of $g$ as well, hence $g(x)\leq (x/x')^2h'(x)g(x^\prime)$ for all $x'\leq x$.

We consider two cases: 
\begin{enumerate}
\item $|g(x)-g(x+r_{\epsilon}(x)+1)|>\epsilon g(x)$ and
\item $|g(x)-g(x-r_{\epsilon}(x)-1)|>\epsilon g(x)$.
\end{enumerate}

In the first case, by the definition of $r_{\epsilon}$ and predictability we have $g(r_{\epsilon}(x)+1)\geq g(x)/h'(x)$.
In the second case, let $x'=x-r_{\epsilon}-1$.  
If $g(x')\geq 2g(x)$ then $|g(x') - g(x)|> \epsilon g(x')$, as long as $\epsilon<1/2$.
Otherwise, $|g(x')-g(x)|\geq \epsilon g(x)\geq\frac{\epsilon}{2}g(x')$. 
Now, from predictability at $x'$ and the definition of $h'$, we find that 
\[g(r_{\epsilon}(x)+1)=g(x-x')
\geq \frac{g(x')}{h'(x')}
\geq \frac{g(x)}{4h'(x)^2},\]
where the last inequality follows because $g$ is slow-jumping and $x'\geq x/2$ (w.l.o.g., choose $h^\prime(x)>2$).

Now, since $g$ satisfies the slow dropping condition there exists a nondecreasing sub-polynomial function $h''$ such that $g(y')\geq g(r')/h''(y')$ for all $y'\in\N$ and $r'\leq y'$.
Thus, if $y\in [r_\epsilon(x)+1,x/h(x))$ 
\[g(y)\geq \frac{g(r_\epsilon(x)+1)}{h''(y)}\geq \frac{g(x)}{4h'(x)^2h''(x)},\]
so we take $h$ to be the product of $4(h')^2$ and $h''$.
\end{proof}

Recall the sub-polynomial function $H$ from the last section and let it also satisfy Lemma~\ref{lem: predictability booster} with $\epsilon$ replaced by $\epsilon/2$.
In particular $H$ now satisfies three conditions:
\begin{itemize}
\item $g(y)\geq g(x)/H(y)$, for all $x<y$, and
\item $g(y)\leq (y/x)^2 H(y)g(x)$, for all $x<y$. 
\item for all $y\in[r_{\epsilon/2}(x)+1,x/h(x))$ we have $g(y)\geq g(x)/H(x)$.
\end{itemize}

Let $y$ be the (additive) error in our estimate of frequency $x$ of a $(g,\lambda)$-heavy hitter. 
If $r_\epsilon(x)>|y|$ then we are fine.  
If not, consider the point $x+r_\epsilon(x)+1$ (or $x-r_\epsilon(x)-1$), which has $|g(x+r_\epsilon(x)+1)-g(x)|>\epsilon g(x)$ and therefore $g(r_\epsilon(x)+1)>g(x)/H(M)$.
Now, since $x$ is the frequency of a $(g,\lambda)$-heavy hitter it happens that $r_\epsilon(x)+1$, were it a frequency in the stream, would be $(g,\lambda/H(M))$-heavy and thus $\lambda/H(M)^2$-heavy for $F_2$.
Presuming that there are not too many items in the stream with frequency larger than $r_\epsilon(x)$, this implies that CountSketch produces an estimate for $x$ with error smaller than $r_\epsilon(x)$.
Now, since $g$ satisfies the slow dropping condition there cannot be too many frequencies larger than $r_\epsilon(x)$ in the stream because 
\[g(z) \geq \frac{g(r_\epsilon(x)+1)}{H(M)}\geq \frac{g(x)}{H(M)^2},\text{ for all }z>r_\epsilon,\]
which implies that there are at most $\frac{1}{\lambda}H(M)^2$ frequencies in this range.

\begin{algorithm}
\begin{algorithmic}[1]
\Procedure{1-Pass Heavy Hitters}{$g$, $\lambda$, $\epsilon$, $\delta$}
\State $\hat{S},\hat{V}\gets$CountSketch$(\frac{\lambda}{3H(M)}, \frac{\epsilon}{2H(M)},\delta/2)$
\State $\hat{F}_2\gets $AMS$(\epsilon,\delta/2)$
\State $S\gets\{i\in \hat{S}: |g(\hat{v}_i)-g(\hat{v}_i+y)|\leq \epsilon g(\hat{v}_i+y),$\\ $\qquad\quad\text{ for all }-\frac{\epsilon}{2H(M)}\sqrt{\hat{F}_2}\leq y \leq \frac{\epsilon}{2H(M)}\sqrt{\hat{F}_2}\}$
\State \Return $(j,\hat{v}_j)$, for all $j\in S$ 
\EndProcedure
\end{algorithmic}
\caption{A 1-pass $(g,\lambda,\epsilon,\delta)$-heavy hitters algorithm.}
\label{algo: 1-pass}
\end{algorithm}

\subsection{One Pass Lower Bounds}

The proof of the following theorem is broken up into Lemmas~\ref{lem:normslowdropping}, \ref{lem:normslowjumping} and \ref{lem:normpredictable}.

\begin{theorem}\label{thm:normintractable}
If $g\in\G$ is a 1-pass tractable $\bbS$-normal function, then $g$ is slow-jumping, slow-dropping, and predictable.
\end{theorem}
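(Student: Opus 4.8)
The plan is to prove the contrapositive in three independent parts, each showing that if $g$ fails one of the three properties then there is a one-pass communication lower bound ruling out subpolynomial space. The overall structure mirrors the technique discussion in Section~\ref{sec: techniques}: we reduce from a suitable communication problem to $(g,\epsilon)$-SUM, using the fact that a tractable streaming algorithm yields a low-communication protocol.

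\emph{Not slow-jumping (Lemma~\ref{lem:normslowjumping}).} If $g$ is not slow-jumping, then there is some $\alpha>0$ such that for arbitrarily large $y$ there is an $x<y$ with $g(y) > \lfloor y/x\rfloor^{2+\alpha} x^{\alpha} g(x)$; in particular $g$ grows faster than quadratically at infinitely many scales. The plan is to embed a single coordinate of value $y$ among many coordinates of value $x$, so that the heavy coordinate carries a $g$-weight exceeding all the light ones, yet its $F_2$-mass is negligible. Concretely I would reduce from $\disjind(n,t)$ with $t\approx\lfloor y/x\rfloor$: Alice places value $x$ on her set, Bob's unique element, if in the intersection, bumps its coordinate up to $y$. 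Because $g(y)$ dominates $\approx t^2$ copies of $g(x)$ but $y^2\approx t^2 x^2$ is only about the total $F_2$ of the light items, detecting the heavy hitter requires resolving it below the noise floor, which forces $\Omega(n/(t\log n))$ communication and hence a space lower bound that is polynomial in $n$. This is essentially the Alon--Matias--Szegedy frequency-moment lower bound adapted to the non-monotone setting.

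\emph{Not slow-dropping (Lemma~\ref{lem:normslowdropping}).} Here $g$ is $\bbS$-normal (by hypothesis) but not slow-dropping, so the first clause of $\bbS$-near-periodicity holds while the second fails: there are $\alpha$-periods $y$ and $x<y$ with $g(y)\le g(x)/y^\alpha$ for which $|g(x+y)-g(x)| > \min\{g(x),g(x+y)\}h(y)$ for some $h\in\bbS$, i.e.\ $g(x)\not\approx g(x+y)$ even at subpolynomially-small accuracy. The plan is the $\textsf{INDEX}$ reduction previewed in Section~\ref{sec: techniques}: Alice, holding $A\subseteq[n]$, builds a stream where each $i\in A$ has frequency $x$; Bob, holding $b$, adds one copy of item $b$, turning $v_b$ into either $x$ (if $b\notin A$) or $x+y$. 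Wait—more carefully, Bob adds $y$ copies, or rather we set things up so a chosen coordinate either equals $x$ or $x+y$; since $g(y)\ll g(x)$ the many light items of value $y$ contribute negligibly, so $g(V)$ is $\approx |A|g(x)$ plus either $g(x)$ or $g(x+y)$, and these differ by more than the allowed multiplicative error. Thus a one-pass $(g,\epsilon)$-SUM algorithm solves $\textsf{INDEX}(n)$, forcing $\Omega(n)$ bits and hence polynomial space. The subtlety is choosing the number of planted light coordinates so their aggregate $g$-weight is a vanishing fraction of $|A|g(x)$ while the gap $|g(x+y)-g(x)|$ still exceeds $\epsilon$ times the total; the slow-dropping failure with a subpolynomial error function $h$ is exactly what makes this possible for every subpolynomial $\epsilon$.

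\emph{Not predictable (Lemma~\ref{lem:normpredictable}).} Now $g$ is $\bbS$-normal and slow-jumping and slow-dropping (otherwise the previous two cases apply), but not predictable: for some $0<\gamma<1$ and subpolynomial $\epsilon$ there are arbitrarily large $x$ and $y\in[1,x^{1-\gamma})$ with $x+y\notin\delta_{\epsilon(x)}(g,x)$ yet $g(y) < x^{-\gamma}g(x)$. The plan is again an $\textsf{INDEX}$-type reduction, but now exploiting local variability: plant a heavy coordinate of value $x$ together with many coordinates of value $y$; because $g(y) < x^{-\gamma}g(x)$ the light items add up to at most a small fraction of the heavy weight, so $g(V)\approx g(x)$ up to lower-order terms—call it roughly $N\cdot g(y) + g(x)$ with $N g(y)$ negligible. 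The index bit controls whether the heavy coordinate is $x$ or $x+y$, and since $x+y\notin\delta_\epsilon(g,x)$ these yield $g$-values differing by more than $\epsilon g(x)$, so the sum changes by more than a $(1\pm\epsilon)$ factor. A one-pass algorithm distinguishing these solves $\textsf{INDEX}$ on a universe of size $\Omega(n^{1-\gamma})$ or so, giving a polynomial space lower bound. The point is that $y\ll x$ means a CountSketch-type sketch cannot resolve $x$ to within $\pm y$ in subpolynomial space, and the failure of predictability guarantees this resolution is genuinely needed.

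\emph{Main obstacle.} I expect the technical heart to be the slow-dropping case (Lemma~\ref{lem:normslowdropping}): one must extract, from the quantitative failure of the second near-periodicity clause against \emph{every} subpolynomial error function, a clean gap $|g(x+y)-g(x)|\ge \epsilon(nM)\cdot g(V)$ that persists after the many planted light coordinates dilute the sum, and simultaneously ensure the instance fits the $\textsf{INDEX}$ promise (bounded frequencies, $M\in\poly(n)$). Balancing the number of light coordinates $N$ against both the heaviness requirement ($g(x)$ must stay a non-vanishing fraction so the gap survives) and the need for a universe of polynomial size is the delicate bookkeeping; the slow-jumping and not-predictable cases are comparatively routine adaptations of known frequency-moment and index reductions.
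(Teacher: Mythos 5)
Your three-lemma decomposition (not slow-jumping via a $\disjind$ reduction, not slow-dropping and not predictable each via $\textsf{INDEX}$) matches the paper exactly, and your slow-jumping and predictability sketches are correct in outline. The slow-dropping case, however, has the roles of $x$ and $y$ reversed in a way that breaks the quantitative argument, and the hedging in your prose (``Wait---more carefully\ldots or rather\ldots'') suggests you noticed something was off without resolving it.

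You have Alice place frequency $x$ (the \emph{small} argument, with \emph{large} $g$-value) on each $i\in A$ and then assert $g(V)\approx|A|g(x)$. With that construction the only available gap is $|g(x)-g(x+y)|\ge g(x)/h(y)$ for some subpolynomial $h$, so to keep the gap an $\epsilon$-fraction of $g(V)\approx|A|g(x)$ you would be forced to take $|A|\le 1/(\epsilon h(y))$, a subpolynomial universe and a vacuous lower bound. The paper's reduction (and the preview in Section~\ref{sec: techniques} you cite) puts the \emph{large} frequency $n=y$ on Alice's items --- the one with \emph{small} $g$-value, since $y$ is an $\alpha$-period --- while Bob adds $x$ copies of his index, so the contested coordinate is $x$ or $x+y$. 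Then $|A|\le y^\alpha$ gives $|A|g(y)\le y^\alpha\cdot g(x)/y^\alpha = g(x)$, hence $g(V)=\Theta(g(x))$; the $\bbS$-normality failure yields a relative gap $\Omega(1/h(y))$, so taking $\epsilon=1/(3h(y))$ lets Bob decide $\textsf{INDEX}(y^\alpha)$ and forces $\Omega(y^\alpha)$ space. The asymmetry --- dilute with the frequency of tiny $g$-value, concentrate the heavy $g$-contribution on Bob's single contested coordinate --- is precisely the ``delicate bookkeeping'' your final paragraph gestures at but your construction does not carry out.
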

\begin{proof}
Immediate from Lemmas~\ref{lem:normslowdropping} , \ref{lem:normslowjumping}, and \ref{lem:normpredictable}.
\end{proof}

\begin{lemma}
\label{lem:normslowdropping}
If an $\bbS$-normal function $g\in\G$ is not slow-dropping, then $g$ is not $1$-pass tractable.
\end{lemma}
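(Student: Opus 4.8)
The plan is to show that if $g$ is $\bbS$-normal but not slow-dropping, then we can reduce the one-way communication problem $\textsf{INDEX}$ to a streaming instance of $(g,\epsilon)$-SUM, forcing any one-pass algorithm to use $\Omega(n^{c})$ space for some constant $c>0$, which contradicts tractability. The starting point is the negation of slow-dropping together with $\bbS$-normality. Since $g$ is not slow-dropping, there is some $\alpha>0$ and an increasing sequence of integers $y_1<y_2<\cdots$ (call them $\alpha$-periods) and witnesses $x_k<y_k$ with $g(y_k)\le g(x_k)/y_k^{\alpha}$, so $g$ drops polynomially fast along this subsequence. Since $g$ is $\bbS$-normal, it is \emph{not} $\bbS$-nearly periodic, so the second condition in the definition of $\bbS$-nearly periodic must fail: there is some $\alpha'>0$ and some non-increasing subpolynomial error function $h\in\bbS$ such that for infinitely many $\alpha'$-periods $y$ there is an $x<y$ with $g(y)y^{\alpha'}\le g(x)$ but $|g(x+y)-g(x)|>\min\{g(x),g(x+y)\}\,h(y)$; in other words, $g(x+y)$ differs from $g(x)$ by more than a subpolynomially-small relative factor. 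I would first clean this up: combine the two facts to produce, for infinitely many scales $k$, a triple $x_k<y_k$, $x_k+y_k$ with (i) $g(y_k)\le g(x_k)\cdot y_k^{-\beta}$ for a fixed $\beta>0$, and (ii) $|g(x_k+y_k)-g(x_k)|\ge g(x_k)\cdot h(y_k)$ for a fixed subpolynomial $h$ bounded below by, say, $1/\poly\log$.

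Next comes the reduction itself, which is the one previewed in Section~\ref{sec: techniques}. Given an $\textsf{INDEX}(N)$ instance with Alice holding $A\subseteq[N]$ and Bob holding $b\in[N]$, pick $N$ of the same order as $y_k$ for an appropriate $k$ (we want $y_k,x_k\in\poly(N)$, which holds since $M\in\poly(n)$ and we can choose $n$ accordingly). Alice builds a stream on domain $[N]$ in which each $i\in A$ receives frequency $x_k$ (by inserting $x_k$ copies of $i$; this is insertion-only, as required for the lower bound) and each $i\notin A$ receives frequency $0$. She runs the hypothetical one-pass $(g,\epsilon)$-SUM algorithm and sends its state to Bob, who then inserts $y_k$ copies of his index $b$ and finishes the run. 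The resulting frequency vector has every coordinate in $A\setminus\{b\}$ equal to $x_k$, coordinate $b$ equal to $x_k+y_k$ if $b\in A$ and equal to $y_k$ if $b\notin A$, and all others $0$. Hence
\[
g(V)=\begin{cases}(|A|-1)\,g(x_k)+g(x_k+y_k),& b\in A,\\[2pt](|A|-1)\,g(x_k)+g(y_k),& b\notin A.\end{cases}
\]
The difference between the two cases is $|g(x_k+y_k)-g(y_k)|$, and since $g(y_k)\le g(x_k)y_k^{-\beta}$ while $|g(x_k+y_k)-g(x_k)|\ge h(y_k)g(x_k)$, this difference is at least, say, $\tfrac12 h(y_k)g(x_k)$ for $k$ large, which we must compare against the magnitude $g(V)=\Theta(N\,g(x_k))$. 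So the two cases differ by a relative factor of order $h(y_k)/N$; to distinguish them we need $\epsilon$ smaller than $\Theta(h(N)/N)$, which is still subpolynomial in $N$, hence a legitimate accuracy regime in which a tractable $g$ must succeed. A correct $(g,\epsilon)$-SUM answer therefore lets Bob decide whether $b\in A$, so the one-pass space of the algorithm is $\Omega(N)$, contradicting subpolynomial space. To drive the relative gap down to the required $\epsilon$ I would use the standard trick of replicating each heavy-hitter instance: partition $[n]$ into $\Theta(N)$ blocks and only the block containing $b$'s ``true'' structure matters, or equivalently scale $N$ so that $|A|$ is small relative to the reciprocal of $\epsilon$; since $\epsilon$ may be taken as small as $1/h(nM)$ for any subpolynomial $h$, and $N,x_k,y_k$ are polynomially bounded, the arithmetic works out.

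The main obstacle I anticipate is the bookkeeping around the two ``directions'' in which $g$ can change near $x_k$: the failure of $\bbS$-near-periodicity gives $g(x_k+y_k)\not\approx g(x_k)$, but to make the reduction go through cleanly I need the gap between the $b\in A$ and $b\notin A$ cases, which is $|g(x_k+y_k)-g(y_k)|$, not $|g(x_k+y_k)-g(x_k)|$. Bridging these requires knowing $g(y_k)\ll g(x_k)$, which is exactly what the non-slow-dropping hypothesis supplies, so one has to be careful to invoke property (i) and property (ii) at the \emph{same} infinite set of scales $k$ — this is why the first cleanup step (intersecting the two sets of witnesses into one subsequence with a common $\beta$ and common $h$) is essential and slightly delicate. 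A secondary subtlety is matching the accuracy budget: one must verify that the $\epsilon$ forced by the reduction, roughly $h(N)/N$, is indeed $\ge 1/h^{\ast}(nM)$ for some subpolynomial $h^{\ast}$, so that a $p$-pass tractable algorithm is actually obligated to handle it; since $1/N$ is polynomially small this needs the replication argument above rather than a naive single instance. Everything else — the insertion-only encoding, taking the median to boost success probability, and converting a communication lower bound to a streaming space lower bound — is routine.
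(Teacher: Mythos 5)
Your reduction has Alice and Bob in the wrong roles, and this is not a cosmetic issue but a fatal one. You have Alice plant frequency $x_k$ (the small frequency with the \emph{large} $g$-value) for each $i\in A$, and Bob insert $y_k$ copies of his index. The resulting sum is then dominated by Alice's contribution $|A|\,g(x_k)=\Theta(N\,g(x_k))$, while the gap between the two cases is only $\bigl|g(x_k+y_k)-g(x_k)-g(y_k)\bigr|=\Theta\!\left(g(x_k)/h(y_k)\right)$. The relative gap is therefore $\Theta\!\left(1/(N\,h(y_k))\right)$, which is \emph{polynomially} small in $N$; a $1$-pass tractable algorithm is only obligated to achieve subpolynomial accuracy, so the reduction as you describe it simply fails to force any space lower bound. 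Your proposed patch — ``replication'' or ``scale $N$ so $|A|$ is small relative to $1/\epsilon$'' — does not repair this: if you constrain $|A|$ to be subpolynomially small then you are reducing from a restricted version of $\textsf{INDEX}$ whose communication lower bound is also subpolynomial, which gives you nothing. (There is also a small arithmetic slip: in the $b\notin A$ case all $|A|$ of Alice's coordinates have frequency $x_k$, so the term should be $|A|g(x_k)$, not $(|A|-1)g(x_k)$.)

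The paper's proof does the opposite assignment, and this is the essential idea you are missing: Alice plants the \emph{large} frequency $n:=y_k$ (which has the \emph{small} $g$-value $g(y_k)$) on each item of $A\subseteq[n^\alpha]$, and Bob contributes $x:=x_k$ copies of his single index. Then Alice's bulk contribution is $|A|\,g(y_k)\le n^\alpha g(y_k)\le g(x_k)$ by the non-slow-dropping inequality $g(x_k)\ge y_k^\alpha g(y_k)$, so $g(V(D))=\Theta(g(x_k))$ and the gap $\bigl|g(y_k)+g(x_k)-g(x_k+y_k)\bigr|\ge\Omega\!\left(g(x_k)/h(y_k)\right)$ is a \emph{subpolynomial} fraction of $g(V(D))$. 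Choosing $\epsilon=(3h(n))^{-1}$, which is a legitimate subpolynomial accuracy, Bob can then distinguish the two cases, yielding $\Omega(n^\alpha)$ space. So the non-slow-dropping hypothesis is doing double duty in the paper: it both certifies the drop $g(y_k)\ll g(x_k)$ \emph{and} caps Alice's total contribution. Your version uses the hypothesis only for the first purpose and consequently loses control of the denominator. Your instinct about ``intersecting the two sets of witnesses into one subsequence'' and about the $\min$ in the near-periodicity definition is on the right track (the paper indeed cleans up the $\min$ by a two-case argument so that the bound reads $|g(x_k+y_k)-g(x_k)|>g(x_k)/h(y_k)$), but that cleanup is the easy part; the role assignment is where the argument lives or dies.
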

\begin{proof}
Suppose $g$ is not slow-dropping. Then there exist $0<\alpha\leq 1$ and integer sequences $y_1,y_2,\ldots\in\N$ and  $x_1,x_2,\ldots\in\N$, with $y_k$ increasing, such that $x_k<y_k$ and $g(x_k)\geq y_k^\alpha g(y_k)$, for all $k\geq 1$.
Furthermore, since $g$ is $\bbS$-normal we may choose the sequences so that there exists a sub-polynomial function $h$ satisfying
\begin{equation}\label{eq: difference for 1-pass slowdrop bound} |g(x_k+y_k)-g(x_k)|>  \frac{1}{h(y_k)}\min\{g(x_k),g(x_k+y_k)\},
\end{equation} for all $k$. 
We claim that one can take $|g(x_k+y_k)-g(x_k)|> g(x_k)/h(y_k)$.
If $g(x_k+y_k)<\frac{1}{2}g(x_k)$ then this is true because the constant function $2$ is sub-polynomial.  
On the other hand, if $g(x_k+y_k)\geq\frac{1}{2}g(x_k)$ then replacing $h$ by $2h$ in \eqref{eq: difference for 1-pass slowdrop bound} does the job.
Also, note that $1/h(x)$ is still a sub-polynomial function. 

Let $\mathcal{A}$ be a 1-pass $(g,\epsilon)$-SUM algorithm with $\epsilon=(3h(n))^{-1}$. 
We will show that $\calA$ uses $\Omega(n^\alpha)$ bits on a sequence of $g$-SUM problems with $n=y_1,y_2,\ldots$, hence $g$ is not tractable.
Let $n=y_k$ and $x=x_k$ and consider the following protocol for $\textsf{INDEX}(n^{\alpha})$ using $\calA$. 
Alice receives a set $A\subseteq[n^\alpha]$ and Bob receives an index $b\in[n^\alpha]$.
Alice and Bob jointly create a notional stream~$D$ and run $\calA$ on it.
Alice contributes $n$ copies of $i$ for each $i\in A$ to the stream and Bob contributes $x$ copies of his index~$b$. 
If $b\notin A$ there are $|A|$ items in the  stream with frequency $n$ and one with frequency $x$; whereas if $b\in A$ then $|A|-1$ frequencies are equal to $n$ and one is $n+x$.
Alice runs $\mathcal{A}$ on her portion of the stream and sends the memory to Bob along with the value $|A|$.
Bob completes the computation with his portion of the stream.
Let $\hat{G}$ be the value returned to Bob by $\calA$.
Bob decides that there is an intersection if $\frac{\hat{G}}{|A|g(n)+g(x)}\notin [1-\epsilon,1+\epsilon]$.

With probability at least $2/3$, $\hat{G}$ is a $(1\pm\epsilon)$-approximation to $g(V(D))$; suppose that this is so.
If $b\in A$ then $g(V(D))=(|A|-1)g(n) + g(x+n)$ and otherwise the result is $g(V(D))=|A|g(n) + g(x)$.
Without loss of generality $g(n)< \epsilon g(x)$, thus the difference between these values is 
\begin{align*}
|g(n)+g(x)-g(x+n)| &> |g(x)-g(x+n)|-\epsilon g(x)\\
&\geq 2\epsilon g(x)\\
&\geq \epsilon( g(x)+n^\alpha g(n))\\
&\geq \epsilon(g(x)+|A|g(n)).
\end{align*}
Thus Bob's decision is correct and $g$-SUM inherits the $\Omega(n^{\alpha})$ lower bound from $\textsf{INDEX}(n^\alpha)$.
\end{proof}

\begin{lemma}
\label{lem:normslowjumping}
If an $\bbS$-normal function $g\in\G$ is not slow-jumping, then $g$ is not $1$-pass tractable.
\end{lemma}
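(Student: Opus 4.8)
The plan is to follow the template of Lemma~\ref{lem:normslowdropping}, but to reduce from the multi-player problem $\disjind$ — in the spirit of the Alon--Matias--Szegedy lower bound for $F_k$ with $k>2$ — rather than from $\textsf{INDEX}$. I may assume $g$ is slow-dropping throughout: if it is not, then since it is $\bbS$-normal it is already not $1$-pass tractable by Lemma~\ref{lem:normslowdropping}. Fix a non-decreasing sub-polynomial $h$ governing the slow-dropping of $g$ (Proposition~\ref{prop: slow dropping to sub-polynomial}), and unpack the failure of slow-jumping: there is a fixed $\alpha>0$ and an infinite sequence of witness pairs $(x_k,y_k)$ with $y_k\to\infty$, $x_k<y_k$, and $g(y_k)>R_k^{\,2+\alpha}x_k^{\alpha}g(x_k)$, where $R_k:=\lfloor y_k/x_k\rfloor$; set $\rho_k:=y_k-R_kx_k\in[0,x_k)$ and $s_k:=\bigl\lfloor R_k^{\,2+\alpha}x_k^{\alpha}\,/\,(100\,h(y_k))\bigr\rfloor$.

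For each large $k$ I would exhibit hard instances of $(g,1/3)$-SUM on streams of dimension $n_k:=s_k$ by reduction from $\disjind(s_k,R_k)$ on universe $[s_k]$: the $R_k$ ``disjointness'' players each insert $x_k$ copies of each of their elements, and the index player (Bob) inserts $\rho_k$ copies of his element. In the disjoint case every coordinate has frequency $x_k$ (at most $s_k$ of them) or $\rho_k$, so the value is $V_0\le s_kg(x_k)+g(\rho_k)$; in the intersecting case the common element has frequency exactly $R_kx_k+\rho_k=y_k$, so $V_1\ge g(y_k)$. A short computation using $g(y_k)>R_k^{\,2+\alpha}x_k^{\alpha}g(x_k)\ge 100\,h(y_k)\,s_kg(x_k)$, the slow-dropping bound $g(\rho_k)\le h(x_k)g(x_k)$, and $y_k\to\infty$ shows that both $s_kg(x_k)$ and $g(\rho_k)$ are $o\bigl(g(y_k)\bigr)$, so $V_1\ge 50\,V_0$ for all large $k$. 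Hence a $(g,1/3)$-SUM algorithm lets Bob decide $\disjind$ by thresholding its output against the publicly known quantity $4\,s_kg(x_k)$. (Here $\rho_k$ — equivalently the ``$+$IND'' component — is exactly what lets us place the heavy frequency on $y_k$ even when $x_k$ does not divide $y_k$; this is why the reduction is from $\disjind$ and not from plain $\disj$.)

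It then remains to check the parameters and convert communication into space. Using $R_k^{\,2+\alpha}x_k^{\alpha}=R_k^{2}(R_kx_k)^{\alpha}\ge (y_k/2)^{\alpha}$ and the sub-polynomiality of $h$ one gets $s_k\to\infty$; the same inequality gives $(y_k/2)^{\alpha}\le 200\,h(y_k)\,s_k$, so the largest frequency $y_k$ is at most $\poly(s_k)=\poly(n_k)$, and the instances are insertion-only streams in $\calD(n_k,m_k)$ with $m_k,M_k=\poly(n_k)$. Finally, a $1$-pass streaming algorithm using $S$ bits yields a one-way $(R_k{+}1)$-player protocol for $\disjind(s_k,R_k)$ of total communication $O(R_k S)$, so Theorem~\ref{thm: communication complexity of disjind} forces $S=\Omega\bigl(s_k/(R_k^{2}\log s_k)\bigr)$; substituting $s_k$, using $R_kx_k>y_k/2$ and the crude bound $s_k\le y_k^{\,2+\alpha}$, this is $\Omega\bigl(s_k^{\,c}/\log s_k\bigr)$ for a constant $c=c(\alpha)>0$, i.e.\ $n_k^{\Omega(1)}$. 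As this exceeds every sub-polynomial bound, $g$ is not $1$-pass tractable.

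The main obstacle is parameter bookkeeping rather than a single hard idea: the witness pairs $(x_k,y_k)$ come with essentially no control over the relative sizes of $x_k$, $y_k$ and $R_k$, so the choice of $s_k$ must simultaneously be large enough to make $n_k\to\infty$, small enough that the heavy frequency $y_k$ stays polynomial in $n_k$, and balanced against $R_k$ so that $\Omega(s_k/(R_k^{2}\log s_k))$ stays polynomial across all regimes (bounded $x_k$ with $R_k\to\infty$, bounded $R_k$ with $x_k\to\infty$, and everything in between). The secondary subtlety is that bounding $g(\rho_k)$ is what forces the appeal to Lemma~\ref{lem:normslowdropping} in the case $g$ is not slow-dropping, and forces us to play the slow-jumping exponent $\alpha$ off against the slow-dropping exponent.
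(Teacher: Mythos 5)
Your proof is correct and follows essentially the same strategy as the paper's: a reduction from $\disjind$ in which $R_k=\lfloor y_k/x_k\rfloor$ disjointness players each contribute $x_k$ copies of their elements, the index player contributes the remainder $\rho_k=y_k-R_kx_k$ copies, and slow-dropping (available via Lemma~\ref{lem:normslowdropping}) is invoked to control $g(\rho_k)$ so that the intersecting and disjoint cases separate by a constant factor. The only deviations are cosmetic bookkeeping — you deflate the universe by $100\,h(y_k)$ to get a factor-$50$ gap, where the paper keeps $n=s_k^{2+\alpha}x_k^\alpha$ and works with $\epsilon=1/12$, and you bound $g(\rho_k)$ directly from Proposition~\ref{prop: slow dropping to sub-polynomial} rather than via the paper's case split on $x_k\lessgtr N$ — neither of which changes the argument.
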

\begin{proof}
Since $g$ is not slow-jumping, there exist $\alpha >0$, a strictly increasing sequence $y_1,y_2,\ldots\in\N$, and a sequence $x_1,x_2,\ldots\in\N$ such that $x_k\leq y_k$ and $g(y_k) > s_k^{2+\alpha}x_k^\alpha g(x_k)$, where $s_k=\lfloor y_k/x_k\rfloor$.
Without loss of generality $y_k$ is strictly increasing.
According to Lemma~\ref{lem:normslowdropping}, we can assume that $g$ is slow-dropping, since otherwise it is not $1$-pass tractable. 
Hence, there exists $N\in\N$ such that for all $x\ge N$ and $r\leq x$ we have  $\frac{1}{2}x^{\alpha}g(x)\ge g(r)$. 
If $x_k\leq N$ for all but finitely many $k$, we may assume this holds for all $k$.  Otherwise, by taking a subsequence, we can assume $x_k>N$ holds for all $k$.

Let $r_k = y_k-s_kx_k$ be the sequence remainders.
Before proceeding with the reduction we will establish the bound $2g(r_k)\leq g(y_k)$.
If $x_k\leq N$, for all $k$, then $s_k$ is unbounded while $x_kg(x_k)$ is bounded away from zero.
This means that $g(y_k)$ is unbounded while $g(r_k)$ is bounded and we can assume $y_1$ is large enough that $g(y_k)\geq 2g(r_k)$ holds for all $k$.
Otherwise $x_k>N$, for every $k$, hence $2g(r_k)\leq x_k^{\alpha}g(x_k)\leq g(y_k)$, for each $k$, from the definition of $N$.

Let $\calA$ be a $(g,\epsilon)$-SUM algorithm with accuracy $\epsilon=1/12$.
Consider the $\textsf{DISJ+IND}(n,t)$ problem, where $t=s_k$ and $n= s_k^{2+\alpha}x_k^{\alpha}$. 
Denote $x:=x_k$, $y:=y_k$, $s:=s_k$ and $r:=r_k$. The $t+1$ players receive sets $A_1,A_2,\ldots,A_t\subseteq[n]$ and an index $b\in[n]$.
As before, the players jointly run $\calA$ on a notional stream and share $|A_i|$. 
Each player~$i$ places $x$ copies of each $j\in A_i$ into the stream except for the final player who places $r$ copies of his index $b$. 
Let $n^\prime=\sum_i |A_i|$ be the total size of the $t$ players' sets. 
On an intersecting instance the result of $g$-SUM is $a_1:= (n^\prime-t)g(x)+g(y)$, and on a disjoint instance the result is $a_2:= n^\prime g(x) + g(r)$.

The difference is 
\begin{align}
a_1-a_2 &\geq g(y)-g(r) - tg(x)>\frac{1}{2}g(y)-tg(x) \nonumber\\
&\geq\frac{1}{6}(g(y)+(2s^{2+\alpha}x^{\alpha})g(x)) -sg(x) \nonumber\\
&=\frac{1}{6}(g(y)+(s^{2+\alpha}x^{\alpha} + s^{2+\alpha}x^{\alpha} - 6s)g(x))\nonumber\\
&\geq  2\epsilon(g(y) + s^{2+\alpha}x^{\alpha}g(x))\nonumber\\
&= 2\epsilon(n'g(x)+g(y)),
\end{align}
where the last inequality holds for all sufficiently large $n$. 
Thus, the index player can correctly distinguish which case has occurred, and the algorithm requires $\Omega(n/t^2)=\Omega(y^\alpha)$ bits.
\end{proof}

\begin{lemma}
\label{lem:normpredictable}
If an $\bbS$-normal function $g\in\G$ is not predictable, then $g$ is not $1$-pass tractable.
\end{lemma}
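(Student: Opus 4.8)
The plan is to prove this by a direct reduction from $\textsf{INDEX}$, in the same spirit as Lemma~\ref{lem:normslowdropping}. Since $g$ is not predictable, there are a fixed $\gamma\in(0,1)$, a fixed accuracy function $\epsilon$ that decays at most sub-polynomially, and sequences $x_k\to\infty$ and $y_k\in[1,x_k^{1-\gamma})$ with $|g(x_k+y_k)-g(x_k)|>\epsilon(x_k)g(x_k)$ yet $g(y_k)<x_k^{-\gamma}g(x_k)$. By Lemmas~\ref{lem:normslowdropping} and~\ref{lem:normslowjumping} we may assume $g$ is slow-dropping and slow-jumping (otherwise we are already done), although this is not essential below. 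Because $1/\epsilon$ is sub-polynomial, $\epsilon(x_k)\ge x_k^{-\gamma/2}$ for all large $k$, so the ``noise'' value satisfies $g(y_k)<x_k^{-\gamma}g(x_k)\le \epsilon(x_k)^2 g(x_k)$, which will make the $\Theta(n)$ small coordinates of the constructed stream negligible against the single large coordinate.

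Fix a large $k$, set $n=\lfloor x_k^{\gamma/2}\rfloor$ and $M=x_k+y_k$, and note $M\le\poly(n)$ and $n\to\infty$. I reduce $\textsf{INDEX}(n)$ to $(g,\epsilon')$-SUM on $\calD(n,M)$ with $\epsilon'=\epsilon(x_k)/10$. Given Alice's set $A\subseteq[n]$ and Bob's index $b\in[n]$, Alice inserts $y_k$ copies of each $i\in A$, runs the streaming algorithm on her prefix, and forwards its memory together with $|A|$ to Bob, who inserts $x_k$ copies of $b$ and finishes the computation. If $b\notin A$ the frequency vector has one coordinate $x_k$ and $|A|$ coordinates $y_k$, giving $g$-SUM value $A_0:=g(x_k)+|A|g(y_k)$; if $b\in A$ it has one coordinate $x_k+y_k$ and $|A|-1$ coordinates $y_k$, giving $A_1:=g(x_k+y_k)+(|A|-1)g(y_k)$. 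Since $|A|\le n$ and $ng(y_k)\le\epsilon(x_k)g(x_k)\le g(x_k)$, we have $A_0\le 2g(x_k)$, and $A_1\le g(x_k+y_k)+g(x_k)$; a short case split on whether $g(x_k+y_k)\le 3g(x_k)$ or $g(x_k+y_k)>3g(x_k)$ shows that in either case $A_0$ and $A_1$ differ by a $\bigl(1+\Omega(\epsilon(x_k))\bigr)$ factor. Indeed $|A_0-A_1|=|g(x_k)+g(y_k)-g(x_k+y_k)|\ge|g(x_k)-g(x_k+y_k)|-g(y_k)\ge\tfrac12\epsilon(x_k)g(x_k)$, and in the first sub-case $\max(A_0,A_1)\le 5g(x_k)$ while in the second $A_1>3g(x_k)\ge\tfrac32 A_0$. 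Hence a $(1\pm\epsilon')$-approximation with $\epsilon'\le\epsilon(x_k)/10$ has disjoint output intervals in the two cases, so Bob can decide whether $b\in A$, and the streaming algorithm must use $\Omega(n)-O(\log n)=\Omega(n)$ bits on this instance.

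To finish, suppose $g$ were 1-pass tractable. The running maximum $h(z):=\sup_{w\le z}10/\epsilon(w)$ is non-decreasing and sub-polynomial, and $h(nM)\ge h(x_k)\ge 10/\epsilon(x_k)$, so $\epsilon':=1/h(nM)\le\epsilon(x_k)/10$ is an admissible accuracy; tractability then yields a $1$-pass $(g,\epsilon')$-SUM algorithm using $h^*(nM)$ space for some sub-polynomial $h^*$. But $nM\le\poly(x_k)$ and $n=\Theta(x_k^{\gamma/2})$, so $h^*(nM)=o(x_k^{\gamma/4})=o(n)$, contradicting the $\Omega(n)$ lower bound for all large $k$. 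Therefore $g$ is not 1-pass tractable.

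The main obstacle is accounting rather than a single hard idea: one must choose $n$ polynomially smaller than $x_k$ so that the $\Theta(n)$ coordinates of magnitude $y_k$ are together negligible against the single coordinate of magnitude $\approx x_k$, while keeping $n\to\infty$ and $M\le\poly(n)$, and — the subtle point — verifying that the accuracy $\epsilon'\approx\epsilon(x_k)$ needed to separate the two cases remains sub-polynomial \emph{in $n$}; this works precisely because $n$ and $x_k$ are polynomially related and a sub-polynomial function composed with a polynomial is sub-polynomial. The only place needing genuine care is the dichotomy on whether $g(x_k+y_k)$ is comparable to, or much larger than, $g(x_k)$, since slow-jumping alone only bounds $g(x_k+y_k)$ by $x_k^{\alpha}g(x_k)$ rather than by $O(g(x_k))$.
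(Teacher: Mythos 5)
Your proof is correct and takes essentially the same approach as the paper: both reduce from $\textsf{INDEX}$ with Alice inserting $y_k$ copies of her elements and Bob inserting $x_k$ copies of his index, using the negation of predictability to guarantee that the ``noise'' coordinates are negligible and that $g(x_k)$ and $g(x_k+y_k)$ differ by a relative $\epsilon(x_k)$. The only differences are cosmetic: you choose $n=\lfloor x_k^{\gamma/2}\rfloor$ where the paper uses $n=\epsilon(x_k)x_k^\gamma/4$ (both choices satisfy $n\to\infty$, $M\le\poly(n)$, and $ng(y_k)=O(\epsilon(x_k)g(x_k))$), and your explicit case split on whether $g(x_k+y_k)$ is comparable to or much larger than $g(x_k)$ fills in a step the paper leaves implicit.
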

\begin{proof}
Since $g$ is not predictable, there exists a sub-polynomial function $\epsilon$ and constant $\gamma>0$ such that some infinite sequence $x_k,y_k$ satisfies the following:
\begin{itemize}
\item $x_k\to\infty$ and $y_k\in[1,x_k^{1-\gamma})$,
\item $x_k+y_k\notin\delta_{\epsilon(x_k)}(g,x_k)$, and
\item $x_k^{\gamma}g(y_k)<g(x_k)$.
\end{itemize}
The proof is by a reduction from $\textsf{INDEX}(n)$ with $n=\epsilon(x_k)x_k^\gamma/4$.
Suppose $\calA$ is an algorithm for $(g,\epsilon/4)$-SUM.
Alice receives a set $A\subseteq[n]$ and Bob receives an index $b\in[n]$.
Alice adds $y_k$ copies of $i$ to the stream, for each $i\in A$, runs $\calA$ on her portion of the stream, and sends the contents of the memory to Bob.
Bob adds $x_k$ copies of $b$ to the stream and completes the computation.

The stream has $|A|$ frequencies equal to $y_k$ and one equal to $x_k$, if there is no intersection, or $|A|-1$ equal to $y_k$ and one equal to $x_k+y_k$, if there is an intersection.
Recall that $x_k+y_k\notin\delta_\epsilon(g,x_k)$, hence $|g(x_k)-g(x_k+y_k)|>\epsilon g(x_k)$, and by construction \[|A|g(y_k)\leq \frac{\epsilon(x_k)}{4}x_k^\gamma g(y_k)\leq\frac{\epsilon(x_k)}{4}g(x_k).\]
Therefore, Bob can correctly distinguish whether $b\in A$ when $\calA$ yields a $(1\pm\epsilon/4)$-approximation.
Thus $(g,\epsilon/4)$-SUM requires $\Omega(n)$ bits.
\end{proof}

\subsection{Two Pass Lower Bounds}

\begin{theorem}\label{thm:norm2passbound}
If a $\bbP$-normal function $g\in \G$ is tractable then $g$ is slow-dropping and slow-jumping.
\end{theorem}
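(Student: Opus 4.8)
The plan is to prove the contrapositive, following the template of Theorem~\ref{thm:normintractable}: if $g$ is $\bbP$-normal and fails to be slow-dropping or slow-jumping, then $g$ is not tractable, for any sub-polynomially bounded number of passes. This splits into a ``not slow-dropping'' lemma and a ``not slow-jumping'' lemma; unlike the one-pass case, no analogue of Lemma~\ref{lem:normpredictable} is needed. Every communication problem we reduce from is hard in the unrestricted model, and the $(u,d)$-DIST bound is proved via information complexity, so all of the resulting space lower bounds hold against algorithms using a sub-polynomial number of passes.

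\emph{Not slow-jumping.} This mirrors Lemma~\ref{lem:normslowjumping}. As there, since a $\bbP$-normal function that is not slow-dropping is not tractable (the other lemma), we may assume $g$ is slow-dropping and extract $\alpha>0$ and sequences $x_k\le y_k$ with $g(y_k)>s_k^{2+\alpha}x_k^{\alpha}g(x_k)$, where $s_k=\lfloor y_k/x_k\rfloor$, together with the bound $2g(r_k)\le g(y_k)$ on the remainder $r_k=y_k-s_kx_k$. The only real change from the one-pass proof is that $\disjind$, which is hard only one-way, must be replaced by a disjointness problem hard with unlimited interaction: reduce from $\disj(n,s_k)$ (or an $(s_k{+}1)$-player variant handling the remainder), with $n=s_k^{2+\alpha}x_k^{\alpha}$, having the players insert $x_k$ copies of each of their items. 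The not-slow-jumping bound makes the $g$-value of the built-up intersection item polynomially larger in $n$ than the combined $g$-value of everything else, so a constant-accuracy $g$-SUM algorithm separates the disjoint and intersecting cases and inherits a $\poly(n)$ space lower bound; the case analysis of Lemma~\ref{lem:normslowjumping} on whether $x_k$ stays bounded again controls the remainder-frequency items.

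\emph{Not slow-dropping.} This is where the new ideas are essential, and I expect it to be the main obstacle. Since $g$ is not slow-dropping, the first condition defining $\bbP$-nearly-periodic holds; since $g$ is $\bbP$-normal, the second must therefore fail, which gives $\alpha_0,c_0>0$ and infinitely many pairs $x_k<y_k\to\infty$ with $g(y_k)\le y_k^{-\alpha_0}g(x_k)$ and $\max\{g(x_k)/g(x_k+y_k),\,g(x_k+y_k)/g(x_k)\}>y_k^{c_0}$ --- the polynomial growth of the error functions in $\bbP$ is exactly what yields the polynomial factor $y_k^{c_0}$. Put $\mu=\min\{\alpha_0,c_0\}$. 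For each $k$, one of $x_k$ or $x_k+y_k$ is a frequency $b_k$ whose $g$-value exceeds $y_k^{\mu}$ times that of each of the other two of $\{x_k,y_k,x_k+y_k\}$, and $b_k$ is a $\pm1$-combination of those other two ($x_k+y_k=x_k+y_k$ or $x_k=(x_k+y_k)-y_k$), so the minimal coefficient sum is $q=2$. The naive INDEX/DISJ reductions break here: in the crashing case $g(x_k+y_k)\ll g(x_k)$, the signal is one item changing its large $g$-value, while the stream's total is dominated by the many unchanged large items, so a constant-factor approximation cannot detect it. Instead, reduce from the $(u,d)$-DIST problem of Definition~\ref{def:SLC} with $u$ the pair of ``small'' frequencies, $d=b_k$, and dimension $n=\lfloor y_k^{\mu/3}\rfloor$. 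Then $V_0$ instances have $g$-SUM at most $n$ times a small value, hence below $g(b_k)/n^2$, while $V_1$ instances have $g$-SUM in $[g(b_k),2g(b_k))$, so any $g$-SUM algorithm with a fixed constant accuracy solves $(u,d)$-DIST; by the $\Omega(n/q^2)=\Omega(n)$ lower bound this forces $\Omega(n)$ space, and since $M\le x_k+y_k<2y_k=\poly(n)$ this shows $g$ is not tractable. (The reduction is realizable insertion-only with the embedded value fixed to $e=d$, since the lower-bound distribution may be taken one-sided.)

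Combining the two lemmas gives the theorem, and together with Theorem~\ref{thm:snorm2passalg} and Proposition~\ref{prop:pnorm in snorm} it completes the proof of the two-pass zero-one law, Theorem~\ref{thm:norm2passalg}, for normal functions.
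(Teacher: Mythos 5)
Your decomposition into a not-slow-dropping lemma and a not-slow-jumping lemma is exactly what the paper does, and your not-slow-jumping argument matches the paper's Lemma~\ref{lem:pnormalslowdjumping} in spirit (swap $\disjind$ for multi-player $\disj$, have the last player handle the remainder, use the slow-dropping bound from the other lemma to control the remainder item). For not-slow-dropping, however, you take a genuinely different route. The paper's Lemma~\ref{lem:pnormalslowdropping} reduces from plain two-player $\disj(n,2)$: when $g(x+y)\le g(x)$ (the ``crashing'' case you flagged as fatal for naive reductions), Player 2 inserts $y$ copies of every element \emph{not} in her set, so the background items sit at frequency $x+y$ with \emph{small} $g$-value and the lone intersection item at frequency $x$ carries essentially all of the $g$-mass; in the other case Player~2 uses her set directly. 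This complement trick is enough, and it keeps the lower bound at the level of ordinary set-disjointness. You instead invoke the $(u,d)$-DIST ({\sf ShortLinearCombination}) lower bound with $u$ the two small frequencies and $d=b_k$, noting $q=2$. Your reduction is correct --- you handle both signs of $g(x_k+y_k)-g(x_k)$ uniformly, your parameter choice $n=\lfloor y_k^{\mu/3}\rfloor$ yields the needed gap, and the information-complexity lower bound does hold against $O(1)$-pass algorithms --- but it buys the same $\Omega(y_k^{\Theta(1)})$ bound with a considerably heavier hammer. Your framing that ``the naive INDEX/DISJ reductions break'' is therefore somewhat overstated: the truly naive reduction does break, but the paper's twist on $\disj(n,2)$ already suffices, so the {\sf ShortLinearCombination} machinery is overkill here. (The paper does use {\sf ShortLinearCombination} later, but only to attack some nearly periodic functions, not to prove this theorem.)
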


\begin{proof}
Immediate from Lemmas \ref{lem:pnormalslowdropping} and \ref{lem:pnormalslowdjumping}.
\end{proof}

\begin{lemma}\label{lem:pnormalslowdropping}
If a $\bbP$-normal function $g\in\G$ is not slow-dropping, then $g$ is not $O(1)$-pass tractable.
\end{lemma}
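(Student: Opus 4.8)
The plan is to mirror the one-pass lower bound for non-slow-dropping functions (Lemma~\ref{lem:normslowdropping}), but to upgrade the communication reduction from $\textsf{INDEX}$ to two-player $\disj$, which has unrestricted randomized communication complexity $\Omega(n)$ and therefore gives a lower bound valid against any constant number of passes. First I would use the hypothesis that $g$ is not slow-dropping together with the assumption that $g$ is $\bbP$-normal to extract, exactly as before, integer sequences $x_k < y_k$ with $y_k$ increasing, a fixed $\alpha \in (0,1]$, and a polynomial error function $h\in\bbP$ such that $g(x_k) \geq y_k^{\alpha} g(y_k)$ and $|g(x_k+y_k)-g(x_k)| > \min\{g(x_k),g(x_k+y_k)\}/h(y_k)$ for all $k$. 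The key difference from the one-pass case is that $\bbP$-normality only gives us that this relative gap shrinks \emph{polynomially} (an $\bbP$-error function), rather than sub-polynomially, so the accuracy $\epsilon$ we can afford in the reduction is $1/\poly(y_k)$ rather than $1/\spoly(y_k)$; that is fine because $\disj$ gives a polynomially large lower bound $\Omega(n)$ with $n = \poly(y_k)$.

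Next I would set up the reduction. Fix $k$, write $x = x_k$, $y=y_k$, and let $\calA$ be a $p$-pass $(g,\epsilon)$-SUM algorithm with $\epsilon \asymp 1/h(y)$. Run $\textsf{DISJ}(n', 2)$ on a notional stream: Alice, holding $A\subseteq[n']$, inserts $y$ copies of each $i\in A$; Bob, holding $B\subseteq[n']$, inserts $x$ copies of each $i\in B$. Under the $\disj$ promise, on a disjoint instance the frequency vector has $|A|$ coordinates equal to $y$ and $|B|$ coordinates equal to $x$ (all distinct coordinates), so $g$-SUM $= |A|g(y) + |B|g(x)$; on an intersecting instance there is exactly one coordinate with frequency $x+y$ and the counts of pure-$y$ and pure-$x$ coordinates each drop by one, so $g$-SUM $= (|A|-1)g(y) + (|B|-1)g(x) + g(x+y)$. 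The players share $|A|$ and $|B|$ (a constant number of extra bits, shipped with the stream), run $\calA$ over $p$ passes — Alice and Bob alternating, sending the working memory back and forth, total communication $p$ times the space of $\calA$ — and the output lets Bob decide whether the two cases are distinguishable.

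The heart of the argument is the numerical separation: I need to show the two candidate values of $g$-SUM differ by more than a $(1+\epsilon)$ multiplicative factor, so that a correct $(1\pm\epsilon)$-approximation distinguishes them. The difference is $|g(x) + g(y) - g(x+y)|$, and I want to bound this below by $2\epsilon$ times the larger $g$-SUM value, which is at most $(|A|+|B|)\cdot\max\{g(x),g(y)\} + g(x+y)$. The slow-dropping failure $g(x)\geq y^\alpha g(y)$ controls the pure-$y$ mass: $|A|g(y) \leq n' g(y) \leq n' y^{-\alpha} g(x)$, so choosing $n' \asymp y^\alpha$ makes this term comparable to $g(x)$; and $g$ is slow-jumping (we may assume so by Lemma~\ref{lem:normslowjumping} applied in the $\bbP$-normal setting — or we simply note that a non-slow-jumping $g$ is already not $O(1)$-pass tractable, so this lemma only needs to handle the slow-jumping case), which bounds $|B|g(x) + g(x+y)$ against $g(x+y)$ up to polynomial factors and keeps the total $g$-SUM within a $\poly(y)$ factor of $\max\{g(x), g(x+y)\}$. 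Combining with $|g(x+y)-g(x)| > \min\{g(x),g(x+y)\}/h(y)$ and absorbing $g(y) \leq g(x)/y^\alpha = o(g(x))$ into the error, the gap dominates $2\epsilon$ times the $g$-SUM value for $\epsilon$ a suitable $1/\poly(y)$. This then yields an $\Omega(n'/p) = \Omega(y^\alpha/p) = \Omega(y^\alpha)$ bound on the space of $\calA$, and since $y=y_k\to\infty$ this is polynomial in the stream dimension, so $g$ is not $O(1)$-pass tractable. The main obstacle I anticipate is bookkeeping the case split on whether $g(x+y)$ is much smaller than $g(x)$ or comparable to it (exactly the split that appears in Lemma~\ref{lem:normslowdropping}), and making sure the $\bbP$-normal error function composes correctly with the slow-jumping polynomial so that the final affordable $\epsilon$ is genuinely $1/\poly$ and not accidentally sub-polynomial — the $\disj$ lower bound needs $n' = \poly(y)$, which forces us to be slightly careful that all the slack we spend is polynomially bounded.
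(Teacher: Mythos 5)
Your proposal has two substantive gaps, and they compound.

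First, you have mis-stated what $\bbP$-normality buys you. The class $\bbP$ consists of \emph{strictly increasing} polynomial functions, so when you negate the second condition of $\bbP$-nearly-periodicity, you obtain $\alpha,\beta>0$ and infinitely many pairs $x<y$ with $g(y)y^\alpha\leq g(x)$ and
$|g(x+y)-g(x)| > y^\beta\min\{g(x),g(x+y)\}$.
That is, the gap is polynomially \emph{larger} than $\min\{g(x),g(x+y)\}$ (so in fact $g(x+y)$ is forced to be a polynomial factor away from $g(x)$). You wrote $|g(x_k+y_k)-g(x_k)| > \min\{g(x_k),g(x_k+y_k)\}/h(y_k)$ with $h\in\bbP$, which is a polynomially \emph{smaller} bound — a strictly weaker statement. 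This matters: with your version of the condition, the relative gap in the reduction would only be $1/\poly(y)$, and no choice of stream construction (including the paper's) would recover a constant relative separation. The whole reason the paper introduces $\bbP$-nearly-periodic separately from $\bbS$-nearly-periodic is precisely that the $\disj$-based multi-pass reduction needs a polynomially large multiplicative gap, while the one-pass $\textsf{INDEX}$ reduction can make do with a gap as small as $1/\spoly$.

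Second, and independently, your stream construction does not produce a constant relative separation even with the correct $\bbP$-normality condition. You have Alice insert $y$ copies of each $i\in A$ and Bob insert $x$ copies of each $i\in B$, where $g(x)\gg g(y)$. In a disjoint instance, the $g$-SUM is then dominated by $|B|g(x)\approx n'g(x)$, whereas the disjoint-vs-intersecting gap $|g(x)+g(y)-g(x+y)|$ is only $\Theta(g(x))$ (in the regime $g(x+y)\leq g(x)$, which is case 1). The relative gap is therefore $\Theta(1/n')$. You then set $\epsilon\asymp 1/\poly(y)\asymp 1/\poly(n')$, but this is a polynomial accuracy in the stream dimension, not sub-polynomial — the tractability definition only promises an algorithm for sub-polynomial $\epsilon$, so this reduction does not contradict tractability. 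The paper avoids this with a complementation trick you did not consider: when $g(x+y)\leq g(x)$, Player~2 inserts $y$ copies of every element \emph{not} in her set. Then the $g(x+y)$ contributions appear symmetrically in $r_1$ and $r_2$ and are bounded by $ng(x+y)\leq y^{-\gamma/2}g(x)$ using the period inequality, and likewise $ng(y)\leq y^{-\gamma/2}g(x)$; after cancellation, $r_1\approx g(x)$ and the difference is $\Theta(g(x))$, giving $|r_2-r_1|/r_1\geq 1/2$. Only in the opposite case $g(x+y)>g(x)$ (where $g(x+y)$ is the dominant term and $|B|g(x)\leq ng(x)\leq y^{-\gamma/2}g(x+y)$ is harmless) does Player~2 insert into her set directly, which is the case your reduction matches. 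Your closing paragraph flags the case split on $g(x+y)$ versus $g(x)$ as a "bookkeeping" obstacle, but it is not bookkeeping: the two cases require genuinely different constructions, and without the complemented construction the argument collapses.
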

\begin{proof}
Suppose $\mathcal{A}$ is a $p=O(1)$ pass algorithm that solves $(g,\epsilon)$-SUM and $g$ does not satisfy slow-dropping condition. 
Then there exists $\alpha > 0$ such that for any $N>0$, there exists $y>N$ and $x<y$ satisfying $g(y) < g(x) / y^\alpha$. 
Separately, since $g$ is $\bbP$-normal, there exists $\beta>0$ such that for any $N>0$ if $g$ has an $\alpha$-period $y>N$ then there is an $\alpha$-period $y>N$ and $x<y$ such that $g(y) \le g(x) / y^\alpha$ and $|g(x+y)-g(x)|> y^\beta \min(g(x), g(x+y)) $. 

Let $\gamma = \min(\alpha, \beta)$. 
Consider the following protocol for DISJ$(n,2)$, where $n=y^{\gamma/2}$.

First, consider $g(x+y) \le g(x)$. Since  $|g(x+y) - g(x)| > y^\gamma g(x+y)$ and $y^\gamma g(x+y) \ge g(x+y)$, then $g(x)\ge y^\gamma g(x+y)$. 
The players jointly create a stream where Player 1 inserts $x$ copies of each element of her set $S_1$ into the stream, and Player 2 inserts $y$ copies of every element $a$ not in her set $S_2$. 
First Player 1 creates her portion of the stream, runs the first pass of $\calA$ on it, and sends the memory to Player 2.  
She completes the first pass with her portion of the stream, and returns the memory to Player 1.
The players continue in this way for a total of $p$ passes over the stream.

Let $V$ denote the frequency vector of the resulting stream.
If there is an intersection, let $S_1\cap S_2 = \{a\}$.
Then $S_1 \cap \bar{S}_2 = S_1\backslash \{a\}$, and $g(V)$ is $r_1 = (|S_1|-1)g(x+y) + (n-|S_2|-|S_1|)g(y) +g(x) + g(y)$. 
If there is no intersection, the value of $g(V)$ is $r_2 = (|S_1|-1)g(x+y) + (n-|S_2|-|S_1|)g(y) + g(x+y)$. 
Notice that
\[\frac{|r_2-r_1|}{r_1} \ge \frac{|g(x+y)-g(x)|}{g(x)} \geq \frac{1}{2},\] 
for sufficiently large $n$. 
For any  $\epsilon < 1/2$, $\mathcal{A}$  is able to distinguish the 2 cases, which gives a lower bound on the memory bits used by $\mathcal{A}$ is $\Omega(n/p)$.

The case $g(x+y) > g(x)$ is the same except Player 2 inserts $y$ copies of each element that \emph{is} in her set $S_2$. 
Thus, $\mathcal{A}$ uses $\Omega(n/p)$ bits of memory, hence $g$ is not $O(1)$-pass tractable.
\end{proof}

\begin{lemma}
\label{lem:pnormalslowdjumping}
If a $\bbP$-normal function $g\in\G$ is not slow-jumping, then $g$ is not $O(1)$-pass tractable.
\end{lemma}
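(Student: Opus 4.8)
The plan is to adapt the one-pass argument of Lemma~\ref{lem:normslowjumping} to $p=O(1)$ passes, replacing the one-way problem $\disjind$ by its multi-round version. By Lemma~\ref{lem:pnormalslowdropping} we may assume $g$ is slow-dropping, since otherwise $g$ is already not $O(1)$-pass tractable. As $g$ is not slow-jumping, there are $\alpha>0$ and sequences $x_k\le y_k$ in $\N$ with $g(y_k)>s_k^{2+\alpha}x_k^{\alpha}g(x_k)$, where $s_k=\lfloor y_k/x_k\rfloor$; put $r_k=y_k-s_kx_k<x_k$. Exactly as in the proof of Lemma~\ref{lem:normslowjumping}, slow-dropping lets us pass to a subsequence on which $y_k$ is strictly increasing and $2g(r_k)\le x_k^{\alpha}g(x_k)\le g(y_k)$ (handling the cases ``$x_k$ bounded'' and ``$x_k\to\infty$'' separately).

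For the reduction, fix $k$ and abbreviate $x=x_k$, $y=y_k$, $s=s_k$, $r=r_k$. We reduce from $\disjind(n,t)$ with $t=s$ and $n=s^{2+\alpha}x^{\alpha}$, in the model where the $t+1$ players may exchange messages over many rounds. The $t$ disjointness players hold $A_1,\dots,A_t\subseteq[n]$ and the last player holds $b\in[n]$, with the promise that on an intersecting instance $b$ is the common element. They build a notional insertion stream just as in Lemma~\ref{lem:normslowjumping}: player $i\le t$ inserts $x$ copies of each $j\in A_i$, and the last player inserts $r$ copies of $b$. Given a $p$-pass $(g,\tfrac{1}{12})$-SUM algorithm $\calA$, the players run it on this stream by passing $\calA$'s memory along the order $1,2,\dots,t+1$ within each pass and then from player $t+1$ back to player $1$ to begin the next pass. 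With $n'=\sum_i|A_i|\le n$, the value of $g$-SUM is $a_1=(n'-t)g(x)+g(y)$ on an intersecting instance and $a_2=n'g(x)+g(r)$ on a disjoint one; the gap estimate from Lemma~\ref{lem:normslowjumping} (using $2g(r)\le g(y)$, $g(y)>s^{2+\alpha}x^{\alpha}g(x)$ and $n'\le s^{2+\alpha}x^{\alpha}$) yields $a_1-a_2\ge 2\epsilon\,(n'g(x)+g(y))$ with $\epsilon=\tfrac{1}{12}$ for all large $k$, so a correct run of $\calA$ distinguishes the two cases.

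It remains to count communication. A $p$-pass protocol of the above form uses $O(pt)$ messages, each of size at most the space $\mathsf{S}$ used by $\calA$, hence $O(pt\,\mathsf{S})$ bits in total. The randomized communication complexity of $\disjind(n,t)$ against protocols with an arbitrary number of rounds is $\Omega(n/(t\log n))$ --- the multi-round strengthening of the one-way bound recorded before Definition~\ref{def:SLC}, which follows from the information-complexity lower bounds for multi-player set-disjointness~\cite{bar2002information,chakrabarti2003near,gronemeier2009asymptotically,j09}. Hence $\mathsf{S}=\Omega(n/(pt^{2}\log n))=\Omega((sx)^{\alpha}/(p\log n))$, and since $(sx)^{\alpha}\ge n^{\alpha/(2+\alpha)}$ while $\log n$ is sub-polynomial and $p=O(1)$, the quantity $\mathsf{S}$ is polynomial in the stream dimension $n=s^{2+\alpha}x^{\alpha}$, contradicting $p$-pass tractability.

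I expect the main obstacle to be the communication ingredient: one must verify that $\disjind(n,t)$ still satisfies an $\Omega(n/(t\log n))$ lower bound against unbounded-round protocols, since the excerpt only records the one-way bound. This should follow either by pushing the standard argument behind Theorem~\ref{thm: communication complexity of disjind} through the round-robin information-complexity framework (as is done for $\disj(n,t)$), or by a direct reduction from multi-round $\disj(n,t)$. The remainder --- tidying the sequences $x_k,y_k,r_k$ and the gap computation --- is identical to the one-pass proof; the only genuine change is that the $t+1$ players now sweep the stream $p$ times rather than once, which costs merely the harmless factor $p=O(1)$.
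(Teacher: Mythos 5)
Your reduction is built on multi-round $\disjind(n,t)$, and the whole argument hinges on the claim that this problem still requires $\Omega(n/(t\log n))$ communication against unbounded-round protocols. That claim is false, and the flaw is fatal. With two or more rounds, the index player holds only $\log n$ bits of input, so she can simply broadcast her single element $b$ in the first round; each of the other $t$ players then replies with one bit indicating whether $b\in A_i$. Under the promise (disjoint vs.\ uniquely intersecting), this answers $\disjind$ in $O(t+\log n)$ total communication. The $\Omega(n/(t\log n))$ bound is genuinely a \emph{one-way} phenomenon --- it is why the paper's Theorem~\ref{thm: communication complexity of disjind} only asserts a one-way bound, and why its proof reduces from $\disj$ in a way that exploits the one-way restriction. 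Nothing in the cited information-complexity literature gives a multi-round lower bound for this problem, and none can exist. So the final step of your argument, ``$\mathsf{S}=\Omega(n/(pt^2\log n))$,'' has no valid communication bound behind it.

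The paper sidesteps this by reducing from plain $\disj(n,t)$, for which the $\Omega(n/t)$ bound does hold against arbitrary-round protocols. The trick that makes $\disj$ suffice is to eliminate the index player entirely: there are exactly $t=\lceil y/x\rceil$ disjointness players, and instead of one player contributing a small residual frequency $r$ at a single designated index, the $t$-th player inserts $y-(t-1)x\le x$ copies of \emph{every} element of her set. Then an intersection forces a frequency of exactly $y$ (so $g(V)\ge g(y)$), and slow-dropping bounds the no-intersection value by $n'h(x)g(x)\le \tfrac12 g(y)$ once $n$ is chosen as $(y/x)^2 y^\alpha/(2h(y))$. The gap computation you carried over from Lemma~\ref{lem:normslowjumping} is therefore not what the multi-pass proof uses; the multi-pass proof needs a coarser constant-factor gap but must arrange for it without an index player. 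In short: the adaptation from one pass to $O(1)$ passes is not a ``harmless factor $p$'' change to the same reduction --- it requires switching the communication problem, and your proposal misses that switch.
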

\begin{proof}
Suppose $\mathcal{A}$ is a $p=O(1)$ pass algorithm for $(g,\epsilon)$-SUM.
If $g$ is not slow-jumping, then there exists $\alpha >0$ such that for any $N >0$, there exists $x<y\in \mathbb{N}$ and $y\ge N$ but $g(y) > \lfloor y/x\rfloor^{2+\alpha}x^\alpha g(x)$.
Notice that for $y>x$ we have $\lfloor y/x\rfloor \geq y/2x$, so by adjusting $\alpha$ we can assume $g(y)>(y/x)^2y^\alpha g(x)$.

We can further assume $g$ is slow dropping, since otherwise $g$ is not $O(1)$-pass tractable by Lemma~\ref{lem:pnormalslowdropping}. 
Thus, there exists a nondecreasing sub-polynomial function $h(x)>1$ such that $g(x)\leq h(y) g(y)$.

Consider an instance $A_1,A_2,\ldots,A_t\subseteq[n]$ of $\textsf{DISJ}(n,t)$, where $t=\lceil y/x\rceil$ and $n= (\frac{y}{x})^2\frac{y^\alpha}{2h(y)}$. 
Each of the first $t-1$ players, inserts $x$ copies of her elements into the stream and the $t$th player inserts $y-(t-1)x<x$ copies of her elements into the stream.
The players pass the memory and repeat to run the $p$ passes of $\calA$ as before.
Notice that $g(y-(t-1)x)\leq h(x)g(x)$ by the slow dropping condition.

Let $n^\prime=\sum_{i}|A_i|$, and let $V$ be the frequency vector of the stream created. 
If there is no intersection, the value of $g(V)$ is
\begin{align*}
(n'-|A_t|)g(x) &+ |A_t|g(y-(t-1)x)\leq n'h(x)g(x)\\
&\leq nh(x)g(x) \leq \frac{1}{2}g(y).
\end{align*}
On the other hand, if there is an intersection then $y$ is the frequency of some item in the stream, and therefore $g(V)\geq g(y)$.
Thus $\calA$ distinguishes between the cases.
The algorithm uses $\Omega(n/pt^2)=\Omega(y^{\alpha/2}/p)$ bits of memory, which proves that $g$ is not $p$-pass tractable.
\end{proof}

\subsection{Zero-One Law Proofs}\label{sec:zoproofs}
\begin{customthm}{\ref{thm:norm1passalg}}[1-pass Zero-One Law]
A function $g\in\G$ is 1-pass tractable and $\bbS$-normal if and only if it is slow-jumping, slow-dropping, and predictable.
\end{customthm}
\begin{proof}
The lower bound is proved as Theorem~\ref{thm:norm2passbound}, hence the algorithm remains.
With the Recursive Sketch of Theorem~\ref{thm: recursive sketches}, it is enough to show that Algorithm~\ref{algo: 1-pass} is a $(g,\lambda,\epsilon,\delta)$-heavy hitters algorithm for $\lambda = \epsilon^2/\log^3{n}$ and $\delta=1/\log{n}$.

By Lemma~\ref{lem: identifying heavy hitters}, the Count Sketch used by Algorithm~\ref{algo: 1-pass} returns a list of pairs $(i_j,\hat{v}_{i_j})$ containing all of the $\lambda/3H(M)$-heavy elements for $F_2$.
By definition, any item $i_j$ that survives the pruning stage has $|g(\hat{v}_{i_j})-g(v_j)|\leq\epsilon g(v_j)$.
Hence, it only remains to show that every $(g,\lambda)$-heavy hitter survives the pruning stage.

Suppose $i'$ is the index of the $(g,\lambda)$-heavy hitter that minimizes $r_{\epsilon/2}(v_{i'})$.
Now suppose that we insert a new item~$i''$ in the stream with frequency $v_{i''}=r_{\epsilon/2}(v_{i'})+1$.
Then
\[\frac{g(v_{i'})}{H(M)}\leq g(v_{i''})\leq H(M)g(v_{i'}),\]
where the first inequality follows from Lemma~\ref{lem: predictability booster} and the second because $g$ is slow-dropping.
Thus, this item is a $(g,\frac{\lambda}{3H(M)})$-heavy hitter and the constant term on the parameter $b$ for the Count Sketch can be chosen to guarantee additive error no more than $v_{i''}/3\leq r_{\epsilon/2}(v_{i'})/3$.
The same guarantee holds for the stream without $i''$, thus for every $(g,\lambda)$-heavy hitter~$i_j$ we have 
\[|\hat{v}_{i_j}-v_{i_j}|\leq \frac{1}{3}r_{\epsilon/2}(v_{i'})\leq \frac{1}{3} r_{\epsilon/2}(v_{i_j}),\]
where the last inequality holds by the choice of $i'$.
Furthermore, for every $-r_{\epsilon/2}(v_{i'})/3\leq y\leq r_{\epsilon/2}(v_{i'})/3$ we have $|(\hat{v}_{i_j}+y) - v_{i_j}|\leq r_{\epsilon/2}(v_{i_j})$ and thus
\begin{align*}
|g(\hat{v}_{i_j}) - g(\hat{v}_{i_j}+y)| \leq& |g(\hat{v}_{i_j}) - g(v_{i_j})|\\ 
&+ |g(v_{i_j}) - g(\hat{v}_{i_j}+y)|\\
\leq& \epsilon g(\hat{v}_{i_j}+y).
\end{align*}

With probability at least $1-2\delta/2=1-\delta$ both the Count Sketch and the AMS approximation of $F_2$ meet their obligations, in this case the output is correct. By the guarantee of CountSketch, we can safely assume that $r_{\epsilon/2}(v_i^\prime)\ge\epsilon/2H(M)\sqrt{\hat{F}_2}$, heavy hitters will survive the pruning stage.
\end{proof}

\begin{customthm}{\ref{thm:norm2passalg}}[2-pass Zero-One Law]
A function $g\in\G$ is 2-pass tractable and $\bbP$-normal if and only if it is slow-dropping and slow-jumping.
Furthermore, every slow-dropping and slow-jumping $\bbS$-normal function is also $2$-pass tractable.
\end{customthm}
\begin{proof}
The lower bound is proved as Theorem~\ref{thm:norm2passbound}.
The upper bound is governed by Proposition~\ref{prop:pnorm in snorm} and Theorem~\ref{thm:snorm2passalg}.
\end{proof}

Here are a few examples. 
The functions $x^2\lg(1+x)$, $\left(2+\sin\log(1+x)\right)x^2$, and $e^{\log^{1/2} (1+x)}$ are all 1-pass tractable because they are all slow-dropping, slow-jumping, and predictable.
On the other hand, $1/x$ is not slow-dropping, $x^3$ is not slow-jumping, and $\left(2+\sin\sqrt{x}\right)x^2$ is not predictable, so none of these functions is 1-pass tractable.
The last of the three is, however, slow-jumping and slow-dropping, and hence it is 2-pass tractable. 

\section{The Exotic Nature of Nearly Periodic Functions}\label{sec: exotic}

Nearly periodic functions are highly constrained to \emph{almost} repeat themselves like a periodic function.
These functions admit large changes in value that would imply a large lower bound on their space complexities, if they did not satisfy the many constraints. 

Appendix~\ref{app:nearlyperiodic} describes some of properties of $\bbS$-nearly periodic functions and gives an example of a tractable nearly periodic function. 
To illustrate the exotic nature of these functions we provide two properties of these functions. These properties suggest that the nearly periodic functions might not have many applications. 
The details of the following results and additional properties of nearly periodic functions are provided in the Appendix.

\begin{restatable}{proposition}{strangeLimits}
\label{prop: strange limits}
Let $g>0$ be an $\bbS$-nearly periodic function that is bounded above by a sub-polynomial function.
There exists $\alpha>0$ and an increasing sequence $y_k\in \N$ such that $g(y_k)\leq y_k^{-\alpha}$ and, for every $x\in\N$,  $\lim_{k\to\infty}g(x+y_k) = g(x).$
\end{restatable}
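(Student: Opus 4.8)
The plan is to use the two conditions in the definition of $\bbS$-nearly periodic directly. The first condition supplies, for some fixed $\alpha>0$, arbitrarily large $\alpha$-periods; combining these with the sub-polynomial upper bound on $g$ yields an increasing sequence $y_k$ with $g(y_k)\le y_k^{-\alpha/2}$, which already proves the first assertion (with exponent $\alpha/2$ in place of $\alpha$). The second condition is what forces $g(x+y_k)\to g(x)$, and the one idea needed is to invoke it with an exponent strictly smaller than $\alpha/2$.

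Concretely: since $g$ is $\bbS$-nearly periodic, the first condition gives $\alpha>0$ such that $\alpha$-periods occur arbitrarily far out, so we may pick $y_1<y_2<\cdots$, each an $\alpha$-period, with witnesses $x_k<y_k$ satisfying $g(y_k)\le g(x_k)/y_k^{\alpha}$. Let $\psi$ be a sub-polynomial function with $g\le\psi$; replacing $\psi$ by its non-decreasing envelope $x\mapsto\sup_{t\le x}\psi(t)$ (still sub-polynomial) we may assume $\psi$ is non-decreasing, so $g(x_k)\le\psi(x_k)\le\psi(y_k)$. Sub-polynomiality of $\psi$ gives $\psi(y_k)\le y_k^{\alpha/2}$ for all large $k$; discarding finitely many terms and relabelling, $g(y_k)\le y_k^{\alpha/2}/y_k^{\alpha}=y_k^{-\alpha/2}$ for every $k$, which is the claimed decay.

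For the limit, set $\beta:=\alpha/4$ and fix some $h\in\bbS$ with $h(y)\to0$ (for instance $h(y)=1/\log(y+2)$, which is non-increasing and sub-polynomial). Every $\alpha$-period is also a $\beta$-period, since $g(y_k)\le g(x_k)/y_k^{\alpha}\le g(x_k)/y_k^{\beta}$; in particular each $y_k$ is a $\beta$-period. The second condition, applied with exponent $\beta$ and error function $h$, yields $N_1$ such that for every $\beta$-period $y\ge N_1$ and every $x<y$ with $g(y)y^{\beta}\le g(x)$ we have $|g(x+y)-g(x)|\le\min\{g(x),g(x+y)\}h(y)\le g(x)h(y)$. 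Now fix $x\in\N$ with $x\ge1$. Since $g(y_k)y_k^{\beta}\le y_k^{-\alpha/2}y_k^{\alpha/4}=y_k^{-\alpha/4}\to0$ while $g(x)>0$ is a fixed positive constant, we have $g(y_k)y_k^{\beta}\le g(x)$, as well as $y_k\ge N_1$ and $y_k>x$, for all large $k$. Hence $|g(x+y_k)-g(x)|\le g(x)h(y_k)\to0$, i.e.\ $g(x+y_k)\to g(x)$. (If one's convention puts $0$ in $\N$, the case $x=0$ follows from $g(y_k)\to0$ together with $g(0)=0$.)

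The main obstacle is not an estimate but lining up the quantifiers in the definition: the second condition may only be used for a $\beta$-period $y$ together with an $x<y$ obeying $g(y)y^{\beta}\le g(x)$, and for a fixed target $x$ this forces $g(y_k)y_k^{\beta}$ below the constant $g(x)$ — which is only possible if $\beta$ is chosen strictly below the decay rate $\alpha/2$ established in the first step. The remark that an $\alpha$-period is automatically a $\beta$-period whenever $\beta\le\alpha$ is what legitimises this choice. The sub-polynomial upper bound is used exactly once, to upgrade the relative bound $g(y_k)\le g(x_k)/y_k^{\alpha}$ from the first condition to the absolute decay $g(y_k)\le y_k^{-\alpha/2}$.
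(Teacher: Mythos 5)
Your proof is correct and follows essentially the same two-step approach as the paper: first use the sub-polynomial upper bound to upgrade the first near-periodicity condition to a polynomial decay estimate $g(y_k)\le y_k^{-\alpha/2}$, then apply the second near-periodicity condition with a sub-polynomial error function $h$ to conclude $|g(x+y_k)-g(x)|\le g(x)h(y_k)\to 0$. You are somewhat more explicit in two places the paper elides — you choose the second-condition exponent $\beta=\alpha/4$ strictly below the established decay rate so that the premise $g(y_k)y_k^{\beta}\le g(x)$ is actually verifiable for a fixed $x$, and you fix a concrete $h(y)=1/\log(y+2)$ — whereas the paper uses the tailored $h(y)=\bigl(\log y\cdot\max\{g(x')\mid x'\le y\}\bigr)^{-1}$, which yields a bound uniform over $x$ but is not needed for the stated pointwise limit.
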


If an $\bbS$-nearly periodic function $g$ grows polynomially, then it can be made normal and intractable by introducing a small change. 
Let $L_\eta(g)(x) := g(x)\log^\eta(1+x)$.
The transformation $L_\eta$, for $\eta>0$, separates $\bbS$-normal 1-pass tractable functions from $\bbS$-nearly periodic functions in the following sense: For any 1-pass tractable $\bbS$-normal function $g$ and $\eta>0$ the function $L_\eta(g)$ is also 1-pass tractable, however for any $\bbS$-nearly periodic 1-pass tractable function and $\eta>0$ the function $L_\eta(g)$ is 1-pass intractable.

\begin{restatable}{theorem}{smallChangePeriodic}
\label{thm:small_change_periodic}
An $\bbS$-nearly periodic function $g$ is either $1$-pass intractable or $L_\eta(g)$ is $1$-pass intractable for any $\eta > 0$.
\end{restatable}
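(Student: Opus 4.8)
The plan is to prove the (apparently stronger) statement that, whenever $g$ is $\bbS$-nearly periodic, $L_\eta(g)$ is $1$-pass intractable for \emph{every} $\eta>0$; the stated disjunction is then immediate. Write $f:=L_\eta(g)$. Since $f(1)=\log^\eta 2\neq 1$ one would really apply the argument to $f/\log^\eta 2\in\G$, but a positive rescaling of the output changes neither $1$-pass tractability nor the slow-dropping or $\bbS$-normality of a function, so I argue about $f$ directly. By Lemma~\ref{lem:normslowdropping} it then suffices to verify two things: that $f$ is not slow-dropping, and that $f$ is $\bbS$-normal.

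First I would fix, from part~1 of the definition of $\bbS$-nearly periodic, the exponent $\alpha>0$ and a sequence of witness/$\alpha$-period pairs $(x_k,y_k)$ with $1\le x_k<y_k$, $y_k\to\infty$, and $g(y_k)y_k^{\alpha}\le g(x_k)$. Because $\log^\eta(1+x_k)\ge\log^\eta 2$ while $\log^\eta(1+y_k)$ is sub-polynomial, for all large $k$,
\[
\frac{f(x_k)}{f(y_k)}=\frac{g(x_k)}{g(y_k)}\cdot\frac{\log^\eta(1+x_k)}{\log^\eta(1+y_k)}\ \ge\ y_k^{\alpha}\cdot\frac{\log^\eta 2}{\log^\eta(1+y_k)}\ \ge\ y_k^{\alpha/2},
\]
so $f$ is not slow-dropping, and each $y_k$ is an $(\alpha/2)$-period of $f$ with witness $x_k$. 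I would then invoke part~2 of the near-periodicity of $g$ with the sub-polynomial error function $h_0(y)=1/\log^{2}(1+y)\in\bbS$: for all large $k$, $|g(x_k+y_k)-g(x_k)|\le\min\{g(x_k),g(x_k+y_k)\}\,h_0(y_k)$, hence $g(x_k+y_k)=g(x_k)(1+\theta_k)$ with $|\theta_k|\le 2/\log^{2}(1+y_k)$.

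The technical core --- and the step I expect to be the main obstacle --- is the bound
\[
|f(x_k+y_k)-f(x_k)|\ \ge\ \frac{c_\eta}{\log(1+y_k)}\,\max\{f(x_k),\,f(x_k+y_k)\}
\]
for all large $k$, with $c_\eta>0$ depending only on $\eta$. Setting $L:=\log(1+y_k)$, $a:=\log(1+x_k)$, $A:=\log(1+x_k+y_k)$, one has $\max\{a+\log 2,\,L\}\le A\le 2L$ (using $x_k+1\le y_k$), and $f(x_k+y_k)-f(x_k)=g(x_k)\big((A^\eta-a^\eta)+\theta_k A^\eta\big)$. A short case split --- if $a\le L/2$ then $A\ge L\ge 2a$ gives $A^\eta-a^\eta\ge(1-2^{-\eta})L^\eta$; if $a>L/2$ then the mean value theorem for $t\mapsto t^\eta$ on $[a,A]\subseteq[L/2,2L]$ gives $A^\eta-a^\eta\ge c'_\eta L^{\eta-1}$ --- yields $A^\eta-a^\eta\ge c''_\eta L^{\eta-1}$. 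Since $|\theta_k|A^\eta\le 2^{\eta+1}L^{\eta-2}$ and $\max\{f(x_k),f(x_k+y_k)\}\le 2^{\eta+2}g(x_k)L^\eta$, for $L$ large the ``$\log^\eta$-roughening'' term $g(x_k)(A^\eta-a^\eta)$ dominates the near-periodic wobble $g(x_k)|\theta_k|A^\eta$ of $g$, and dividing through gives the displayed bound.

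Finally I would argue by contradiction: if $f$ were $\bbS$-nearly periodic, then --- because each $y_k$ is an $(\alpha/2)$-period of $f$ with witness $x_k$ --- part~2 of that definition, applied with exponent $\alpha/2$ and the same error function $h_0$, would force $|f(x_k+y_k)-f(x_k)|\le\min\{f(x_k),f(x_k+y_k)\}/\log^{2}(1+y_k)$ for all large $k$, which contradicts the estimate above once $c_\eta\log(1+y_k)>1$. Hence $f$ is $\bbS$-normal. As $f$ is $\bbS$-normal and not slow-dropping, Lemma~\ref{lem:normslowdropping} shows $f=L_\eta(g)$ is not $1$-pass tractable, which proves the theorem.
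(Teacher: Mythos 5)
Your proposal is correct, and it takes a genuinely different route from the paper's. The paper first supposes $g$ is $1$-pass tractable and then invokes Lemma~\ref{lem:polynomialperiods}, whose proof uses the tractability of $g$ (via a $\disj$ reduction) to extract a period sequence with a \emph{polynomial} separation $x_i^{1+\gamma}<y_i$. That polynomial separation makes $\log^\eta(1+x_i+y_i)\ge(1+\gamma)^\eta\log^\eta(1+x_i)$, so the logarithmic factor jumps by a \emph{constant} multiplicative amount, and the proof closes with a one-line comparison against the sub-constant relative wobble of $g$. You instead work directly with the raw near-periodicity sequence, where the only guaranteed gap is $y_k\ge x_k+1$, and compensate with a sharper quantitative estimate: the MVT case split shows $\log^\eta(1+x_k+y_k)-\log^\eta(1+x_k)\ge c''_\eta\,\log^{\eta-1}(1+y_k)$ even in the worst case $x_k=y_k-1$, which produces a relative change in $L_\eta(g)$ of order $1/\log(1+y_k)$. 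This is weaker than the paper's constant-factor change, but it is still far larger than the $1/\log^2(1+y_k)$ tolerance that $\bbS$-near-periodicity imposes via the error class $\bbS$, so the contradiction goes through. What your route buys is the elimination of the tractability hypothesis on $g$: you prove the unconditionally stronger statement that $L_\eta(g)$ is $1$-pass intractable for \emph{every} $\bbS$-nearly periodic $g$ (from which the stated disjunction is trivial), and you do so without the $\disj$-reduction machinery buried in Lemma~\ref{lem:polynomialperiods}. The only points worth tightening in a writeup are the explicit verification that $h_0(y)=\log^{-2}(1+y)\in\bbS$ and the remark that $|\theta_k|\le h_0(y_k)$ follows from part~2 of near-periodicity applied to the \emph{same} $\alpha$ that certifies $y_k$ as a period — both of which you implicitly use and both of which are straightforward.
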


\begin{restatable}{theorem}{smallChangeOPass}
\label{thm:small_change_1pass}
If an $\bbS$-normal function $g$ is $1$-pass tractable, then $L_\eta(g)$ is $\bbS$-normal and $1$-pass tractable for any $\eta\ge 0$.
\end{restatable}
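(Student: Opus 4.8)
The plan is to reduce to the $1$-pass Zero-One Law. By Theorem~\ref{thm:norm1passalg}, a $1$-pass tractable $\bbS$-normal function is exactly one that is slow-jumping, slow-dropping, and predictable, so it suffices to show that $L_\eta(g)$ inherits all three properties (the case $\eta=0$ is trivial, since $L_0(g)=g$). Dividing by the constant $\log^\eta 2$ we may assume $L_\eta(g)(1)=1$, and since $L_\eta(g)(0)=0$ and $L_\eta(g)(x)>0$ for $x>0$ this places $L_\eta(g)$ in $\G$; rescaling does not affect tractability. Once the three properties hold, $\bbS$-normality of $L_\eta(g)$ is automatic, because the first clause in the definition of an $\bbS$-nearly periodic function is precisely the failure of slow-dropping. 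Throughout I use only that $x\mapsto\log^\eta(1+x)$ is a nondecreasing \subpoly{} function with $\log^\eta(1+x)\ge(\log 2)^\eta>0$ for all $x\ge1$.

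Slow-dropping and slow-jumping are immediate from Propositions~\ref{prop: slow dropping to sub-polynomial} and~\ref{prop: slow jumping to sub-polynomial}. If $g(x)\le g(y)h(y)$ for $x<y$ with $h$ \subpoly{}, then $L_\eta(g)(x)=g(x)\log^\eta(1+x)\le g(y)h(y)\log^\eta(1+y)=h(y)L_\eta(g)(y)$ by monotonicity of the multiplier, so the same $h$ witnesses slow-dropping of $L_\eta(g)$. If $g(y)\le\lfloor y/x\rfloor^2 h(\lfloor y/x\rfloor x)g(x)$ with $h$ \subpoly{}, then using $y<(\lfloor y/x\rfloor+1)x\le 2\lfloor y/x\rfloor x$ together with $g(x)\le L_\eta(g)(x)/(\log 2)^\eta$ one gets $L_\eta(g)(y)\le\lfloor y/x\rfloor^2 h^*(\lfloor y/x\rfloor x)L_\eta(g)(x)$ with $h^*(z):=h(z)\log^\eta(1+2z)/(\log 2)^\eta$ \subpoly{}, so Proposition~\ref{prop: slow jumping to sub-polynomial} applies again.

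Predictability is the crux, and I expect it to be the only real obstacle. The danger is that $L_\eta(g)$ might ``jump'' between $x$ and $x+y$ only because of the logarithmic multiplier, while $g(x+y)\approx g(x)$, so that the pair cannot be fed into the predictability of $g$. I would dispel this as follows. Fix $\gamma\in(0,1)$ and a \subpoly{}ly small error $\epsilon'$; take $x$ large and $y\in[1,x^{1-\gamma})$ with $x+y\notin\delta_{\epsilon'(x)}(L_\eta(g),x)$. Write
\[L_\eta(g)(x{+}y)-L_\eta(g)(x)=g(x{+}y)\bigl(\log^\eta(1{+}x{+}y)-\log^\eta(1{+}x)\bigr)+\bigl(g(x{+}y)-g(x)\bigr)\log^\eta(1{+}x),\]
and note that $\log(1{+}x{+}y)-\log(1{+}x)\le y/(1{+}x)$ gives $\log^\eta(1{+}x{+}y)\le\bigl(1+C_\eta\,y/((1{+}x)\log(1{+}x))\bigr)\log^\eta(1{+}x)$ for $x$ large. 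If it were the case that $|g(x{+}y)-g(x)|\le\tfrac12\epsilon'(x)g(x)$, then (so also $g(x{+}y)\le 2g(x)$) the decomposition would yield $|L_\eta(g)(x{+}y)-L_\eta(g)(x)|\le\bigl(\tfrac12\epsilon'(x)+2C_\eta y/((1{+}x)\log(1{+}x))\bigr)L_\eta(g)(x)$; but $y<x^{1-\gamma}$ forces $y/((1{+}x)\log(1{+}x))=O(x^{-\gamma}/\log x)$, which is $o(\epsilon'(x))$ since $1/\epsilon'$ is \subpoly{} and hence $\epsilon'(x)=\omega(x^{-\gamma})$, contradicting $x+y\notin\delta_{\epsilon'(x)}(L_\eta(g),x)$. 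Therefore $|g(x{+}y)-g(x)|>\tfrac12\epsilon'(x)g(x)$, i.e.\ $x+y\notin\delta_{\epsilon'(x)/2}(g,x)$.

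Now apply predictability of $g$ with index $\gamma/2$ (legitimate because $y\in[1,x^{1-\gamma})\subseteq[1,x^{1-\gamma/2})$) and error $\epsilon'/2$ to obtain $g(y)\ge x^{-\gamma/2}g(x)$ for $x$ large; then
\[L_\eta(g)(y)=g(y)\log^\eta(1{+}y)\ge x^{-\gamma/2}(\log 2)^\eta\,g(x)\ge x^{-\gamma}\log^\eta(1{+}x)\,g(x)=x^{-\gamma}L_\eta(g)(x),\]
where the middle step holds for $x$ large because $x^{\gamma/2}(\log 2)^\eta$ eventually dominates the polylogarithmic $\log^\eta(1{+}x)$. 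This gives predictability of $L_\eta(g)$ and finishes the proof via Theorem~\ref{thm:norm1passalg}. The delicate point is exactly the interplay used here: over the window $y<x^{1-\gamma}$ the factor $\log^\eta(1+\cdot)$ moves by a multiplicative $1+o(x^{-\gamma})$, invisible at any sub-polynomial accuracy, so a genuine jump of $L_\eta(g)$ must come from $g$; and one must spend a sub-polynomial gap (replacing $\gamma$ by $\gamma/2$) to absorb the conversion factor $\log^\eta(1+x)/\log^\eta(1+y)\le\log^\eta(1+x)/(\log 2)^\eta$ when passing back from $g$ to $L_\eta(g)$.
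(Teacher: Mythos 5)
Your proof follows the same overall plan as the paper — reduce to Theorem~\ref{thm:norm1passalg} and verify that $L_\eta(g)$ inherits slow-dropping, slow-jumping, and predictability (the paper treats the first two as immediate and only verifies predictability). However, your treatment of predictability is more careful and actually patches a gap in the paper's argument. The paper's proof runs the same case analysis to deduce $x+y\notin\delta_{\epsilon/2}(g,x)$ and then applies predictability of $g$ with the \emph{same} exponent $\gamma$, stopping at the conclusion $g(y)\ge x^{-\gamma}g(x)$. But what is actually required is $L_\eta(g)(y)\ge x^{-\gamma}L_\eta(g)(x)$, and since $y<x$ forces $\log^\eta(1+y)<\log^\eta(1+x)$, the inequality $g(y)\ge x^{-\gamma}g(x)$ alone does \emph{not} yield this — the conversion factor $\log^\eta(1+x)/\log^\eta(1+y)$ goes the wrong way. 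Your fix of invoking predictability of $g$ with exponent $\gamma/2$ (valid since $[1,x^{1-\gamma})\subseteq[1,x^{1-\gamma/2})$ and $\epsilon'/2$ is still sub-polynomial) and then burning the extra $x^{\gamma/2}$ slack against the polylogarithmic ratio $\log^\eta(1+x)/(\log 2)^\eta$ is exactly what is needed, and it is not present in the paper's proof. The rest — the decomposition to show that the logarithmic multiplier can only perturb $L_\eta(g)$ by $O(x^{-\gamma}/\log x)=o(\epsilon'(x))$ across the window $y<x^{1-\gamma}$, and the explicit verification of slow-dropping, slow-jumping, and the consequent $\bbS$-normality — is routine but correctly done and makes the proof self-contained where the paper leaves those steps implicit.
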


{
The set of tractable nearly periodic function can be further restricted using the {\sf ShortLinearCombination}
problem discussed in Section \ref{sec:intro}. In particular we show the following,
\begin{theorem}[Informal]
If the periods of a nearly periodic function $g$ form what we call an \emph{indistinguishable frequency set}, then any algorithm for the $g$-sum problem uses polynomial space.
\end{theorem}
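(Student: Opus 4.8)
The plan is to prove the lower bound by reduction from the $(u,d)$-DIST problem of Definition~\ref{def:SLC}, using the $\Omega(n/q^2)$ communication bound established in Appendix~\ref{app:communication}, where $q=\min\{\lVert\mathbf q\rVert_1:\mathbf q\in\Z^r,\ \sum_i q_i u_i=d\}$. Unwinding what it means for the periods of $g$ to form an indistinguishable frequency set, we extract, for every sufficiently large threshold $T$, a constant number of $\alpha$-periods $a_1,\dots,a_r$ of $g$ with $a_i\ge T$ together with a target integer $b\notin\{a_1,\dots,a_r\}$ such that (i) $g(b)\ge T\max_i g(a_i)$ and (ii) the minimum coefficient norm $q=\min\{\lVert\mathbf q\rVert_1:\mathbf q\in\Z^r,\ \sum_i q_i a_i=b\}$ does not grow with $T$ (or grows only sub-polynomially). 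Near-periodicity of $g$ is exactly what makes (i) attainable: being an $\alpha$-period forces $g(a_i)$ to be polynomially small relative to the value of $g$ at some earlier point, so once the $a_i$ are pushed past a polynomial threshold their values are dwarfed by $g(b)$ for a suitable $b$ obtained as a bounded combination of the $a_i$.

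Next I would run the reduction. Fix $n$, let $T=\poly(n)$ with $T\ge 9n$ (large enough to cover the constants coming out of Appendix~\ref{app:communication}), and instantiate $(u,d)$-DIST with $u=(a_1,\dots,a_r)$ and $d=b$; note $b\le q\cdot\max_i a_i=\poly(n)$, so $b$ is an admissible frequency under the standing assumption $M\in\poly(n)$. Given a $(u,d)$-DIST instance with input frequency vector $v$, the players collaboratively build a stream $D$ with $V(D)=v$ (realizable as an insertion-only stream, since the relevant frequencies are nonnegative) and run a supposed $(g,1/2)$-SUM algorithm $\calA$ on it, passing the working memory among themselves in the usual fashion. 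If $v\in V_0$ then every coordinate of $v$ lies in $\{0,a_1,\dots,a_r\}$, so $g(V(D))=\sum_i g(v_i)\le n\max_i g(a_i)$; if $v\in V_1$ then one coordinate equals $\pm b$, so $g(V(D))\ge g(b)\ge T\max_i g(a_i)\ge 9n\max_i g(a_i)$. A $(1\pm 1/2)$-approximation therefore separates the two cases, so $\calA$ must use $\Omega(n/q^2)$ bits. Since $q$ is bounded (or at worst sub-polynomial), $\Omega(n/q^2)=n^{1-o(1)}$, which is polynomial in $n$; this proves the theorem.

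The main obstacle is not the glue above but the two facts it rests on. The first is the $\Omega(n/q^2)$ lower bound for $(u,d)$-DIST itself: ordinary set-disjointness arguments do not suffice, because the dependence on the coefficient norm $q$ is the essential phenomenon, and one has to argue through the information complexity framework that when the ``mass'' needed to realize $b$ on a single item is split across the available frequencies $a_1,\dots,a_r$, each of the $\Theta(q)$ resulting pieces must still carry $\Omega(1/q^2)$ bits about its owner --- this requires a carefully chosen hard distribution together with a direct-sum / embedding argument. The second is showing that an indistinguishable frequency set can actually be exhibited at every scale with the stated parameters, i.e., that the ``almost repetition'' of a nearly periodic function at its $\alpha$-periods genuinely yields $O(1)$ periods $a_i\ge T$ and a bounded-norm integer combination of them that lands on a value $b$ with $g(b)\gg\max_i g(a_i)$, uniformly in $T$. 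With both ingredients in hand, the reduction is routine.
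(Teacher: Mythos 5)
Your reduction is essentially the paper's (Theorem~\ref{thm:communication nearly periodic}): reduce from the {\sf ShortLinearCombination} / $(u,d)$-DIST problem, run a $(1\pm\epsilon)$-approximate $g$-SUM algorithm on the instance's frequency vector, and threshold the output, since $g$ is very small on the frequencies in $S$ but at least $h(d)$ at $d$. Both routes ultimately rest on the $\Omega(n/q^2)$ information-complexity lower bound for {\sf ShortLinearCombination} proved in Appendix~\ref{app:communication}, which you correctly identify as the hard technical ingredient.

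The divergence is at the level of the definition, and this leads you to take on an extra burden the paper does not. The paper defines an ``$\alpha$-indistinguishable frequency set'' purely operationally: it is a pair $(S,d)$ for which $(S,d)$-DIST already requires $\Omega(N^\alpha)$ space. That makes the theorem's own proof a short reduction, with the $\Omega(n/q^2)$ bound serving as the separate engine for \emph{producing} such sets. You instead guess a concrete definition (constant $r$, $\alpha$-periods $a_i\ge T$, sub-polynomial coefficient norm $q$), which is a correct \emph{sufficient} condition via that engine but is stronger than the paper assumes: the formal hypotheses only require $S\subseteq\mathcal{D}_{N,h}(f)$ with $|\mathcal{D}_{N,h}(f)|\le N^{1-\delta}$, $f(d)\ge h(d)$, and hardness of $(S,d)$-DIST, with no bound on $|S|$ or on $q$ beyond what that hardness entails. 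More importantly, your ``second obstacle''---exhibiting an indistinguishable set at every scale from near-periodicity alone---is not a proof obligation of this theorem at all: the statement is conditional on such sets existing, and the paper treats their existence as a hypothesis rather than something derivable from near-periodicity. So your argument is correct in structure and correctly locates the key lemma, but it both narrows the definition and misreads one hypothesis as a step to be proved. (As a minor point in your favor, your bound $g(V_0)\le n\max_i g(a_i)$ is the right thing to use; the paper's line ``$\sum_{a\in S}f(a)$'' implicitly assumes each element of $S$ appears at most once in the vector.)
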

The formal definition and proofs are presented in Theorem \ref{thm:communication nearly periodic}. We also prove our lower bounds for {\sf ShortLinearCombination} in Appendix \ref{app:communication} based on the information complexity framework. 
} We note that it is non-trivial to construct an example of a tractable nearly periodic function. 
We provide such an example in Appendix~\ref{app:nearlyperiodic}.

\bibliographystyle{plain}

\bibliography{ref}

\clearpage

\begin{appendix}

\section{The Case of $g(0)\neq 0$}\label{app:caseg00}
When $g(0)=0$, we have to re-address the lower bound of using INDEX since the elements not contributing to the frequency vector still contribute to the $g$-SUM. Therefore, we re-define the nearly periodic functions in the next section. We further show that functions crossing the axis are not $1$-pass tractable and functions with zero points are not tractable unless they are periodic in Section \ref{sec:crossing the axis}. We prove the same $1$-pass zero one law in the remaining 2 sections. The proof turns out to be very similar to the $g(0)=0$ cases with small number of additional tweaks. Note that we only provide the turnstile model lower bounds.

\subsection{Redefinition of Nearly Periodic}

For the $g(0)\neq 0$ case, we have the following definition of nearly-periodic. 
\begin{definition}
Given a set of functions $\calS$, call
$g(x)$ {\em $\calS$-nearly periodic}, if the following two conditions are satisfied.
\begin{enumerate}
\item There exists $\alpha >0$ such that for any constant $N>0$ there exists $x, y\in \mathbb{N}$, $x<y$ and $y\ge N$ such that  $g(y) \le g(x) / y^\alpha$. Call such a $y$ {\em $\alpha$-period of $g$};
\item For any $\alpha > 0$ and any error function $h\in \calS$ there exists $N_1>0$ such that for all $\alpha$-periods $y\ge N_1$ all $x<y$ such that $g(y) y^\alpha \le g(x)$ we have $|g(x)-g(x-2y)|\le \min\{g(x), g(x-2y)\}h(y).$
\end{enumerate}
A function $g$ is {\em $\calS$-normal} if it is not $\calS$-nearly periodic.  
\end{definition}

Let all functions be $\G_0^*=\{g:\mathbb{N}\rightarrow \mathbb{R}^+, g(x)=g(-x), \text{and}~g(0)=1\}$.

\subsection{Crossing the Axis
\label{sec:crossing the axis}}
If $g\in\G_0^*$ and there exist $x,y\in\mathbb{N}$ such that $g(x)>0>g(y)$, we have the following lemma.

\begin{lemma}
\label{lem:g1 less 0}
Let $g\in\G_0^*$, if $g(1)<0$, then any algorithm solves $g$-SUM requires $\Omega(n)$ space.
\end{lemma}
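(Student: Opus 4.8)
The plan is to reduce from the two-player set-disjointness problem $\disj(n,2)$, which has randomized communication complexity $\Omega(n)$. The key observation is that when $g(1)<0$, a single copy of an item contributes a \emph{negative} amount to the $g$-SUM, whereas the baseline of an all-zero (on the support) vector contributes $n\cdot g(0)=n$; thus the presence or absence of an intersecting coordinate can shift the sum by a constant fraction and be detected by any multiplicative approximation. Concretely, given an instance $(S_1,S_2)$ of $\disj(n,2)$, Alice inserts one copy of each $i\in S_1$ and Bob inserts one copy of each $i\in S_2$; if the sets are disjoint the frequency vector $V$ has $|S_1|+|S_2|$ coordinates equal to $1$ and the rest $0$, while on an intersecting instance (with unique common element) it has $|S_1|+|S_2|-2$ coordinates equal to $1$, one coordinate equal to $2$, and the rest $0$. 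Since the players share $|S_1|,|S_2|$, the only unknown quantity distinguishing the two cases is whether the contribution is $2g(1)$ versus $g(2)$; Bob accepts ``disjoint'' or ``intersecting'' according to which value the returned estimate $\hat G$ is consistent with.

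\textbf{Key steps, in order.} First I would fix the hard-instance parameters: the stream has domain $[n]$, and the frequency vector always lies in $\{0,1,2\}^n$, so it is a valid turnstile (indeed insertion-only) stream, and the prefix promise $V(D)\in\{-M,\dots,M\}^n$ holds with $M=2$. Second, I would compute the two possible values of $g(V)$ exactly: on a disjoint instance $g(V) = (|S_1|+|S_2|)g(1) + (n-|S_1|-|S_2|)g(0)$, and on an intersecting instance $g(V) = (|S_1|+|S_2|-2)g(1) + g(2) + (n-|S_1|-|S_2|+1)g(0)$; their difference is $|g(V_{\text{disj}}) - g(V_{\text{int}})| = |2g(1) - g(2) + g(0)|$, a nonzero constant (one must check it is nonzero — see the obstacle below). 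Third, I would verify that this constant gap is large relative to the value $g(V)$ itself. Since $g(0)=1$ is the dominant term when $|S_1|,|S_2|$ are, say, $\Theta(\sqrt n)$ as in the standard hard distribution for $\disj$, we have $g(V) = n + O(\sqrt n)\cdot|g(1)|$, so both candidate values are $n(1+o(1))$; hence the gap is $\Theta(1/n)$ relative to $g(V)$. Fourth — and this forces a parameter choice — the relative gap $\Theta(1/n)$ is too small for a fixed constant $\epsilon$, so I would instead reduce from $\disj(n',2)$ with $n'$ a small polynomial in $n$ (e.g. $n' = \Theta(\sqrt n)$ with the domain size being $n$), arranging that the unsupported coordinates number only $n'$ rather than $n$, so that $g(V) = \Theta(n')$ and the relative gap is $\Theta(1/n')$; calibrating $\epsilon$ against this and invoking the $\Omega(n')$ lower bound for $\disj$ gives an $\Omega(n'^{\,c})$ space bound, hence polynomial, hence not tractable. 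Finally, I would translate the communication protocol (Alice runs $\mathcal A$, sends its memory plus $|S_1|$, Bob continues and decides) into a space lower bound in the usual way, noting the $2/3$ success probability of $\mathcal A$ carries through.

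\textbf{Main obstacle.} The delicate point is the edge case $2g(1) - g(2) + g(0) = 0$, i.e. $g(2) = 2g(1) + 1$, in which the two instances produce \emph{identical} $g$-SUM values and the reduction as stated collapses. Since the hypothesis is only $g(1)<0$, this degeneracy is possible, so the proof must have a fallback: if $2g(1)-g(2)+g(0)=0$ one should instead embed using a different pair of frequencies — for example have Alice and Bob each insert $k$ copies, so the relevant comparison becomes $2g(k)$ versus $g(2k)+g(0)$, and argue that these cannot coincide for \emph{all} $k\geq 1$ unless $g$ has a very rigid form (and in particular, one can rule out that rigid form, or handle it separately, using $g(1)<0$). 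Alternatively, and perhaps more cleanly, one can use an asymmetric embedding where Alice inserts $a$ copies and Bob inserts $b$ copies of each of his elements, comparing $g(a)+g(b)$ against $g(a+b)+g(0)$ and choosing $a,b$ from a large enough family that some choice gives a nonzero gap; showing such a choice must exist whenever $g(1)<0$ is the real content. I expect managing this case analysis — and making sure the chosen frequencies keep the stream within the $\poly(n)$ magnitude bound — to be the main work, while the communication-to-space translation and the arithmetic of $g(V)$ are routine.
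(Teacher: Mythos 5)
Your reduction never neutralizes the $\Theta(n)$ contribution from the $g(0)$ terms, and that is fatal here. In $\G_0^*$ one has $g(0)=1$, so the $(n-|S_1|-|S_2|)$ zero-frequency coordinates contribute $\Theta(n)$ to $g(V)$, while the additive gap between your two instances is only $O(1)$; the relative gap is $\Theta(1/n)$. Your proposed fix with $n'=\Theta(\sqrt n)$ merely changes this to $\Theta(1/n')$, which is still polynomially small --- a $(1\pm\epsilon)$-approximation with sub-polynomial $\epsilon$ (the regime that the tractability definition actually quantifies over) cannot distinguish a polynomially small relative gap, so at best you rule out algorithms for a polynomially small $\epsilon$, which says nothing about tractability. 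The missing idea is to exploit $g(1)<0$ to \emph{cancel} the positive $g(0)$ baseline: the paper pads the domain with about $(n-n_1)/|g(1)|$ extra coordinates, each given frequency $1$, so their contribution $n_2\,g(1)<0$ offsets the $(n-n_1)\,g(0)>0$ term and leaves the baseline sum $C$ bounded by a constant. Then the $O(1)$ gap between instances is a constant fraction of $g(V)$, and even a constant-factor approximation distinguishes the two cases, giving a genuine $\Omega(n)$ bound that holds for constant $\epsilon$.

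The secondary obstacle you identify --- the degenerate case $2g(1)-g(2)+g(0)=0$ --- simply does not arise in the paper's reduction, because the paper works in the turnstile model (the appendix states that only turnstile lower bounds are given when $g(0)\neq 0$). The reduction is from $\textsf{INDEX}$: Alice inserts one copy of each element of her set, and Bob inserts a $-1$ update at his index. The contribution at Bob's index is therefore $g(0)=1$ on an intersecting instance and $g(-1)=g(1)$ on a disjoint one, a gap of $1-g(1)>1$ that is never degenerate under the hypothesis $g(1)<0$ and makes no appeal to $g(2)$ at all. Your insertion-only $\disj$ setup genuinely requires the $(a,b)$-family case analysis you sketch, and that sketch is incomplete --- but even filled in, the scaling problem above would remain.
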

\begin{proof}
Consider the following INDEX$(n)$ reduction. Alice and Bob jointly create a stream as following. Let $A$ be Alice's set. $n_1 = |A|$. Let $n_2=\lfloor\frac{n-n_1}{-g(1)}\rfloor$. Let $C=(n_1+n_2)g(1) + (n-n_1)$. We have $0\le C <g(1)$. Alice choose a domain $[n+n_2]$ for the algorithm and output $1$ copy of each of her element. Bob receives the memory content of the algorithm from Alice as well as $n, n_1, n_2$. Bob outputs $-1$ copy of his index and also adds to the stream one copy of each of the following elements: $n+1, n+2, \ldots, n+n_2$.

If there is an intersection, the result of $g$-SUM is 
\[
(n-n_1) + (n_1+n_2 - 1) g(1) + g(0) = C + g(0) - g(1);
\]
If there is no intersection, the result is 
\[
(n-n_1 - 1) + (n_1 + n_2) g(1) + g(1) = C + g(1) - g(0).
\]
Then a constant approximation algorithm can distinguish the two cases. The algorithm inherits an $\Omega(n)$ bound from INDEX.
\end{proof}

\begin{lemma}
\label{lem:gxgy}
Let $g\in \G^*_0$ and $g(1) > 0$, if there exists $y\in\mathbb{N}$ such that $g(y)<0<g(1)$, then there exits $z\in\mathbb{N}$ such that $g(1+z)\neq g(1-z)$.
\end{lemma}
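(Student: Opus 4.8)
The statement is the contrapositive of ``$g$ is symmetric'': if $g(1+z)=g(1-z)$ for all $z\in\mathbb{N}$, then $g$ never changes sign, contradicting the hypothesis that $g(y)<0<g(1)$ for some $y$. So the plan is to suppose, for contradiction, that $g(1+z)=g(1-z)$ holds for every integer $z$, and derive that $g$ takes no negative values, or more directly that $g(y)\ge 0$ for the particular $y$ in question.

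First I would unpack the relation $g(1+z)=g(1-z)$: setting $z=1$ gives $g(2)=g(0)=1$; setting $z=2$ gives $g(3)=g(-1)=g(1)$; and in general, combined with the symmetry $g(x)=g(-x)$ that is built into $\G_0^*$, the identity $g(1+z)=g(1-z)$ forces $g$ to be periodic. Indeed, writing $a=1+z$, the reflection about $1$ sends $a\mapsto 2-a$, and composing with the reflection about $0$ (i.e. $a\mapsto -a$) gives the translation $a\mapsto a-2$; hence $g(a)=g(a-2)$ for all $a$. So $g$ has period $2$, and its values are determined by $g(0)=g(2)=1$ and $g(1)=g(3)=\cdots>0$. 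In particular $g(y)\in\{1,g(1)\}\subseteq\mathbb{R}^{+}$ for every $y$, which directly contradicts the existence of a $y$ with $g(y)<0$.

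The argument is short, so the only real content is making sure the two reflection symmetries are being used correctly and that the parity bookkeeping is right — that $g(1+z)=g(1-z)$ for all $z\in\mathbb{Z}$ (not just $z\ge 1$) really does collapse, together with evenness, to period $2$. I would state this as: for any $x\in\mathbb{Z}$, pick $z=x-1$, so $g(x)=g(1+z)=g(1-z)=g(2-x)=g(x-2)$ by evenness; iterate to conclude $g(x)=g(x\bmod 2)\in\{g(0),g(1)\}=\{1,g(1)\}$, both positive. This contradicts $g(y)<0$.

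\textbf{Main obstacle.} There is essentially no obstacle beyond being careful about whether the hypothesis ``$g(1+z)=g(1-z)$ for all $z$'' is meant to quantify over $z\in\mathbb{N}$ or $z\in\mathbb{Z}$; since $\G_0^*$ functions are even, the two are equivalent, so the proof goes through either way. (If instead one only assumed the relation for $z$ in some bounded range, the conclusion could fail, but that is not the situation here.) So the ``hard part'' is really just presenting the reduction of the two symmetries to periodicity cleanly; the sign contradiction is then immediate.
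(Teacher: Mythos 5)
Your proposal is correct and follows essentially the same route as the paper: assume $g(1+z)=g(1-z)$ for all $z$, combine with the built-in evenness $g(x)=g(-x)$ to derive that $g$ has period $2$, and conclude $g(y)\in\{g(0),g(1)\}=\{1,g(1)\}$, contradicting $g(y)<0$. The paper's proof is terser (it states $g(w+2)=g(w)$ without spelling out the two-reflections argument, then splits into the $y$ even/odd cases), but the content is identical.
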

\begin{proof}
If for every $z$, $g(1+z)=g(1-z)$, we have $g(2)=g(0)=1$ and for any $w$, $g(w+2)=g(w)$. Therefore $2$ is a period of the function. Now, consider the following cases, 1) if $y$ even, then $g(y)=g(2 \lfloor y/2\rfloor))=g(0)$ ($2$ is a period), which is a contradiction that $g(y)<0$; 2) if $y$ odd, then $g(y) = g(2 \lfloor y/2\rfloor + 1) = g(1)>0$, contradicting $g(y) <0$.
\end{proof}

\begin{proposition}
Let $g\in\G_0^*$ and $g(1) >0$, if there exists $y\in\mathbb{N}$ such that $g(y)<0$, then any algorithm solves $g$-SUM requires $\Omega(n)$ space.
\end{proposition}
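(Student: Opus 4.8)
The plan is to reduce from $\textsf{INDEX}(n)$, in the spirit of the proof of Lemma~\ref{lem:g1 less 0}, after first locating a pair of integers at which the discrete second difference of $g$ is nonzero. Since $g(1)>0>g(y)$, Lemma~\ref{lem:gxgy} supplies an integer $z\ge 1$ with $g(1+z)\ne g(1-z)$. Subtracting,
\[\bigl(g(z+1)+g(0)-g(z)-g(1)\bigr)-\bigl(g(1-z)+g(0)-g(z)-g(1)\bigr)=g(1+z)-g(1-z)\ne 0,\]
so at least one of $g(z+1)+g(0)-g(z)-g(1)$ and $g(1-z)+g(0)-g(z)-g(1)$ is nonzero; fix integers $\beta=1$ and $\alpha\in\{z,-z\}$ (hence $\alpha\ne 0$) for which $\Delta:=g(\alpha+\beta)+g(0)-g(\alpha)-g(\beta)\ne 0$. (One could also bypass Lemma~\ref{lem:gxgy}: if $g(\alpha+\beta)+g(0)-g(\alpha)-g(\beta)=0$ for all $\alpha,\beta\in\Z$, then $x\mapsto g(x)-g(0)$ is additive, hence linear on $\Z$, hence, as $g$ is even, constant, contradicting $g(y)<0$.)

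Given an instance $(A,b)$ of $\textsf{INDEX}(n)$, Alice and Bob build a turnstile stream. Alice inserts $\alpha$ copies of each $i\in A$ (these are deletions if $\alpha<0$; order insertions before deletions so the prefix promise holds with $M\ge|\alpha|+|\beta|$), adds the padding coordinates described below, runs $\calA$ on this prefix, and sends its memory together with $|A|$ to Bob, who inserts $\beta$ copies of $b$ and reads the estimate $\hat G$. If $b\notin A$ the frequency vector has $|A|$ coordinates equal to $\alpha$, one equal to $\beta$, the padding coordinates, and the rest zero; if $b\in A$ it has $|A|-1$ coordinates equal to $\alpha$, one equal to $\alpha+\beta$, the same padding, and the rest zero. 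Because each empty coordinate contributes $g(0)\ne0$, a direct count of coordinates of each value gives $g(V)_{b\in A}-g(V)_{b\notin A}=g(\alpha+\beta)+g(0)-g(\alpha)-g(\beta)=\Delta$ (evenness of $g$ absorbing signs), so the two possible values of $g(V)$ differ by exactly $\Delta$.

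The crux of the $g(0)\ne0$ case is that both values of $g(V)$ sit atop a baseline of order (domain size)$\,\times g(0)$, which a multiplicative approximation cannot see past; we cancel it using $g(1)>0$ and $g(y)<0$. The padding consists of some coordinates carrying frequency $1$ (each contributing $g(1)>0$) and some carrying frequency $y$ (each contributing $g(y)<0$). Since $|A|\le n$, the part of the baseline not touched by padding, namely $|A|g(\alpha)+g(\beta)+(n-|A|-1)g(0)$, has magnitude $O(n)$, so choosing $O(n)$ padding coordinates of each type forces the common baseline $B$ into a fixed interval of length $\max(g(1),|g(y)|)$ around $0$. Running the gadget in $R$ disjoint parallel copies — still a single $\textsf{INDEX}$ instance, with Bob using the $b$-th coordinate of each copy — turns the signal into $R\Delta$ while keeping $|B|=O(1)$; for $R$ a large enough constant depending only on $g$ and $\epsilon$, the two $(1\pm\epsilon)$-confidence intervals, one around $B$ and one around $B+R\Delta$, are disjoint, so $\calA$ recovers whether $b\in A$. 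The total domain is $O(n)$, so $\calA$ inherits the $\Omega(n)$ lower bound of $\textsf{INDEX}(n)$. (If one only wants the conclusion for a sufficiently small constant $\epsilon$ depending on $g$, the parallel copies are unnecessary.)

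I expect the main obstacle to be exactly this baseline cancellation: unlike the $g(0)=0$ reductions in the body of the paper, empty coordinates are not free, so one must verify that two-signed padding together with amplification really does push the common baseline down to $O(1)$ while the $\Delta$-signal grows — and this is possible precisely because the hypothesis provides both a strictly positive value ($g(1)$) and a strictly negative value ($g(y)$) of $g$.
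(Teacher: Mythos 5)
Your proof is correct, and it uses the same overall strategy as the paper's: reduce from $\textsf{INDEX}(n)$, invoke Lemma~\ref{lem:gxgy} to find a $z$ with $g(z+1)\neq g(z-1)$, and use the negative value $g(y)<0$ together with $g(1)>0$ to cancel the $\Theta(n)$ additive baseline that arises because $g(0)\neq0$. The technical realization differs in a few ways worth noting. The paper has Alice place $+1$ on each element of $A$ and $-1$ on each non-element, so the entire domain $[n]$ contributes $g(1)$, not $g(0)$; Bob then adds $z$ to his index, making the signal the two-term quantity $g(z+1)-g(z-1)$, and the only padding needed is $n'=\lfloor (n-1)g(1)/(-g(y))\rfloor$ extra coordinates at frequency $y$. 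You instead keep non-elements of $A$ at zero and put $\alpha\in\{\pm z\}$ on $A$, which makes the signal a four-term second difference $g(\alpha+\beta)+g(0)-g(\alpha)-g(\beta)$ and forces you to pad with coordinates of both signs (frequency $1$ and frequency $y$). Both work; the paper's version is slicker because the $\pm1$ trick eliminates $g(0)$ from the baseline entirely. Your explicit $R$-fold amplification is a careful addition that the paper omits (it merely asserts a constant-factor approximation can distinguish the two cases), and your alternative argument bypassing Lemma~\ref{lem:gxgy} via additivity of $g-g(0)$ is a nice independent observation, though not used. One small caution: you should check the sign cases when asserting that the two confidence intervals around $B$ and $B+R\Delta$ separate, since the values of $g(V)$ can themselves be negative here, but this works out for $R$ large enough, as you say.
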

\begin{proof}
Consider the following reduction from INDEX$(n)$. Let $n^\prime=\lfloor -(n-1)g(1)/g(y)\rfloor$ and $z\in N$ be the integer given by Lemma~\ref{lem:gxgy}. Let $C=(n-1)g(1) + n^\prime g(y)$. We have $0<C<-g(y)$. Alice and Bob jointly create a data stream in domain $n+n^\prime$: Alice outputs $1$ copy of every element in her set, and $-1$ copy of every element not in her set. Bob outputs $y$ copy of each of the elements $i+1, i+2, \ldots, i+n^\prime$ and $z$ copy of his index. If there is no intersection, the result of $g$-SUM is 
\[
(n-1)g(x) + n^\prime g(y)+g(z-1) = C+g(z-1).
\]
If there is an intersection, the result of $g$-SUM is 
\[
(n-1)g(x) + n^\prime g(y) + g(z+1) = C+g(z+1)
\]
Therefore any constant approximation algorithm can distinguish the two cases, implying an $\Omega(n)$ bound for the memory of the algorithm.
\end{proof}

\begin{proposition}
Let $g\in\G_0^*$ if $g(x)=0$ and $g(2x)\neq g(0)$  then $g$ is not $1$-pass tractable.
\end{proposition}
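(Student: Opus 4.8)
The plan is to reduce from the one-way communication problem $\textsf{INDEX}(n)$, exploiting that both $x$ and $-x$ are ``invisible'' frequencies: since $g(x)=g(-x)=0$, a coordinate holding either value contributes nothing to $g$-SUM. This lets Alice encode her whole set using only invisible frequencies, keeping the value of the sum at $0$, while Bob resolves his single query by nudging one coordinate off of its invisible value onto $2x$ or onto $0$, and $g(2x)\neq g(0)$ is exactly what makes these two outcomes distinguishable.

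Concretely, given an instance $(A\subseteq[n],\,b\in[n])$ of $\textsf{INDEX}(n)$ and a purported $1$-pass $(g,\epsilon)$-SUM algorithm $\calA$ run on domain $[n]$: Alice inserts $x$ copies of each $i\in A$ and makes $x$ deletions on each $i\in[n]\setminus A$, runs $\calA$ on this prefix, and sends the memory state to Bob; Bob then inserts $x$ more copies of his index $b$ and finishes $\calA$. If $b\in A$ then coordinate $b$ ends at frequency $2x$ while every other coordinate sits at $\pm x$, so $g(V(D))=g(2x)$; if $b\notin A$ then coordinate $b$ ends at $0$ and $g(V(D))=g(0)$. Since $g(2x)$ and $g(0)$ are two distinct fixed constants, for a sufficiently small constant $\epsilon$ (chosen depending on $g(2x)$) the windows $[(1-\epsilon)g(2x),(1+\epsilon)g(2x)]$ and $[(1-\epsilon)g(0),(1+\epsilon)g(0)]$ are disjoint --- this covers every sign of $g(2x)$, including the degenerate case $g(2x)=0$, where a $(1\pm\epsilon)$-approximation is forced to equal $0$. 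Hence Bob recovers whether $b\in A$ from $\calA$'s output with probability $\ge 2/3$, and the $\Omega(n)$ one-way lower bound for $\textsf{INDEX}(n)$ forces $\calA$ to use $\Omega(n)$ bits. Here $M=2x=O(1)$, so this exceeds any sub-polynomial bound $h^*(nM)$, and $g$ is not $1$-pass tractable.

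The bookkeeping is routine: the split between Alice and Bob is a legitimate single-pass computation, and every prefix of the constructed stream keeps all frequencies within $[-2x,2x]$, so $M=2x$ is a valid (and polynomially bounded) choice. The one step needing a moment of care --- and the only real obstacle --- is the last one: converting the hypothesis $g(2x)\neq g(0)$ into a gap that survives multiplicative $(1\pm\epsilon)$ error, which forces $\epsilon$ to be a constant depending on $g(2x)$ rather than an arbitrary sub-polynomial accuracy. Everything else is straightforward.
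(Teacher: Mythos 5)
Your proof is correct and is essentially the same reduction the paper uses: Alice places $x$ and $-x$ to make her whole set ``invisible'' (since $g(\pm x)=0$), Bob nudges his index by $+x$, and the output collapses to the single value $g(2x)$ or $g(0)$ depending on membership. The only difference is that you spell out the final step (converting $g(2x)\neq g(0)$ into a multiplicative gap for a fixed constant $\epsilon$, including the sign/zero cases of $g(2x)$) and the $M=2x=O(1)$ bookkeeping, which the paper leaves implicit.
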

\begin{proof}
Consider the INDEX$(n)$ reduction. Alice and Bob jointly create a stream as following, Alice output $x$ for every element in her set, $-x$ for every element not in her set. Bob output $x$ for his index. If there is an intersection, the $g$-SUM result is $g(2x)$. If there is no intersection, the result is $g(0)$. 
\end{proof}

\begin{proposition}
If function $g\in \G_0^*$ with $g(x) =0$ for some $x\in \mathbb{N}$ is tractable, then $g$ is periodic.
\end{proposition}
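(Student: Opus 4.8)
The plan is to set $p:=\min\{x\ge 1 : g(x)=0\}$, which is well defined since $g$ vanishes somewhere on $\N$ by hypothesis while $g(0)=1\neq0$. I claim that a tractable such $g$ must satisfy $g(w)=g(w+2p)$ for every $w$, so that $g$ is periodic (with period $2p$; in particular not with period $p$, since $g(p)=0\neq 1=g(0)=g(2p)$). Suppose, for contradiction, that this fails: using $g(x)=g(-x)$, fix an integer $w\ge 0$ with $g(w)\neq g(w+2p)$.

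For every (large) $n$ I would reduce $\textsf{INDEX}(n)$ to $(g,\epsilon)$-SUM on $n$-dimensional turnstile streams, for a suitable constant accuracy $\epsilon$. Given $A\subseteq[n]$, Alice forms the stream prefix consisting of $p$ updates $(i,+1)$ for every $i\in A$ together with $p$ updates $(i,-1)$ for every $i\in[n]\setminus A$, runs a hypothetical sub-polynomial-space algorithm $\calA$ for $(g,\epsilon)$-SUM on it, and sends its memory to Bob; given $b\in[n]$, Bob then appends $w+p$ updates $(b,+1)$ and finishes running $\calA$. The reason Alice touches \emph{every} coordinate with $\pm p$ is that $g(p)=g(-p)=0$, so each coordinate $i\neq b$ contributes $0$ to $g(V)$, while coordinate $b$ has frequency $p+(w+p)=w+2p$ when $b\in A$ and $-p+(w+p)=w$ when $b\notin A$; hence $g(V)=g(w+2p)$ or $g(V)=g(w)$ according to whether $b\in A$. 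Every partial sum lies in $[-M,M]$ for the constant $M:=\max\{p,\,w+2p\}$, which is certainly $\poly(n)$, so this is a legal turnstile instance, and note that $g(V)$ does not depend on $|A|$, so Bob needs only $\calA$'s memory.

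Next I would observe that $g(w)$ and $g(w+2p)$ are two fixed, distinct, nonnegative reals — nonnegativity of a tractable $g$ is exactly what the crossing-the-axis results of this section establish — so there is a constant $\epsilon_0=\epsilon_0(g,w)\in(0,1)$ whose $(1\pm\epsilon_0)$-multiplicative intervals around these two values are disjoint; when one of the two values is $0$ this is immediate, since a $(1\pm\epsilon_0)$-approximation to $0$ must equal $0$. Thus from a $(1\pm\epsilon_0)$-approximation to $g(V)$ Bob recovers $[b\in A]$ with probability at least $2/3$. By tractability, for every $n$ large enough that $\epsilon_0\ge 1/h(nM)$ there is an algorithm for $(g,\epsilon_0)$-SUM using space sub-polynomial in $nM=O(n)$, and the reduction turns it into a one-way $\textsf{INDEX}(n)$ protocol whose message length equals that space; this contradicts the $\Omega(n)$ one-way communication lower bound for $\textsf{INDEX}(n)$ once $n$ is large. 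Hence the claim holds and $g$ is periodic.

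The one delicate point is the choice of stream: everything rests on Alice covering the whole universe with frequencies $\pm p$ so that the only nonzero contribution to $g(V)$ comes from Bob's distinguished coordinate; given that, the remainder is the routine argument that a sufficiently accurate multiplicative estimate separates two fixed constants, plus turnstile bookkeeping. (Upgrading the conclusion to $O(1)$ passes would require replacing $\textsf{INDEX}$ by a set-disjointness variant, and hence exhibiting zeros $v_1,v_2,v_3$ of $g$ with $g(v_1+v_2-v_3)\neq 0$; since the disjoint case then forces background coordinates to carry values of this form, the case in which the zero set of $g$ is closed under $(v_1,v_2,v_3)\mapsto v_1+v_2-v_3$ — which, using symmetry and minimality of $p$, forces it to be $\{(2k+1)p:k\in\Z\}$ — would need a separate argument.)
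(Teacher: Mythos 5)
Your proof is correct and takes essentially the same approach as the paper's: Alice places $\pm x$ (your $\pm p$) on every coordinate so that only Bob's coordinate contributes to $g(V)$, Bob places $k+x$ (your $w+p$), and the two cases yield $g(k+2x)$ versus $g(k)$, so tractability forces $2x$ to be a period. Your version just specializes $x$ to the least zero of $g$ (inessential) and spells out the turnstile bookkeeping and choice of $\epsilon_0$ that the paper leaves implicit; your closing remark — that this \textsf{INDEX} reduction only rules out one-pass algorithms, so the stated ``tractable'' conclusion as phrased would need a \textsf{DISJ}-style variant for $O(1)$ passes — is a fair observation about a looseness present in the paper's own proof as well.
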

\begin{proof}
By reduction from INDEX$(n)$, Alice output $x$ for each of her elements and $-x$ for each of her non-elements. Bob output $k+x$ for his index. If there is an intersection, the $g$-SUM result is $g(k+2x)$. If there is no intersection, $g$-SUM is $g(k)$. If $g(k+2x)\neq g(k)$ for any $k$, a constant approximation algorithm can distinguish the 2 cases. Therefore, $2x$ is a period of the function.
\end{proof}

Therefore, we only consider the following set of functions,
\[\G_0=\{g:\mathbb{N}\rightarrow \mathbb{R}^+, g(x)=g(-x)>0, \text{and}~g(0)=1\}\]

\subsection{Lower Bounds}

\begin{theorem}
If an $\bbS$-normal function $g\in\G_0$ does not satisfy slow dropping, then $g$ is not $1$-pass tractable.
\end{theorem}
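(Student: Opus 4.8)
The plan is to adapt the proof of Lemma~\ref{lem:normslowdropping} to the case $g(0)=1$. The one genuinely new difficulty is that an empty coordinate now contributes $g(0)=1$ to the sum, which (if $g$ happens to be bounded) would swamp the signal; I therefore fill \emph{every} coordinate of the hard instance with a frequency whose absolute value is a period of $g$ and encode the $\textsf{INDEX}$ instance in the \emph{signs} of those frequencies. This is exactly why the $g(0)\neq 0$ notion of nearly periodic is phrased with $g(x-2y)$ rather than $g(x+y)$. First I would unpack the hypothesis: since $g$ is $\bbS$-normal and not slow-dropping, the first condition in the definition of $\bbS$-nearly periodic holds (failing slow-dropping is precisely that condition) while the second fails. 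Spelling the failure out, there are $\alpha>0$, a fixed sub-polynomial $h\in\bbS$, and sequences $x_k<y_k$ with $y_k\to\infty$, $g(y_k)\,y_k^\alpha\le g(x_k)$, and $|g(x_k)-g(x_k-2y_k)|>\min\{g(x_k),g(x_k-2y_k)\}\,h(y_k)$. A two-line case split on whether $g(x_k-2y_k)<\tfrac12 g(x_k)$ lets me replace $h$ by a sub-polynomial function $\tilde h\le 1$ and assume instead $|g(x_k)-g(x_k-2y_k)|>\tilde h(y_k)\,g(x_k)$.

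Next comes the reduction. Put $N_k=\lfloor y_k^\alpha\rfloor$ and $\epsilon_k=\tilde h(y_k)/16$, and reduce $\textsf{INDEX}(N_k)$ to $(g,\epsilon_k)$-SUM on a turnstile stream of dimension $N_k$. Alice, holding $A\subseteq[N_k]$, writes frequency $+y_k$ on every coordinate in $A$ and $-y_k$ on every coordinate not in $A$, runs the $1$-pass algorithm $\mathcal{A}$ on her prefix, and sends the memory state together with $|A|$ to Bob; Bob, holding $b\in[N_k]$, adds $x_k-y_k$ to coordinate $b$ (a deletion, since $x_k<y_k$) and completes $\mathcal{A}$. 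Every coordinate is nonzero with absolute value at most $2y_k=\poly(N_k)$, so the turnstile promise holds with $M=\poly(N_k)$, and no coordinate contributes $g(0)$. If $b\in A$ then coordinate $b$ has frequency $x_k$ and the other $N_k-1$ coordinates have absolute frequency $y_k$, so the sum is $S_1=(N_k-1)g(y_k)+g(x_k)$; if $b\notin A$ it is $S_0=(N_k-1)g(y_k)+g(x_k-2y_k)$. Because $(N_k-1)g(y_k)\le y_k^\alpha g(y_k)\le g(x_k)$, a short computation starting from $|S_1-S_0|=|g(x_k)-g(x_k-2y_k)|>\tilde h(y_k)g(x_k)$ (and using $\tilde h\le 1$) yields $|S_1-S_0|>\epsilon_k(S_1+S_0)$, so the $(1\pm\epsilon_k)$-confidence intervals around $S_1$ and $S_0$ are disjoint; since Bob knows $|A|$ and all the fixed constants he can compute $S_0$ and $S_1$ and decide whether $b\in A$ with probability at least $2/3$. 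This gives a one-way $\textsf{INDEX}(N_k)$ protocol of cost $(\text{space of }\mathcal{A})+O(\log N_k)$, so $\mathcal{A}$ must use $\Omega(N_k)$ space on dimension-$N_k$ inputs.

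To finish I would check that $\mathcal{A}$ is genuinely required to achieve accuracy $\epsilon_k$ at dimension $N_k$: take the accuracy function $\epsilon(n):=\tilde h\big(\lceil (n+1)^{1/\alpha}\rceil\big)/16$, which is sub-polynomial because $1/\tilde h$ composed with $n\mapsto\lceil(n+1)^{1/\alpha}\rceil$ is, and observe $\epsilon(N_k)\le \tilde h(y_k)/16=\epsilon_k$ since $\lceil(N_k+1)^{1/\alpha}\rceil> y_k$ and $\tilde h$ is nonincreasing. As $N_k\to\infty$ this shows that $g$ is not $1$-pass tractable. The hard part of this plan is the tension between two scales: the relative gap $|g(x_k)-g(x_k-2y_k)|/g(x_k)$ that the algorithm must resolve can be as small as the sub-polynomial quantity $h(y_k)$, yet the accuracy the algorithm is entitled to is tied to the stream dimension $N_k\approx y_k^\alpha$ and not to $y_k$ itself. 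This is recovered only because $\alpha$ is a fixed positive constant, so $h$ evaluated at $N_k^{1/\alpha}\approx y_k$ is still a legitimate sub-polynomial accuracy; and the $\alpha$-period inequality $g(y_k)y_k^\alpha\le g(x_k)$ is exactly what permits $N_k$ to be polynomially large while the background term $(N_k-1)g(y_k)$ stays below $g(x_k)$.
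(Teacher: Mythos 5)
Your proof is correct and follows exactly the same reduction the paper sketches: Alice writes $\pm y_k$ on every coordinate according to $A$, Bob adds $x_k-y_k$ to his coordinate, and the two sums differ by $g(x_k)$ versus $g(x_k-2y_k)$, which the negated second nearly-periodic condition makes distinguishable. You have merely spelled out the details (the case split normalizing $h$, the bound $(N_k-1)g(y_k)\le g(x_k)$, and the choice of a sub-polynomial accuracy $\epsilon(n)$ consistent with the stream dimension) that the paper compresses into ``similar to the case of $g(0)=0$.''
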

\begin{proof}
The reduction to INDEX$(n^\alpha)$ is similar to the case of $g(0)=0$. Now Alice outputs $n$ copies of the elements in $A$ and also outputs $-n$ copies of $i$ if $i\not\in A$. Bob output $x-n$ copies of his index. If there is an intersection the $g$-SUM result is $g(V(D)) = (n^\alpha-1)g(n)+g(x)$ and otherwise is $g(V(D)) = (n^\alpha-1)g(n)+g(x-2n)$. Since $g$ is not nearly periodic, the algorithm is able to distinguish the two cases.
\end{proof}

\begin{theorem}
If an $\bbS$-normal function $g\in\G_0$ is not slow-jumping, then $g$ is not $1$-pass tractable. 
\end{theorem}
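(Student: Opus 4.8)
The plan is to adapt the proof of Lemma~\ref{lem:normslowjumping} (its $g(0)=0$ analogue), reducing from $\disjind(n,t)$, but with one modification that keeps the $g(0)=1$ contribution of the ``empty'' coordinates from swamping the estimate.

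First I would extract the witness data exactly as in Lemma~\ref{lem:normslowjumping}: since $g$ is not slow-jumping there are $\alpha>0$, an increasing sequence $y_k$, and $x_k\le y_k$ with $g(y_k)>s_k^{2+\alpha}x_k^\alpha g(x_k)$, where $s_k=\lfloor y_k/x_k\rfloor$. By the preceding theorem I may assume $g$ is slow-dropping (otherwise $g$ is already not $1$-pass tractable), which, exactly as in Lemma~\ref{lem:normslowjumping}, controls the remainders $r_k=y_k-s_kx_k$, in particular $2g(r_k)\le g(y_k)$. Now split on whether $g(x_k)$ stays bounded away from $0$. If $\inf_k g(x_k)=c>0$, the reduction of Lemma~\ref{lem:normslowjumping} carries over verbatim: with $n=s_k^{2+\alpha}x_k^\alpha$ coordinates, the $g(0)$-mass is at most $n\cdot 1\le g(y_k)/c$, which (together with the extra, harmless $+t$ that $g(0)=1$ adds to the difference $a_1-a_2$) is absorbed by taking the accuracy $\epsilon$ to be a small enough constant; the rest of the algebra of Lemma~\ref{lem:normslowjumping} then shows $\calA$ distinguishes the two $\disjind$ cases, so it uses $\Omega(n/s_k^2)=\Omega(y_k^\alpha)$ bits and $g$ is not $1$-pass tractable.

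The remaining case, $g(x_k)\to 0$ (so necessarily $x_k\to\infty$), is where the $g(0)=1$ background genuinely bites, since then $n\cdot g(0)$ can exceed $g(y_k)$. Here I would modify the stream so that every coordinate that does not carry the ``blown-up'' frequency carries frequency $\pm x_k$ instead of $0$, using $g(\pm x_k)=g(x_k)$. Concretely I augment the $\disjind$ players with auxiliary ``filler'' players whose insertions do not depend on the instance, and tune the filler magnitudes, the main players' set-insertions, and the index player's insertion so that: an untouched coordinate ends at $\pm x_k$ (contributing $g(x_k)$); the intersection element (YES case) ends at $y_k$ (contributing $g(y_k)$); and the index element (NO case) ends at a frequency $w$ with $g(w)\le g(y_k)/2$, e.g.\ $\pm r_k$. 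Then, mirroring Lemma~\ref{lem:normslowjumping}, the two possible $g$-SUM values are both $O(g(y_k))$ (since $g(y_k)>n g(x_k)$) and differ by $g(y_k)-g(w)-\Theta(s_k g(x_k))=\Omega(g(y_k))$, so a $(1\pm\epsilon)$-approximation with $\epsilon$ a small constant separates the cases and yields the $\Omega(n/s_k^2)$ lower bound.

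I expect the main obstacle to be the last item in the unbounded case: forcing the index element, on a NO instance, onto a frequency with small $g$-value. With $t+1\ge 3$ players no single player sees all of $[n]$, and whatever the index player adds to raise the intersection element to $y_k$ on a YES instance it also adds to its own element on a NO instance; routing the filler insertions so that this is cancelled on NO instances, while the main players' $s_k$-fold contribution still reaches $y_k$ on YES instances, is the delicate part and requires a careful choice of filler magnitudes (and, if convenient, a further split on the size of $s_k$, reverting to a two-party $\textsf{INDEX}$ reduction when $s_k$ is small, so that one player alone can cover the whole domain with $\pm x_k$). Everything else — extracting the witnesses, the slow-dropping estimates, and the final ``distinguish two multiplicatively-separated values'' step — is routine given Lemma~\ref{lem:normslowjumping}.
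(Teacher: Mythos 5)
Your case split is unnecessary, and the paper's argument handles both of your cases at once with one observation you overlooked: since you may assume $g$ is slow-dropping (by the preceding lemma), Proposition~\ref{prop: slow dropping to sub-polynomial} applied with the lower argument equal to $0$ yields a sub-polynomial $h$ with $g(0)\le g(x)h(x)$ for all $x>0$. The extra mass from the zero coordinates in the $\disjind$ reduction is therefore at most
\[
(n-n')\,g(0)\ \le\ n\,g(0)\ \le\ n\,g(x)h(x)\ <\ g(y)\,h(x)\ \le\ g(y)\,h(M),
\]
using $n\,g(x)<g(y)$. This is a sub-polynomial factor times $g(y)$, so it can be absorbed by running the reduction with accuracy $\epsilon=\Theta(1/h(M))$, which is still a legitimate sub-polynomial accuracy in the definition of tractability. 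Together with the harmless additive $+t$ that $g(0)=1$ contributes to the gap $a_1-a_2$, the rest of the arithmetic in Lemma~\ref{lem:normslowjumping} goes through verbatim. Your case $\inf_k g(x_k)=c>0$ is exactly this with $h\equiv 1/c$, so it is correct; the case $g(x_k)\to 0$ is also covered, because slow-dropping forces $g(x_k)\ge 1/h(x_k)$, i.e.\ the decay is sub-polynomial.

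The filler construction you propose for the second case is the real gap in your write-up, and you are right to flag it as the delicate part. It is not clear it can be implemented: in $\disjind(n,t)$ with $t+1$ players there is no single party who knows which of the $n$ coordinates remain untouched, and a data-independent filler cannot simultaneously send untouched coordinates to $\pm x_k$, keep the (unknown) intersection element at $y_k$, and land the index-player's element on a small-$g$ value on disjoint instances. The paper sidesteps all of this, so there is no need to pursue it; if you want a two-case proof, at least note that the bounded-away case alone does not suffice (nothing rules out $g(x_k)\to 0$ slowly), and then close the second case with the $g(0)\le g(x)h(x)$ bound rather than a filler gadget.
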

\begin{proof}
The reduction to DISJ+IND is the same to the case of $g(0)=0$. 

Now by slow dropping, there exists a sub-polynomial function $h(x)$ such that $g(0)\le g(x)h(x)$. The $g$-SUM result differs from $g(0)=0$ case by $(n-n^\prime)g(0)$ (or $(n-n^\prime + t)g(0)$), which is comparable to $(n-n^\prime)g(x)$. Therefore, we can apply the same analysis.
\end{proof}

\begin{theorem}
If an $\bbS$-normal function $g\in\G_0$ is not predictable, then $g$ is not $1$-pass tractable.
\end{theorem}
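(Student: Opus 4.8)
The plan is to mirror the structure of Lemma~\ref{lem:normpredictable} from the $g(0)=0$ case, adapting the $\textsf{INDEX}$ reduction to accommodate the fact that zero-valued coordinates now contribute $g(0)=1$ to the sum. First I would unwind the failure of predictability for $g\in\G_0$: there is a sub-polynomial function $\epsilon$ and a constant $\gamma>0$ together with sequences $x_k\to\infty$ and $y_k\in[1,x_k^{1-\gamma})$ such that $x_k+y_k\notin\delta_{\epsilon(x_k)}(g,x_k)$ and $x_k^\gamma g(y_k)<g(x_k)$. As in the earlier proof I set $n=\epsilon(x_k)x_k^\gamma/4$ and reduce from $\textsf{INDEX}(n)$, but now I must be careful about the ambient dimension: Alice gets $A\subseteq[n]$, she inserts $y_k$ copies of each $i\in A$ (and, unlike the $g(0)=0$ case, may need to insert $-\text{(something)}$ or simply leave the remaining coordinates at zero), then runs $\calA$ and forwards the memory plus $|A|$; Bob inserts $x_k$ copies of his index $b$ and finishes.

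The key point is the gap computation. On a ``no'' instance the sum is $|A|\,g(y_k)+g(x_k)+(n'-|A|-1)g(0)$ where $n'$ is the working dimension, and on a ``yes'' instance it is $(|A|-1)g(y_k)+g(x_k+y_k)+(n'-|A|)g(0)$; the difference is exactly $|g(x_k+y_k)-g(x_k)|-(g(y_k)-g(0))$ up to sign, and since $|g(x_k+y_k)-g(x_k)|>\epsilon(x_k)g(x_k)$ while $|A|g(y_k)\le\frac{\epsilon(x_k)}{4}g(x_k)$ by construction, the gap is $\Theta(\epsilon(x_k)g(x_k))$. The new wrinkle is the $(n'-\cdot)g(0)$ term: it is the \emph{same} on both instances only if the working dimension $n'$ is identical, so I would have both players agree on $n'$ up front (this is fine, it is part of the problem setup), and then the $g(0)$ contributions cancel exactly in the difference, just as in the slow-jumping theorem above where ``the $g$-SUM result differs from the $g(0)=0$ case by $(n-n')g(0)$, which is comparable to $(n-n')g(x)$.'' I would invoke the slow-dropping assumption (available w.l.o.g.\ by the previous theorem, since otherwise $g$ is already intractable) to guarantee $g(0)\le g(x_k)h(x_k)$ for a sub-polynomial $h$, so that the additive $g(0)$ terms do not swamp a $(1\pm\epsilon/4)$-approximation when normalized against the true value $|A|g(y_k)+g(x_k)+(n'-|A|-1)g(0)$; choosing $n'$ only polynomially larger than $n$ keeps $(n'-|A|)g(0)=\poly(n)$, comparable to the dominant term, so the relative gap remains $\Omega(\epsilon(x_k))$ and $\calA$ with accuracy $\epsilon/4$ distinguishes the cases. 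Hence $(g,\epsilon/4)$-SUM requires $\Omega(n)=\Omega(\epsilon(x_k)x_k^\gamma)=\Omega(\poly(n))$ bits along the subsequence $n=x_k$, and $g$ is not $1$-pass tractable.

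The main obstacle I anticipate is bookkeeping the $g(0)$ contributions so that they genuinely cancel rather than leaving a residual term of the same order as the gap: one needs the two streams (yes-instance and no-instance) to have precisely the same number of zero coordinates, which forces a careful choice of whether Bob's insertion creates a brand-new nonzero coordinate or modifies one Alice already touched. Writing Alice's contribution as ``$y_k$ copies of each $i\in A$, nothing elsewhere'' and Bob's as ``$x_k$ copies of index $b$'' in a fixed dimension $n'$, the yes-instance has $|A|$ nonzero coordinates (one of value $x_k+y_k$) while the no-instance has $|A|+1$ nonzero coordinates, so the count of zeros differs by one; this single coordinate contributes $g(0)-\big(\text{its value in the other instance}\big)$ and must be tracked explicitly in the difference, exactly as I did above. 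Once that one-coordinate discrepancy is accounted for correctly, the rest is the routine estimate already carried out in Lemma~\ref{lem:normpredictable}, and the theorem follows.
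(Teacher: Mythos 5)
Your overall plan matches the paper's: reuse the $\textsf{INDEX}$ reduction from Lemma~\ref{lem:normpredictable} verbatim, then bound the extra $g(0)$ contributions via slow dropping. Your attention to the one-coordinate discrepancy in the number of zeros is also the right instinct, and the paper elides this. But the specific way you invoke slow dropping does not close the argument.

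You bound $g(0)\le g(x_k)\,h(x_k)$ and then assert that $(n'-|A|)g(0)=\poly(n)$ is ``comparable to the dominant term,'' so the relative gap is still $\Omega(\epsilon(x_k))$. This is where the proof breaks. That bound gives only
\[
(n'-|A|)g(0)\ \le\ n'\,h(x_k)\,g(x_k)\ \approx\ \tfrac{\epsilon(x_k)}{4}\,x_k^{\gamma}\,h(x_k)\,g(x_k),
\]
which is a \emph{polynomial} multiple of $g(x_k)$, not a comparable one — there is no a priori relation between $\poly(n)=\poly(\epsilon(x_k)x_k^\gamma)$ and $g(x_k)$. If the zero coordinates dominate the sum by a factor $x_k^{\Theta(1)}$, the relative gap you obtain is $\Omega\bigl(\epsilon(x_k)/(x_k^\gamma h(x_k))\bigr)$, polynomially small, and a sub-polynomial-accuracy algorithm is under no obligation to distinguish the two instances. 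The same weakness infects the one-coordinate residual $|g(0)-g(y_k)|$ in the gap itself: you need it to be $o(\epsilon(x_k)g(x_k))$, but $g(0)\le h(x_k)g(x_k)$ does not even show $g(0)=o(\epsilon(x_k)g(x_k))$.

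The fix, and the point of the paper's one-line remark, is to apply slow dropping at $y_k$, not at $x_k$, and then feed in the third failure-of-predictability condition. Slow dropping gives a sub-polynomial $h$ with $g(0)\le h(y_k)\,g(y_k)$, and the hypothesis $x_k^\gamma g(y_k)<g(x_k)$ then yields $g(0)\le h(y_k)\,g(x_k)/x_k^\gamma$. Consequently
\[
(n'-|A|)\,g(0)\ \le\ n'\,h(y_k)\,g(y_k)\ \le\ \tfrac{\epsilon(x_k)}{4}\,h(y_k)\,g(x_k),
\]
a \emph{sub-polynomial} multiple of $g(x_k)$, and likewise $g(0)+g(y_k)\le(1+h(y_k))g(x_k)/x_k^\gamma=o(\epsilon(x_k)g(x_k))$ since $\epsilon$ is sub-polynomial. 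So the absolute gap is still $\Omega(\epsilon(x_k)g(x_k))$, the total sum is $O(h(y_k)g(x_k))$, the relative gap is $\Omega(\epsilon(x_k)/h(y_k))$ — sub-polynomial — and the $\Omega(n)=\Omega(\epsilon(x_k)x_k^\gamma)$ lower bound from $\textsf{INDEX}$ goes through. In short: you need to route the slow-dropping bound through $g(y_k)$ so you can cash in $x_k^\gamma g(y_k)<g(x_k)$; bounding $g(0)$ directly against $g(x_k)$ loses the decisive $x_k^{-\gamma}$ factor.
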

\begin{proof}
Still use the same reduction to INDEX with the $g(0)=0$ case. Now the $g$-SUM results differs by $(n-|A|)g(0)$. By slow dropping, there exists a sub-polynomial function $h(x)$ such that $g(0)\le h(y)g(y)$. Therefore, all the analyses remain the same.
\end{proof}

\subsection{Upper Bound}
The algorithm should still work by the following 2 lemmas.
\begin{lemma}\label{lem: g heavy implies F2 heavy g00}
Let $g\in\G_0$ be an $\bbS$-normal function that is slow-jumping and slow-dropping.
There exists a sub-polynomial function $h$ such that for any $D\in\calD(n,m)$ with frequencies $v_1,v_2,\ldots,v_n$, if $g(v_i)\geq \lambda \sum_j g(v_j)$ then $v_i^2\geq \frac{\lambda}{h(|v_i|)}\sum_{|v_j|<|v_i|} v_j^2$.
\end{lemma}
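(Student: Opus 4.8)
The plan is to reprove Lemma~\ref{lem: g heavy implies F2 heavy} essentially verbatim, the only new point being that the coordinates with $v_j=0$, each of which now contributes $g(0)=1$ to $\sum_j g(v_j)$, cause no trouble. Indeed, the conclusion only bounds $\sum_{|v_j|<|v_i|}v_j^2$, to which every zero coordinate contributes $0$, while on the side of the hypothesis the extra mass from zeros only makes ``$g(v_i)\ge\lambda\sum_j g(v_j)$'' harder to satisfy; so those terms may be freely discarded.

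Concretely, I would first apply Proposition~\ref{prop: slow jumping to sub-polynomial} to the slow-jumping function $g$ (its proof only inspects the values of $g$ on the positive integers, so the normalization $g(0)=1$ is irrelevant) to get a nondecreasing sub-polynomial $h$ with $g(y)\le\lfloor y/x\rfloor^{2} h(\lfloor y/x\rfloor x)g(x)\le (y/x)^{2} h(y)g(x)$ for all integers $1\le x<y$, the second inequality using $\lfloor y/x\rfloor x\le y$ and that $h$ is nondecreasing. Fix a stream with frequencies $v_1,\dots,v_n$ and an index $i$ with $g(v_i)\ge\lambda\sum_j g(v_j)$. For each $j$ with $0<|v_j|<|v_i|$, the displayed inequality applied with $x=|v_j|$ and $y=|v_i|$ gives $g(v_j)\ge \frac{g(v_i)}{h(|v_i|)}\bigl(\frac{v_j}{v_i}\bigr)^{2}$; for each $j$ with $v_j=0$ this same inequality holds trivially, since its right-hand side is $0$ and $g(0)=1>0$. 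Dropping from $\sum_j g(v_j)$ the terms with $|v_j|\ge|v_i|$ (all values of $g$ are positive on $\G_0$), we obtain
\[
g(v_i)\;\ge\;\lambda\sum_{|v_j|<|v_i|}g(v_j)\;\ge\;\frac{\lambda\,g(v_i)}{h(|v_i|)}\sum_{|v_j|<|v_i|}\Bigl(\frac{v_j}{v_i}\Bigr)^{2}.
\]
Dividing by $g(v_i)>0$ and multiplying by $v_i^{2}$ gives $v_i^{2}\ge\frac{\lambda}{h(|v_i|)}\sum_{|v_j|<|v_i|}v_j^{2}$, which is the claim.

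There is no genuine obstacle here; the one place to be careful is the bookkeeping for $v_j=0$, so I would spell out explicitly that such coordinates drop out of the right-hand side of the conclusion and only inflate the left-hand side of the hypothesis, so that the $g(0)=1$ normalization changes nothing. As in the $g(0)=0$ version, slow-dropping and $\bbS$-normality play no role in this particular lemma; they appear in the hypothesis only because the downstream heavy-hitters algorithm for $\G_0$ uses them elsewhere.
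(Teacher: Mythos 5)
Your proof is correct and coincides with the paper's intent: the paper simply says ``the proof is the same as in the $g(0)=0$ case with additional care of $g(0)=1$,'' and you have supplied exactly that care, observing that coordinates with $v_j=0$ contribute nothing to the right-hand side of the conclusion and only strengthen the hypothesis. Your parenthetical remark that slow-dropping and $\bbS$-normality are not actually used in this lemma (and appear only because the downstream algorithm needs them) is also accurate.
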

\begin{proof}
The proof is the same as in the $g(0)=0$ case with additional care of $g(0)=1$.
\end{proof}

\begin{lemma}\label{lem: predictability booster of g00}
If $g\in\G_0$ is slow-dropping, slow-jumping, and predictable, then there is a sub-polynomial function $h$ such that for all $y\in[r_\epsilon(x)+1,x/h(x))$ we have $g(y)\geq g(x)/h(x)$.
\end{lemma}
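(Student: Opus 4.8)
The plan is to transcribe the proof of Lemma~\ref{lem: predictability booster} almost verbatim, since replacing $g(0)=0$ by $g(0)=1$ affects essentially nothing: the notions of slow-jumping, slow-dropping and predictability are unchanged, and the only genuine adjustment is to verify that $g$ is never evaluated at $0$ in the argument. First I would note that the $\G_0$ analogues of Propositions~\ref{prop: slow dropping to sub-polynomial}, \ref{prop: slow jumping to sub-polynomial} and \ref{prop: predictable to sub-polynomial} hold with identical proofs (the value $g(0)=1$ only appears as one more harmless lower bound $g(0)/h''(y)=1/h''(y)$ among those produced by slow-dropping). So there is a non-decreasing sub-polynomial $h'\ge 4$ that simultaneously governs the slow-jumping of $g$ and satisfies: for every $y\in[1,x/h'(x))$, if $x+y\notin\delta_{\epsilon/2}(g,x)$ then $g(y)\ge g(x)/h'(x)$; and a non-decreasing sub-polynomial $h''\ge 1$ with $g(y')\ge g(r')/h''(y')$ whenever $r'\le y'$.

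I will output $h:=4(h')^2h''$, which dominates $4h'$. With this choice, if $r_\epsilon(x)\ge x/(4h'(x))$ then $[r_\epsilon(x)+1,x/h(x))$ is empty and there is nothing to prove. So assume $r_\epsilon(x)<x/(4h'(x))$; then $x':=x-r_\epsilon(x)-1\ge x-x/(4h'(x))\ge x/2$ for all large $x$, and in particular $x'\ge 1$. This is precisely the place where $g(0)=1$ could have interfered, and it does not: under the standing assumption we only ever evaluate $g$ at points $\ge 1$.

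Now run the two-case analysis of Lemma~\ref{lem: predictability booster}. By the definition of $r_\epsilon(x)$, either $|g(x)-g(x+r_\epsilon(x)+1)|>\epsilon g(x)$ or $|g(x)-g(x-r_\epsilon(x)-1)|>\epsilon g(x)$. In the first case, predictability at $x$ applied with $y=r_\epsilon(x)+1\in[1,x/h'(x))$ (and $x+y\notin\delta_{\epsilon/2}(g,x)$) gives $g(r_\epsilon(x)+1)\ge g(x)/h'(x)$. In the second case, set $x'=x-r_\epsilon(x)-1\ge x/2$. One checks $x\notin\delta_{\epsilon/2}(g,x')$: if $g(x')\ge 2g(x)$ then $|g(x')-g(x)|\ge g(x')/2>(\epsilon/2)g(x')$, and otherwise $|g(x')-g(x)|\ge\epsilon g(x)\ge(\epsilon/2)g(x')$. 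Since also $r_\epsilon(x)+1\le x/(3h'(x))\le x'/h'(x')$ for large $x$ (using $x'\ge x/2$ and that $h'$ is non-decreasing), predictability at $x'$ yields $g(r_\epsilon(x)+1)=g(x-x')\ge g(x')/h'(x')\ge g(x)/(4h'(x)^2)$, the last inequality from slow-jumping and $x'\ge x/2$. Either way, $g(r_\epsilon(x)+1)\ge g(x)/(4h'(x)^2)$.

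Finally, for any $y\in[r_\epsilon(x)+1,x/h(x))$, slow-dropping gives $g(y)\ge g(r_\epsilon(x)+1)/h''(y)\ge g(x)/(4h'(x)^2h''(x))=g(x)/h(x)$, as desired. The only real work beyond copying the $g(0)=0$ proof is the bookkeeping in the second paragraph: ensuring the ``trivial regime'' is genuinely absorbed by the choice of $h$, so that in the remaining regime $x'$ stays bounded below by $x/2$ and $g$ is never queried at $0$. That is the step I expect to require the most care, though it remains routine.
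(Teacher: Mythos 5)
Your proof is correct and follows exactly the approach the paper intends: the paper's own proof of this lemma is literally "same as the $g(0)=0$ case with additional care of $g(0)=1$," and you transcribe the proof of Lemma~\ref{lem: predictability booster}, supplying the (implicit) bookkeeping that the interval is vacuous unless $r_\epsilon(x)+1 < x/(4h'(x))$, which both keeps $x'\ge x/2$ and keeps $g$ from being queried at $0$.
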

\begin{proof}
The proof is the same as in the $g(0)=0$ case with additional care of $g(0)=1$.
\end{proof}

\section{Lower bound for $\disjind$ }\label{app:complexity of disjind}
\begin{theorem}\label{thm: communication complexity of disjind}
The randomized one-way communication complexity of $\disjind(n,t)$ is $\Omega(n/t\log{n})$.
\end{theorem}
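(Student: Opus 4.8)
The plan is to reduce from the set-disjointness problem $\disj(n,t+1)$, whose randomized (unrestricted) communication complexity is $\Omega(n/(t+1))=\Omega(n/t)$, to the one-way problem $\disjind(n,t)$. The key structural fact is that $\disjind(n,t)$ is exactly $\disj(n,t+1)$ restricted to instances in which the last player holds a singleton, together with the observation that in any legal $\disj(n,t+1)$ instance the unique common element, if it exists, must lie in that last player's set.

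First I would amplify: given a one-way protocol $\Pi$ for $\disjind(n,t)$ that uses $c$ bits and succeeds with probability $2/3$, running $\Theta(\log n)$ independent copies and taking a majority vote gives a one-way protocol $\Pi'$ for $\disjind(n,t)$ using $O(c\log n)$ bits with error at most $1/(3n)$. Then I would build a one-way protocol for $\disj(n,t+1)$: on input $A_1,\dots,A_{t+1}\subseteq[n]$, players $1,\dots,t$ run their parts of $\Pi'$ using $A_1,\dots,A_t$ as their $\disjind$-inputs and forward the one-way message, so nothing about $A_{t+1}$ is ever communicated; when player $t+1$ receives the resulting transcript $\tau$, he applies the final (local) step of $\Pi'$ to each pair $(\tau,\{j\})$ with $j\in A_{t+1}$, obtaining at most $|A_{t+1}|\le n$ answers, and outputs ``intersecting'' iff at least one of them is ``intersecting''. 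Since player $t+1$'s extra work is local, this protocol uses only $O(c\log n)$ bits of communication.

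Next comes correctness, which is where the disjointness promise is used. The crucial claim is that for every $j\in A_{t+1}$ the instance $(A_1,\dots,A_t,\{j\})$ is a legal $\disjind(n,t)$ instance. If $A_1,\dots,A_{t+1}$ are pairwise disjoint, then so are $A_1,\dots,A_t,\{j\}$ for each $j\in A_{t+1}$, so each run of $\Pi'$ reports ``disjoint'' except with probability $1/(3n)$, and a union bound over the $\le n$ runs shows the protocol outputs ``disjoint'' with probability $\ge 2/3$. If instead there is a unique $j^\ast$ contained in all of $A_1,\dots,A_{t+1}$ and the sets are otherwise pairwise disjoint, then $j^\ast\in A_{t+1}$: the pair $(\tau,\{j^\ast\})$ corresponds to a legal ``intersecting'' instance (all $t+1$ of $A_1,\dots,A_t,\{j^\ast\}$ contain $j^\ast$ and are otherwise pairwise disjoint), while for $j\in A_{t+1}\setminus\{j^\ast\}$ the ``otherwise disjoint'' part of the promise forces $j\notin A_i$ for all $i\le t$, so $(\tau,\{j\})$ is a legal ``disjoint'' instance. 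Hence with probability $\ge 1-1/(3n)$ the run for $j^\ast$ returns ``intersecting'' and the protocol outputs ``intersecting''. In both cases we solve $\disj(n,t+1)$ with probability $\ge 2/3$ using $O(c\log n)$ bits, and since this communication must be $\Omega(n/(t+1))$ we get $c=\Omega(n/(t\log n))$.

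The step I expect to be the main obstacle — in fact essentially the only subtle point — is this promise-preservation argument: $\Pi'$ is only guaranteed to behave well on legal $\disjind$ instances, so every pair $(\tau,\{j\})$ fed to its final step must be legal. Restricting the simulated singleton to the elements of $A_{t+1}$ is exactly what guarantees legality in both the YES and NO cases; this is also why one reduces from $\disj(n,t+1)$ rather than from $\disj(n,t)$ with one of the $t$ players doubling as the index player, which would produce illegal instances in the disjoint case. The extra $\log n$ factor, relative to the $\Omega(n/t)$ bound for $\disj$ itself, is precisely the amplification needed so that the union bound over the $|A_{t+1}|\le n$ simulated runs survives.
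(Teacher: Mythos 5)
Your approach matches the paper's: reduce from $\disj(n,t+1)$, amplify the $\disjind(n,t)$ protocol to error $O(1/n)$ at a $\log n$ cost, and exploit the one-way structure so that player $t+1$ can locally re-run the final step once for each $j\in A_{t+1}$, simulating $\{j\}$ as the index player's input. The $\Omega(n/t)$ bound for $\disj(n,t+1)$ then yields $c=\Omega(n/(t\log n))$, exactly as in the paper.

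However, the claim you single out as ``essentially the only subtle point'' and then settle in your favor --- that \emph{every} pair $(\tau,\{j\})$ with $j\in A_{t+1}$ is a legal $\disjind(n,t)$ instance --- is false in the intersecting case. Suppose $j^\ast$ is the common element and take $j\in A_{t+1}\setminus\{j^\ast\}$. In the simulated instance $(A_1,\dots,A_t,\{j\})$ the sets $A_1,\dots,A_t$ still all contain $j^\ast$, so they are not pairwise disjoint; and $\{j\}$ does not contain $j^\ast$, so there is no single element common to all $t+1$ sets either. Neither branch of the $\disjind$ promise holds, so this instance is outside the promise and the amplified error guarantee on $\Pi'$ says nothing about its output. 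Your argument that it is a ``legal disjoint instance'' only checks that $\{j\}$ is disjoint from each $A_i$; it overlooks that the $A_i$'s are not disjoint from one another.

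The reduction still works, but via a weaker, one-sided argument rather than blanket legality. In the disjoint case all simulated instances genuinely are legal disjoint instances, and your union bound over the $\le n$ runs is fine. In the intersecting case you only need the run on $j^\ast$ to return ``intersecting'' with high probability --- which it does, since that single instance is legal --- and the runs on $j\ne j^\ast$ may return anything without harming correctness, because you declare ``intersecting'' as soon as any run does. Replacing the blanket legality claim with this observation closes the gap; the rest of your argument, and the final $\Omega(n/(t\log n))$ bound, then go through unchanged. (The paper's own write-up is also terse here, implicitly asserting a $2/3$ success probability for every $i\in A_{t+1}$, so the same clarification applies to it.)
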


\begin{proof}
We give a reduction from $\disj(n,t+1)$.  
Let $\calP$ be any randomized protocol for $\disjind(n,t)$.
Run in parallel $\ell=\lceil 96\log n\rceil$ independent copies of $\calP$ through the first $t$ players. 
This produces $\ell$ transcripts.
Player $t+1$ now takes the each of the transcripts and computes the final value of each once for every element he holds, as if it was the only element he held.
No communication is need for this part because $\calP$ is a one-way protocol and $t+1$ is the final player.

Player $t+1$ then takes the majority vote among the independent copies of $\calP$ for each element.
If any vote signals an intersection then he reports intersection; otherwise he reports disjoint.

If every one of the $|A_{t+1}|\leq n$ majorities is correct then the final player's report is correct.
Let $X_i$, for $i\in A_{t+1}$, be the number among the $\ell$ copies of $\calP$ with the correct outcome when the final player completes protocol using $i\in A_{t+1}$. 
Then $X_i$ is Binomially distributed from $\ell$ trials with success probability at least 2/3.
Using a Chernoff Bound we find
\begin{align*}
P(X_i\leq \ell/2) 
&= P\left(X_i \leq (1-\frac{1}{4})\frac{2}{3}\ell\right) 
\leq \exp\left\{\frac{-1}{32}\mu\right\}\\
&\leq \exp\left\{\frac{-1}{32}\cdot\frac{2}{3}\ell\right\}
\leq \frac{1}{n^2}.
\end{align*}
Thus, with probability at least $1-\frac{1}{n}$ every majority vote is correct, hence our $\disj(n,t)$ protocol is correct for $n\geq 3$.

Let $T_1,T_2,\ldots,T_\ell$ be the transcripts.
The total cost of this $\disj$ protocol is $\sum_{i=1}^\ell|T_i|$.
Since $\disj(n,t+1)$ requires $\Omega(n/t)$ bits of communication, at least one of the protocols has length $\Omega(n/t\ell)=\Omega(n/t\log{n})$, hence $|\calP|= \Omega(n/t\log{n})$ bits of communication.
\end{proof}

\section{Lower bound for {\sf ShortLinearCombination}}\label{app:communication}
\subsection{The 3-frequency Distinguishing Problem}
We first define the {\sf ShortLinearCombination} problem for three frequencies, which we call $(a,b,c)$-DIST for short.
\begin{definition}
A stream $S$ with frequency vector $v$ is given to a one-pass streaming algorithm. $v$ is promised to be from $V_0=\{-a, a, -b, b, 0\}^n$ or $V_1=\{$EMB$(v, i, e)\mid v\in V_0, i\in [n], e\in\{-c,c\}\}$, where $a,b,c>0$ and \[
\text{EMB}(v, i, e)_j = \begin{cases}
v_i & i\neq j\\
e & i=j
\end{cases}
\]
(in other words, $V_1$ is given by replacing a coordinate of a vector $v$ from $V_0$ with $-c$ or $c$). The $(a,b,c)$-DIST problem is for the algorithm to distinguish whether $v\in V_0$ or $v\in V_1$.
\end{definition}

To study this problem, we first consider the special case where $c=\gcd(a,b)=1$. 
\begin{proposition}
If $\gcd(a,b) = 1$, then any randomized algorithm solving $(a,b,1)$-DIST with probability at least $2/3$ requires $\Omega(n/\max(a,b)^2)$ bits of memory.
\end{proposition}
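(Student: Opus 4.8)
The plan is to reduce from the $t$-player set-disjointness problem $\disj(n,t)$, whose unrestricted randomized communication complexity is $\Omega(n/t)$, with the number of players $t$ tied to the size of the smallest integer combination of $a$ and $b$ that equals $1$.

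The first step is arithmetic. Since $\gcd(a,b)=1$ there are integers $q_1,q_2$ with $q_1a+q_2b=1$; starting from any B\'ezout pair and adding multiples of $(b,-a)$ we may assume $q_1\in(-b/2,b/2]$, whence $|q_2|=|1-q_1a|/b\le a/2+1$ and so $|q_1|+|q_2|=O(\max(a,b))$. Because $1\notin\{a,b\}$ we have $a,b\ge 2$, so neither coefficient vanishes, and (as $a,b>0$ while the sum is positive) they must have opposite signs; relabel so that $q_1>0>q_2$ and set $t:=q_1+|q_2|=O(\max(a,b))$.

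The reduction itself: given an instance $A_1,\dots,A_t\subseteq[n]$ of $\disj(n,t)$, the players jointly build a turnstile stream on domain $[n]$ in which players $1,\dots,q_1$ each insert $a$ copies of every element of their set and players $q_1+1,\dots,t$ each delete $b$ copies of every element of their set, so that the frequency of coordinate $j$ equals $\sum_{i:\,j\in A_i}c_i$ with $c_i=a$ for $i\le q_1$ and $c_i=-b$ otherwise. Under the disjointness promise each $j$ lies in at most one $A_i$, so every frequency is in $\{0,a,-b\}$ and $v\in V_0$; in the intersecting case the unique common element $j^{*}$ has frequency $\sum_{i=1}^{t}c_i=q_1a+q_2b=1$ while every other coordinate is still in $\{0,a,-b\}$, so $v=\mathrm{EMB}(v',j^{*},1)$ for some $v'\in V_0$ and $v\in V_1$. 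Hence any algorithm solving $(a,b,1)$-DIST decides $\disj(n,t)$.

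Finally, the communication accounting: a one-pass streaming algorithm using $s$ bits gives a protocol in which player $1$ runs it on its portion and forwards the $s$-bit memory to player $2$, and so on, for total communication $(t-1)s=O(ts)$, with the last player reading off the answer; a $2/3$-correct algorithm yields a bounded-error protocol, so $ts=\Omega(n/t)$ and $s=\Omega(n/t^{2})=\Omega(n/\max(a,b)^{2})$. I expect the only delicate point to be the B\'ezout bound together with its degenerate cases (verifying $q_i\ne 0$, the opposite-sign property, and $|q_1|+|q_2|=O(\max(a,b))$ using $a,b\ge 2$); once $t=\Theta(\max(a,b))$ is in hand the rest is a routine promise-preserving encoding.
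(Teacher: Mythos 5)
Your proof is correct, but it takes a genuinely different route from the paper's. The paper proves the bound via a direct information-complexity argument in the spirit of Bar-Yossef et al.: it defines a collapsing distribution for the two-party version of $(a,b,1)$-DIST, applies the direct-sum/CIC machinery, and then carries out a careful Hellinger-distance calculation (Cauchy--Schwarz, triangle inequality, a grouping step tied to the B\'ezout coefficients, and the Pythagorean lemma) to show each primitive costs $\Omega(1/a^2)$. You instead black-box the $\Omega(n/t)$ lower bound for multi-party $\disj(n,t)$ and give a promise-preserving reduction: choose a B\'ezout pair $q_1>0>q_2$ with $t=q_1+|q_2|=O(\max(a,b))$, let the first $q_1$ players stream $+a$ copies and the rest stream $-b$ copies of their elements, so the disjoint case lands in $V_0$ and the intersecting case has the common coordinate at value exactly $q_1a+q_2b=1$, hence in $V_1$; the standard one-pass-to-$t$-party-communication accounting then gives $s=\Omega(n/t^2)=\Omega(n/\max(a,b)^2)$. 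Your handling of the edge cases ($a,b\ge 2$ since $1\notin\{a,b\}$, nonvanishing coefficients, opposite signs, and the $|q_1|+|q_2|$ bound) is sound, and the use of negative updates is fine since the promise sets $V_0,V_1$ already contain negative frequencies, so the model is turnstile. The tradeoff between the two approaches: yours is considerably shorter and conceptually more transparent, and in fact yields the sharper $\Omega(n/q^2)$ with $q$ the minimal-magnitude coefficient (since the paper's auxiliary lemma shows $|p|\le|q|$, so $t\le 2|q|$), matching the paper's Theorem~\ref{thm:theorem pq}; the paper's direct IC derivation is self-contained, does not import the multi-party $\disj$ bound as a primitive, and sets up the Hellinger machinery that the paper reuses verbatim to prove the general $(a,b,c)$-DIST and $(u,d)$-DIST statements.
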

\begin{proof}
We use the same notation as in \cite{bar2002information}, and in particular
refer to that work for background on information complexity. 
Without loss of generality, we assume $a>b$.
Consider a communication problem in which Alice is given a vector $v_1$ and Bob is given a vector $v_2$, for which $v = v_1 - v_2 \in V_0\cup V_1$. The players are asked to solve the $(a,b,1)$-DIST on $v$. This problem is OR-decomposable with primitive DIST$(x, y)$, where DIST$(x,y)=1$ if and only if $|x-y| = 1$.  Consider a randomized protocol $\Pi$ which succeeds with probability at least $2/3$ in solving this problem, and suppose $\Pi$ has the minimum communication cost of all such protocols. We now call $(a,b,1)$-DIST $f$ for short. The following is well known (see, e.g., \cite{bar2002information}), 
\[
R_{2/3}(f) \ge IC_{\mu, 2/3}(f),
\]
where $R_{2/3}(f)$ is the communication complexity of the best randomized protocol (with error probability at most $2/3$) on the worst case input, $\mu$ is any distribution over the input space, and $IC_\mu$ is the information complexity of the best protocol over the input distribution $\mu$. We now construct an input distribution $\mu$ as follows. Let $D$ be chosen uniformly at random over $\{$Alice, Bob$\}$, $E$ is chosen uniformly from $\{a, a+1, a+2, \ldots, m-a\}$, where $m\gg a$ is a sufficiently large integer for which the coordinates of the vector jointly created by Alice and Bob will never exceed $m$. Denote by $\xi$ the joint distribution of $(D, E)$. Based on the value of $D$ and $E$, the input distribution $\gamma$ of Alice and Bob is decided as follows: if $D$ chooses Alice, then the input $X$ of Alice is chosen uniformly at random from the multiset $\{E-a, E-b, E, E, E, E, E+a, E+b\}$, and Bob is given $E$; if $D$ chooses Bob, $\gamma$ just swaps the role of Alice and Bob.
Clearly, $\gamma$ is a product distribution conditioned on $(D, E)$. We have that $\zeta = (\gamma, \xi)$ is a mixture of product distributions. Let $\eta = \zeta^n$, and let $\mu$ be the distribution obtained by marginalizing $(\bf{D}, \bf{E})$ from $\eta$. Since every joint vector created above is from $V_0$, $\mu$ is a {\it collapsing distribution} for $f$ (see \cite{bar2002information}). Then by Lemma 5.1 and Lemma 5.5 of \cite{bar2002information}, we have the following direct sum result:
\[
I(\mathbf{X}, \mathbf{Y}; \Pi(\mathbf{X}, \mathbf{Y})\mid \mathbf{D}, \mathbf{E})\ge n \cdot \text{CIC}_{\zeta, 2/3}(\text{DIST}).
\]
We now expand the terms of CIC$_{\zeta, 2/3}($DIST$)$. Let $[a, m-a] = \{a, a+1, a+2, \ldots, m-a\}$, $U_e$ be the random number from the uniform distribution on multiset $K\equiv\{e-a, e+a, e, e, e, e, e-b, e+b\}$, and $\Psi$ be the transcript of a communication-optimal randomized protocol for the primitive function DIST. Let $\beta_{e,y}$ be the distribution of $\Psi(e,y)$ and $\beta_e = \frac{1}{8}\sum_{y\in K}\beta_{e,y}$ be the average distribution. The following inequalities follow from Lemma 2.45 and Lemma 2.52 in Bar-Yossef's PhD thesis \cite{bar2002complexity}, the Cauchy-Schwarz inequality, and the triangle inequality.

\begin{align}
\text{CIC}&_{\zeta, 2/3}(\text{DIST})  \nonumber \\
=&\frac{1}{2|m-2a|}\sum_{e\in[a, m-a]} [I(U_e; \Psi(U_e, e))\nonumber \\
 &\qquad\qquad + I(U_e; \Psi(e, U_e))]\nonumber \\
=&\sum_{\substack{e\in[a, m-a]\\ e_j\in K}} \frac{Pr[U_e = e_j]}{2(m-2a)} [ KL(\beta_{e, e_j} || \beta_e) \nonumber \\
&\qquad\qquad+ KL(\beta_{e_j, e} || \beta_e)]\nonumber \\
&\text{(Bar-Yossef's Thesis Proposition 2.45)} \nonumber
\\
\ge& \frac{1}{8\ln 2(m-2a)}\sum_{\substack{e\in[a, m-a]\\j\in \{-a, -b, 0,..0, a, b\}}} [h^2(\beta_{e+j, e}, \beta_e)\nonumber \\
&\qquad\qquad + h^2(\beta_{e, e+j}, \beta_e)] 
\nonumber \\
&\text{(Cauchy-Schwarz)} \nonumber
\\
\ge &\frac{1}{16\ln 2(m-2a)}\sum_{\substack{e\in[a, m-a]}} [\left(h(\beta_{e, e}, \beta_e) + h(\beta_{e, e+a}, \beta_e)\right)^2\nonumber\\
& + (h(\beta_{e-a, e}, \beta_e) + h(\beta_{e, e}, \beta_e))^2\nonumber \\
&+\left(h(\beta_{e, e}, \beta_e) + h(\beta_{e, e+b}, \beta_e)\right)^2\nonumber\\ 
&+ (h(\beta_{e-b, e}, \beta_e) + h(\beta_{e, e}, \beta_e))^2]\nonumber\\
&\text{(Triangle inequality and Cauchy-Schwarz)} \nonumber
\\
\ge& \frac{C}{m}\sum_{e\in[2a, m-2a]} h^2(\beta_{e,e}, \beta_{e+a, e+a}) + h^2(\beta_{e,e}, \beta_{e+b, e+b}),
\label{eqn:rearrange hilinger}
\end{align}
where $C$ is a constant. 
Now we are able to group the terms. By the Euclidean algorithm, there exist integers $p$ and $q$ such that $pa+qb=1$. Let $q$ be such an integer with smallest absolute value. We then have $a/b\le |q|\le a$ and $|p|\le |q|$. Therefore, for any $i \in [4, m/4a - 4]$, we can select up to $|p|+|q|$ terms from the range $[4ai-2a, 4ai+2a]$ to group in the above equation. Without loss of generality, assume $y>0$. First, using the Cauchy-Schwarz inequality, we have
\begin{align*}
\text{CIC}_{\zeta, 2/3}(\text{DIST}) \ge &\frac{C}{4ma}\sum_{i}\biggl(\sum_{e\in[4ai-2a, 4ai+2a]}\\ &h(\beta_{e,e}, \beta_{e+a, e+a}) + h(\beta_{e,e}, \beta_{e-a, e-a})\\ 
& + h(\beta_{e,e}, \beta_{e+b, e+b} ) + h(\beta_{e,e}, \beta_{e-b, e-b})\biggr)^2.
\end{align*}
Next, group the terms using the triangle inequality such that whenever combining an $a$-term $h(\beta_{e,e}, \beta_{e^\prime+a, e^\prime + a})$, follow this by combining $\lfloor q/p\rfloor$ $b$-terms $h(\beta_{e,e}, \beta_{e^{\prime\prime}-b, e^{\prime\prime}-b})$. Therefore, the combined term $e^\prime$ is always guaranteed to be in $[4ai-2a, 4ai+2a]$. There might be a concern that a term is combined twice, namely, that there exist $|q_1| \le |q_2|\le|q| $, $|p_1|\le |p_2| \le |p|$ and $(q_1, p_1)\neq (q_2, p_2)$ such that  $p_1 a + q_1 b = p_2 a + q_2 b$. But this is impossible since gcd$(a,b)=1$ and $(p-(p_2-p_1))a+(q-(q_2-q_1))b=1$ contradicts that $q$ has the smallest absolute value. Therefore, we are left with,
\begin{align}
\text{CIC}_{\zeta, 2/3}(\text{DIST}) \ge &\frac{C^\prime}{ma}\sum_{\substack{i\in[4, m/4a-4]\\ e=4ai}}h^2(\beta_{e, e}, \beta_{e+1, e+1}).
\end{align}
Now invoke the Pythagorean lemma of \cite{bar2002information}, stating that $h^2(\beta_{e,e},\beta_{e+1,e+1})\ge1/2(h^2(\beta_{e,e}, \beta_{e+1,e}) + h^2(\beta_{e+1,e}, \beta_{e+1,e+1}))$, as well as the correctness of the protocol, stating that the $h^2(\beta_{e, e}, \beta_{e, e+1})$ terms can each be lower bounded by a positive constant. Thus,
\begin{align}
\text{CIC}_{\zeta, 2/3}(\text{DIST}) \ge &\frac{C^{\prime\prime}}{a^2}.
\end{align}
\end{proof}

The above proof also makes use of the following lemma.
\begin{lemma}
Let $a,b$ be two integers such that $\gcd(a,b) = 1$ and $b<a$. Then there exist integers $x, y$ such that $ax+by =1$. Let $y$ be such an integer with smallest absolute value. Then $b/a\le|y|\le a$ and $|x|\le|y|$.
\end{lemma}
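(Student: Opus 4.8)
The plan is to derive this from Bézout's identity by a short, elementary argument; the only mild subtlety is to keep track of the standing hypothesis that $a,b$ are positive with $b<a$, so that $a\ge 2$, which rules out degenerate cases. First I would invoke Bézout's identity (equivalently, run the extended Euclidean algorithm) to get integers $x,y$ with $ax+by=1$, which is immediate from $\gcd(a,b)=1$. Second, I would describe the full solution set: if $(x,y)$ and $(x',y')$ both satisfy the equation then $a(x-x')=-b(y-y')$, and since $\gcd(a,b)=1$ this forces $a\mid(y-y')$; conversely $(x-bt,\,y+at)$ is a solution for every $t\in\Z$. Hence the admissible values of $y$ form a single residue class modulo $a$, i.e.\ an arithmetic progression of common difference $a$.

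With this in hand the three bounds fall out. For the upper bound $|y|\le a$: the progression of admissible $y$-values contains a representative lying in $(-a/2,a/2]$, so the element $y^*$ of smallest absolute value satisfies $|y^*|\le a/2\le a$. For the lower bound $b/a\le|y|$: if $y^*=0$ then $ax=1$ would force $a=1$, contradicting $a\ge 2$, so $|y^*|\ge 1$; and since $b<a$ gives $b/a<1$, we get $b/a\le|y^*|$. For $|x|\le|y|$: from $ax=1-by$ and the triangle inequality, $a|x|\le 1+b|y|\le 1+(a-1)|y|$, hence $a(|x|-|y|)\le 1-|y|\le 0$ using $|y|\ge 1$, and since $a>0$ this yields $|x|\le|y|$. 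Note this last step actually bounds $|x|$ for \emph{every} solution, not only the minimal one, so it does not even use the choice of $y^*$.

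There is no real obstacle: each step is a one-line computation and the argument is self-contained. The only points needing a little care are that ``the integer of smallest absolute value'' is well defined --- it is, since $|y|$ ranges over a nonempty set of nonnegative integers --- and that one genuinely uses $a\ge 2$ (equivalently $1\le b<a$), since the bald hypothesis ``$b<a$'' alone admits the degenerate $a=1,b=0$ for which $|x|\le|y|$ fails.
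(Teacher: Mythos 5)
Your proof is correct, and it is essentially the same argument as the paper's for the inequality $|x|\le|y|$ (both rewrite $x=(1-by)/a$ and use $b<a$, $|y|\ge 1$); you simply spell out the triangle-inequality step that the paper leaves implicit. For the upper bound $|y|\le a$ you take a genuinely different route: the paper argues by contradiction, assuming $|y|>a$ and reducing $y$ by Euclidean division to produce a smaller solution, whereas you parametrize the full solution set $\{(x-bt,\,y+at):t\in\Z\}$ and observe that the admissible $y$-values form a single residue class mod $a$, which contains a representative in $(-a/2,a/2]$. Both arguments are elementary and short; yours gives the slightly sharper $|y|\le a/2$ for free and makes the structure of the solution set explicit, which is helpful intuition for the grouping argument in the surrounding theorem.

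For the lower bound, note a discrepancy worth flagging: the paper's own proof sketch asserts ``it is clear that $|y|\ge a/b$,'' which contradicts the $b/a\le|y|$ stated in the lemma and is in fact false in general (take $a=3$, $b=2$: the minimal $y$ is $-1$, yet $a/b=3/2>1$). Your proof establishes $b/a\le|y|$ exactly as stated, via the clean observation that $y\ne 0$ (else $ax=1$ forces $a=1$) together with $b/a<1$. You also correctly identify the standing assumption $a\ge 2$ (equivalently $b\ge 1$) needed to rule out $y=0$ and the degenerate $a=1,b=0$ case; in the paper's context $a,b$ are positive frequency magnitudes, so this is satisfied, but it is a genuine hypothesis that the lemma statement leaves implicit. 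Overall your write-up is more careful than the paper's and fixes its lower-bound slip.
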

\begin{proof}
To see that $|y|\le a$, w.l.o.g., we assume $y>a$. Then $y=qa+r$, where $r<a<y$ and $q<y$. Thus $ax+b(qa+r)=1$, and $(qb+x) a + rb = 1$ contradicts that $y$ has the smallest absolute value. It is clear that $|y|\ge a/b$. It remains to show that $|x|\le|y|$. This holds since $x=(1-by)/a$ and $b<a$.
\end{proof}

Now we look at a more general case.
\begin{theorem}
\label{thm:theorem pq}
Let $a,b,c$ be positive integers such that $c\neq a$ and $c\neq b$. Suppose there exist integers $p, q$ such that $ap+bq =c$. Let $q$ be such an integer with smallest absolute value. Then any randomized algorithm solving $(a,b,c)$-DIST with probability at least $2/3$ requires $\Omega(n/q^2)$ bits of memory.
\end{theorem}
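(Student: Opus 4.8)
The plan is to re-run the information-complexity argument used for the special case $\gcd(a,b)=1,\ c=1$, with the constant right-hand side $1$ replaced by $c$ and a correspondingly longer telescoping chain in the ``grouping'' step. First I would reduce to the coprime case. Let $g=\gcd(a,b)$; since $ap+bq=c$ has a solution, $g\mid c$, and replacing each update ``$+1$ to coordinate $i$'' by ``$+g$ to coordinate $i$'' turns an instance of $(a/g,b/g,c/g)$-DIST into one of $(a,b,c)$-DIST on the same domain of size $n$. As the Bézout equation $(a/g)p+(b/g)q=c/g$ is literally $ap+bq=c$, the minimal coefficient $q$ is unchanged, so it suffices to prove the theorem assuming $\gcd(a,b)=1$, which I do from now on.

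Next, set up the framework exactly as in the preceding proof: $(a,b,c)$-DIST is OR-decomposable over the $n$ coordinates with primitive $\mathrm{DIST}_c(x,y)=\mathbf{1}(|x-y|=c)$, and one uses the same mixture-of-products input distribution $\zeta$ (a uniform center $e$ in a long interval and, conditioned on $e$, the player's value uniform over the multiset $\{e-a,e-b,e,e,e,e,e+a,e+b\}$), whose coordinate-wise marginal is a collapsing distribution supported on $V_0$. Lemmas~5.1 and~5.5 of~\cite{bar2002information} give $R_{2/3}((a,b,c)\text{-DIST})\ge n\cdot\mathrm{CIC}_{\zeta,2/3}(\mathrm{DIST}_c)$, and applying Proposition~2.45 of~\cite{bar2002complexity}, Cauchy--Schwarz, and the triangle inequality verbatim reduces the task to lower bounding
\[
\mathrm{CIC}_{\zeta,2/3}(\mathrm{DIST}_c)\ \ge\ \frac{C}{m}\sum_{e}\big(h^2(\beta_{e,e},\beta_{e+a,e+a})+h^2(\beta_{e,e},\beta_{e+b,e+b})\big),
\]
where $\beta_{x,y}$ is the transcript distribution of the optimal primitive protocol on $(x,y)$ and $C$ is an absolute constant.

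The one genuinely new step is the grouping. Fix $q$ minimal with $ap+bq=c$; as in the cited lemma $|p|\le|q|$, so the chain length $L:=|p|+|q|$ is $\Theta(|q|)$. For every center $e$ there is a lattice path from $e$ to $e+c$ consisting of $|p|$ steps of size $\pm a$ and $|q|$ steps of size $\pm b$. By the triangle inequality the sum of the $L$ Hellinger distances along this path (each of the form $h(\beta_{u,u},\beta_{u+a,u+a})$ or $h(\beta_{v,v},\beta_{v+b,v+b})$ for arguments that are $e$ plus a fixed offset) is at least $h(\beta_{e,e},\beta_{e+c,e+c})$, which the Pythagorean lemma together with $\mathrm{DIST}_c(e,e+c)=1$ and correctness of the protocol bounds below by a constant $c_0$. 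Squaring and applying Cauchy--Schwarz to the $L$ terms gives $c_0\le L\sum_{\mathrm{path}(e)}(\text{term})^2$; summing over the $\Theta(m)$ admissible centers and noting that the right-hand side is a sum of $|p|$ shifted copies of $\sum_e h^2(\beta_{e,e},\beta_{e+a,e+a})$ and $|q|$ shifted copies of $\sum_e h^2(\beta_{e,e},\beta_{e+b,e+b})$, each shifted copy being at most the full sum, yields $\sum_e\big(h^2(\beta_{e,e},\beta_{e+a,e+a})+h^2(\beta_{e,e},\beta_{e+b,e+b})\big)=\Omega(m/L^2)$. Combined with the displayed inequality this gives $\mathrm{CIC}_{\zeta,2/3}(\mathrm{DIST}_c)=\Omega(1/L^2)=\Omega(1/q^2)$, hence $R_{2/3}((a,b,c)\text{-DIST})=\Omega(n/q^2)$.

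The main obstacle is bookkeeping, not a new idea. One must verify that the ``shifted copies'' double-counting is honest, i.e. that distinct offsets along the path correspond to distinct lattice edges; this is exactly the no-double-counting point of the $c=1$ proof and uses $\gcd(a,b)=1$ together with minimality of $q$ (if $p_1a+q_1b=p_2a+q_2b$ with distinct small coefficients, then $(p-(p_2-p_1))a+(q-(q_2-q_1))b=c$ contradicts minimality). One also absorbs the $O(La+Lb)$ boundary losses by taking the center-interval long enough (cost: constants), and checks that $|p|\le|q|$, so $L=\Theta(q)$, persists for the general right-hand side $c$ — which follows from $ap=c-bq$ and $b<a$ as in the stated lemma in the regime where the theorem is non-vacuous. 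Everything else is a line-by-line transcription of the $c=1$ argument with $\mathrm{DIST}_1$ replaced throughout by $\mathrm{DIST}_c$.
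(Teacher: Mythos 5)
Your proposal follows essentially the same route as the paper's proof: same information-complexity setup with the collapsing distribution, same reduction via Proposition~2.45 of Bar-Yossef's thesis and the Pythagorean lemma, and the same idea of telescoping along a length-$(|p|+|q|)$ lattice path from $e$ to $e+c$ and charging one primitive term per step. The only substantive deviation is the bookkeeping over centers $e$: the paper selects a \emph{disjoint} family of centers (each choice of $e$ rules out the $O(|p|+|q|)$ lattice points its path touches, leaving $\Theta(m/(|p|+|q|))$ usable centers), whereas you sum over \emph{all} $\Theta(m)$ centers and pay the overcount by observing that the total is a sum of $|p|+|q|$ shifted copies of the target quantity $\Sigma$; both yield $\Sigma=\Omega(m/(|p|+|q|)^2)$ and hence $\Omega(1/q^2)$. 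Your averaging variant is a legitimate and arguably cleaner way to close the same argument, and the preliminary reduction to $\gcd(a,b)=1$ is harmless (the paper's proof in fact works without coprimality, so it is not needed). One shared loose end, which you flag with a hedge and the paper leaves implicit, is that the conclusion $\Omega(1/q^2)$ rather than $\Omega\bigl(1/(|p|+|q|)^2\bigr)$ requires $|p|=O(|q|)$; this is automatic when $c<a$ (as in the intended application) but not for arbitrary $c$, and strictly speaking the cleanest statement of the bound is in terms of the minimal $|p|+|q|$, consistent with the $\ell_1$ formulation used for the general {\sf ShortLinearCombination} theorem.
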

\begin{proof}
The proof of this theorem is a modification of the proof of the previous proposition. With a similar distribution (replacing $1$ with $c$), we have that Equation (\ref{eqn:rearrange hilinger}) holds. The remaining task is to show a partition scheme of the terms such that $\Theta(m/(|p|+|q|)^2)$ terms survive. Choose an arbitrary $e\in [4a^2, m-4a^2]$. W.l.o.g. assume $p>0$. We can combine the following $|p|+|q|$ terms,
\begin{align*}
h^2(&\beta_{e,e}, \beta_{e+a,e+a}) + h^2(\beta_{e+a,e+a}, \beta_{e+2a,e+2a})\\
& + \ldots +\\
&+ h^2(\beta_{e+pa-a,e+pa-a}, \beta_{e+pa,e+pa})\\ 
&+ h^2(\beta_{e+pa,e+pa}, \beta_{e+pa-b,e+pa-b}) \\
&+ h^2(\beta_{e+pa-b,e+pa-b}, \beta_{e+pa-2b,e+pa-2b})\\
&+\ldots +\\
&+ h^2(\beta_{e+pa+qb+b,e+pa+qb+b}, \beta_{e+pa+qb,e+pa+qb})\\
&\ge\frac{1}{|p|+|q|}[\sum_{i=1}^ph(\beta_{e+(i-1)a,e+(i-1)a}, \beta_{e+ia,e+ia})\\
& + \sum_{j=1}^{|q|}h(\beta_{e+pa-(j-1)b,e+pa-(j-1)b}, \beta_{e+pa-jb,e+pa-jb})]^2\\
&~~~~~~~~~~~(\text{Cauchy-Schwarz})\\
&\ge \frac{h^2(\beta_{e, e}, \beta_{e+1, e+1}) }{|p|+|q|}
~~~~~~~~~~\text{(Triangle inequality)}.
\end{align*}
Choose an $e$. This forbids the choice of another $e$ from the above terms. In fact, $e$ uniquely determines two arithmetic progressions: $e, e+a, e+2a, \ldots, e+pa$ and $e+pa-b, e+pa-2b\ldots, e+pa+qb$. Now, in order to choose a new $e^\prime$, the value $e^\prime$ cannot be a number from the above two progressions, and the progressions determined by $e^\prime$ cannot share a term with one of $e$. Therefore, because we chose $e$, $2(|p|+|q|)$ terms are not available for a new choice of $e^\prime$. By choosing $e$ from $[4a^2, m-4a^2]$ one at a time, we can find $(m-8a^2)/a$ different $e$ for the purpose of combining terms. Denote the set of these $e$ by $S$. In summary, we have,
\begin{align}
\text{CIC}_{\zeta,2/3}(\text{DIST})\ge \frac{C}{m(|p|+|q|)}\sum_{e\in S} h^2(\beta_{e,e}, \beta_{e+1, e+1}).
\end{align}
Since $h^2(\beta_{e,e}, \beta_{e+1, e+1})$ is bounded below by a constant as indicated in the previous proposition, and since $|S| = m-8a^2 = \Omega(m)$, we have $\text{CIC}_{\zeta,2/3}(\text{DIST})=\Omega(1/q^2)$.
\end{proof}

The above lower bound is tight in the sense that there is a protocol that uses $O(n/q^2)$ memory bits and solves $(a, b, c)$-DIST.
\begin{proposition}
\label{thm:theorem pq1}
Let $a, b, c, p, q$ be integers satisfying the same conditions as in the previous theorem. Then there is a randomized algorithm solving $(a,b,c)$-DIST with probability at least $2/3$ using $O(n/q^2) \cdot \textrm{poly}(\log n)$ bits of memory.
\end{proposition}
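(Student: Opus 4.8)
The plan is to build a one-pass sketch whose structure mirrors the OR-decomposition used in the lower bound of Theorem~\ref{thm:theorem pq}: treat the $n$ coordinates as nearly independent sub-instances and spend only $O(1/q^2)$ memory ``slots'' per coordinate. First I would reduce to the coprime case $\gcd(a,b)=1$: since $p,q$ exist with $ap+bq=c$ we have $\gcd(a,b)\mid c$, so dividing $a,b,c$ by $d=\gcd(a,b)$ changes neither the problem nor $q$. The key arithmetic ingredient comes next. Define $m(x)=\min\{|k_a|+|k_b|:k_aa+k_bb=x\}$; using that $q$ has minimum absolute value among all solutions of $ap+bq=c$ and that $|p|\le|q|$ (the lemma preceding Theorem~\ref{thm:theorem pq}), one shows $m(\pm c)=\Theta(q)$. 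Hence any multiset of fewer than $\tfrac12 m(c)=\Theta(q)$ values drawn from $\{\pm a,\pm b\}$ has sum different from $\pm c$, and -- because $m$ is subadditive, $m(x+y)\ge m(x)-m(y)$ -- even the sum of such a short multiset together with one embedded $\pm c$ cannot coincide with a $\{\pm a,\pm b\}$-sum of comparable length. This is exactly what makes an embedded $\pm c$ impossible to disguise inside a small group of ordinary coordinates.

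The algorithm then hashes $[n]$ into $B=\Theta(n/q^2)$ buckets with an $O(\log n)$-wise independent hash, and within each bucket maintains a poly-logarithmic-size sub-sketch outputting one bit, ``this bucket certifies an embedded $\pm c$'', with two properties: (i) on a bucket all of whose coordinates lie in $\{0,\pm a,\pm b\}$ it never certifies (no false positives, w.h.p.); (ii) on a bucket containing one coordinate equal to $\pm c$ and at most $O(q^2)$ other coordinates it certifies with probability at least a constant. The whole structure is repeated $O(\log n)$ times independently, and we output $v\in V_1$ iff some copy has some bucket that certifies. Correctness is then routine: a $V_0$ stream triggers no certificate, after a union bound over the $O(n/q^2)\cdot\text{poly}\log n$ sub-sketch cells; a $V_1$ stream places the unique special coordinate into some bucket, whose other coordinates number $O(q^2)$ in expectation and $O(q^2)$ w.h.p. by a Chernoff/limited-independence bound, so that bucket certifies with constant probability per copy and hence w.h.p. over the $O(\log n)$ copies. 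The space is $B\cdot\text{poly}\log(nM)=O(n/q^2)\cdot\text{poly}\log n$, and the per-element work is poly-logarithmic; for $q=O(1)$ one simply stores the whole frequency vector, which already fits the bound.

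The hard part will be constructing the per-bucket sub-sketch with property (ii) in poly-logarithmic space even though a bucket may hold $\Theta(q^2)$ coordinates. The naive sub-sketch -- keep the bucket sum $S$ and the coordinate count $t$ and test whether $m(S)\le t$ -- has the right no-false-positive behavior but only detects an embedded $\pm c$ when $t<\tfrac12 m(c)=\Theta(q)$, which forces $B=\Theta(n/q)$ buckets and yields the weaker bound $O(n/q)\cdot\text{poly}\log n$; the obstruction is precisely that $\Theta(q)$ ordinary coordinates can already build up $\pm c$ exactly. To close the quadratic gap one must exploit the arithmetic of $a,b$ beyond a single scalar: I expect the right sub-sketch to also store $O(\text{poly}\log n)$ fingerprints of random small-integer linear combinations of the bucket's frequencies reduced modulo a random prime, chosen so that the ``target residue'' forced by an embedded $\pm c$ is unreachable by the $O(q^2)$ ordinary coordinates with high probability -- the quadratic saving being the streaming analogue of the squared-Hellinger accounting in the lower bound. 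As a fallback, if the direct construction is elusive, one can argue recursively, since a bucket of dimension $\Theta(q^2)$ is itself an $(a,b,c)$-DIST instance whose complexity is only $\Omega(1)$; here one must take care that the recursion actually bottoms out when $q$ is large and that depth and per-level overhead cost only poly-logarithmic factors. Either way, the remaining bookkeeping is the worst-case space, update time, and the dependence on $M$.
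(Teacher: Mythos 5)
Your overall framework---hash into $B=\Theta(n/q^2)$ buckets of expected size $\Theta(q^2)$ and run a poly-logarithmic sub-sketch per bucket---matches the paper's, and the reduction to $\gcd(a,b)=1$ is fine. But you explicitly leave the per-bucket sub-sketch unconstructed, and that is precisely the content of the proposition. Your analysis of the ``naive'' sub-sketch correctly locates the obstruction (a plain sum-and-count works only for buckets of size $O(q)$, giving $O(n/q)$ space), but neither of your proposed fixes closes the quadratic gap. The ``random small-integer linear combinations reduced modulo a random prime'' idea is left undeveloped and is not what the paper does; the recursive fallback merely restates the theorem for $n=q^2$ without a base case, so it is circular---it does not tell you how to sketch a bucket of $\Theta(q^2)$ coordinates in poly-log space.

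The missing ingredient is a single random-sign linear sketch per bucket: maintain $C_i=\sum_{l:h(l)=i}\xi_l v_l$ with 4-wise independent $\xi_l\in\{-1,+1\}$. Writing $C_i=z_a a+z_b b+z_c c$ with $z_c\in\{-1,0,1\}$, concentration of the signed counts gives $|z_a|,|z_b|=\tilde{O}(\sqrt{n/t})=\tilde{O}(q)$ w.h.p., and by tuning the polylog factor in $t$ one can force $|z_b|\le |q|/4$. Now read $C_i\bmod a$: if two such residues collided, $z_b'b\equiv z_b b+c\pmod{a}$ with $|z_b'-z_b|<|q|$, one would obtain a representation $c=(z_b'-z_b)b-ra$ contradicting the minimality of $|q|$, so the residue sets attainable with and without an embedded $\pm c$ are disjoint. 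This is exactly the quadratic saving you were hunting for: the random signs shrink the \emph{effective} multiplicity of each frequency type in a $\Theta(q^2)$-size bucket to $\tilde{O}(q)$, after which the elementary mod-$a$ arithmetic test you already had in mind succeeds. Your write-up stops at the point where this construction is needed, so as written the proof has a genuine gap.
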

\begin{proof}
Before the beginning of the stream, we partition $[n]$ into $t$ contiguous pieces each of size $n/t$, where $t= \tilde{O}(n/q^2)$, and the $\tilde{O}$ notation hides logarithmic factors in $n$. For the $i$-th piece of the universe, let the corresponding substream restricted to elements in the $i$-th piece be denoted by $S_i$. For each $S_i$, we maintain a counter $C_i$ for which
\[
C_i = \sum_{l:h[l] = i}v_l\xi_l =\sum_{x\in\{a, b, 1\}}z_{i,x} x
\]
where $\xi_l\sim\{-1, +1\}$ are uniform $4$-wise independent random bits and $z_{i,x} =  \sum_{l:h[l] = i, |v_l|=x}\xi_l$. Let $y_{i,x}$ denote the number of occurrence of $|x|$ in the $i$-th stream. We have $y_{i,x}\le n/t$. By standard concentration arguments, we have that with arbitrarily large constant probability, $|z_{i,x}| = \tilde{O}(\sqrt{y_{i,x}}) = \tilde{O}(\sqrt{n/t})$. We can select an appropriate $t = \tilde{O}(n/q^2)$ for which this implies $|z_{i,x}|$ is at most $|q|/4$. Now the claim is that by reading the value $C_i\mod a$, this is enough to distinguish whether there is a frequency of absolute value ``$c$'' in the stream or not. This follows since the sets of values of $C_i\mod a$ in the two cases are disjoint. To see why, note that if $z_b^\prime b = z_bb + c\mod a$, then we have $(z_b^\prime - z_b) - ra = c$. However, since $|z_b^\prime - z_b|<|q|$, by the minimality of $|q|$, this is a contradiction. 
\end{proof}

\subsection{{\sf ShortLinearCombination} Problem for Multiple of Frequencies}
\begin{definition}
Let $u=(u_1, u_2, \ldots u_r)$ be a vector in $\mathbb{Z}^r$ for an integer $r$.
Let $d>0$ be an integer not in the vector $u$. A stream $S$ with frequency vector $v$ is given to an algorithm, where $v$ is promised to be from $V_0=\{u_1, u_2, \ldots, u_r, 0\}^n$ or $V_1=\{\text{EMB}(v, i, e)\mid v\in V_0, i\in[n], e\in\{-d, d\}\}$. The
$(u,d)$-DIST problem is to distinguish whether $v\in V_0$ or $v\in V_1$. 
\end{definition}
\begin{theorem}
Let $u=(a_1, a_2, \ldots, a_r)$ be a set of integers and $d>0$ be a positive integer such that $d=q_1a_1+q_2a_2+\ldots q_r a_r$ where $q_1, q_2, \ldots q_r$ are integers. These integers are choosen such that $q=\sum_{i}|q_i|$ is minimal. Then any randomized algorithm solving $(u, d)$-DIST with probability at least $2/3$ requires $\Omega(n/q^2)$ bits of memory.
\end{theorem}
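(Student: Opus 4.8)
The plan is to run the information-complexity argument of Theorem~\ref{thm:theorem pq} with the two generators $a,b$ replaced by the $r$ generators $a_1,\dots,a_r$ and the two-term identity $pa+qb=c$ replaced by the $r$-term identity $\sum_i q_i a_i=d$. As before, view $(u,d)$-DIST as the two-party problem in which Alice holds $v_1$, Bob holds $v_2$, $v=v_1-v_2\in V_0\cup V_1$, and the players decide membership in $V_1$; this is OR-decomposable with primitive $\mathrm{DIST}(x,y)=\mathbf{1}(|x-y|=d)$. Use the distribution that picks $D\in\{\mathrm{Alice},\mathrm{Bob}\}$ uniformly, $E$ uniformly from a window $[a,m-a]$ with $a=\max_j|a_j|$ and $m$ large, gives the chosen player a coordinate uniform over the multiset $\{E+a_1,\dots,E+a_r,E,E,\dots,E\}$ (padded with copies of $E$ so the two outcomes of $\mathrm{DIST}$ stay balanced), and gives the other player $E$. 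Because $d\notin\{0,|a_1|,\dots,|a_r|\}$, every jointly created vector lies in $V_0$, so this distribution is collapsing; the direct-sum Lemmas~5.1 and~5.5 of~\cite{bar2002information} then bound the randomized communication complexity of $(u,d)$-DIST below by $n\cdot\mathrm{CIC}_{\zeta,2/3}(\mathrm{DIST})$, and running the same chain of inequalities as in Theorem~\ref{thm:theorem pq} (Proposition~2.45 of~\cite{bar2002complexity}, Cauchy--Schwarz, the triangle inequality) reduces matters to
\[
\mathrm{CIC}_{\zeta,2/3}(\mathrm{DIST})\;\ge\;\frac{C}{m}\sum_{e}\sum_{j=1}^{r} h^2\!\big(\beta_{e,e},\,\beta_{e+a_j,\,e+a_j}\big),
\]
a sum over $\Omega(m)$ base points $e$, where $\beta_{x,y}$ is the transcript distribution on input $(x,y)$ and $h$ is Hellinger distance.

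The crux is converting these single-generator terms into a ``$d$-jump''. Fix a base point $e$ and the minimal representation $d=\sum_i q_i a_i$, and walk from $e$ to $e+d$ by taking, in some order, $|q_i|$ steps of size $a_i$ with sign $\operatorname{sgn}(q_i)$ for each $i$, for a total of $q=\sum_i|q_i|$ steps; each step of the walk is exactly one of the Hellinger terms in the display. The triangle inequality for $h$ along a protocol, followed by Cauchy--Schwarz over the $q$ steps, gives
\[
\sum_{\text{steps of the walk}} h^2 \;\ge\; \frac{1}{q}\Big(\sum_{\text{steps}} h\Big)^{2}\;\ge\;\frac{1}{q}\,h^2\!\big(\beta_{e,e},\,\beta_{e+d,\,e+d}\big).
\]
To lower-bound the last quantity by an absolute constant I would use the Pythagorean/cut-and-paste lemma of~\cite{bar2002information}, $h^2(\beta_{e,e},\beta_{e+d,e+d})\ge\tfrac12 h^2(\beta_{e,e},\beta_{e+d,e})$, together with correctness: $(e,e)$ and $(e+d,e)$ require opposite answers of $\mathrm{DIST}$ because $d\ne0$, so $h^2(\beta_{e,e},\beta_{e+d,e})\ge c_0>0$. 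If $\Omega(m/q)$ such walks can be chosen whose step-sets are pairwise disjoint as collections of Hellinger terms, the two displays combine to $\mathrm{CIC}_{\zeta,2/3}(\mathrm{DIST})\ge\frac{C}{m}\cdot\Omega(m/q)\cdot\frac1q\cdot c_0=\Omega(1/q^2)$, hence the claimed $\Omega(n/q^2)$ memory lower bound.

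Selecting $\Omega(m/q)$ term-disjoint walks is where I expect the real work, and it is the step that gets genuinely harder for $r>2$. Two ingredients are needed. First, a scheduling step generalizing the $\lfloor q/p\rfloor$-interleaving of Theorem~\ref{thm:theorem pq}: order the $q$ steps so that every partial sum stays in a window of width $O(a)$ about $e$ --- a one-dimensional Steinitz-type rearrangement, available because $\sum_i q_i a_i=d$ keeps the running sum of bounded excess --- so that walks started far enough apart do not interact. Second, a minimality argument ruling out two walks sharing a step $f\to f+a_j$: such a coincidence writes $f=e+\sum_i q_i' a_i=e'+\sum_i q_i'' a_i$ with $q',q''$ sub-multisets of the two step multisets, and subtracting the partial combinations yields a representation of either $0$ or $d$ with coefficient-sum strictly less than $q$ --- contradicting minimality, exactly as $\gcd(a,b)=1$ did for two generators, now after a short case analysis on which generator the shared step uses. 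With disjointness established, a greedy assignment partitions a constant fraction of the $\Omega(m)$ available terms into $\Omega(m/q)$ walks, completing the argument; one expects, as in Proposition~\ref{thm:theorem pq1}, a matching $\tilde O(n/q^2)$-space protocol obtained by hashing $[n]$ into $\tilde O(n/q^2)$ buckets, keeping a $4$-wise-independent signed sum per bucket, and reading it modulo $\max_j|a_j|$, minimality of $q$ again forcing the two cases of $(u,d)$-DIST into disjoint residue classes.
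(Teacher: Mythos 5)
Your proposal follows the paper's approach exactly: the paper's proof of this theorem consists of one sentence (``a straightforward extension of the previous argument''), and your expansion is a reasonable elaboration of what that extension must look like — the same two-party OR-decomposition, the same collapsing distribution, the same chain of Hellinger inequalities, the same walk-of-$q$-steps-plus-Cauchy--Schwarz giving a $1/q$ factor, and the same target of $\Omega(m/q)$ edge-disjoint walks giving $\Omega(1/q^2)$ in total.

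One caution on the disjointness argument, which is where you correctly flag the work. You propose a Steinitz-type rearrangement that keeps partial sums in a window of width $O(a)$, $a=\max_j|a_j|$. But the proof of Theorem~\ref{thm:theorem pq} does \emph{not} use the $\lfloor q/p\rfloor$-interleaving (that belongs to the earlier $(a,b,1)$-DIST proposition); it uses plain consecutive steps — all $|q_1|$ steps of $a_1$, then all $|q_2|$ steps of $a_2$, and so on. Consecutive ordering is the thing to generalize here, and it is what makes the forbidden-offset count come out to $O(q)$: for a fixed generator $a_j$, the partial sums at $a_j$-steps form an arithmetic progression with common difference $\pm a_j$, so the set of offsets $e-e'$ producing a shared $a_j$-edge is $\{k a_j : |k|<|q_j|\}$, of size $2|q_j|-1$; summing over $j$ gives $O(q)$ forbidden offsets, and hence $\Omega(m/q)$ term-disjoint walks directly, with no minimality argument needed at this stage (minimality only surfaces in handling the edge case $a_j=-a_{j'}$). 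Under your Steinitz scheduling the $a_j$-steps are scattered, so the offset set $\{s_{k'}-s_k\}$ is no longer an arithmetic progression and may have up to $\Theta(\sum_j|q_j|^2)$, i.e.\ up to $\Theta(q^2)$, elements (or, alternatively, up to $\Theta(a)$ integer values inside the $O(a)$ window, which can exceed $q$). Either reading gives only $\Omega(m/q^2)$ or $\Omega(m/a)$ walks and thus a weaker $\Omega(1/q^3)$ or $\Omega(1/(aq))$ bound. Relatedly, your minimality argument as stated derives a short representation of $e-e'$ rather than of $0$ or $d$, so it does not by itself rule out a shared step — but once you switch to consecutive ordering you do not need it for inter-walk disjointness at all, and the count closes.

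Apart from this, the structure, the Pythagorean/cut-and-paste step, the constant lower bound via correctness, and the matching upper-bound sketch all line up with what the paper intends.
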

\begin{proof}
The proof of the lower bound is a straightforward extension of the previous argument for three frequencies. Let $a=\max(a_1, a_2, \ldots, a_r)$. The number of combined terms is $m/(|q_1|+|q_2|+\ldots |q_r|)$. Therefore, the lower bound is $\Omega({n}/q^2)$. The upper bound is a also a straightforward extension of the three frequency case.
\end{proof}

\section{Nearly Periodic Functions }\label{app:nearlyperiodic}
To summarize some of the discussion so far, when a function has a large decrease in value we can often get a large lower storage bound using an index reduction.
However, the index reduction fails on some functions for which we have no sub-polynomial space algorithm and no other bound is known, these are the nearly periodic functions.
This section describes a few, disconnected, properties of nearly periodic functions.
The purpose is to understand as much as possible about the nearly periodic functions.

The next subsection presents an example of a $1$-pass tractable nearly periodic functions, this function has a very nice structure such that it is one pass tractable by making use of modulo.  Section \ref{sec: metrizing G} defines a natural metric on $\G$ and shows that every $\bbS$-nearly periodic function, which may or may not be 1-pass tractable, is arbitrarily close to a $1$-pass intractable function while every 2-pass tractable function is far from every 2-pass intractable function.  
Section~\ref{sec: nearly periodic transform} defines a transformation with a similar property to the metric of Section~\ref{sec: metrizing G} in that it separates $\bbS$-nearly periodic functions from 1-pass tractable $\bbS$-normal functions.
Section~\ref{sec: counting nearly periodic} establishes a discretized model for the $g$-SUM problem and considers counting the analogue of the tractable and nearly periodic functions.
The main result is that there are many fewer nearly periodic functions because of the strong constraints on their structure.
Finally, Section~\ref{sec: nearly periodic asymptotic} establishes a strange necessary condition on nearly periodic functions that are bounded above by a sub-polynomial function.

\subsection{A $1$-pass Tractable Example\label{example:nearly periodic}}
Constructing a tractable nearly periodic function turns out to be a non-trivial exercise.
This section provides such a construction.

\begin{definition}
Let $x =\sum_{j=0}^\infty a_j2^j\in \mathbb{N}$, for $a_j\in\{0,1\}$, be the binary expansion of the integer $x$.  Define $i_x:=\min\{j: a_j = 1\}$, the location of the first non-zero bit of $x$.
The function is $g_{np}(x) = 2^{-i_x}$, for $x>0$, and $g_{np}(0)=0$.
\end{definition}

\begin{proposition}
$g_{np}$ is $\mathbb{S}$-nearly periodic.
\end{proposition}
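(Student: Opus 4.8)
The plan is to verify the two clauses in the definition of an $\bbS$-nearly periodic function directly, working throughout with the identification $g_{np}(x)=2^{-i_x}$, where $i_x=\max\{j:2^j\mid x\}$ is the $2$-adic valuation of $x$. The only external ingredient is the elementary valuation identity: if $i_x\neq i_y$ then $i_{x+y}=\min\{i_x,i_y\}$, which follows by factoring the common power of two out of $x+y$ and noting that the remaining sum is odd. With that in hand I will take $\alpha=1$ as the exponent witnessing clause~(1).

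For clause~(1): given any constant $N$, pick $k$ with $2^k\ge N$ and set $y=2^k$, $x=1$; then $g_{np}(y)=2^{-k}=g_{np}(1)/y=g_{np}(x)/y^{1}$, so $y$ is a $1$-period. (It is worth recording, though not strictly needed, that the $1$-periods of $g_{np}$ are exactly the powers of two and that the admissible witnesses $x$ below such a period are exactly the odd integers, on which $g_{np}\equiv 1$.) This step is routine.

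For clause~(2): fix an arbitrary $\alpha>0$ and an error function $h\in\bbS$, and claim $N_1=2$ works for all of them simultaneously. Let $y\ge 2$ be an $\alpha$-period and $x<y$ with $g_{np}(y)\,y^\alpha\le g_{np}(x)$ (note $x\ge 1$, since $g_{np}(0)=0<g_{np}(y)\,y^\alpha$). The crux is to show this forces $i_x<i_y$: the inequality reads $2^{-i_y}y^\alpha\le 2^{-i_x}$, i.e.\ $y^\alpha\le 2^{\,i_y-i_x}$, and since $y\ge 2$ and $\alpha>0$ push the left side above $1$, we get $i_y-i_x>0$. Then the valuation identity gives $i_{x+y}=\min\{i_x,i_y\}=i_x$, hence $g_{np}(x+y)=g_{np}(x)$ \emph{exactly}, so $|g_{np}(x+y)-g_{np}(x)|=0\le\min\{g_{np}(x),g_{np}(x+y)\}\,h(y)$ trivially since $h\ge 0$, which is clause~(2).

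The main obstacle is exactly that implication $i_x<i_y$: without it one could have $i_x=i_y$, in which case the valuation of $x+y$ increases by at least one, giving $g_{np}(x+y)\le g_{np}(y)/2=g_{np}(x)/2$, and the near-equality demanded by clause~(2) would fail badly for, say, $h\equiv 1$. So the whole argument hinges on noticing that the scale-separation factor $y^\alpha$ built into the notion of an $\alpha$-period is precisely what excludes the dangerous case $i_x=i_y$; once that is seen, no delicate estimates on $h$ or on the threshold are required, and a single $N_1=2$ serves every $\alpha$ and every $h$.
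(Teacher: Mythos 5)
Your proof is correct and follows essentially the same route as the paper's: both arguments hinge on the $2$-adic valuation identity $i_{x+y}=\min\{i_x,i_y\}$ when $i_x\neq i_y$, and both extract $i_x<i_y$ from the scale-separation inequality $g_{np}(y)\,y^\alpha\le g_{np}(x)$. The only (pleasant) difference is that you observe $y\ge 2$ already forces $i_y-i_x\ge 1$, so $N_1=2$ works uniformly in $\alpha$ and $h$, whereas the paper chooses an $\alpha$-dependent threshold $N_1=\lceil 2^{1/\alpha}\rceil$; this is a sharpening, not a new method.
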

\begin{proof}
Since there is an infinite sequence $\{1, 2, 4, \ldots, 2^n\ldots\}$ such that $g_{np}(x)=1/x$, the first condition of near-periodicity is satisfied. It is remains to show that the second condition is also satisfied.

Let $\gamma> 0$ be a constant. Consider any integer $x>0$. By construction $g_{np}(x) = 2^{-i_x}$. Let $y>x$ be another integer. Similarly, we have $g_{np}(y) = 2^{-i_y}$.
If $g_{np}(x)/g_{np}(y) = 2^{i_y-i_x} \ge y^\gamma$, then $i_y - i_x \ge \gamma \lg y$. Choose $N=\lceil 2^{1/\gamma}\rceil$. For any $y>N$, we have $i_y-i_x > 1$ and therefore $i_{x+y}=i_x$. Thus, $g_{np}(x+y) = g_{np}(x)$.
\end{proof}

\begin{proposition}
$g_{np}$ is $1$-pass tractable.
\end{proposition}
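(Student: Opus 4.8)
The plan is to bypass the heavy-hitters machinery entirely and instead write $g_{np}(V(D))$ \emph{exactly} as a fixed non-negative linear combination of $O(\log n)$ distinct-elements ($\ell_0$) queries, each posed to the frequency vector reduced modulo a power of two, and then approximate each such query with a standard distinct-elements sketch.

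The starting point is the identity: for every integer $x\ge 0$, with $i_x$ the exponent of $2$ in its factorization (and $i_x=\infty$ when $x=0$),
\[
g_{np}(x)=\sum_{k\ge 1}2^{-k}\,\mathbf{1}\!\left[x\bmod 2^k\neq 0\right],
\]
because $x\bmod 2^k\neq 0$ holds exactly when $k>i_x$, so for $x>0$ the sum equals $\sum_{k>i_x}2^{-k}=2^{-i_x}=g_{np}(x)$, and for $x=0$ both sides vanish. Summing over coordinates, with $v=V(D)$ (signs are irrelevant to divisibility),
\[
g_{np}(v)=\sum_{k\ge 1}2^{-k}\,\big\|v\bmod 2^k\big\|_0 ,
\]
where $v\bmod 2^k\in\Z_{2^k}^n$ is the coordinate-wise reduction and $\|\cdot\|_0$ counts nonzero coordinates. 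Since $0<|v_i|\le M<2^{L}$ for $L:=\lceil\log_2(M+1)\rceil$, for every $k\ge L$ the condition $v_i\bmod 2^k\neq 0$ coincides with $v_i\ne 0$, so $\|v\bmod 2^k\|_0=\|v\|_0$ and the tail of the series telescopes into one term:
\[
g_{np}(v)=\sum_{k=1}^{L-1}2^{-k}\,\big\|v\bmod 2^k\big\|_0+2^{-(L-1)}\,\|v\|_0 .
\]
Because $M=\poly(n)$, this is a fixed non-negative combination of $L=O(\log n)$ distinct-elements quantities.

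To finish, I would note that $\|v\bmod 2^k\|_0$ is precisely the number of distinct elements of the turnstile stream obtained by executing all of $D$'s updates in $\Z_{2^k}$ (replace each $(i,\delta)$ by $(i,\delta\bmod 2^k)$ and work mod $2^k$; the resulting frequency vector is $v\bmod 2^k$). A standard distinct-elements sketch — geometric subsampling together with exact sparse recovery to locate the scale at which a subsampled copy has $O(\log\delta^{-1})$ nonzeros — $(1\pm\epsilon)$-approximates $\|v\bmod 2^k\|_0$ with probability $1-\delta$ using $\poly(\log(nM),\epsilon^{-1},\log\delta^{-1})$ bits, and it carries over to $\Z_{2^k}$ since a bucket need only be tested for holding $0$, a single nonzero, or a collision, which uses only sums and random fingerprints and no field inversions, with all arithmetic on $O(\log M)$-bit numbers. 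Running these $L$ estimators in parallel with error $\epsilon$ and failure probability $\delta/L$ each, and outputting $\widehat G:=\sum_{k=1}^{L-1}2^{-k}\widehat a_k+2^{-(L-1)}\widehat a_L$, a union bound gives that with probability $\ge 1-\delta$ every $\widehat a_k$ is a $(1\pm\epsilon)$-approximation of the corresponding $\|v\bmod 2^k\|_0$; since all coefficients and all quantities are non-negative, $\widehat G$ is then a $(1\pm\epsilon)$-approximation of $g_{np}(v)$. The total space is $L\cdot\poly(\log(nM),\epsilon^{-1},\log\delta^{-1})=\poly(\log n,\epsilon^{-1})$, which is sub-polynomial whenever $\epsilon$ is, so $g_{np}$ is $1$-pass tractable.

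The identity above is a routine computation, and the linear-combination step introduces no error amplification precisely because there is no cancellation. The one place that warrants a word of care is implementing the distinct-elements primitive over the non-field ring $\Z_{2^k}$ rather than over $\Z$ or a prime field — but the subsampling/sparse-recovery implementation needs no field structure, so I do not expect a genuine obstacle.
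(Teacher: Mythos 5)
Your proof takes a genuinely different route from the paper's. The paper goes through the recursive-sketch/heavy-hitter machinery: it hashes the universe into $O(\lambda^{-2})$ substreams to isolate heavy hitters, and on each substream takes a random Bernoulli-weighted sum $m=\sum_j X_j v_j$ and reads off $2^{-i_m}$, exploiting the fact that the $2$-adic valuation of a random subset sum equals the minimum valuation with constant probability; a binary-search trick over $O(\log n)$ repetitions then recovers the heavy-hitter's label. You instead notice the clean exact identity
\[
g_{np}(v)\;=\;\sum_{k\ge1}2^{-k}\bigl\|v\bmod 2^k\bigr\|_0,
\]
truncate it to $O(\log M)$ terms, and reduce the whole problem to a fixed non-negative linear combination of $\ell_0$ queries, bypassing the recursive sketch and heavy hitters entirely. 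That identity is correct (for $x>0$, $x\bmod 2^k\ne 0$ iff $k>i_x$, so the inner sum telescopes to $2^{-i_x}$), the truncation at $k=L$ is correct since $|v_i|<2^L$, and the ``no cancellation'' observation that the non-negative linear combination propagates $(1\pm\epsilon)$ error without amplification is also correct. This is a nice structural reformulation that makes the tractability of $g_{np}$ transparent.

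The one place where you are a bit cavalier is the claim that the turnstile $\ell_0$ sketch ``carries over to $\Z_{2^k}$'' with ``no genuine obstacle.'' It does work, but not for free: $\Z_{2^k}$ has zero-divisors, so a random linear fingerprint $F_B=\sum_{i\in B}(v_i\bmod 2^k)\sigma_i\bmod 2^k$ of a non-empty bucket can vanish with probability as large as $1/2$ (e.g.\ two residues equal to $2^{k-1}$ cancel unless the hashes have opposite parity), compared to probability $1/p$ over a prime field. You therefore need $\Theta(\log(1/\delta))$ independent fingerprints per bucket just for the emptiness test, and the usual one-sparse identification trick of dividing $\sum i\,w_i$ by $\sum w_i$ fails outright when $\sum w_i$ is even. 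Neither is fatal — emptiness testing suffices if you hash into enough buckets to avoid collisions with high probability, and the extra $\log(1/\delta)$ factor fits comfortably inside $\poly(\log n,\epsilon^{-1})$ — but this is precisely the nontrivial content of the final step and deserves more than a one-sentence dismissal. A second, smaller point: to get $(1\pm\epsilon)$ accuracy from subsampling you need to stop at a scale with $\Theta(\epsilon^{-2})$ survivors, not $O(\log\delta^{-1})$; the latter controls only the failure probability of the sparse-recovery structure, not the variance of the estimate. With those two repairs, your argument goes through and gives a correct, arguably simpler, proof than the paper's.
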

\begin{proof}

We demonstrate that one can find $(g_{np}, \lambda)$- heavy hitters in $\poly(\lambda^{-1}\log{n}\log{M})$ space.
It is sufficient to demonstrate an algorithm that will find a single $(g_{np},\lambda)$-heavy hitter, since we can reduce the heavy hitters problem to this case by hashing the stream $O(\lambda^{-2})$ ways and running this algorithm on each substream.

Suppose that $j^*$ is the single $(g_{np},\lambda)$-heavy hitter with frequency $x$.
Let $v_1,v_2,\ldots,v_{n}\geq 0$ be the frequencies in the stream, and let $U=\{j: i_{v_j}\le i_x\}$.
By definition $g_{np}(v_j)\geq g_{np}(x)$ for $j\in U$, hence $|U| \le 1+\lambda^{-1}\leq 2\lambda^{-1}$. 

Let $C=O(\lambda^{-2})$.
Apply a uniform hash function $h:[n]\rightarrow [C]$ to separate the stream into $C$ substreams $S_1,S_2,\ldots,S_{C}$.
With constant probability, no two elements of $U$ are in the same substream, so suppose that this happens. 

On each substream $S_k$ with frequencies $v^{(k)}_1,\ldots,v^{(k)}_n$ we run the following algorithm $D=O(\log{n})$ times independently and in parallel:
\begin{itemize}
\item Sample pairwise independent $X_1,\ldots,X_n\sim\text{Bernoulli}(1/2)$ random variables. 
\item Compute $m = \sum_{j}X_jv^{(k)}_j$ and $i_m$. 
\item Output $2^{-i_m}$.
\end{itemize}

Consider the set of $D$ values output by this algorithm. 
If there is a single item $j^*$ with minimum $i^*:=i_{v^{(k)}_{j^*}}$, then a Chernoff bound implies that very nearly $D/2$ values are equal to $2^{-i^*}$, and this is the maximum value among all of the pairs.  
In this case, the label $j^*$ can be found in post-processing by binary search. 
Let $X_{j,\ell}$ denote the value of the $j$th Bernoulli variable on the $\ell$th trial.
Specifically, one finds the set $M\subseteq[D]$ of trials for which $2^{-i_m}$ is equal to the maximum among the $D$ values, and with high probability, only $j^*$ will satisfy $X_{j,\ell}=1$ for all $\ell\in M$ and $X_{j,\ell}=0$ for all $\ell\in[D]\setminus M$. 
We can detect the case where there is more than one item with minimum $i_{v^{(k)}_j}$ because either the number of maximizing values will be too large or the binary search will fail to yield a unique element.

The single-heavy-hitter algorithm now outputs the pair $(j^*,2^{-i^*})$ if the number of maximizing values is correct and the binary search yields a unique element or nothing if either of those conditions fails.
With the extra hashing step mentioned at the beginning this yields a 1-pass $O(\lambda^{-4}\log{n}\log{M})$-space $(g_{np},\lambda)$-heavy hitters algorithm, thus $g_{np}$ is 1-pass tractable.
\end{proof}

\subsection{Asymptotic Behavior of Nearly Periodic Functions\label{sec: nearly periodic asymptotic}}
This section proves a necessary function for a bounded function to be nearly periodic.  

\strangeLimits*
\begin{proof}
Since $g$ is positive and nearly periodic there exists $\alpha>0$, a strictly increasing sequence $y_k$, and a sequence $x_k$ such that $g(x_k)\geq y_k^{2\alpha} g(y_k)$ for all $k$.
Since $g$ is bounded above by a sub-polynomial function this implies that $g(y_k)\leq y_k^{-\alpha}$ for all sufficiently large $k$, which proves the first claim.
Letting $\epsilon_k = (\log{y_k}\max\{g(x)\mid x\leq y_k\})^{-1}$, near periodicity of $g$ implies $|g(x)-g(x+y_k)|\leq \epsilon_k g(x)$ for sufficiently large $k$.  
Finally, $\epsilon_k g(x)\to 0$ since $g$ by definition, which completes the proof.
\end{proof}

It would not be surprising if for every $x$ there exists a sequence $y_k$ such that $g(x+y_k)\to g(x)$.
The novelty in Proposition~\ref{prop: strange limits} is that there exists a single sequence $y_k$ such that the convergence holds at every point $x$. 

Let us draw a final comparison with periodic functions.
Take, for example, a function $g$ that is periodic with period $p=\min\{x\in\N: g(x)=0\}$. 
$g$ is bounded above by a constant, hence any sequence $y_k$ of $\alpha$-periods has $g(y_k)\to 0$.
This implies that $g(y_k)=0$ for all sufficiently large $k$, and hence $y_k$ is a multiple of the period $p$ (for all sufficiently large $k$). 
Of course, $g(x+y_k)=g(x)$ when $y_k$ is a multiple of $p$, by periodicity, hence $g(x+y_k)\to g(x)$ trivially.

\subsection{A Slow-Varying Transformation\label{sec: nearly periodic transform}}
The predictability is preserved under the following transformation.

\begin{definition}
\label{def:l_eta_transformation}
Given a function $g$, we define an {\em $L_\eta$-transformation} of $g$ as $L_\eta(g)(x):=g(x) \log^{\eta}(1+x)$ for $\eta>0$.
\end{definition}

\smallChangeOPass*
\begin{proof}
Since $g$ is $\bbS$-normal and $1$-pass tractable, then $g$ is slow-dropping, slow-jumping and predictable.
It is suffice to show that $L_\eta(g)$ is also predictable. 

By definition, if $g$ is predictable, for any sub-polynomial $\epsilon^\prime$ and $0<\gamma<1$, there exists $N$ such that if $x>N$, for all $y\in[1,x^{1-\gamma})$, if $(x+y)\notin\delta_{\epsilon^\prime}(g,x)$ then $g(y)\ge g(x)x^{-\gamma}$.

Let $g^\prime = L_\eta(g)$ for $\eta >0$ . Consider a sub-polynomial $\epsilon$ and $0<\gamma<1$, for $y\in[1, x^{1-\gamma})$, such that $x+y\not\in\delta_\epsilon(g^\prime, x)$, we have: if $g^\prime(x+y)\ge g^\prime(x)$ then 
\begin{align*}
\left|\frac{g(x+y)\log^\eta x(1+x^{-\gamma})}{g(x)\log^\eta x}-1\right|
&\ge 
\left|\frac{g(x+y)\log^\eta(x+y)}{g(x)\log^\eta(x)}-1\right|\\
&>\epsilon.
\end{align*}
Hence 
\[
\frac{g(x+y)}{g(x)}\left(1+\frac{\log (1+x^{-\gamma})}{\log x}\right)^\eta>1+\epsilon.
\]
Then $g(x+y)/g(y)\ge 1+\epsilon-\eta x^{-\gamma}/\log x\ge 1+\epsilon/2$
 for sufficiently large $x$, we have $g(y)\ge x^{-\gamma}g(x)$.

If $g^\prime(x+y)<g^\prime(x)$, then we have 
\[
\left|\frac{g(x+y)}{g(x)}-1\right|\ge \left|\frac{g(x+y)}{g(x)}\frac{\log^\eta x(1+x^{-\gamma})}{\log^\eta x}-1\right|>\epsilon
\]
we still have $g(y)\ge x^{-\gamma}g(x)$ for sufficiently large $x$.
\end{proof}

\smallChangePeriodic*
\begin{proof}
Suppose $g$ is $1$-pass tractable. Let $l(x) = \log(x)^\eta$.
By Lemma~\ref{lem:polynomialperiods}, we can find a sequence $\{(x_i, y_i)\}_{i=1}^\infty$ such that $x_i^{1+\gamma}<y_i$, $x_i$  strictly increasing,  $g(x_i)\ge  g(y_i)y_i^{\alpha}$ for some $\alpha >0$ and $\gamma >0$. For any $0<\epsilon<\gamma\eta/[2(1+\gamma\eta)]$, $|g(x_i + y_i)-g(x_i)|\le \epsilon g(x_i)$ holds after finite terms.  Since $x_i+y_i\ge x_i^{1+\gamma}$ , $|l(x_i+y_i)-l(x_i)|\ge \gamma \eta l(x_i)$.
Note that $(1+c)(1+\gamma\eta)>1+\gamma\eta/2$ holds for any $c\in[-\epsilon, \epsilon]$.
Therefore,
\begin{align}
\frac{|l(x_i+y_i)g(x_i+y_i)-l(x_i)g(x_i)|}{l(x_i)g(x_i)}\ge \gamma\eta/2
\end{align}
for sufficiently large $i$. Thus, $l(x) g(x)$ is not nearly periodic. Also due to $g(x_i)\ge g(y_i)y_i^\alpha$, this function is not slow-dropping.
\end{proof}

\begin{lemma}
\label{lem:polynomialperiods}
If $g\in\G$ is nearly periodic, then $g$ is either $1$-pass intractble or there exists $\alpha>0, \gamma >0$ and a sequence $\{(x_i, y_i)\}_{i=1}^\infty$ such that $x_i^{1+\gamma}<y_i$, $x_i$ strictly increasing,  $g(x_i)\ge  g(y_i)y_i^{\alpha}$ and for any sub-polynomial $\epsilon$, $|g(x_i + y_i)-g(x_i)|\le \epsilon g(x_i)$ after some finite terms.
\end{lemma}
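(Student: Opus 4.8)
The plan is to assume $g$ is $1$-pass tractable (otherwise the first alternative holds and there is nothing to prove) and to produce the sequence. The engine is the definition of $\bbS$-near periodicity. Condition 1 fixes some $\alpha_0>0$ together with an infinite, unbounded supply of $\alpha_0$-periods; note that every $\alpha_0$-period is also an $\alpha$-period for any $0<\alpha\le\alpha_0$ (since $y^{\alpha_0}\ge y^{\alpha}$ for $y\ge1$), so we are free to shrink the exponent later. The ``near-equality'' requirement is essentially free: given a sub-polynomial accuracy $\epsilon$, replace it by a non-increasing sub-polynomial $h\le\epsilon$ (e.g.\ $1/h(y)=\sup_{z\le y}1/\epsilon(z)$, which is non-decreasing and still sub-polynomial). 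Condition 2 then gives an $N_1$ such that for every $\alpha$-period $y\ge N_1$ and every witness $x$ (meaning $x<y$ with $g(x)\ge y^{\alpha}g(y)$) one has $|g(x+y)-g(x)|\le\min\{g(x),g(x+y)\}\,h(y)\le\epsilon(y)\,g(x)$. Hence, as soon as the $y_i$ are chosen to be $\alpha$-periods with $y_i\to\infty$ and the $x_i$ corresponding witnesses, the last clause holds after finitely many terms, simultaneously for every sub-polynomial $\epsilon$. The whole problem thus reduces to exhibiting witnesses $x_i\to\infty$ (strictly increasing) that are polynomially smaller than their periods: $x_i^{1+\gamma}<y_i$ for a single fixed $\gamma>0$.

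I would then split on whether $g$ is bounded above. If $g\le B$, then for each $\alpha_0$-period $y$ we get $g(y)\le g(x_0)y^{-\alpha_0}\le B y^{-\alpha_0}$ for its smallest witness $x_0$, so for the halved exponent $\alpha=\alpha_0/2$ the witness set $\{x<y:g(x)\ge y^{\alpha}g(y)\}$ contains every $x$ with $g(x)\ge B y^{-\alpha_0/2}$; since $B y^{-\alpha_0/2}\to0$, it contains any prescribed integer $x^{*}$ once $y$ is large. So take $x_i=i$ and, for each $i$, choose an $\alpha_0$-period $y_i$ that is large enough to make $i$ a witness and exceeds $\max(i^{2},y_{i-1})$; this gives the sequence with $\gamma=1$. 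If $g$ is unbounded I would first invoke condition 2 with the (non-increasing, sub-polynomial) error $h\equiv\tfrac12$, which forces $\tfrac23 g(x+y)\le g(x)\le\tfrac32 g(x+y)$ for every witness $x$ of every large $\alpha_0$-period $y$. If a poly-small witness sequence still exists we are done as before; otherwise, after shrinking $\alpha$, the smallest witness would satisfy $x_0(y)^{1+\gamma}\ge y$ for every $\gamma>0$, i.e.\ $x_0(y)=y^{1-o(1)}$, and then combining $g(x_0(y))\ge y^{\alpha}g(y)$ with the factor-$\tfrac32$ equality $g(x_0(y))\asymp g(x_0(y)+y)$ (and $x_0(y)+y<2y$) shows that $g$ violates slow-jumping at the pair $\bigl(y,\ x_0(y)+y\bigr)$ with $\lfloor(x_0(y)+y)/y\rfloor=1$. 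From this configuration I would derive a space lower bound, contradicting tractability.

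I expect the main obstacle to be exactly this last case (unbounded $g$ with no poly-small witnesses). The difficulty is structural: condition 2 of near periodicity is engineered to defeat the standard not-slow-dropping and not-slow-jumping reductions, because it guarantees $g(x+y)\approx g(x)$ precisely where those reductions need $g(x+y)\not\approx g(x)$; and the remaining not-slow-jumping reduction via $\disjind$ is useful only when $\lfloor y/x\rfloor\ge2$, whereas here the jump occurs at scale $\lfloor\cdot\rfloor=1$. So the lower bound has to be extracted from a different shifted combination — for instance using the frequency $z=y-x_0(y)$, for which $g(x_0(y)+z)=g(y)$ is polynomially smaller than $g(x_0(y))$, to engineer a sum-level change that an $(1\pm\epsilon)$-approximation detects — or else one must argue that the cascade of constraints condition 2 imposes on $\alpha_0$-periods clustered near $y$ is self-contradictory, so that this case is vacuous for nearly periodic $g$. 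Everything else — passing to a strictly increasing subsequence, converting $\alpha_0$-periods to $\alpha$-periods, and verifying $x_i^{1+\gamma}<y_i$ eventually — is routine bookkeeping.
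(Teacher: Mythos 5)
Your split into bounded and unbounded cases, and the bounded case itself, are sound (and notably the bounded case needs no tractability hypothesis at all: once $g\le B$, any fixed integer becomes a witness of every sufficiently large $\alpha_0/2$-period, so $x_i=i$ and $\gamma=1$ work). The near-equality clause is also handled correctly via condition~2 of near periodicity, since shrinking $\alpha$ only enlarges the class of pairs to which condition~2 applies.

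The unbounded case, however, is a genuine gap, and you are right to flag it. Your proposed route --- extract a slow-jumping violation at the pair $(y,\,x_0(y)+y)$ --- cannot be closed: the multiplicative gap is $\lfloor (x_0(y)+y)/y\rfloor=1$, which is exactly where the $\disjind$ reduction collapses, and near periodicity is built so that the $\textsf{INDEX}$ reduction at $(x_0(y),\,x_0(y)+y)$ is also defeated. Neither of the two fallbacks you sketch (the shifted combination $z=y-x_0(y)$, or arguing the case is vacuous) is carried out, so the argument stops short.

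The paper sidesteps the $\lfloor\cdot\rfloor=1$ obstruction by never trying to extract a lower bound from the "small jump" at $x_0(y)+y$. Instead, starting from the near-periodicity witnesses $(x_i,y_i)$, it \emph{boosts both coordinates simultaneously}: set $x_i'=x_i+y_i<2y_i$ and $y_i'\approx 2^{\lceil 1+\gamma\rceil}y_i^{1+\gamma}$, so that $(x_i')^{1+\gamma}<y_i'$ automatically. The tractability hypothesis is invoked only to justify that $y_i'$ is still a period: if $g$ grew by more than $y_i^{\beta}$ over $[y_i,\,y_i']$, the multiplicative gap $y_i'/y_i\approx y_i^{\gamma}$ is polynomially large, so the \emph{ordinary} $\textsf{DISJ}$-based not-slow-jumping reduction (with $\Theta(y_i^{\gamma})$ players, a regime in which it does work) yields a space lower bound $\Omega(y_i^{\beta-2\gamma})$, contradicting tractability. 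Hence along a subsequence $g(y_i')\le y_i^{\beta}g(y_i)$, which (with $\beta<\alpha^*$) shows $x_i'$ is a witness of the period $y_i'$ at a slightly smaller exponent, and condition~2 of near periodicity closes the last clause. In short: the missing idea is to move the comparison point from $y$ to $\approx y^{1+\gamma}$, where the multiplicative gap is polynomial and the standard $\textsf{DISJ}$ machinery applies, rather than to stay at scale $<2y$, where it does not.
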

\begin{proof}
Suppose $g$ is $1$-pass tractable. 

Since $g$ is nearly periodic, there exists $\alpha^*>0$, and an infinite sequence $\{(x_i, y_i)\}_{i=1}^\infty$ such that $x_i<y_i$, $y_i$ strictly increasing, $g(x_i)\ge  g(y_i)y_i^{\alpha^*}$ and,
for any \subpoly{}~$\epsilon$,   w.l.o.g.,  $|g(x_i + y_i)-g(x_i)|\le \epsilon(y_i) g(x_i)$ after some finite terms. For any $0<\beta<\alpha^*$, $\gamma < \beta / 2$, the claim is that, there exists a subsequence $T=\{x_{i_j},y_{i_j}\}_{j=0}^\infty$ such that \[ g(y_{i_j})\ge y_{i_j}^{-\beta}g(2^{\lceil1+\gamma\rceil}y_{i_j}y_{i_j}^{\gamma}).\] Otherwise, if $y_{i}^{-\beta}g(2^{\lceil1+\gamma\rceil}y_{i}y_{i}^{\gamma}) > g(y_{i})$ for some infinite $i$s, we can use a standard reduction from DISJ$(2^{\lceil1+\gamma\rceil}y_{i}^{\gamma},y_{i}^{\beta})$, which contradicts that $g$ is $1$-pass tractable.
Therefore, for the sub-sequence $T$, let $\alpha^\prime=(\alpha^*-\beta)/(1+\gamma)$, $g(x_{i_j})\ge (y_{i_j}^{1+\gamma})^{\alpha^\prime}g(2^{\lceil1+\gamma\rceil }y_{i_j}^{1+\gamma})$ for any $j$. 
Let $x_{i_j}^\prime:=x_{i_j}+y_{i_j}< 2 y_{i_j}$ and $y_{i_j}^\prime:=2^{\lceil1+\gamma\rceil}y_{i_j}^{1+\gamma}$. We have $g(x_{i_j}^\prime)\ge y_{i_j}^{\prime\alpha^\prime(1-0.1)}g(y_{i_j}^{\prime})$, ${x_{i_j}^\prime}^{1+\gamma} < y_{i_j}^{\prime}$ for sufficiently large $j$. The property of strict increasing can also be guaranteed by choosing the proper $j$s. Let $\alpha = \alpha^\prime(1-0.1)$. $T^\prime =\{x_{i_j}^\prime, y_{i_j}^\prime\}_{j=0}^\infty$ is the desired sequence since by nearly periodicity,  for any \subpoly{}~$\epsilon$, $|g(x_{i_j}^\prime + y_{i_j}^\prime)-g(x_{i_j}^\prime)|\le \epsilon(y_{i_j}^\prime) g(x_{i_j}^\prime)$ must hold after finite terms.
 
\end{proof}

\subsection{Counting Nearly Periodic Functions in a Discretized Setting}\label{sec: counting nearly periodic}

Let us consider functions taking values between 0 and 1 on the interval $[M]_0=\{0,1,2,\ldots,M\}$ with fixed-point precision $1/M'$.
$M,M'\in\N$ are parameters that we can choose later (we will choose them to be $\poly(n)$).
Upon rescaling, such a function can be thought of as a function $g:[M]_0\to[M']_0$.
As before, we will request that $g(0)=0$ and $g(x)>0$, for $x>0$.
Our previous scaling choice (i.e.\ $g(1)=1$) becomes $g(1)=M'$.
Let $\G_D$ be the set of all functions $g:[M]_0\to[M']_0$ with $g(0)=0$, $g(1)=M'$, and $g(x)>0$ for $x>0$.
Henceforth, we refer to this as the ``discretized model''.

The plan for this section is to translate the tractable and nearly periodic functions into the discretized model and show that there are many fewer nearly periodic functions.
The main result is Theorem~\ref{thm: few nearly periodic functions}, where $B_n$ is the discretized analogue of the $\bbS$-nearly periodic functions and $T_n$ is the discretized analogue of the 1-pass tractable functions.
\begin{theorem}\label{thm: few nearly periodic functions}
If $M,M'=\poly(n)$ then
\[\frac{|B_n|}{|T_n|}\leq 2^{-\Omega(M\log\log{n})}\leq2^{-n^{\Omega(1)}}.\]
\end{theorem}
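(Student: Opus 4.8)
The argument is a counting argument. Recall that $|\G_D| = (M'+1)^{M-1}$, and since $M,M' = \poly(n)$ we have $M\log_2\log n = n^{\Omega(1)}$, so the second inequality follows from the first and it suffices to prove $\log_2|T_n| - \log_2|B_n| = \Omega(M\log\log n)$. The plan is to lower bound $|T_n|$ by an explicit family of tractable functions of almost maximal entropy, and to upper bound $|B_n|$ by showing that the near-periodicity constraints pin the value of $g$ at a constant fraction of its arguments to lie within a multiplicative window containing only $O(M'/\log n)$ integers, which costs $\Omega(M\log\log n)$ bits. For the lower bound, let $\mathcal{R}\subseteq\G_D$ be the ``roughly constant'' functions: $g(1)=M'$ and $g(x)$ chosen arbitrarily in $\{\lceil M'/2\rceil,\dots,M'\}$ for each $2\le x\le M$. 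Any $g\in\mathcal{R}$ is (the discretized analogue of) slow-dropping because $g(x)\ge g(1)/2$, slow-jumping because $g(x)\le g(1)$, and predictable because $|g(x+y)-g(x)|>\epsilon g(x)$ still leaves $g(y)\ge M'/2\ge g(x)/2\ge x^{-\gamma}g(x)$ once $x\ge 2^{1/\gamma}$; hence $\mathcal{R}\subseteq T_n$ by the discretized zero-one law, and $\log_2|T_n|\ge(M-1)\log_2(\lceil M'/2\rceil)\ge M\log_2 M'-2M$.

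For the upper bound, fix $g\in B_n$. By the definition of a discretized nearly periodic function (using condition~2 with the exponent taken as small as permitted, say $\alpha\le 1/\log_2 M$, so the effective threshold is essentially $g(y)$) there are a period $y=y(g)$ with $\Omega(M)\le y\le M/2$, a threshold $\tau=g(y)y^{\alpha}\le 2g(y)$, and the fixed sub-polynomial error function $h$ with $h(y)\le 1/\log n$, so that every $z\in(y,M]$ with $g(z-y)\ge\tau$ satisfies $|g(z)-g(z-y)|\le h(y)\min\{g(z),g(z-y)\}$, hence $g(z)$ is one of at most $2M'/\log n+1$ integers. Thus $g$ is determined by: the pair $(y,g(y))$, a $\poly(MM')$ factor; the $K:=|\{w\in(0,y]:g(w)\ge\tau\}|$ values $g(w)\ge\tau$, at most $(M'+1)^K$ choices; the $Q:=y-K$ values $g(w)<\tau$, at most $\tau^{Q}$ choices; and the values on $(y,M]$, with at most $2M'/\log n+1$ choices at each of the $\ge K$ pinned positions (those $z$ with $g(z-y)\ge\tau$, which include $z=w+y$ for every $w\in(0,y]$ with $g(w)\ge\tau$, since $y\le M/2$) and at most $M'+1$ at the remaining $\le M-y-K$ positions. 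Multiplying, taking $\log_2$, and using $K+Q=y$ and $(M-y)+Q+K=M$,
\[
\log_2|B_n|\ \le\ M\log_2(M'+1)+O(M)-Q\log_2\tfrac{M'}{\tau}-K\log_2\log n .
\]

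Subtracting the two estimates gives $\log_2|T_n|-\log_2|B_n|\ge Q\log_2(M'/\tau)+K\log_2\log n-O(M)$. Since $K+Q=y=\Omega(M)$, at least one of $K,Q$ is $\ge y/2=\Omega(M)$. If $K\ge y/2$, then $K\log_2\log n=\Omega(M\log\log n)$, which dominates the $O(M)$ term, so the gap is $\Omega(M\log\log n)$. If instead $Q\ge y/2$, we use that $y$ is an $\alpha$-period with $\alpha$ a fixed positive constant: then $g(y)\le M'y^{-\alpha}$ with $y=\Omega(M)$, so $g(y)\le M'\cdot M^{-\Omega(1)}=M'\cdot n^{-\Omega(1)}$, hence $\tau\le M'\cdot n^{-\Omega(1)}$ and $\log_2(M'/\tau)=\Omega(\log n)$; thus $Q\log_2(M'/\tau)=\Omega(M\log n)$, which again dominates $O(M)$. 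In both cases $\log_2|T_n|-\log_2|B_n|=\Omega(M\log\log n)$, and since $M\log\log n=n^{\Omega(1)}$ this also gives $|B_n|/|T_n|\le 2^{-n^{\Omega(1)}}$.

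The main obstacle is the dichotomy used in the last paragraph: showing that a discretized nearly periodic function must have \emph{either} a constant fraction of the arguments below its period with above-threshold value ($K=\Omega(M)$) \emph{or} a threshold polynomially below $M'$ (so that the $Q$ sub-threshold values are individually cheap to encode). Both alternatives must be teased out of the two conditions defining near-periodicity --- in particular from the requirement that condition~2 holds for every $\alpha$-period and with arbitrarily small exponent --- and making the accounting of the period length, the threshold, and the count $K$ fit together so that the $\Omega(M\log\log n)$ saving survives in every case (including seemingly ``vacuous'' near-periodic functions) is the technical heart of the proof, and the step most sensitive to the precise discretized definitions of $B_n$ and $T_n$.
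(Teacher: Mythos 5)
Your proof attempt takes a genuinely different route from the paper, and it has a gap that the paper's own argument is specifically designed to avoid.

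The paper's proof of the upper bound on $|B_n|$ works from the concrete, non-parametric definition of $B_n$ given in Section~\ref{sec: counting nearly periodic}: condition~(1) is a $(\log n)^8$-fold drop, and condition~(2) bounds $|g(x)-g(|y-x|)|$ and $|g(x+y)-g(x)|$ by $g(x)/\log^2 n$ for \emph{all} pairs $(x,y)$ exhibiting even half the required drop. The heart of the paper's counting argument is Lemma~\ref{lem: big index subset}: taking $j$ to be the location of the small value and $S=U$ the set of indices $i$ with $g(i)\geq\sqrt{g(1)g(y)}$, one finds a matching $W$ of size $\geq \tfrac14|U|-1$ in the difference graph $\{(i,|i-j|):i\in S\}$. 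This produces $\Omega(M)$ pairwise-disjoint ``pinned'' positions whose values are restricted to an interval of length $O(M'/\log^2 n)$, and this works \emph{regardless of the size of the period $y$}. That independence from $|y|$ is exactly what your argument is missing.

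Your count partitions $(0,y]$ into $K$ above-threshold and $Q=y-K$ below-threshold indices and harvests savings of $K\log\log n + Q\log(M'/\tau)$, but all of $K+Q=y$; if $y$ is, say, a constant, the total savings is $O(1)\cdot\text{polylog}$, not $\Omega(M\log\log n)$, and the theorem does not follow. You assert ``$\Omega(M)\le y\le M/2$'', but nothing in the definition of $B_n$ guarantees the witnessing pair in condition~(1) has $y$ a constant fraction of $M$ --- a nearly periodic discretized function could have its only large drop between two very small arguments. You flag this yourself at the end (``The main obstacle is the dichotomy\ldots''), and indeed this is where the proof breaks: neither alternative of your dichotomy can be secured without $y=\Omega(M)$, and establishing that is the missing idea. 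The paper does not establish it --- it circumvents it entirely via the matching. Two smaller issues: (a)~you implicitly invoke a \emph{polynomial} drop (``$\tau\le M'\cdot n^{-\Omega(1)}$''), but the paper's $B_n$ only grants a polylogarithmic drop $\tau\le M'/\log^8 n$; the argument still yields $\Omega(\log\log n)$ per position in the $Q$-branch, so this is survivable, but the stated estimate is wrong. (b)~Your encoding of $B_n$ omits the $\binom{y}{K}$ factor for choosing which positions in $(0,y]$ are above threshold; this is absorbed into the $O(M)$ slack, but it should be accounted for before dropping it. The lower bound on $|T_n|$ via your roughly constant family is essentially the same style as the paper's Lemma~\ref{lem: Tn is big} (which uses minimum value $\geq M'/\log n$ rather than $\geq M'/2$, giving a slightly larger family), and that portion is fine.
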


The discretized model is a bit different than our previous one.
Perhaps the biggest difference is that here we choose the size of the problem (i.e.\ the values $n$, $M$, and $M'$) at the same time as we choose the function.
This is done in order to make it possible to count the functions.
\begin{remark}
Some difference is necessary as the definitions, from Section~\ref{sec: preliminaries}, of these functions classes (1) allow values from the continuum and (2) are defined by their asymptotics.  
Especially (2) is relevant. Suppose we, instead, take the set of tractable functions bounded by 1, round the values up to the nearest multiple of $1/M'$ and restrict the domain to $[M]_0$.
The result, a consequence of the asymptotic nature of the definition of tractable functions, is that we find every function in~$\G_D$!
Of course, the same is true for the nearly periodic functions.
Hence, it proves impossible to differentiate normal and nearly peroidic functions by rounding and restricting the functions.
\end{remark}

Next we will translate the classes of tractable functions and nearly periodic functions into this discretized setting and bound their sizes.
Clearly $|\G_D|=(M')^{M-1}$.
Let $T_n\subseteq \G_D$ be the set of functions for which there exists a turnstile $(1\pm1/2)$-approximation algorithm for $g(f)$ using $O(\log^3{n}\log{M})$ bits of storage.

\begin{lemma}\label{lem: Tn is big}
\[|T_n|\geq \left(M'-\frac{M'}{\log{n}}\right)^{M-1}\]
\end{lemma}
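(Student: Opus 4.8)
The plan is to lower-bound $|T_n|$ by pinning down a large, explicitly described sub-family of $\G_D$, every member of which admits a polylogarithmic-space turnstile $(1\pm\tfrac12)$-approximation. Concretely, set
\[T_n^{\circ}:=\bigl\{g\in\G_D:\ g(x)\ge M'/\log n\ \text{for every }x\in\{1,\dots,M\}\bigr\}.\]
Because $g(0)=0$ and $g(1)=M'$ are fixed, while each of the values $g(2),\dots,g(M)$ may be chosen independently as any integer in $[\lceil M'/\log n\rceil,M']$ — a set of at least $M'-M'/\log n$ integers — we get at once $|T_n^{\circ}|\ge\bigl(M'-M'/\log n\bigr)^{M-1}$. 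Hence it suffices to show $T_n^{\circ}\subseteq T_n$, i.e.\ that every $g\in T_n^{\circ}$ is $(1\pm\tfrac12)$-approximable in $O(\log^3 n\log M)$ bits.

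Fix $g\in T_n^{\circ}$ and a stream $D\in\calD(n,m)$ with frequency vector $v$; let $F_0$ be the number of nonzero coordinates of $v$ and $\bar g:=F_0^{-1}\sum_{i:v_i\ne 0}g(v_i)$, so that $g(V(D))=F_0\,\bar g$ and, crucially, $\bar g\in[M'/\log n,\ M']$ by the defining inequality of $T_n^{\circ}$. I would run two standard turnstile linear sketches in parallel: an $F_0$-estimator returning $\hat F_0$ with $|\hat F_0-F_0|\le\tfrac15 F_0$, and $s=O(\log^2 n)$ independent perfect $\ell_0$-samplers (a standard primitive), each of which succeeds with constant probability and, on success, returns a uniformly random coordinate of the support of $v$ together with its frequency. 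With overwhelming probability at least $s/2$ of the samplers succeed; from those one reads off values $Y_1,\dots,Y_{s'}$ ($s'\ge s/2$), i.i.d.\ as $g(v_i)$ for $i$ uniform in the support, and outputs $\hat G:=\hat F_0\cdot s'^{-1}\sum_{\ell=1}^{s'}Y_\ell$. Since every $Y_\ell$ lies in $[M'/\log n,M']$ we have $\mathrm{Var}(Y_\ell)\le (M')^2$ while $\bar g^2\ge (M'/\log n)^2$, so $\mathrm{Var}\bigl(s'^{-1}\sum_\ell Y_\ell\bigr)\le (\log^2 n/s')\,\bar g^2$, which Chebyshev turns into $s'^{-1}\sum_\ell Y_\ell=(1\pm\tfrac15)\bar g$ with probability at least $2/3$ once $s'=\Omega(\log^2 n)$; combined with the guarantee on $\hat F_0$ this makes $\hat G$ a $(1\pm\tfrac12)$-approximation of $g(V(D))$ (boost the constant success probability by the usual median trick; the case $F_0=0$ is trivial). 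Since $s=O(\log^2 n)$, each $\ell_0$-sampler and the $F_0$-estimator use $\poly(\log(nM))$ bits, and $M=\poly(n)$, the whole algorithm fits in $O(\log^3 n\log M)$ bits, so $g\in T_n$ and the lemma follows.

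The step carrying the weight — and the only place the hypothesis $g(x)\ge M'/\log n$ is used — is the variance bound showing that $O(\log^2 n)$ $\ell_0$-samples already determine $\bar g$ to relative error $\tfrac15$: it is precisely the uniform lower bound $M'/\log n$ on every value of $g$ that keeps the relative variance of a single sample at $O(\log^2 n)$ rather than unbounded. For contrast, the naive alternative of approximating $g(V(D))$ by a rescaled $F_0$-estimate only yields a $\Theta(\log n)$-approximation, since $g$ may swing by a $\log n$ factor over $\{1,\dots,M\}$; averaging $\ell_0$-samples of $g$ over the support is exactly what upgrades this to $(1\pm\tfrac12)$. The remaining points — the failure/uniformity behaviour of the samplers and the routine probability bookkeeping — are standard.
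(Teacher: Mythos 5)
Your proof is correct and follows the same route the paper takes: both lower-bound $|T_n|$ by counting the functions in $\G_D$ that are everywhere at least $M'/\log n$, which gives $(M'-M'/\log n)^{M-1}$ choices for the free values $g(2),\dots,g(M)$. The paper simply asserts that such functions are approximable in $O(\log^3 n\log M)$ bits; you supply an explicit $\ell_0$-sampling algorithm with a correct variance bound (where the hypothesis keeps the relative variance at $O(\log^2 n)$), which is a legitimate filling-in of that unstated step, though your closing space accounting should really say each $\ell_0$-sampler costs $O(\log^2 n)$ bits rather than the vaguer ``$\poly(\log(nM))$'' to actually land inside $O(\log^3 n\log M)$.
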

\begin{proof}
  We can approximate in $O(\log^3{n}\log{M})$ bits any function with minimum value at least $\lfloor M'/\log{n}\rfloor$.
  The number of such functions is at least \[\left(M'-\frac{M'}{\log{n}}\right)^{M-1}.\]
\end{proof}

The raison d'\^etre for the class of nearly periodic functions are the following two properties: (1)~the functions sustain a large drop in value, infinitely many pairs $x<y$ satisfy $g(x)\geq y^{\alpha}g(y)$, but (2) the standard index reduction fails to give a lower bound because the function approximately repeats itself after the drop, $|g(x)-g(x+y)|<g(x)/h(y)$ for any sub-polynomial function $h$ and sufficiently large $y$.
The lower bound of Proposition~\ref{prop: lower bound} motivates our definition for the nearly periodic functions in the discretized model.
Specifically, nearly periodic functions in the discrete model will satisfy the first hypothesis, $g(x)\geq sg(y)$, for large $s$ but will not satisfy either of items \ref{it: minus constraint} or \ref{it: add constraint}.

We will allow a turnstile component to the reduction, which makes the proof a little easier.
One can make a different translation of the nearly periodic functions into this discretized model that avoids the turnstile model, but we feel that this is the simplest.

\begin{proposition}\label{prop: lower bound}
Let $x,y\in[M]$ such that $g(x)\geq sg(y)$, for some $1\leq s\leq n$. 
If at least one of the following holds:
\begin{enumerate}
  \item\label{it: minus constraint} $|g(x) - g(|y-x|)|\geq g(x)/\log^2{n}$,
  \item\label{it: add constraint} $x+y\leq M$ and $|g(x) - g(y+x)|\geq g(x)/\log^2{n}$
\end{enumerate}
then any algorithm that gets a $(1\pm1/2)$-approximation to $g(f)$ uses $\Omega(s/\log^2{n})$ bits of storage in the worst case.
\end{proposition}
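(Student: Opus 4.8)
The plan is to reduce from the one-way communication problem \textsf{INDEX}$(N)$, whose randomized one-way complexity is $\Omega(N)$ (or from $\disj(N,2)$, complexity $\Omega(N)$, if more than one pass is allowed), with the parameter $N := \lfloor s/\log^2 n\rfloor$. If $s\le 2\log^2 n$ the claimed bound is only $\Omega(1)$, so assume $s\ge 2\log^2 n$; in particular $x\ne y$, since $g(x)\ge s g(y)>0$ forces $x\ne y$ once $s\ge 2$. Suppose $\mathcal{A}$ is a $(1\pm 1/2)$-approximation algorithm for $g$-SUM using $S$ bits of storage in the worst case; the goal is to turn it into a protocol for \textsf{INDEX}$(N)$ of cost $S$, which then gives $S=\Omega(N)=\Omega(s/\log^2 n)$.

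First I fix the ``jump'' to exploit: by hypothesis there is a value $x'$ with $|g(x)-g(x')|\ge g(x)/\log^2 n$, namely $x'=|x-y|$ if item~\ref{it: minus constraint} holds and $x'=x+y\le M$ if item~\ref{it: add constraint} holds. Given $A\subseteq[N]$, Alice builds a stream over $[N]$ by inserting $y$ copies of item $i$ for every $i\in A$, runs $\mathcal{A}$, and sends its memory to Bob. Because $g(x)\ge s g(y)$, the total $g$-mass of Alice's items is at most $N\cdot g(y)\le (s/\log^2 n)(g(x)/s)=g(x)/\log^2 n$, which will be negligible. Given $b\in[N]$, Bob inserts copies of item $b$ so that $b$ ends up at frequency $x$ in one case and at frequency $x'$ in the other: he inserts $x-y$ copies when $x'=|x-y|$ (a run of deletions if $y>x$, legal in the turnstile model since all prefix frequencies of item $b$ stay within $[-M,M]$ as $y\le M$) and $x$ copies when $x'=x+y$. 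Bob then runs $\mathcal{A}$ to completion and reads its output $\hat G$. In every case the resulting frequency vector $f$ satisfies $g(f)=g(x)+\theta$ in one of the two cases and $g(f)=g(x')+\theta'$ in the other, with $0\le\theta,\theta'\le g(x)/\log^2 n$.

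It remains for Bob to decide, from $\hat G$ alone, which case occurred. When $g(x')$ and $g(x)$ differ by at least a constant factor -- say $g(x')\le g(x)/4$ or $g(x')\ge 4g(x)$ -- the additive terms $\theta,\theta'=o(g(x))$ are irrelevant, the two possible values of $g(f)$ lie in disjoint intervals of the form $[v/2,3v/2]$, and thresholding $\hat G$ recovers the \textsf{INDEX} answer with probability $\ge 2/3$; this already gives $S=\Omega(s/\log^2 n)$ for those functions. The main obstacle is the complementary regime, where $g(x')$ is within a constant factor of $g(x)$ yet $|g(x)-g(x')|\ge g(x)/\log^2 n$: a $(1\pm 1/2)$-approximation is then too coarse to separate $g(x)+\theta$ from $g(x')+\theta'$, and merely duplicating the perturbed coordinate does not help, since it scales the gap and the background by the same factor. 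To handle this one must amplify the $\Theta(1/\log^2 n)$-relative gap into a constant-factor gap -- for instance by iterating the substitution $x\mapsto x\pm y$ along a block of consecutive values when the intervening $g$-values cooperate, or, more robustly, by replacing the reduction from \textsf{INDEX} with one from an information-complexity variant in the spirit of the {\sf ShortLinearCombination} lower bounds of Appendix~\ref{app:communication}. Once such an amplification is in place, the same light-items-plus-one-jump construction carries the $\Omega(s/\log^2 n)$ bound through, and this amplification step is the heart of the argument.
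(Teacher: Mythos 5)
Your construction matches the paper's: for case~\ref{it: minus constraint} Alice uses frequency $y$ and Bob uses $-x$ (equivalently, you have Bob land at $|x-y|$ or $x$ depending on intersection, which is the same pair of targets), and for case~\ref{it: add constraint} Alice uses $y$ and Bob uses $x$. Where you differ is that you then stop and point out an obstacle, and that obstacle is genuine. The two possible totals are $v_0=g(x)+|A|g(y)$ and $v_1=g(x')+(|A|-1)g(y)$ with $x'\in\{|y-x|,x+y\}$, $|A|g(y)\le g(x)/\log^2 n$, and $|g(x)-g(x')|\ge g(x)/\log^2 n$. Nothing in the hypotheses prevents $|g(x)-g(x')|$ from being exactly $\Theta(g(x)/\log^2 n)$, in which case $v_0$ and $v_1$ are within a $1+O(1/\log^2 n)$ multiplicative factor of each other and a $(1\pm1/2)$-approximation returns overlapping intervals for the two cases; subtracting the known quantity $|A|g(y)$ from $\hat G$ does not help, since a multiplicative guarantee on $v_0$ does not become a multiplicative guarantee on $v_0-|A|g(y)$. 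So the reduction distinguishes the two cases only when the approximation accuracy is $(1\pm c/\log^2 n)$ for a suitable constant $c$, which is exactly the accuracy used in the INDEX argument of Lemma~\ref{lem:normslowdropping} that the paper is pointing to. The paper's own proof here is a two-sentence pointer that silently assumes that finer accuracy; taken literally with the $(1\pm1/2)$ stated in the proposition, the argument does not close, and you were right to flag it.

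Where your proposal falls short of being a proof is that the ``amplification'' you describe --- iterating $x\mapsto x\pm y$ or invoking a {\sf ShortLinearCombination}-style information-complexity bound --- is only gestured at. It is not clear that iterating the substitution preserves the hypotheses (you would need control of $g$ at $x+2y, x+3y,\ldots$, which is not given), and the {\sf ShortLinearCombination} machinery in Appendix~\ref{app:communication} addresses a different promise problem (detecting one frequency among a fixed finite palette), not a multiplicative gap of size $1/\log^2 n$. So neither route is actually carried out. The cleanest repair is not to amplify at all but to correct the statement: the reduction as written gives $\Omega(s/\log^2 n)$ bits against any $(1\pm c/\log^2 n)$-approximation, which is the version the definition of $B_n$ in fact needs (and, since Proposition~\ref{prop: lower bound} is purely motivational and is not used in the proof of Theorem~\ref{thm: few nearly periodic functions}, this correction has no downstream cost). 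In short: your identification of the accuracy mismatch is correct and is a real gap in the paper's sketch, but your proposed fix is speculative rather than a completed argument, and the more direct fix is to tighten the approximation parameter in the statement rather than to amplify the reduction.
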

\begin{proof}
  This is proved with a simple modification of the INDEX reduction used in Theorem~\ref{thm:normintractable}.
  For case \ref{it: minus constraint}, Alice uses the frequency $y$ for each of her items and Bob use the frequency $-x$.
  For case \ref{it: add constraint}, Alice uses the frequency $y$ for each of her items and Bob uses $x$.
\end{proof}

Let $B_n\subseteq\G_D$ be the set of functions $g$ for which (1) there exists $x,y\in[M]$ such that $g(x)\geq (\log{n})^8 g(y)$ and (2) for all $x,y\in[M]$ satisfying $g(x)\geq \frac{1}{2}(\log{n})^8g(y)$ we have $|g(x)-g(|y-x|)|<g(x)/\log^2{n}$ and also $|g(x+y)-g(x)|<g(x)/\log^2{n}$ provided $x+y\leq M$.

Condition (1) corresponds to the first condition in the definition of the nearly periodic functions and condition (2) to the second, hence $B_n$ contains the analogue of the nearly periodic functions in this discretized model.
Albeit, in both cases we only request a polylogarithmic gap, rather than a polynomial one, which means $|B_n|$ is an over-estimate for the number of nearly periodic functions in the discretized model.

We will derive an upper bound on $|B_n|$ as follows.
First, we chose the location of the least minimizer $y\in[M]$ of the function and its value $g(y)\leq M'/\log^8{n}$--trivially, there are fewer than $MM'$ possibilities.
The least maximizer is $x=1$ with value $g(x)=M'$.
Now, given any point $i\in[M]$, no matter how we choose its value, we will have either $g(i)\geq g(y)\log^4{n}$ or $g(x)\geq g(i)\log^{4}{n}$, so that after selecting a value for $g(i)$ we can try to apply the constraints described above to the values $g(x+i)$, $g(|x-i|)$, $g(y+i)$, and $g(|y-i|)$, thus limiting the number of choices significantly.

The next step is to select the set $U=\{i\in[M]:g(i)\geq \sqrt{g(x)g(y)}\}$ from the $2^{M-2}$ available choices.
Let $L=[M]\setminus U$.
Notice that for any $i\in U$ we have $g(i)\geq g(y)\sqrt{\frac{g(x)}{g(y)}}\geq g(y)\log^4{n}$, and similarly for $i\in L$ we have $g(i)\log^4{n}< g(x)$.

The following lemma is used in the proof of our bound on $|B_n|$.
\begin{lemma}\label{lem: big index subset}
Given $S\subseteq [M]$ and $j\in[M]$, there exists a set $W\subseteq S\times[M]$ of pairs $(i,|i-j|)$ with size at least $\frac{1}{4}|S|-1$ such that all of the values in the pairs are distinct. 
\end{lemma}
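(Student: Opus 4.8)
The plan is to split $S$ according to the sign of $i-j$ (note $j\ge 1$ since $j\in[M]$). Write $S^{+}=\{i\in S:i>j\}$ and $S^{-}=\{i\in S:i<j\}$; the only element of $S$ possibly in neither part is $j$ itself, so $|S^{+}|+|S^{-}|\ge|S|-1$. A set $W$ of pairs is \emph{admissible} precisely when the $2|W|$ numbers obtained by listing both coordinates of every pair of $W$ are pairwise distinct; in particular each pair $(i,|i-j|)$ in $W$ must satisfy $i\ne|i-j|$. I will construct an admissible $W$ of size at least half of $\max(|S^{+}|,|S^{-}|)$, which, together with $\max(|S^{+}|,|S^{-}|)\ge(|S|-1)/2$, yields the claimed bound.

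For the part $S^{+}$ every pair has the form $(i,i-j)$, so, apart from a coordinate trivially equalling itself, the only way two chosen pairs can share a value is if $i-j=i'$ for two chosen first coordinates, i.e.\ two chosen elements of $S^{+}$ differing by exactly $j$. I would therefore group $S^{+}$ by residue modulo $j$; inside one class, listed as $a_{1}j+r<a_{2}j+r<\cdots$, the task reduces to selecting as many of the $a_{k}$ as possible with no two of the selected ones consecutive integers, and taking every second $a_{k}$ in sorted order gives at least half of them (two selected values differ by at least $2$). All coordinates produced from a fixed residue class are congruent to $r$ modulo $j$, so pairs from different classes never collide; summing over classes gives an admissible $W$ with $|W|\ge|S^{+}|/2$.

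For the part $S^{-}$ every pair has the form $(i,j-i)$, and $i\mapsto j-i$ is an involution of $\{1,\dots,j-1\}$ with at most one fixed point, $i=j/2$. I would discard that fixed point if it lies in $S^{-}$ and partition the remaining at least $|S^{-}|-1$ elements of $S^{-}$ into the two-element blocks $\{i,j-i\}$; for each block that meets $S^{-}$ I pick one representative $i\in S^{-}$ and place $(i,j-i)$ into $W$. The value set of such a pair is exactly its block, distinct blocks are disjoint, and $i\ne j-i$ within each block, so $W$ is admissible; since each block has at most two elements, at least $(|S^{-}|-1)/2$ blocks meet $S^{-}$, hence $|W|\ge(|S^{-}|-1)/2$.

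Finally I combine the two cases. If $|S^{+}|\ge|S^{-}|$ then $|S^{+}|\ge(|S|-1)/2$ and the first construction gives $|W|\ge(|S|-1)/4>\tfrac14|S|-1$. If instead $|S^{-}|>|S^{+}|$, then $2|S^{-}|\ge|S^{-}|+|S^{+}|+1\ge|S|$, so $|S^{-}|\ge|S|/2$ and the second construction gives $|W|\ge(|S|/2-1)/2=\tfrac14|S|-\tfrac12>\tfrac14|S|-1$. I do not expect a real obstacle: the argument is entirely elementary, and the only point needing care is bookkeeping the three small losses — the element $i=j$, the fixed point of $i\mapsto j-i$, and the ceilings hidden in ``every other one'' — and checking that together they stay strictly below the slack of $1$ permitted by the statement.
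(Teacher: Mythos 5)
Your proof is correct, and the accounting at the end (the losses from $i=j$, the fixed point of $i\mapsto j-i$, and the floor in ``every other one'') does indeed stay within the allowed slack of $1$.

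The route you take differs from the paper's in presentation, though it exploits the same underlying combinatorial structure. The paper builds a single directed graph on $[M]$ with an edge $i\to|i-j|$ for each $i\in S\setminus\{j,j/2\}$; it observes that out-degrees are at most $1$, in-degrees at most $2$, and that every directed cycle is a $2$-cycle, then deletes at most half the edges to reduce to disjoint paths and $2$-cycles, finally extracting a matching of size at least $|E|/4\geq(|S|-2)/4$. Your version replaces this graph-matching machinery with a direct case split on the sign of $i-j$: the residue-class argument on $S^{+}$ is precisely a hands-on treatment of the ``path'' components (the chains $i\to i-j\to i-2j\to\cdots$), and the involution pairing on $S^{-}$ is precisely the ``$2$-cycle'' components $\{i,j-i\}$. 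You then use only the larger of the two parts, which is slightly more wasteful than the paper's unified matching (which could mix edges from both sides) but still comfortably clears the $\tfrac14|S|-1$ threshold. The paper's statement is a touch tighter ($\tfrac14(|S|-2)$ unconditionally), while yours is more elementary and avoids the (slightly misstated in the paper) in-degree bookkeeping. Either proof would serve.
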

\begin{proof}
Let $G$ be the directed graph with vertices $[M]$ and an edge $(i,|i-j|)$, for all $i\in S\setminus\{j,j/2\}$.
Any matching $W\subseteq E(G)\subseteq S\times[M]$ is a set of pairs with all of the values distinct.
It remains to show that there is a matching $W$ with size $|W|\geq \frac{1}{4}|S|-1$.
The out-degree of each vertex $k$ in $G$ is one if $k\in S$ and zero otherwise, and the in-degree of $k$ is zero if $k\geq j$ and at most two for $k<j$.
Furthermore, each cycle in $G$ has only two edges.
For each vertex of in-degree two delete one of the incident edges as follows: if the vertex is on a cycle delete the cyclic edge $(u,v)$ with $u<v$, otherwise delete an arbitrary edge.
This removes at most $|E|/2$ edges from the graph.
The remaining graph has at least $|E|/2$ edges and consists entirely of paths and isolated 2-cycles, thus it contains a matching of size at least $\frac{1}{4}|E|\geq \frac{1}{4}(|S|-2)$.
\end{proof}

\begin{lemma}\label{lem: Bn is small}
\[|B_n| \leq \frac{4^M M (M')^{M+1}}{(\log{n})^{\frac{1}{8}M-1}}\] 
\end{lemma}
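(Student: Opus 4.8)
The plan is to count $B_n$ by first fixing a little ``global'' data about $g$ and then showing that, conditioned on that data, an $\Omega(M)$-sized set of the remaining coordinate values is forced into a window of width $O(M'/\log^2 n)$, which costs a factor $(\log n)^{\Omega(M)}$ in the count. First I would enumerate (i) the position $y_0\in[M]$ of the least minimizer of $g$ (at most $M$ choices), (ii) the value $g(y_0)$, which by part~(1) of the definition of $B_n$ satisfies $g(y_0)\le M'/\log^8 n$ (at most $M'$ choices), and (iii) the subset $H=\{x\in[M]:g(x)\ge\tfrac12\log^8 n\, g(y_0)\}$, regarded as an arbitrary subset of $[M]$ (at most $2^M$ choices). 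Since $g(1)=M'$ is fixed and $g(y_0)\le M'/\log^8 n$ we always have $1\in H$. Every $g\in B_n$ is counted at least once, namely for its own $(y_0,g(y_0),H)$, so it suffices to bound, for each such triple, the number of admissible completions of the remaining $M-2$ coordinates.

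\smallskip\noindent\emph{Main case $|H|\ge M/2$.} Apply Lemma~\ref{lem: big index subset} with $S=H$ and $j=y_0$ to get a set $W$ of at least $\tfrac14|H|-1\ge M/8-1$ pairs $(i,|i-y_0|)$, with $i\in H$, all of whose coordinates are distinct. For each pair, part~(2) of the definition of $B_n$ applies to $(x,y)=(i,y_0)$ and yields $|g(i)-g(|i-y_0|)|<g(i)/\log^2 n$; hence once $g(i)$ is chosen ($\le M'$ possibilities), $g(|i-y_0|)$ lies in one of at most $2M'/\log^2 n+1\le 3M'/\log^2 n$ values. The coordinates $|i-y_0|$ are distinct from each other and from the coordinates $i$, so at least $M/8-1$ of the $M-2$ free coordinates have $\le 3M'/\log^2 n$ choices while the rest have $\le M'$; the number of completions is therefore at most
\[
(M')^{M-2}\Bigl(\tfrac{3}{\log^2 n}\Bigr)^{M/8-1}\;\le\;\frac{3^{M/8}(M')^{M-2}}{(\log n)^{M/8-1}}.
\]
Multiplying by the $M\cdot M'\cdot 2^M$ choices for $(y_0,g(y_0),H)$ and using $2^M 3^{M/8}\le 4^M$ gives the claimed bound in this case.

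\smallskip\noindent\emph{The delicate case $|H|<M/2$.} Now more than $M/2$ coordinates lie outside $H$ and hence have value $<\tfrac12\log^8 n\, g(y_0)$. If moreover $g(y_0)\le 4M'/\log^{16}n$, those coordinates have value $<M'/(2\log^8 n)$, so the $\log^8 n$-factor saving on $\Omega(M)$ coordinates already beats the target. Otherwise $g(y_0)$ sits in the narrow band $(4M'/\log^{16}n,\,M'/\log^8 n]$, and I would bring in the maximizer as a second anchor: whenever $g(z)\le 2M'/\log^8 n$, part~(2) applied to $(1,z)$ forces $|g(1)-g(|1-z|)|<M'/\log^2 n$ and $|g(1+z)-g(1)|<M'/\log^2 n$, so both neighbours $z\pm1$ of such a $z$ are ``near-maximal,'' taking $\le M'/\log^2 n+1$ values. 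One then also enumerates the set $P$ of near-maximal coordinates as a subset of $[M]$ (this supplies the remaining $2^M$ factor in $4^M$) and applies Lemma~\ref{lem: big index subset} once more, anchored at $1$, to extract $\Omega(M)$ distinct coordinates each confined to $O(M'/\log^2 n)$ values --- being either near-maximal, or of value $\le 2M'/\log^8 n$, or pinned by part~(2) through a shift by $y_0$. The same accounting closes this case.

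\smallskip I expect the moderate-$g(y_0)$ subcase above to be the main obstacle. There the constraints of part~(2) link a coordinate only to its shifts by the (possibly few) small-valued coordinates, so one must propagate near-maximality along arithmetic progressions of common difference $y_0$, control the multiplicative error that accumulates over a progression of length $\Theta(\log^2 n)$, and deal with the ``boundary'' coordinates whose value lies between $2M'/\log^8 n$ and $\tfrac12\log^8 n\, g(y_0)$ and are pinned by neither anchor. Squeezing out $\Omega(M)$ genuinely distinct and genuinely constrained coordinates --- rather than only $\Omega(M/\log\log n)$ coordinates with a mere constant-factor saving each, which would fall short of the required $(\log n)^{M/8}$ --- is the part that needs the most care.
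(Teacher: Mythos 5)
Your main case ($|H|\ge M/2$) matches the paper's proof almost step for step: fix the minimizer and its value, fix the ``big'' set as a subset of $[M]$, apply Lemma~\ref{lem: big index subset} anchored at $y_0$, pin the paired coordinates via part~(2) of the definition of $B_n$, and multiply. That part is fine.

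The genuine difference, and the reason your ``delicate case'' is stuck, is the choice of the split set. You take $H=\{i:g(i)\ge\tfrac12\log^8 n\, g(y_0)\}$, i.e.\ the set of indices to which part~(2) applies with the minimizer $y_0$ as the small anchor. The paper instead defines $U=\{i:g(i)\ge\sqrt{g(1)g(y_0)}\}$, the split at the \emph{geometric mean} of the extreme values. With that choice (after a mild adjustment of exponents so that $g(1)/g(y_0)\ge\log^{16} n$), every index is covered: for $i\in U$ one has $g(i)/g(y_0)\ge\sqrt{g(1)/g(y_0)}\ge\tfrac12\log^8 n$, so part~(2) applies with $(x,y)=(i,y_0)$; and for $i\notin U$ one has $g(1)/g(i)\ge\sqrt{g(1)/g(y_0)}\ge\tfrac12\log^8 n$, so part~(2) applies with $(x,y)=(1,i)$. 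There is no ``middle range.'' Consequently the two subcases $|U|\ge M/2$ and $|U|<M/2$ really are symmetric --- apply Lemma~\ref{lem: big index subset} with anchor $y_0$ in the first and with anchor $1$ in the second --- and the lemma follows by the same calculation either way. Your arithmetic threshold $\tfrac12\log^8 n\,g(y_0)$ is not comparable to $\sqrt{M'g(y_0)}$ in general (it can exceed it by up to $\Theta(\log^4 n)$ when $g(y_0)$ is near $M'/\log^8 n$), and it is precisely this discrepancy that creates the unconstrained band of values between $2M'/\log^8 n$ and $\tfrac12\log^8 n\,g(y_0)$ that you flag at the end. The error-propagation idea you sketch to cover that band will not close: chaining part~(2) along a $y_0$-progression of length $\Theta(\log^2 n)$ compounds multiplicative errors $(1+1/\log^2 n)$ to a constant factor, which destroys the $(\log n)^{\Omega(1)}$ saving per coordinate you need. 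Replacing your $H$ by the geometric-mean set $U$ (and tightening the exponent in condition~(1) of the definition of $B_n$, or equivalently loosening it in condition~(2), so the two actually mesh) removes the second case entirely and yields the stated bound.
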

\begin{proof}
  Let $x,y\in[M]$ be the least maximizer and minimizer, respectively, of the function.
  Then $x=1$ with $g(x)=M'$ and there are trivially no more than $MM'$ choices for the values of $y$ and $g(y)$, which must satisfy $g(y)\log^8{n}\leq g(x)$.
  There are $2^{M-2}$ choices for $U=\{i\in[M]: g(i)\geq \sqrt{g(x)g(y)}\}$.

  Suppose that $|U|\geq M/2$; the case $M-|U|=|L|\geq M/2$ is handled similarly. 
  Apply Lemma~\ref{lem: big index subset} with $S=U$ and $j=y$ to find the set $W$ of size at least $\frac{1}{8}M-1$.
  Next assign the values $g(i)$ for every $i\in [M]$ that is the first coordinate of a pair in $W$ or is not in any pair in $W$.
  With the trivial bound, there are no more than $(M')^{M-|W|}$ possibilities.
  For the remaining points $|y-i|$ we must have that
  \[|g(i)-g(|y-i|)|\leq \frac{g(i)}{\log^2{n}}\leq \frac{M'}{\log^2{n}},\]
  so there are no more than $2M'/\log^2{n}$ choices for each of these $|W|$ remaining points.
  Hence, the total number of functions in $B_n$ is no more than
  \[(MM')2^M M'^{M-|W|}\left(\frac{2M'}{\log^2{n}}\right)^{|W|}\]
  which completes the proof.
\end{proof}

\subsection{Metrizing $\G$\label{sec: metrizing G}}

Given two functions $g,h\in\G$ let 
\begin{align*}
\Theta(g,h) &:= \sup_{x\in\N}\log\left(\max\left\{\frac{g(x)}{h(x)},\frac{h(x)}{g(x)}\right\}\right)\\
&= \sup_{x\in\N}|\log g(x)-\log h(x)|\\ 
&= \|\log{g}-\log{h}\|_\infty.
\end{align*}

$\Theta$ is an extended metric\footnote{An extended metric satisfies all of the properties of a metric except that it is allowed to take the value $\infty$.} on $\G$.

The purpose of this section is to prove that $\bbS$-nearly periodic functions are unstable, in a sense described by $\Theta$, while 2-pass tractable $\bbS$-normal functions are stable.
More precisely, we can think about a function $g$ and a perturbed version $h$ of $g$, where $h$ is found by making a small, relative change to $g$ at each point.
In this case $\Theta(g,h)$ is small.
If $g$ is $\bbS$-normal and 2-pass tractable, then Proposition~\ref{prop:metric slow jumping dropping} shows that $h$ is also 2-pass tractable (even if the perturbation is very large).

\begin{proposition}
\label{prop:metric slow jumping dropping}
Let $g,h\in\G$ with $\Theta(g,h)<\infty$.
If $g$ is slow-dropping and slow-jumping then so is $h$.
\end{proposition}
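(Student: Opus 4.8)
The plan is to convert the hypothesis $\Theta(g,h)<\infty$ into a two-sided multiplicative comparison and then invoke the sub-polynomial reformulations of slow-dropping and slow-jumping from Propositions~\ref{prop: slow dropping to sub-polynomial} and~\ref{prop: slow jumping to sub-polynomial}. Write $c := e^{\Theta(g,h)}\geq 1$; then by definition of $\Theta$ we have $c^{-1} g(x)\leq h(x)\leq c\,g(x)$ for every $x\in\N$. The only fact about sub-polynomial functions we need is that a constant multiple of a sub-polynomial function is again sub-polynomial, which is immediate from the two limit conditions in the definition.

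First I would handle slow-dropping. Since $g$ is slow-dropping, Proposition~\ref{prop: slow dropping to sub-polynomial} gives a sub-polynomial function $h^*$ with $g(x)\leq g(y)h^*(y)$ for all $y\in\N$ and $x<y$. Then for all such $x,y$,
\[
h(x)\leq c\,g(x)\leq c\,g(y)h^*(y)\leq c^2\,h(y)h^*(y).
\]
Because $c^2 h^*$ is sub-polynomial, the ``if'' direction of Proposition~\ref{prop: slow dropping to sub-polynomial} shows $h$ is slow-dropping.

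Next, slow-jumping. Proposition~\ref{prop: slow jumping to sub-polynomial} gives a sub-polynomial $h^*$ (take it to be the same one, replacing it by the pointwise maximum of the two if necessary) with $g(y)\leq \lfloor y/x\rfloor^2 h^*(\lfloor y/x\rfloor x)\,g(x)$ for all $x<y$. Then
\[
h(y)\leq c\,g(y)\leq c\,\lfloor y/x\rfloor^2 h^*(\lfloor y/x\rfloor x)\,g(x)\leq c^2\,\lfloor y/x\rfloor^2 h^*(\lfloor y/x\rfloor x)\,h(x),
\]
and since $c^2 h^*$ is sub-polynomial, the ``if'' direction of Proposition~\ref{prop: slow jumping to sub-polynomial} shows $h$ is slow-jumping, completing the proof.

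There is essentially no obstacle here: the argument is a one-line constant-absorption in each case. If one preferred to argue directly from the $\epsilon$--$\delta$ style definitions rather than through the propositions, the mild point would be to absorb $c$ by passing from tolerance $\alpha$ to $\alpha/2$ and enlarging the threshold $N$ to dominate a quantity of the form $c^{O(1/\alpha)}$ (in the slow-jumping case one also uses $\lfloor y/x\rfloor x\geq y/2$ to turn the bound into a power of $y$); but routing through Propositions~\ref{prop: slow dropping to sub-polynomial} and~\ref{prop: slow jumping to sub-polynomial} avoids even that.
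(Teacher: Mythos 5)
Your proposal is correct and follows essentially the same route as the paper: observe that $\Theta(g,h)<\infty$ gives a constant multiplicative comparison between $g$ and $h$, and then note that the slow-dropping and slow-jumping conditions tolerate such a constant. Routing through Propositions~\ref{prop: slow dropping to sub-polynomial} and~\ref{prop: slow jumping to sub-polynomial} is a slightly more explicit way to absorb the constant, but it is the same idea the paper uses in its (terser) final line.
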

\begin{proof}
Let $x,y\in\N$. 
Then
\[\log{\frac{h(x)}{h(y)}}\leq \log\left(\frac{h(x)g(y)}{g(x)h(y)}\cdot \frac{g(x)}{g(y)}\right)\leq \Theta(g,h)^2\log\frac{g(x)}{g(y)}.\]
Thus, $h(x)/h(y) \leq e^{\Theta(g,h)^2}g(x)/g(y)$ and the result follows from the definitions of slow-dropping and slow-jumping.
\end{proof}

To the contrary, if $g$ is $\bbS$-nearly periodic, then the next theorem shows that one can always find a nearby function $h$ that is $1$-pass intractable.

\begin{theorem}
If $g\in\G$ is $\bbS$-nearly periodic then for every $\delta>0$ there exists a normal $1$-pass intractable function $h\in G$ such that $\Theta(g,h)\leq\delta$.
\end{theorem}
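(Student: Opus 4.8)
The plan is to obtain $h$ by perturbing $g$ by a \emph{bounded} multiplicative factor that destroys the second defining property of $\bbS$-near-periodicity (the approximate self-repetition at periods) while preserving the first (failure of slow-dropping); the resulting $h$ is then $\bbS$-normal and not slow-dropping, hence $1$-pass intractable by Lemma~\ref{lem:normslowdropping}. The obvious candidate $L_\eta(g)$ of Definition~\ref{def:l_eta_transformation} does not serve here, because $\Theta(g,L_\eta(g))=\sup_{x}\eta\log\log(1+x)=\infty$; we must wiggle $g$ by a factor that stays within $e^{\pm\delta}$ everywhere. Without loss of generality assume $\delta\le 1$ (a smaller $\delta$ only strengthens the claim).

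Since $g$ is $\bbS$-nearly periodic, its first property supplies an $\alpha>0$ and sequences $x_k<y_k$ with $y_k$ strictly increasing and $g(y_k)\le g(x_k)/y_k^{\alpha}$; thus each $y_k$ is an $\alpha$-period of $g$ with companion $x_k$ obeying $g(y_k)y_k^{\alpha}\le g(x_k)$. Put $q_k:=x_k+y_k>x_k$. Since $q_k\ge y_k\to\infty$, the pairs $\{x_k,q_k\}$ take infinitely many distinct values (otherwise $y_k=q_k-x_k$ would be bounded). I will use the elementary fact that a graph on $\N$ with infinitely many edges admits a $2$-colouring $c\colon\N\to\{0,1\}$ that is bichromatic on infinitely many edges, and that one may prescribe $c$ at any single vertex: a vertex of infinite degree and its neighbours can be coloured oppositely; otherwise the graph is locally finite with infinitely many edges, so it has either infinitely many components carrying an edge or, by K\"onig's lemma, a component containing a ray, and in each case an alternating colouring cuts infinitely many edges; finally, ``bichromatic'' is invariant under the global swap $c\mapsto 1-c$, so one vertex's colour may be fixed. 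Applying this to the graph with edge set $\{\{x_k,q_k\}\}_k$, choose such a $c$ with $c(1)=0$ and set
\[ h(x) := g(x)\,e^{\delta c(x)}\ \ (x\ge 1),\qquad h(0):=0. \]
Then $h\in\G$ (note $h(1)=g(1)=1$ and $h(x)>0$ for $x>0$) and $\Theta(g,h)=\sup_{x\ge1}\delta c(x)\le\delta$.

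It remains to verify that $h$ is not slow-dropping and is $\bbS$-normal. For the first point, since $h(x_k)\ge g(x_k)$ and $h(y_k)\le e^{\delta}g(y_k)\le e^{\delta}g(x_k)/y_k^{\alpha}\le e^{\delta}h(x_k)/y_k^{\alpha}\le h(x_k)/y_k^{\alpha/2}$ for all large $k$, each such $y_k$ is an $(\alpha/2)$-period of $h$ with companion $x_k$, so $h$ fails slow-dropping (the first property of $\bbS$-near-periodicity holds for $h$). For the second point, apply the second property of $\bbS$-near-periodicity \emph{for $g$}, with exponent $\alpha$ and the constant error function $\delta/100\in\bbS$: for all large $k$ we get $|g(q_k)-g(x_k)|\le\tfrac{\delta}{100}\min\{g(x_k),g(q_k)\}$, so $g(q_k)/g(x_k)\in[1-\tfrac{\delta}{100},1+\tfrac{\delta}{100}]$. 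Restricting to the infinitely many $k$ on which $c$ is bichromatic, $h(q_k)/h(x_k)=(g(q_k)/g(x_k))\,e^{\pm\delta}$, and a one-line estimate (using $\delta\le1$) gives $|h(q_k)/h(x_k)-1|\ge \tfrac{1}{4}\delta$, whence $|h(q_k)-h(x_k)|>\tfrac{\delta}{4}\min\{h(x_k),h(q_k)\}$. Since for these $k$ the $y_k$ are $(\alpha/2)$-periods of $h$ with companion $x_k<y_k$ and $h(y_k)y_k^{\alpha/2}\le h(x_k)$, this shows the second property of $\bbS$-near-periodicity fails for $h$ (with exponent $\alpha/2$ and error function the constant $\delta/4$). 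Hence $h$ is $\bbS$-normal and not slow-dropping, so by Lemma~\ref{lem:normslowdropping} it is $1$-pass intractable, completing the argument.

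The crux is reconciling the two demands on $h$: intractability essentially forces a non-subpolynomial discrepancy $|h(x+y)-h(x)|$ at some periods $y$, while $\Theta(g,h)\le\delta$ forbids the unbounded distortion that $L_\eta$ uses to create one for a normal $g$. The way out is to exploit the defining weakness of a nearly periodic function — near a period it returns almost exactly to its earlier value — so that an arbitrarily small but non-vanishing multiplicative wiggle placed at the ``right'' integers breaks that approximate equality; the two-colouring lemma is what guarantees the wiggle can be placed at infinitely many periods simultaneously without over-constraining $h$ at any single integer.
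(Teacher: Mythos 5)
Your proof is correct, and it shares the paper's core idea — perturb $g$ by a factor of order $e^{\pm\delta}$ at the two endpoints $x_k$ and $x_k+y_k$ of each period pair, in \emph{opposite} directions, so that the approximate self-repetition $g(x_k)\approx g(x_k+y_k)$ that defines near-periodicity is destroyed while the polynomial drop $g(x_k)\ge y_k^\alpha g(y_k)$ survives — but the way you make the perturbation globally consistent is genuinely different. The paper passes to a subsequence along which the three families $\{x_k\}$, $\{y_k\}$, $\{x_k+y_k\}$ are mutually disjoint and then simply overwrites $g$ at those points with $(1+\delta)g(x_k)$ and $g(x_k+y_k)/(1+\delta)$; you instead keep the whole sequence of pairs and 2-colour $\N$ so that infinitely many pairs $\{x_k,x_k+y_k\}$ are bichromatic, then multiply $g$ by $e^{\delta c(x)}$. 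The colouring route buys you some robustness: you never need to argue about collisions among the modified points, you can force $c(1)=0$ so that $h(1)=1$ without extra fuss (the paper's construction silently needs $1\notin\{x_k\}$, since otherwise $h(1)=1+\delta\notin\G$), and the perturbation is uniformly one-sided ($e^{\delta c(x)}\ge 1$), which makes the estimate $h(y_k)\le h(x_k)/y_k^{\alpha/2}$ transparent. The cost is the combinatorial lemma about 2-colourings cutting infinitely many edges, which, while elementary (infinite-degree vertex, or K\"onig's lemma on a ray, or infinitely many nontrivial components), is an ingredient the paper avoids entirely. You also spell out a step the paper leaves implicit: you invoke condition~2 of near-periodicity for $g$ with the constant error $\delta/100$ to conclude $g(x_k+y_k)/g(x_k)\in[1-\delta/100,\,1+\delta/100]$, without which one cannot rule out the $\pm\delta$ wiggle being cancelled by an a-priori drift between $g(x_k)$ and $g(x_k+y_k)$. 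Both arguments end by applying Lemma~\ref{lem:normslowdropping}, as they must.
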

\begin{proof}
Since $g$ is $\bbS$-nearly periodic, there exists $\alpha>0$ and a sequence of pairs $(x_k,y_k)$ such that $y_k$ is increasing, $x_k<y_k$, and $g(x_k)\geq y_k^\alpha g(y_k)$.
By taking a subsequence we can ensure that all of the values $x_k$, $y_k$, and $x_k+y_k$ are distinct.
Define $h$ as follows: $h(x)=g(x)$ for $x\notin\cup_j\{x_j,x_j+y_j\}$, $h(x_k)=(1+\delta)g(x_k)$, and $h(x_k+y_k)=g(y_k)/(1+\delta)$.
Then $\Theta(g,h)=\log(1+\delta)\leq\delta$.

By construction, $h(x_k)/h(y_k)\geq g(x_k)/g(y_k)\geq y_k^\alpha$, but $|g(x_k)-g(x_k+y_k)|\geq (1+\delta)g(x_k)$.
Thus $h$ is not slow-dropping and not $\bbS$-nearly periodic, so $h$ is not 1-pass tractable by Lemma~\ref{lem:normslowdropping}.
\end{proof}

\subsection{Towards Characterizing Nearly Periodic Functions}
The set of nearly-periodic function defeat the standard reduction from the $\disj$ and $\textsf{INDEX}$ problems. 
Theorem~\ref{thm:communication nearly periodic} shows how one can use the \textsc{ShortLinearCombination} problem, of Definition~\ref{def:SLC}, to provide space lower bounds on $g$-SUM algorithms for some nearly periodic functions~$g$.

\begin{definition}
For any $N>0$ and non-increasing sub-polynomial function $h(x)$, define the $(N, h)$-dropping set of $f$ to be
\[
\mathcal{D}_{N,h}(f) = \{x\mid 1\le x\le N, f(x)\le h(N)/N\}.
\]
\end{definition}

\begin{proposition}
Let $f$ be a nearly-periodic function, then for all $n_0>0$, there exists $N>n_0$ and a sub-polynomial function $h$ such that $|D_{N,h}|>0$.
\end{proposition}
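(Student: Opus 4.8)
The plan is to produce the required $N$ straight from the first clause of the definition of a nearly periodic function and then take $h$ to be a suitable positive constant. Recall that clause: there is a fixed $\alpha>0$ such that for every constant threshold there exist integers $x<y$ with $y$ above the threshold and $f(y)\le f(x)/y^{\alpha}$ (so $y$ is an $\alpha$-period of $f$). First I would apply this clause with threshold $n_0+1$, obtaining an integer $y>n_0$ that is an $\alpha$-period, and then set $N:=y$. Note $f(N)=f(y)>0$ since $f\in\G$.

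Next I would exhibit $h$. Since $N f(N)$ is a fixed positive real, $c:=\lceil N f(N)\rceil$ is a positive integer ($c\ge 1$), and I would take $h$ to be the constant function $h\equiv c$. There are only two things to verify. First, $h$ is a non-increasing sub-polynomial function: a positive constant is (weakly) non-increasing and, for every $\beta>0$, satisfies $\lim_{x\to\infty}x^{\beta}c=\infty$ and $\lim_{x\to\infty}x^{-\beta}c=0$, so it is sub-polynomial. Second, $\mathcal{D}_{N,h}(f)$ is nonempty: by the choice of $c$ we have $h(N)/N=c/N\ge f(N)$, so the point $x=N$ lies in the range $1\le x\le N$ and satisfies $f(x)\le h(N)/N$; hence $N\in\mathcal{D}_{N,h}(f)$ and $|\mathcal{D}_{N,h}(f)|\ge 1>0$.

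I do not expect a genuine obstacle here; the content is just recognizing that the ``drop'' clause of near-periodicity always supplies an integer past $n_0$ at which $f$ is small enough to be absorbed into $h(N)/N$, and that a constant counts as a non-increasing sub-polynomial function. If one wanted a strictly decreasing $h$ instead of a constant, I would use $h(x)=c+1/x$, which is strictly decreasing and sub-polynomial and still satisfies $h(N)/N\ge f(N)$. If one wanted the bound on $h$ expressed through the ``heavy'' value, I would retain $f(y)\le f(x)/y^{\alpha}$ and replace $c$ by $\lceil y^{1-\alpha}f(x)\rceil$; this sharper bookkeeping is the form that later feeds into the lower bounds obtained from the $(u,d)$-DIST problem.
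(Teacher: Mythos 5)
Your proof is correct and rests on the same essential observation as the paper's: a positive constant is a non-increasing sub-polynomial function, so taking $h$ to be a constant large enough that $h(N)/N$ dominates a single value of $f$ on $[1,N]$ already makes the dropping set nonempty. The paper's version is even leaner---it sets $N=n_0+1$ and $h\equiv f(1)N$ so that $1\in\mathcal{D}_{N,h}(f)$, never touching the near-periodicity hypothesis---whereas your detour through an $\alpha$-period $y$ to choose $N$ is harmless but unnecessary.
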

\begin{proof}
Choose $N=n_0+1, h(x) = f(1)N$, i.e. $h(x)$ is a constant function. Then $1$ in $D_{N,h}$ because $1 \le N$ and $f(1) = h(N)/N$.
\end{proof}

\begin{definition}
An \emph{$\alpha$-indistinguishable} frequency set of $[n]$ is a tuple $(s, d)$, where $s\subseteq [n]$ is a subset and $d\in[n], d\notin s$ is a integer such that $(s, d)$-DIST problem requires $\Omega(n^\alpha)$ space. 
\end{definition}

\begin{theorem}
\label{thm:communication nearly periodic}
Let function $f:\mathbb{R}\rightarrow \mathbb{R}^+$, symmetric, nearly-periodic and $f(0) = 0$. If there exists a non-increasing sub-polynomial function $h(x)$ and constants $\alpha>0, 0<\delta<1$, such that for any $n_0>0$, there exists $N>n_0$ and a integer set $|S|>0$ satisfying the follows,
\begin{enumerate}
\item $0<|\mathcal{D}_{N,h}(f)|\le N^{1-\delta}$;
\item $S\subseteq |\mathcal{D}_{N,h}(f)|$ and $d\in[N]$ such that $(S,d)$ is a $\alpha$-indistinguishable frequency set of $N$;
\item $f(d)\ge h(d)$,
\end{enumerate}
then for any $n_0>0$, there exists a stream of domain $N>n_0$,  any one-pass randomized streaming algorithm $\mathcal{A}$ that outputs a $1\pm \epsilon(N)$ approximation of $f$-SUM with probability at least $2/3$,  requires space $\Omega(n_0^{\alpha})$ bits, where $\epsilon(x)<1$ is a sub-polynomial function.
\end{theorem}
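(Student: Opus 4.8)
The plan is to prove the lower bound by a reduction from the $(S,d)$-DIST problem. Fix $n_0>0$; the bound $\Omega(n_0^{\alpha})$ is vacuous for bounded $n_0$, so assume $n_0$ is large. By hypothesis we obtain $N>n_0$, a set $S\subseteq\mathcal{D}_{N,h}(f)$ with $0<|S|\le N^{1-\delta}$, and an integer $d\in[N]\setminus S$ such that $(S,d)$ is an $\alpha$-indistinguishable frequency set of $[N]$; by definition every one-pass randomized algorithm for $(S,d)$-DIST on domain $[N]$ requires $\Omega(N^{\alpha})=\Omega(n_0^{\alpha})$ space. Let $\mathcal{A}$ be a one-pass $(f,\epsilon)$-SUM algorithm with $\epsilon=\epsilon(N)$; since $\epsilon$ is sub-polynomial and bounded below $1$, we take $N$ large enough that the multiplicative gap produced below exceeds $\tfrac{1+\epsilon(N)}{1-\epsilon(N)}$. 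Given a $(S,d)$-DIST stream $\sigma$ with frequency vector $v$, both living on $[N]$, the protocol feeds $\sigma$ to $\mathcal{A}$ and reads off an estimate $\hat G$ of $f(v)=\sum_{i=1}^{N}f(|v_i|)$.

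The two estimates driving the reduction are these. If $v\in V_0$ then each $v_i$ is $0$ or lies in $S\subseteq\mathcal{D}_{N,h}(f)$; using $f(0)=0$, $f(s)\le h(N)/N$ for $s\in S$, and that there are at most $N$ coordinates,
\[
f(v)=\sum_{i:\,v_i\neq 0}f(|v_i|)\ \le\ N\cdot\frac{h(N)}{N}\ =\ h(N).
\]
If $v\in V_1$, write $v=\text{EMB}(v',j,\pm d)$ with $v'\in V_0$; then $f(v)\ge f(d)\ge h(d)\ge h(N)$, the last step because $h$ is non-increasing and $d\le N$. Hence the single planted frequency $d$ contributes at least as much as the \emph{entire} possible $V_0$-mass. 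In the $\disj$-type hard instances witnessing the $\Omega(N^{\alpha})$ bound for $(S,d)$-DIST the disjoint and intersecting cases share a common background vector $w$ and differ only at one coordinate --- $0$ in the disjoint case, $\pm d$ in the intersecting case --- so $f(v)$ equals $f(w)$ in the first and $f(w)+f(d)\ge 2f(w)$ in the second; moreover, because $|S|\le N^{1-\delta}$ one can keep the background support to $N^{1-\delta}$ coordinates, so $f(w)\le h(N)N^{-\delta}$ while $f(w)+f(d)\ge h(N)$, a multiplicative gap of $N^{\delta}$.

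The protocol therefore declares ``intersecting'' when $\hat G$ exceeds the publicly computable threshold $f(w)+\tfrac12 f(d)$ (or simply $\tfrac12 h(N)$ in the sparse-background version) and ``disjoint'' otherwise. When $\mathcal{A}$ succeeds, which it does with probability at least $2/3$, we have $\hat G\le(1+\epsilon(N))f(w)$ in the disjoint case and $\hat G\ge(1-\epsilon(N))(f(w)+f(d))$ in the intersecting case, and for $N$ large these fall on the correct sides of the threshold since the gap is at least a factor $2$ (indeed $N^{\delta}$ in the sparse version). Thus the protocol solves $(S,d)$-DIST with probability $\ge 2/3$ using the space of $\mathcal{A}$ plus $O(\log N)$ bits to build $\sigma$ from the DIST input and hold the publicly known set sizes, so $\mathcal{A}$ must use $\Omega(N^{\alpha})=\Omega(n_0^{\alpha})$ bits. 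The witnessing $f$-SUM instance is the stream $\sigma$, which has domain $[N]$ with $N>n_0$, as required.

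The main obstacle is the step controlling the $V_0$-side value: the crude bounds $f(w)\le h(N)$ and $f(d)\ge h(N)$ give only a factor-$\tfrac32$ gap in the fully adversarial regime, and no gap at all at a fixed global threshold, so the argument genuinely needs the structure of the hard $(S,d)$-DIST distribution --- a common background differing from the planted instance in a single coordinate, as in the $\disj$-style reductions underlying the {\sf ShortLinearCombination} lower bounds --- together with the hypothesis $|\mathcal{D}_{N,h}(f)|\le N^{1-\delta}$, which is exactly what lets the background support, and hence $f(w)$, be kept polynomially below $h(N)$. Verifying that the $\Omega(N^{\alpha})$ lower bound for $(S,d)$-DIST is witnessed by such a sparse, fixed-background hard distribution, rather than an arbitrary distribution on $V_0\cup V_1$, is where the real work lies; the remaining gap-and-threshold computation is elementary.
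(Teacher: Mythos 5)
Your approach is the same as the paper's: reduce from $(S,d)$-DIST, show the $V_0$-side $f$-SUM value is polynomially below $h(d)$, the $V_1$-side value is at least $h(d)$, and let the $(1\pm\epsilon)$-approximation algorithm distinguish by thresholding. You have, however, correctly identified a gap that the paper's own proof glosses over. The paper asserts that for $v\in V_0$ the $f$-SUM is at most $\sum_{a\in S}f(a)\le N^{1-\delta}h(N)/N$, but this bound only holds if each value of $S$ occurs at most once as a coordinate magnitude of $v$; the definition of $(u,d)$-DIST places no such restriction, so $\sum_i f(|v_i|)$ could be as large as $N\cdot h(N)/N=h(N)$, exactly matching the $V_1$-side lower bound $f(d)\ge h(d)\ge h(N)$ and leaving no multiplicative gap for a fixed threshold. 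Your observation that one must keep the background support to roughly $N^{1-\delta}$ coordinates is the right repair, and the hypothesis $|\mathcal{D}_{N,h}(f)|\le N^{1-\delta}$ is indeed what makes it plausible, but as you note this does not follow from the definition of an $\alpha$-indistinguishable frequency set: that definition only demands hardness of $(S,d)$-DIST over all promised inputs, and the paper's own information-complexity hard distribution for $(a,b,c)$-DIST (Appendix~C) places nonzero values on a constant fraction of the $N$ coordinates, not on $N^{1-\delta}$ of them.

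So the situation is that the paper's proof silently makes the same sparsity assumption you flag, and neither the theorem's hypotheses nor the definition of $\alpha$-indistinguishability guarantee it. To complete the argument one would either need to restate the $\alpha$-indistinguishability condition so that it asserts hardness for instances with at most $N^{1-\delta}$ nonzero coordinates (and then re-derive the {\sf ShortLinearCombination} lower bound for this sparse promise, e.g.\ by lowering the probability that each coordinate of the collapsing distribution is nonzero to $N^{-\delta}$, which would also dilute the $\Omega(n/q^2)$ bound and require tracking that loss), or change the reduction so that the $V_0$-side value is controlled by something other than a global support bound. Your writeup is therefore not a different route but a more honest account of the paper's route, with the missing step explicitly named; it does not close the gap, but neither does the paper.
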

\begin{proof}
Suppose we have an algorithm $\mathcal{A}$ that gives a $1\pm \epsilon(N)$ approximation to $f$-SUM with probability at least $2/3$. We can now use $\mathcal{A}$ to construct a protocol that solves $(S,d)$-DIST problem on domain $N$. The algorithm is straightforward: run $\mathcal{A}$ on the input stream.  If the result $\mathcal{A}$ outputs is $\le h(d)/N^\delta$, then there is no frequency of absolute value $d$ in the stream. This is so since the $g$-SUM is at most $\sum_{a\in S}f(a)\le N^{1-\delta}/N$. Otherwise, if the output is at least $h(d)(1-\epsilon(N))$. By the guarantee of $\mathcal{A}$,  it can distinguish the 2 cases. The space needed of $\mathcal{A}$ inherits the lower bound of $(S, d)$-DIST problem, thus $\Omega(N^{\alpha})$ bits.
\end{proof}

\end{appendix}
\pagenumbering{gobble}

\end{document}